\definecolor{ao}{rgb}{0.0, 0.5, 0.0}
\tikzset{->-/.style={decoration={
			markings,
			mark=at position .58 with {\arrow{>[scale=2]}}},postaction={decorate}}}
\pgfplotsset{compat=1.18}
\newtheorem{thr}{Theorem}[section]
\newtheorem{lem}[]{Lemma}[section]
\newtheorem{cor}[]{Corollary}[section]
\newtheorem{prop}{Proposition}[section]
\newtheorem*{prop*}{Proposition}
\newtheorem{rmk}[]{Remark}[section]
\newtheorem*{theorem*}{Theorem}
\newenvironment{eqaligned}
{%
\begin{equation}
    \begin{aligned}
    } 
{%
\end{aligned}
\end{equation}
\ignorespacesafterend}
\newenvironment{eqaligned*}
{%
\begin{equation*}
    \begin{aligned}
    } 
{%
\end{aligned}
\end{equation*}
\ignorespacesafterend}
\newenvironment{eqgathered}
{%
\begin{equation}
    \begin{gathered}
    } 
{%
\end{gathered}
\end{equation}
\ignorespacesafterend}
\newenvironment{eqgathered*}
{%
\begin{equation*}
    \begin{gathered}
    } 
{%
\end{gathered}
\end{equation*}
\ignorespacesafterend}
\newcommand{\bmb}[1]{\bm\bar{#1}}
\newcommand{\mds}[1]{\mathds{#1}}
\newcommand{\mfk}[1]{\mathfrak{#1}}
\newcommand{\mbs}[1]{\boldsymbol{#1}}
\newcommand{\mbb}[1]{\mathbb{#1}}
\newcommand{\mbf}[1]{\mathbf{#1}}
\newcommand{\msf}[1]{\mathsf{#1}}
\newcommand{\mscr}[1]{\mathscr{#1}}
\newcommand{\mcal}[1]{\mathcal{#1}}
\newcommand{\rd}{\mathsf{d}}
\newcommand{\wt}[1]{\widetilde{#1}}
\newcommand{\ct}{\mbb{C}^\times}
\newcommand*\bbar[1]{%
  \hbox{%
    \vbox{%
      \hrule height 0.5pt %
      \kern0.2ex%
      \hbox{%
        \kern-0.1em%
        \ensuremath{#1}%
        \kern-0.1em%
      }%
    }%
  }%
}
\newcommand{\SU}[1]{\text{SU}(#1)}
\newcommand{\U}[1]{\text{U}(#1)}
\newcommand{\SO}{\text{SO}}
\newcommand{\ts}{\mcal{Z}_{\mbb{H}^3}}
\newcommand{\tsu}{\mcal{Z}_{\mcal{U}}}
\newcommand{\tsrf}{\mcal{Z}_{\R^4}}
\newcommand{\pc}{\Sigma_{N,n}}
\newcommand{\cpc}{\wt{\Sigma}_{N,n}}
\newcommand{\ipc}[1]{\Sigma_{N,n}^{(#1)}}
\newcommand{\icpc}[1]{\wt{\Sigma}_{N,n}^{(#1)}}
\newcommand{\iipc}[1]{{\Sigma'}_{N,n}^{(#1)}} %
\newcommand{\inpc}[2]{\Sigma_{N,n}^{(#1,#2)}}
\newcommand{\ismc}[1]{S_{#1}^{\text{sym}}}
\newcommand{\tcpn}{\wt{\Sigma}_N}
\newcommand{\cpn}{\Sigma_N}
\newcommand{\sth}{\textsuperscript{th}\,}
\newcommand{\R}{\mbb{R}}
\newcommand{\C}{\mbb{C}}
\newcommand{\Pbb}{\mbb{P}}
\newcommand{\Hbb}{\mbb{H}}
\newcommand{\ppm}{\Pbb^1_-\times\Pbb^1_+}
\newcommand{\wh}[1]{\widehat{#1}}
\newcommand{\CS}{\text{CS}}
\newcommand{\sbcm}{\Pbb S(\C^{1,3})}
\newcommand{\sbrm}{\Pbb S(\R^{1,3})}
\newcommand{\sbe}{\Pbb S(\R^4)}
\newcommand{\sbm}{\Pbb S(M)}
\newcommand{\rnk}[1]{\text{rnk}(\mfk{#1})}
\numberwithin{equation}{section}
\newcommand{\LatticeDualLatticegCPM}{

\begin{tikzpicture}[]
    \draw[pattern=crosshatch dots gray, opacity=.3] (-3.49,1.98) rectangle (3.99,-2.49);
    
    \draw[dashed,-Stealth, line width=1pt] (-5,2)node[left]{$q$} -- (+5.4,2);
    \draw[dashed,Stealth-, line width=1pt] (-5.3,.5) -- (+5,.5)node[right]{$q'$};
    \draw[dashed,-Stealth, line width=1pt] (-5,-1)node[left]{$q$} -- (+5.4,-1);
    \draw[dashed,Stealth-, line width=1pt] (-5.3,-2.5) -- (+5,-2.5)node[right]{$q'$};

    \draw[dashed,Stealth-, line width=1pt] (-3.5,3.2) -- (-3.5,-3)node[below]{$p$};
    \draw[dashed,-Stealth, line width=1pt] (-2,2.8) node[above]{$p'$} -- (-2,-3.4);
    \draw[dashed,Stealth-, line width=1pt] (-.5,3.2) -- (-.5,-3)node[below]{$p$};
    \draw[dashed,-Stealth, line width=1pt] (+1,2.8)node[above]{$p'$} -- (+1,-3.4);
    \draw[dashed,Stealth-, line width=1pt] (+2.5,3.2) -- (+2.5,-3)node[below]{$p$};
    \draw[dashed,-Stealth, line width=1pt] (+4,2.8)node[above]{$p'$} -- (+4,-3.4);

    \draw[fill=black] (-4.3,1.25) circle (3pt);
    \draw[fill=black] (-2.75,-.25) circle (3pt);
    \draw[line width=1.5pt,postaction={decoration={markings,mark=at position .58 with {\arrow{Stealth}}},decorate}] (-4.3,1.25) to (-2.75,-.25);
    
    \draw[fill=black] (-1.2,-1.75) circle (3pt);
    \draw[line width=1.5pt,postaction={decoration={markings,mark=at position .55 with {\arrow{Stealth}}},decorate}] (-2.75,-.25) to (-1.2,-1.75);

    \draw[fill=black] (+.28,-3.25) circle (3pt);
    \draw[line width=1.5pt,postaction={decoration={markings,mark=at position .55 with {\arrow{Stealth}}},decorate}] (-1.2,-1.75) to (+.28,-3.25);

    \begin{scope}[xshift=1.5cm,yshift=1.5cm]
    \draw[fill=black] (-4.3,1.25) circle (3pt);
    \draw[fill=black] (-2.75,-.25) circle (3pt);
    \draw[line width=1.5pt,postaction={decoration={markings,mark=at position .58 with {\arrow{Stealth}}},decorate}] (-4.3,1.25) to (-2.75,-.25);
    
    \draw[fill=black] (-1.2,-1.75) circle (3pt);
    \draw[line width=1.5pt,postaction={decoration={markings,mark=at position .55 with {\arrow{Stealth}}},decorate}] (-2.75,-.25) to (-1.2,-1.75);

    \draw[fill=black] (+.28,-3.25) circle (3pt);
    \draw[line width=1.5pt,postaction={decoration={markings,mark=at position .55 with {\arrow{Stealth}}},decorate}] (-1.2,-1.75) to (+.28,-3.25);

    \draw[fill=black] (+1.78,-4.75) circle (3pt);
    \draw[line width=1.5pt,postaction={decoration={markings,mark=at position .55 with {\arrow{Stealth}}},decorate}] (+.28,-3.25) to (+1.78,-4.75);
    \end{scope}

    \begin{scope}[xshift=4.5cm,yshift=1.5cm]
    \draw[fill=black] (-4.3,1.25) circle (3pt);
    \draw[fill=black] (-2.75,-.25) circle (3pt);
    \draw[line width=1.5pt,postaction={decoration={markings,mark=at position .58 with {\arrow{Stealth}}},decorate}] (-4.3,1.25) to (-2.75,-.25);
    
    \draw[fill=black] (-1.2,-1.75) circle (3pt);
    \draw[line width=1.5pt,postaction={decoration={markings,mark=at position .55 with {\arrow{Stealth}}},decorate}] (-2.75,-.25) to (-1.2,-1.75);

    \draw[fill=black] (+.28,-3.25) circle (3pt);
    \draw[line width=1.5pt,postaction={decoration={markings,mark=at position .55 with {\arrow{Stealth}}},decorate}] (-1.2,-1.75) to (+.28,-3.25);
    \end{scope}

    \draw[fill=black] (-4.3,1.25-3) circle (3pt);
    \draw[fill=black] (-4.3+1.5,1.25-4.5) circle (3pt);
    \draw[fill=black] (-4.3+7.5,1.25+1.5) circle (3pt);
    \draw[fill=black] (-4.3+9,1.25) circle (3pt);

    \draw[line width=1.5pt,postaction={decoration={markings,mark=at position .6 with {\arrow{Stealth}}},decorate}] (-4.3+3,1.25) -- +(-1.5,-1.5);
    \draw[line width=1.5pt,postaction={decoration={markings,mark=at position .6 with {\arrow{Stealth}}},decorate}] (-2.75+1.5,-.25+1.5) -- +(1.5,1.5);

    \draw[line width=1.5pt,postaction={decoration={markings,mark=at position .6 with {\arrow{Stealth}}},decorate}] (-4.3+6,1.25) -- +(-1.5,-1.55);
    
    \draw[line width=1.5pt,postaction={decoration={markings,mark=at position .6 with {\arrow{Stealth}}},decorate}] (-2.75+1.5,-.25-1.5) -- +(1.5,1.5);

    \draw[line width=1.5pt,postaction={decoration={markings,mark=at position .6 with {\arrow{Stealth}}},decorate}] (-4.3+6.05,1.25-3) -- +(-1.5,-1.5);
    
    \draw[line width=1.5pt,postaction={decoration={markings,mark=at position .6 with {\arrow{Stealth}}},decorate}] (-2.75+4.5,-.25-1.5) -- +(1.5,1.5);

    \draw[line width=1.5pt,postaction={decoration={markings,mark=at position .6 with {\arrow{Stealth}}},decorate}] (-4.3,1.25) -- +(1.5,1.5);

    \draw[line width=1.5pt,postaction={decoration={markings,mark=at position .6 with {\arrow{Stealth}}},decorate}] (-4.3,1.25-3) -- +(1.5,1.5);

    \draw[line width=1.5pt,postaction={decoration={markings,mark=at position .6 with {\arrow{Stealth}}},decorate}] (-4.3+3.05,1.25-3) -- +(-1.5,-1.45);

    \draw[line width=1.5pt,postaction={decoration={markings,mark=at position .6 with {\arrow{Stealth}}},decorate}] (-4.3,1.25-3) -- +(1.5,-1.5);

    \draw[line width=1.5pt,postaction={decoration={markings,mark=at position .6 with {\arrow{Stealth}}},decorate}] (-4.3+9,1.25-3.05) -- +(-1.5,-1.5);

    \draw[line width=1.5pt,postaction={decoration={markings,mark=at position .6 with {\arrow{Stealth}}},decorate}] (-4.3+6,1.25) -- +(1.5,1.5);

    \draw[line width=1.5pt,postaction={decoration={markings,mark=at position .6 with {\arrow{Stealth}}},decorate}] (-4.3+7.5,1.25+1.5) -- +(1.5,-1.5);

    \draw[line width=1.5pt,postaction={decoration={markings,mark=at position .55 with {\arrow{Stealth}}},decorate}] (-4.3+9,1.25) -- +(-1.5,-1.55);

\end{tikzpicture}
}
\newcommand{\BoltzmannWeightsgCPM}{

\begin{tikzpicture}
    \draw[line width=1.5pt,postaction={decoration={markings,mark=at position .35 with {\arrow{Stealth}}},decorate}] (0,-1.5)node[below,yshift=-.05cm]{$\mbs{v}$} -- (0,1.5) node[above,yshift=.05cm]{$\mbs{w}$};
    \draw[line width=1pt, dashed, -Stealth] (-1.5,-1.5)node[below]{$p$} -- ++(3,3);
    \draw[line width=1pt, dashed, Stealth-] (-1.5,+1.5) -- ++(3,-3)node[below]{$q$};

    \draw[fill=black] (0,-1.5) circle (3pt);
    \draw[fill=black] (0,1.5) circle (3pt);
    \node[] at (-3,0) {$\overline{\mscr{W}}_{pq}(\mbs{v},\mbs{w})=$};
    
    \begin{scope}[xshift=+5cm]
    \draw[line width=1.5pt,postaction={decoration={markings,mark=at position .8 with {\arrow{Stealth}}},decorate}] (0,-1.5)node[below,yshift=-.05cm]{$\mbs{v}$} -- (0,1.5) node[above,yshift=.05cm]{$\mbs{w}$};
    \draw[line width=1pt, dashed, Stealth-] (-1.5,-1.5)node[below]{$p$} -- ++(3,3);
    \draw[line width=1pt, dashed, -Stealth] (-1.5,+1.5) -- ++(3,-3)node[below]{$q$};
    \end{scope}

    \draw[fill=black] (5,-1.5) circle (3pt);
    \draw[fill=black] (5,1.5) circle (3pt);
    \node[] at (+8,0) {$=\overline{\mscr{W}}^{-1}_{qp}(\mbs{v},\mbs{w})$};

\end{tikzpicture}
}
\newcommand{\ElementaryBoxgCPM}{
    \begin{tikzpicture}
    \draw[line width=1pt, dashed, -Stealth] (-3,+.75)node[left]{$q$} -- +(4.9,0);
    \draw[line width=1pt, dashed, Stealth-] (-3.3,-.75) -- +(4.9,0)node[right]{$q'$};
    \draw[line width=1pt, dashed, Stealth-] (-1.5,1.9) -- +(0,-3.4)node[below]{$p$};
    \draw[line width=1pt, dashed, -Stealth] (-1.5+1.5,1.5)node[above]{$p'$} -- +(0,-3.4);

    \draw[pattern=crosshatch dots gray, opacity=.3] (-3+1.5,.74) rectangle ++(1.5,-1.48);

    \draw[fill=black] (-2.25,0) circle (3pt)node[left,xshift=-.05cm]{$\mbs{w}$};
    \draw[fill=black] (-2.25+1.5,1.5) circle (3pt)node[above,yshift=+.05cm]{$\mbs{v}'$};;
    \draw[fill=black] (-2.25+3,0) circle (3pt)node[right,xshift=+.05cm]{$\mbs{v}$};;
    \draw[fill=black] (-2.25+1.5,-1.5) circle (3pt)node[below,yshift=-.05cm]{$\mbs{w}'$};

    \draw[line width=1.5pt,postaction={decoration={markings,mark=at position .58 with {\arrow{Stealth}}},decorate}] (-2.25,0) -- ++(+1.5,+1.5);

    \draw[line width=1.5pt,postaction={decoration={markings,mark=at position .6 with {\arrow{Stealth}}},decorate}] (-2.25+1.5,1.5) -- ++(+1.5,-1.5);

    \draw[line width=1.5pt,postaction={decoration={markings,mark=at position .6 with {\arrow{Stealth}}},decorate}] (-2.25+3,0) -- ++(-1.5,-1.5);

    \draw[line width=1.5pt,postaction={decoration={markings,mark=at position .6 with {\arrow{Stealth}}},decorate}] (-2.25+1.5,-1.5) -- ++(-1.5,+1.5);

    \node at (3.6,0) {$=\,\mcal{R}^{\mbs{v}\mbs{v}'}_{\mbs{w}\mbs{w}'}(q,q';p,p')$};
    \end{tikzpicture}
}
\newcommand{\RMatrixgCPM}{
    \begin{tikzpicture}
        \draw[line width=1pt] (-1,-1.5)node[below]{$(\mbs{v},z)$} to[out=90, in=210] (0,0) to[out=30,in=270] (1,+1.5)node[above]{$(\mbs{w},z)$};
        \draw[line width=1pt] (+1,-1.5)node[below]{$(\mbs{w}',z')$} to[out=90, in=-30] (0,0) to[out=150, in=270] (-1,+1.5)node[above]{$(\mbs{v}',z')$};
        
        \draw[fill=black] (0,0) circle (2pt)node[xshift=1.3cm]{\small $\mcal{R}^{\mbs{vv}'}_{\mbs{ww}'}(z,z')$};
    \end{tikzpicture}
}
\newcommand{\YangBaxterEquation}{

    \begin{tikzpicture}
    \draw[line width=1pt] (-2,-2)node[below]{$(\mbs{v}_1,z)$} to[out=90, in=200] (0,0) to [out=20,in=270] (2,+2)node[above]{$(\mbs{v}_3,z)$};

    \draw[line width=1pt] (0,-2)node[below]{$(\mbs{v}'_1,z')$} to[out=90, in=270] (-2,0) to [out=90,in=270] (0,+2)node[above]{$(\mbs{v}'_3,z')$};

    \draw[line width=1pt] (2,-2)node[below]{$(\mbs{v}''_1,z'')$} to[out=90, in=-20] (0,0) to [out=160,in=270] (-2,+2)node[above]{$(\mbs{v}''_3,z'')$};

    \draw[fill=black] (-1.45,-.8) circle (3pt)node[left,xshift=-.1cm]{\tiny $\mcal{R}^{\mbs{v}_1\mbs{v}'_2}_{\mbs{v}_2\mbs{v}'_1}(z,z')$};
    
    \draw[fill=black] (0,0) circle (3pt)node[right,xshift=.3cm]{\tiny $\mcal{R}^{\mbs{v}_2\mbs{v}''_2}_{\mbs{v}_3\mbs{v}''_1}(z,z'')$};
    
    \draw[fill=black] (-1.45,.8) circle (3pt)node[left,xshift=-.1cm]{\tiny $\mcal{R}^{\mbs{v}'_2\mbs{v}''_3}_{\mbs{v}'_3\mbs{v}''_2}(z',z'')$};

    \node[] at (3.5,0){$=$};

    \begin{scope}[xshift=+7cm]
    \draw[line width=1pt] (-2,-2)node[below]{$(\mbs{v}_1,z)$} to[out=90, in=200] (0,0) to [out=20,in=270] (2,+2)node[above]{$(\mbs{v}_3,z)$};

     \draw[fill=black] (0,0) circle (3pt)node[left,xshift=-.3cm]{\tiny $\mcal{R}^{\mbs{v}_1\mbs{v}''_3}_{\mbs{v}_2\mbs{v}''_2}(z,z'')$};

    \draw[line width=1pt] (2,-2)node[below]{$(\mbs{v}''_1,z'')$} to[out=90, in=-20] (0,0) to [out=160,in=270] (-2,+2)node[above]{$(\mbs{v}''_3,z'')$};

    \begin{scope}[xscale=-1]
    \draw[line width=1pt] (0,-2)node[below]{$(\mbs{v}'_1,z')$} to[out=90, in=270] (-2,0) to [out=90,in=270] (0,+2)node[above]{$(\mbs{v}'_3,z')$};

    \draw[fill=black] (-1.45,-.8) circle (3pt)node[right,xshift=.1cm]{\tiny $\mcal{R}^{\mbs{v}'_1\mbs{v}''_2}_{\mbs{v}'_2\mbs{v}''_1}(z',z'')$};

    \draw[fill=black] (-1.45,.8) circle (3pt)node[right,xshift=.1cm]{\tiny $\mcal{R}^{\mbs{v}_2\mbs{v}'_3}_{\mbs{v}_3\mbs{v}'_2}(z,z')$};
    
    \end{scope}
    \end{scope}
    \end{tikzpicture}
}
\newcommand{\RMatrixCSTheory}{

\begin{tikzpicture}
    \draw[line width=1pt] (-1.5,0)node[left]{$(\mbs{v},z)$} -- (1.5,0)node[right]{$(\mbs{w},z)$};

    \draw[line width=1pt] (0,-1.5)node[below]{$(\mbs{w}',z')$} -- (0,+1.5)node[above]{$(\mbs{v}',z')$};

    \draw[decorate, decoration={coil, amplitude=.1cm, segment length=.125cm, aspect=.4},line width = .75pt] (.75,0) -- (1.2,1);

    \draw[decorate, decoration={coil, amplitude=.1cm, segment length=.125cm, aspect=.4},line width = .75pt] (0,.75) -- (1,1.2);

    \shade[ball color=gray!80] (1.5,1.5) circle (.6cm);
    
\end{tikzpicture}}
\newcommand{\ParametersOfCorrespondence}{
\begin{tabular}{c||c}
 Hyperbolic $\SU{n}$-Monopole &  $(\mbb{Z}_N)^{n-1}$ Generalized Chiral Potts Model \\
\hline
\hline
$\text{rnk}(\SU{n})=n-1$ & $n-1$ copies of $\mathbb{Z}_N$-spins \\
\hline
\makecell{monopole charges $m_i$ for \\ with $m_1=\ldots=m_{n-1}=N$} & \makecell{$n-1$  spins taking value \\ in $\mbb{Z}_N=\{0,\ldots,N-1\}$} \\ 
\hline
\makecell{monopole masses  $p_i,\,i=1,\ldots,n$ \\ with $p_1=\ldots=p_n=0$} & \makecell{no counterpart, \\ $p_1=\ldots=p_n=0$} \\
\hline
$n-1$ spectral curves & \makecell{$n-1$ curves constructed \\ out of $\cpc$}\\
\end{tabular}
}
\newcommand{\ElementaryBoxEqualRMatrix}{
    \begin{tikzpicture}
    \draw[line width=1pt, dashed, -Stealth] (-3,+.75)node[left]{$q$} -- +(4.9,0);
    \draw[line width=1pt, dashed, Stealth-] (-3.3,-.75) -- +(4.9,0)node[right]{$q'$};
    \draw[line width=1pt, dashed, Stealth-] (-1.5,1.9) -- +(0,-3.4)node[below]{$p$};
    \draw[line width=1pt, dashed, -Stealth] (-1.5+1.5,1.5)node[above]{$p'$} -- +(0,-3.4);

    \draw[pattern=crosshatch dots gray, opacity=.3] (-3+1.5,.74) rectangle ++(1.5,-1.48);

    \draw[fill=black] (-2.25,0) circle (3pt)node[left,xshift=-.05cm]{$\mbs{w}$};
    \draw[fill=black] (-2.25+1.5,1.5) circle (3pt)node[above,yshift=+.05cm]{$\mbs{v}'$};;
    \draw[fill=black] (-2.25+3,0) circle (3pt)node[right,xshift=+.05cm]{$\mbs{v}$};;
    \draw[fill=black] (-2.25+1.5,-1.5) circle (3pt)node[below,yshift=-.05cm]{$\mbs{w}'$};

    \draw[line width=1.5pt,postaction={decoration={markings,mark=at position .58 with {\arrow{Stealth}}},decorate}] (-2.25,0) -- ++(+1.5,+1.5);

    \draw[line width=1.5pt,postaction={decoration={markings,mark=at position .6 with {\arrow{Stealth}}},decorate}] (-2.25+1.5,1.5) -- ++(+1.5,-1.5);

    \draw[line width=1.5pt,postaction={decoration={markings,mark=at position .6 with {\arrow{Stealth}}},decorate}] (-2.25+3,0) -- ++(-1.5,-1.5);

    \draw[line width=1.5pt,postaction={decoration={markings,mark=at position .6 with {\arrow{Stealth}}},decorate}] (-2.25+1.5,-1.5) -- ++(-1.5,+1.5);

    \node at (2.5,0) {$=$};

    \begin{scope}[xshift=.5cm]
    \draw[line width = 1.5pt] (3.5,.05)node[left]{$\mbs{w}$} -- (6.5,.05)node[right]{$\mbs{v}$};
    \draw[line width = 1.5pt] (5,-1.5)node[below]{$\mbs{w}'$} -- (5,1.5)node[above]{$\mbs{v}'$};
    \end{scope}
    \end{tikzpicture}
}
\title{\bf\Large\centering Hyperbolic Monopoles, (Semi-)Holomorphic Chern--Simons Theories, and Generalized Chiral Potts Models}
\author[a]{Seyed Faroogh Moosavian\footnote{\href{mailto:sfmoosavian@gmail.com}{\texttt{sfmoosavian@gmail.com}}}}
\author[b,c,d]{Masahito Yamazaki\footnote{\href{mailto:masahito.yamazaki@ipmu.jp}{\texttt{masahito.yamazaki@ipmu.jp}}}}
\author[c,e]{Yehao Zhou\footnote{\href{mailto:yehaozhou1994@gmail.com}{\texttt{yehaozhou1994@gmail.com}}}}
\affil[a]{\small Mathematical Institute, University of Oxford, Woodstock Road, Oxford, OX2 6GG, UK}
\affil[b]{\small Department of Physics, University of Tokyo, Hongo, Tokyo 113-0033, Japan}
\affil[c]{\small Kavli Institute for the Physics and Mathematics of the Universe, UTIAS,  University of Tokyo, Kashiwa, Chiba 277-8583, Japan}
\affil[d]{\small Trans-Scale Quantum Science Institute, University of Tokyo, Tokyo 113-0033, Japan}
\affil[e]{\small Shanghai Institute for Mathematics and Interdisciplinary Sciences, Block A, International Innovation Plaza, No. 657 Songhu Road, Yangpu District, Shanghai, China}
\date{}
\begin{document}

\maketitle

\vspace*{-.65cm}
\begin{abstract}
    We study the relation between spectral data of magnetic monopoles in hyperbolic space and the curve of the spectral parameter of generalized chiral Potts models (gCPM) through the lens of (semi-)holomorphic field theories. We realize the identification of the data on the two sides, which we will call the hyperbolic monopole/gCPM correspondence. For the group $\SU{2}$, this correspondence had been observed by Atiyah and Murray in the 80s. Here, we revisit and generalize this correspondence, and establish its origin. By invoking the work of Murray and Singer on the construction of spectral data of hyperbolic monopoles with arbitrary boundary values of the Higgs field, we first generalize the observation of Atiyah and Murray to the group $\SU{n}$ and put the correspondence on solid ground. By embedding the classical and exceptional groups into $\SU{n}$, we explore the correspondence for these groups. Next, we propose a technology to engineer gCPM within the 4d Chern--Simons (CS) theory, which explains various features of the model including the lack of rapidity-difference property of its R-matrix, and its peculiarity of having a genus$\,\ge 2$ curve of the spectral parameter. Finally, we investigate the origin of the correspondence. We first clarify how the two sides of the correspondence can be realized from the 6d holomorphic CS theory formulated on $\sbm$, the projective spinor bundle of the Minkowski space $M=\R^{1,3}$, in the case of hyperbolic $\SU{n}$-monopoles, and the Euclidean space $M=\R^4$, in the case of the gCPM. We then establish that $\sbm$ for both $M=\R^{1,3}$ and $\R^4$ can be holomorphically embedded into $\sbcm$, the projective spinor bundle of the complexified Minkowski space $\C^{1,3}$, of complex dimension five equipped with a fixed complex structure. Finally, we explain how the 6d CS theory on $\sbm$ can be realized as the dimensional reduction of the 10d holomorphic CS theory formulated on $\sbcm$. As the latter theory is only sensitive to the complex structure of $\sbcm$, which has been fixed, we realize the correspondence as two incarnations of the same physics in ten dimensions. 
\end{abstract}

\pagenumbering{gobble}
\pagebreak 
\pagenumbering{roman}
\tableofcontents
\pagenumbering{arabic}

\section{Introduction}
\label{sec:introduction}

Due to the complexity of physical systems, it is rare to find physical situations that can be modeled and simultaneously solved exactly. As such, integrable models and their properties have fascinated physicists and mathematicians for centuries. In this work, we are concerned with two classes of integrable systems: (1) magnetic monopoles in hyperbolic space $\Hbb^3$, and (2) the generalized chiral Potts model (gCPM), a special two-dimensional integrable lattice model in which dynamical variables, called spins, live on lattice sites. Let us briefly explain each before elucidating the objectives of this work.

\paragraph{Monopoles in Hyperbolic Space.} In quantum field theory, the study of classical solutions such as instantons, monopoles, and skyrmions is indispensable for understanding nonperturbative physics. A fascinating feature of these configurations is that they are intimately connected with integrable systems. The most famous of all is an (anti-)instanton in four dimensions, which is a solution of the (anti-)self-duality equation
\begin{equation}\label{eq:self-duality equation,introduction}
    F=\pm\star F \;,
\end{equation}
where $F$ is the curvature of a connection $A$ on a principal $G$-bundle over the four-manifold and $\star$ is the Hodge star operation. Moduli space of solutions of \eqref{eq:self-duality equation,introduction} has been extensively studied, in particular by Donaldson \cite{Donaldson198302,Donaldson198501}. There is considerable evidence that many integrable systems can be derived from the (anti-)self-duality equation and as such can be regarded as the master integrable system \cite{Ward198508}. A useful tool in the study of integrability of (anti-)instantons is the twistor correspondence, according to which (anti-)instanton solutions give rise to holomorphic vector bundles over a certain complex manifold called the twistor space. For (anti-)instantons on $\R^4$, such a correspondence was first realized by Ward \cite{Ward1977,Ward197704} and from a mathematical perspective, by Atiyah, Hitchin, and Singer \cite{AtiyahHitchinSinger197707,AtiyahHitchinSinger197809}. 

\smallskip One of the most important integrable systems obtained from the (anti-)self-duality equation is the magnetic monopole. Traditionally, they have been realized as the reduction of (anti-)self-duality equation by assuming invariance under a one-dimensional group of symmetries. In the process, (anti-)self-duality equation reduces to the Bogomolny equation, the defining equation of magnetic monopoles \cite{Bogomolny197604}\footnote{The original paper of Bogomolny can be found in \cite[pg.\ 389]{RebbiSoliani198412}.}
\begin{equation}\label{eq:Bogomolny equations, introduction}
    F=\star D\phi\;,
\end{equation}
where the gauge covariant derivative $D$ is defined with the connection $A$,  $\phi$ is an adjoint-valued scalar (the Higgs field), and $\star$ is now the Hodge star operation on a three-manifold.\footnote{It would be more accurate to write the Bogomolny equations as $F=\pm\star D\phi$, where $\pm$ correspond to monopole/anti-monopole configurations, which come from the reduction of (anti-)instantons. To void cluttering, we just ignore this distinction in this work. It would not affect any of the results.} The Bogomolny equation \eqref{eq:Bogomolny equations, introduction} has been studied on general three-manifolds and in particular hyperbolic ones by Braam \cite{Braam1987,Braam198902}. Monopoles in the Euclidean space $\mbb{R}^3$, or Euclidean monopoles, have been the most well-studied case \cite{AtiyahHitchin198812}. They can be realized as translation-invariant (anti-)instantons. The study of monopoles in hyperbolic space $\mbb{H}^3$, or hyperbolic monopoles, was initiated by Atiyah \cite{Atiyah1984} who realized them as circle-invariant (anti-)instantons. The zero-curvature limit of hyperbolic monopoles gives Euclidean ones \cite{Atiyah1984,JarvisNorbury199711,Nash200810}. However, unlike Euclidean monopoles, hyperbolic monopoles are distinguished by boundary values of the Higgs field, which are also called magnetic monopole masses \cite{BraamAustin199008,MurraySinger199607,Norbury199911}. Therefore, it turns out to be convenient to distinguish two classes of hyperbolic monopoles based on the boundary values of the Higgs field: (1) Those with integer (or half-integer) boundary values of the Higgs field: these can be thought of as circle-invariant (anti-)instantons as conceived by Atiyah \cite{Atiyah1984,Chakrabarti198601,Nash198608,BraamAustin199008,Chan201506,Chan2017}, and (2) those with arbitrary real boundary values of the Higgs field: these can be thought of as dilatation-invariant (anti-)instantons, as have been realized by Murray and Singer \cite{MurraySinger199607}.\footnote{Hyperbolic monopoles with non-integral boundary values of Higgs field and appropriate boundary conditions could be interpreted as solutions to the (anti-)self-duality equation with certain singularities along a codimension-two subspace \cite{Atiyah1984,ForgacsHorvathPalla198102}. Furthermore, in general, the reduction of (anti-)self-duality equations on a generic curved spacetime does not need to lead to a Bogomolny-type equation once invariance under certain coordinate transformations are imposed \cite{ComtetForgacsHorvathy198407}.} The difference between the boundary values of the Higgs field comes exactly because of their particular realization. The crux of the construction in \cite{Atiyah1984,BraamAustin199008,Chan201506,Chan2017} is that hyperbolic monopoles are circle-invariant solutions of \eqref{eq:self-duality equation,introduction}. The quantization of the asymptotic values is then forced by the compactness of the circle. Therefore, it is conceivable that to construct hyperbolic monopoles with arbitrary boundary values of the Higgs field, one should consider solutions of the (anti-)self-duality equation invariant under a non-compact group. Indeed, the construction of Murray and Singer in \cite{MurraySinger199607} exploits this idea: they realize hyperbolic monopoles as dilatation (i.e.\ $\R_+$-) invariant solutions of \eqref{eq:self-duality equation,introduction} on an open subset of Minkowski space.

\smallskip Explicit examples of hyperbolic monopole configurations have been constructed in the literature \cite{Chakrabarti198601,MantonSutcliffe201207}. Interestingly, hyperbolic monopoles are also related to other integrable systems \cite{Ward199811}: the Braam--Austin or discrete Nahm equation, which is a one-dimensional lattice equation \cite{BraamAustin199008}, and a certain soliton system in three-dimensional AdS spacetime in the Lorentzian signature. Similar to (anti-)instantons, there is a twistor correspondence for magnetic monopoles. Through this correspondence, a monopole solution on a three-manifold determines a holomorphic vector bundle over the twistor space of that three-manifold \cite{Hitchin198212,Hitchin198302,Ward198109} (see also \cite{Prasad198003,PrasadRossi198103}). The monopole solution can eventually be reconstructed in terms of its so-called spectral data in a unique fashion. The twistor correspondence for monopoles on $\R^3$ was first pointed out by Ward \cite{Ward198109} and Hitchin \cite{Hitchin198212} for the gauge group $\SU{2}$, and then generalized to arbitrarily connected compact simple gauge groups in \cite{Murray198306,Murray198412,Murray1983,HurtubiseMurray1989,HurtubiseMurray1990}. Hence, this correspondence is sometimes called the Ward--Hitchin correspondence, and can be thought of as a non-linear version of one of the special instances of the Penrose transform \cite{Penrose196702,Penrose196805,Penrose196901,Eastwood198501}. The spectral data of hyperbolic $\SU{2}$-monopoles with integer (or half-integer) boundary values of the Higgs field is constructed in \cite{Atiyah1984,BraamAustin199008} (see also \cite{Sutcliffe202008} for the connection to ADHM construction), while for the group $\SU{n}$ with arbitrary boundary values of the Higgs field is worked out by Murray and Singer in \cite{MurraySinger199607}. There are partial results on the spectral data of hyperbolic $\SU{n}$-monopoles with integer boundary values of the Higgs field in \cite[\S 14]{Chan2017}. This work uses boundary conditions different than the ones used by Murray and Singer.

\paragraph{Integrability and the Yang--Baxter Equation.} $(1+1)$-dimensional integrable models are solvable models of statistical mechanics in which dynamical variables, or spins, live on sites of a quantum one-dimensional chain evolving in time. 
One can equivalently think of them as classical two-dimensional lattice models. Historically, the first systematic approach to solve the eigenvalue problem for such systems was the so-called coordinate Bethe ansatz \cite{Bethe193103}. Such models have been studied intensively since the complete analytic solution of the two-dimensional Ising model on a square lattice by Onsager \cite{Onsager194310}. Time and again it has been proven that the most successful and conceptual approach in uncovering the integrable structure of these models has been the so-called algebraic Bethe ansatz or quantum inverse scattering method \cite{FaddeevSklyaninTakhtajan1980,FaddeevTakhtajan1979,Sklyanin1982} based on concepts such as Lax pair, R-matrix, and the Yang--Baxter (YB) equation \cite{Lax196809,Onsager194310,McGuire196405,Yang196711,Yang196712,Baxter197203,Baxter197805}. 

\smallskip Spins of an integrable model take value in representations of a symmetry algebra $\mfk{g}$. This leads to a useful way of thinking about R-matrix, whose components are denoted as  $\mcal{R}^{\mbs{vv}'}_{\mbs{ww}'}(z,z')$: it is the intertwiner of the tensor product of two modules $\mcal{V}$ and $\mcal{V}'$ of $\mfk{g}$: $\mcal{R}_{\mcal{V}\mcal{V}'}(z,z'):\mcal{V}\otimes\mcal{V}'\to\mcal{V}'\otimes\mcal{V}$. It depends on two types of variables (1) two spectral/rapidity parameters $z\ne z'$, and (2) the spin states $\{\mbs{v},\mbs{v}',\mbs{w},\mbs{w}'\}$.
\begin{figure}[H]
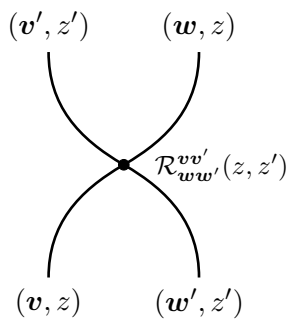

    \centering
    \RMatrixgCPM
        \caption{A graphical representation of the R-matrix of an integrable model.} %
    \label{fig:r-matrix of a general integrable model}
\end{figure}
The spectral parameters correspond to the inhomogeneity parameters of the integrable model and the states correspond to weights in representations of $\mfk{g}$. If one thinks of the R-matrix as $\mcal{R}_{\mcal{V}\mcal{V}'}(z,z'):\mcal{V}\otimes\mcal{V}'\to\mcal{V}'\otimes\mcal{V}$, then the YB equations can be obtained by acting on $\mcal{V}\otimes\mcal{V}'\otimes\mcal{V}''$. It is given by
\begin{equation}\label{eq:general YB equations}
    \mcal{R}_{\mcal{V}\mcal{V}'}(z,z')\mcal{R}_{\mcal{V}\mcal{V}''}(z,z'')\mcal{R}_{\mcal{V}'\mcal{V}''}(z',z'')=\mcal{R}_{\mcal{V}'\mcal{V}''}(z',z'')\mcal{R}_{\mcal{V}\mcal{V}''}(z,z'')\mcal{R}_{\mcal{V}\mcal{V}'}(z,z') \;.
\end{equation}
The YB equation is a sufficient condition for integrability whose origin goes back to the star-triangle relation mentioned in passing in the introduction of Onsager's paper \cite{Onsager194310}. It ensures the existence of enough number of conserved quantities in involution which in turns guarantees the complete quantum integrability of the model.\footnote{Strictly speaking, this is correct for the so-called ultra-local quantum integrable systems. The non-ultralocal quantum integrable systems are characterized by the braided quantum YB equation \cite[\S 8]{Kundu199612}.} By reading from bottom to top, a graphical representation of the YB equation is
\begin{figure}[H]
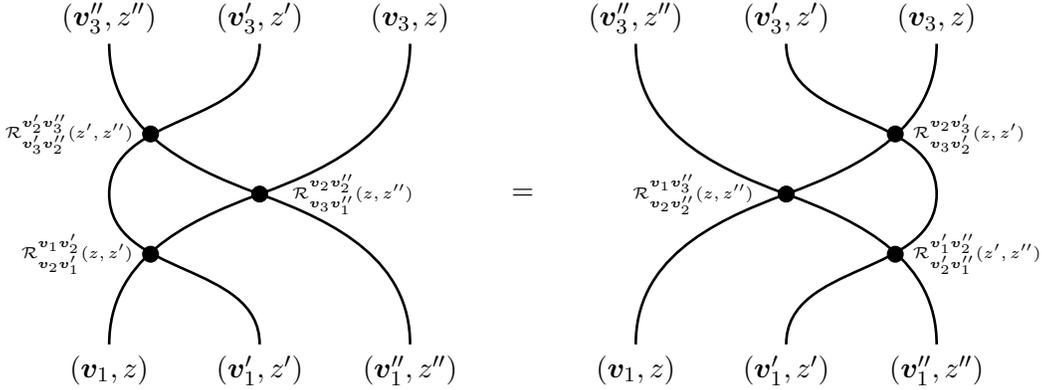

    \centering
    \YangBaxterEquation
    \caption{A graphical representation of the Yang--Baxter equation.}
    \label{fig:yang-baxter equation}
\end{figure}

\smallskip Due to the fundamental importance of the YB equation, finding its solutions and their classification have been actively pursued in the 70s and 80s \cite{KulishReshetikhinSklyanin198109,Sklyanin1982,BelavinDrinfeld198207,BelavinDrinfeld198307,BelavinDrinfeld1998,KulishSklyanin198207,BazhanovStroganov198207,Bazhanov198709,Bazhanov198710,Reshetikhin198710,Stolin199112,Kulish198611,LeitesSerganova198401,BazhanovShadrikov198712}.\footnote{An explicit elliptic solution for the Lie superalgebra $\mfk{sl}(1|1)$ was constructed in \cite{IshtiaqueMoosavianZhou202308} using techniques from the Bethe/Gauge correspondence.} 
For an integrable spin model associated with a Lie algebra $\mfk{g}$, its R-matrix can be expanded in terms of a formal parameter $\hbar$ as $\mcal{R}_{\mcal{V}\mcal{V}'}(z)=\mds{1}+\hbar\,r_{\mcal{V}\mcal{V}'}(z)+\mcal{O}(\hbar^2)$, where $r_{\mcal{V}\mcal{V}'}(z)\in\text{End}(\mfk{g}\otimes\mfk{g})$ is called the classical R-matrix. As a result of \eqref{eq:general YB equations}, it satisfies the classical YB equation
\begin{equation}\label{eq:classical YB equation}
    [r_{\mcal{V}\mcal{V}'}(z,z'),r_{\mcal{V}\mcal{V}''}(z,z'')]+[r_{\mcal{V}\mcal{V}'}(z,z'),r_{\mcal{V}'\mcal{V}''}(z',z'')]+[r_{\mcal{V}\mcal{V}''}(z,z''),r_{\mcal{V}'\mcal{V}''}(z',z'')]=0 \;.
\end{equation}
The classification of solutions of \eqref{eq:classical YB equation} is achieved in the celebrated work of Belavin and Drinfeld \cite{BelavinDrinfeld198207,BelavinDrinfeld198307,BelavinDrinfeld1998}. It was realized in loc.\ cit.\ that if one assumes the rapidity-difference property, i.e.\ $\mcal{R}(z,z')=\mcal{R}(z-z')$ or $\mcal{R}(z/z')$, then the rapidity parameters in \eqref{eq:classical YB equation} would take value in $\C,\C^\times,$ or $\mbb{E}$ (an elliptic curve), which are called the curve of spectral parameter of the model. It then turns out that the corresponding classical R-matrix is a rational, trigonometric, or elliptic function of the spectral parameter. Such solutions are called rational, trigonometric, and elliptic, respectively. These classical R-matrices can be quantized and the solutions of the YB equation \eqref{eq:general YB equations} can be constructed from the representation theory of Yangians, quantum affine, and quantum elliptic algebras, respectively \cite{EtingofKazhdan199506}. Over the years, it has been proven difficult to construct integrable spin models that evade this classification scheme and have rapidity parameters belonging to curves of genus$\,\ge 2$.\footnote{There are models whose rapidity parameter belong to higher-dimensional complex manifolds. An example of such models with three affine rapidity parameters lying on $\Pbb^3$ is constructed in \cite{Martins201410}.}

\paragraph{Chiral Potts Model and its Generalization.} The most important assumption in the Belavin--Drinfeld classification is the assumption of the rapidity-difference property of the R-matrix, which can be relaxed.\footnote{Once the assumption of rapidity-difference is removed, it is not difficult to imagine that the curve of spectral parameter could have genus$\,\ge 2$. The reason is that we cannot make sense of expressions like $z-z'$ or $z/z'$ on such curves without any additive structure. The closest analog objects on a higher-genus curve are the prime forms, which are sections of a holomorphic line bundle over the curve.} A model without this property was constructed by Shastry \cite{Shastry198801}. However, the most peculiar of such models is the chiral Potts (or chiral clock) model (CPM), first uncovered for $n=2$ in  \cite{Au-YangMcCoyBarryPerkTangYan1987,BaxterPerkAu-Yang1988}, and later generalized to $n\ge 2$ gCPM in \cite{BazhanovKashaevMangazeev1990,BazhanovKashaevMangazeevStroganov199105}. The interest in this model comes from its connection to commensurate-incommensurate phase transitions \cite{HowesKadanoffDenNijs198301,vonGehlenRittenberg1985} and also the so-called ripple phase in lipid-bilayer biological membranes \cite{ScottPearce198902}. We refer to \cite{Baxter198808,AlbertiniMcCoyPerkTang198903,BaxterBazhanovPerk199004,BazhanovStroganov199005,Baxter199005,Baxter199302,Cardy199210,Baxter200501,Baxter200504} for various features of the model and \cite{Au-YangPerk1989,AuYangPerk199609,Perk201511,Au-YangPerk201601} for extensive review of  its history. 

\smallskip The spins of the gCPM are $n-1$ copies of $\mbb{Z}_N$ variables on a square lattice. The curve of the spectral parameter of the model is an algebraic curve $\cpc$ in $\mbb{P}^{2n-1}$ characterized by the solution of the following equations
\begin{equation}\label{eq:curve of spectral parameter of gCPM, introduction}
	\begin{pmatrix}
		{z^+_{i}}^{N}
		\\
		{z^-_i}^{N}
	\end{pmatrix}
	= K_{ij}
	\begin{pmatrix}
		{z^+_j}^{N}
		\\
		{z^-_j}^{N}
	\end{pmatrix}\;,
	\qquad 
    i,j=1,\ldots,n \;,
\end{equation}
where $[z^+_1:z^-_1:\ldots:z^+_n:z^-_n]$ denotes the homogeneous coordinates on $\Pbb^{2n-1}$, and the matrices $K_{ij}\in\text{SL}(2,\C)$ satisfy certain relations (see \eqref{eq:relations satisfied by matrices Kij of gCPM}). The genus of the curve is given by
\begin{equation}
    g_{\cpc}=N^{2(n-1)}(N(n-1)-n)+1 \;.
\end{equation}
The original model in \cite{Au-YangMcCoyBarryPerkTangYan1987,BaxterPerkAu-Yang1988} corresponds to $n=2$. It is clear that $g$ is always greater or equal to one and the only situation with $g=1$ is $(n, N)=(2,2)$, which is the Ising model. The proof that the R-matrix of the gCPM satisfies the YB equation appeared in \cite{DateJimboKeiMiwa199008,DateJimboKeiMiwa199103,DateJimboMikiMiwa199104} for odd values of $N$, using the representation theory of quantum affine algebra $\text{U}_q(\wh{\mfk{sl}}(n,\C))$ with $q^N=1$ being a root of unity,\footnote{The fact that gCPMs are related to the representation theory of $\text{U}_q(\wh{\mfk{sl}}(n,\C))$ was conjectured in \cite{BazhanovKashaevMangazeevStroganov199105} and later proved in \cite{DateJimboMikiMiwa199104} for odd $N$.} and for general $N$ in \cite{KashaevMangazeevNakanishi199109}. More recently, it is also understood as a root-of-unity degeneration of more general integrable models with continuous spins \cite{BazhanovSergeev201006,BazhanovSergeev201106,KelsYamazaki201709},
which in turn can be reproduced from supersymmetric indices of supersymmetric quiver gauge theories 
\cite{Yamazaki201203,Terashima:2012cx,Yamazaki201307,Yamazaki201808}. Furthermore, the model has been studied in connection to knot theory \cite{DateJimboMikiMiwa199201}. However, 
 it is fair to say that we are still very far from a complete understanding of the gCPMs.

\paragraph{Hyperbolic Monopoles and CPM.} Shortly after the discovery of CPM, Atiyah and Murray \cite{Atiyah199106,AtiyahMurray1995} observed a remarkable connection between the curve of the spectral parameter of CPM and the spectral curve of hyperbolic monopoles associated with the gauge group $\SU{2}$: They noticed that by imposing a certain reality condition on the curve of CPM and taking the limit of vanishing boundary values of the Higgs field, the free $\mbb{Z}_N$-quotient of the resulting curve of the spectral parameter of CPM can be identified with the spectral curve of the hyperbolic monopole. Despite tremendous progress in the theory of integrable models, the origin of this mysterious connection remained unknown, and relatively very little attention has been paid to it over the past decades since the observation of Atiyah and Murray. The main objective of this work is to revisit this observation.  

\paragraph{4d Chern--Simons Theory.} One of the most remarkable progress in the theory of integrable models in the last two decades has been the discovery of the 4d version of Chern--Simons (CS) theory by Costello \cite{Costello201303}
\begin{equation}\label{eq:4d CS action,introduction}
    S=\frac{1}{2\pi}\bigintsss_{C\times\Sigma}\omega\wedge\CS(A) \;.
\end{equation}
The theory defined on a product manifold of the form $C\times\Sigma$: it is topological along $C$ and holomorphic along $\Sigma$.\footnote{It is expected that this theory makes sense on any four-manifold with a transverse holomorphic foliation.} The one-form $\omega$ is along $\Sigma$: $\omega=\omega(z)\rd z$ with $z$ being a local coordinate along $\Sigma$, and $\CS(A)$ is the CS three-form for a connection $A=A_C+A_{\bmb{z}}\rd\bmb{z}$, with $A_C$ being the connection along $C$. It has been realized that all integrable spin-chain models which fit into the Belavin--Drinfeld classification scheme, i.e.\ those whose curve of spectral parameter is either $\C,\C^\times,$ and $\mbb{E}$, can be realized within this theory \cite{Witten201611,CostelloWittenYamazaki201709,CostelloYamazakiWitten201802}. In particular, the R-matrix of these models can be constructed from a particular configuration of Wilson lines, equivalent to the configuration of Fig. \ref{fig:r-matrix of a general integrable model}.
\begin{figure}[H]
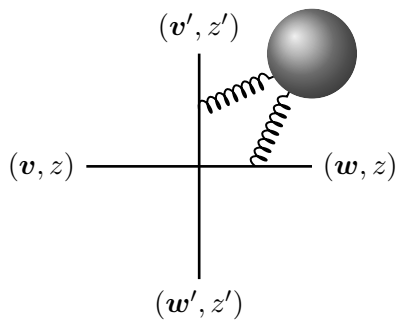

    \centering
    \RMatrixCSTheory
    \caption{The configuration of Wilson lines in the 4d CS theory for the computation of R-matrix. The wavy lines denote the gluon exchange and the gray circle shows the interaction in the bulk.}
    \label{fig:r-matrix from 4d CS theory}
\end{figure}
The two Wilson lines are stretched along the topological plane $C$ and are supported at different points $z,z'\in\Sigma$. The evaluation of Feynman diagrams of gluon exchanges between the lines computes the R-matrix where the leading-order nontrivial term can be identified with the classical R-matrix of the model \cite{CostelloWittenYamazaki201709}. From this perspective, the YB equation in Fig. \ref{fig:yang-baxter equation} is an immediate consequence of the existence of extra dimensions. Furthermore, many important classes of two-dimensional integrable field theories can be constructed, at least classically, by including certain surface defects in the theory \cite{CostelloYamazaki201908}. In this sense, it seems that the 4d CS theory plays the role of a unifying framework of a significant portion of integrable systems in two dimensions. For a selection of works related to the role of 4d CS theory in integrability see \cite{Vicedo201908,DelducLacroixMagroVicedo201909,BeniniSchenkelVicedo202008,LiniadoVicedo202301,FukushimaSakamotoYoshida202003,CostelloStefanski202005,Tian202005,TianHeChen202007,Stedman202009,FukushimaSakamotoYoshida202105,FukushimaSakamotoYoshida202112,HeTianChen202105,Levine202309,AshwinkumarSakamotoYamazaki202309,ColeWeck202407}.

\subsection{Statement of Problems}

Motivated by this circle of ideas, we explore three problems in this work.

\paragraph{Problem 1.} We ask whether the observation of Atiyah and Murray in \cite{Atiyah199106,AtiyahMurray1995} is restricted to the group $\SU{2}$ or there exists a general correspondence between the curve of the spectral parameter of gCPMs and hyperbolic $\SU{n}$-monopoles. 

\smallskip We realize that there is indeed a general correspondence for the group $\SU{n}$ with any $n\ge 2$. Hence, we will call it the hyperbolic monopole/gCPM correspondence. Furthermore, by embedding classical and exceptional groups into $\SU{n}$, the correspondence can be explored for these groups. As we will explain, this forces us to work in the limit of vanishing boundary values of the Higgs field. Such configurations are not trivial and are called flat monopoles, as elucidated in \S\ref{sec:general considerations}. The zero-mass limit of hyperbolic monopoles is also called nullarons, a terminology due to Murray \cite[Footnote 1]{NorburyRomao200512}. 

\paragraph{Problem 2.} Considering the unifying role of the 4d CS theory, we ask whether the gCPMs can be engineered within this theory. 

\smallskip It turns out that the answer is indeed yes. The main technique we use is to represent the curve of the spectral parameter of the gCPM as a branched cover of $\Pbb^1$, a construction that is always possible by virtue of Riemann's Existence Theorem.  We then realize that the requirement of topological invariance along $C$ together with  $\mbb{Z}_N^{n-1}$-invariance of the model fixes $\omega$ in \eqref{eq:4d CS action,introduction}.  

\paragraph{Problem 3.} As we elevate the connection between hyperbolic monopoles and gCPMs to the level of correspondence, we ask whether this is a coincidence or if there is a deeper explanation for it.

\smallskip We realize that there is indeed a deeper reason for this correspondence. The key role is played by (1) our engineering of the gCPM using the 
 4d CS theory; (2) the connection of hyperbolic monopoles and gCPMs with the 6d holomorphic CS (hCS) theories on certain complex three-dimensional manifolds; (3) the connection between the 6d and 10d hCS theories on complex manifolds. The fact that the 10d hCS theory depends only on the choice of complex structure unifies the two sides of the correspondence in ten-dimensional physics. 

\smallskip 

\subsection{Contributions of the Paper} We now summarize the main results of this work.

\subsubsection*{Problem 1. The Hyperbolic Monopoles/gCPM Correspondence (\S\ref{sec:generalized corresponence})} 
We first establish the correspondence for the gauge group $\SU{n}$. By starting from the curve of the spectral parameter of the gCPM, we reconstruct the spectral data of specific hyperbolic $\SU{n}$-monopoles. Conversely, we can start from the spectral data of hyperbolic $\SU{n}$-monopoles and construct the curve of the spectral parameter of the gCPM. In more details,

\paragraph{\small (1) Hyperbolic $\SU{n}$-Monopoles from gCPM (\S\ref{sec:hyperbolic monopoles from gCPMs}).} The procedure is as follows. From the curve of spectral parameter of the gCPM \eqref{eq:curve of spectral parameter of gCPM, introduction}, we construct $n-1$ curves $\ipc{1}, \dots, \ipc{n-1}$ of bidegree $(N,N)$  in $\mathbb{P}^1\times \mathbb{P}^1$, each of which can be written as 
    \begin{equation}\label{eq:spectral_data_gcpm} 
	\ipc{i}:\quad \begin{pmatrix}
		(z^+){}^{N}
		\\
		(z^-){}^{N}
	\end{pmatrix}
	= K_{i}
	\begin{pmatrix}
		(w^+){}^{N}
		\\
		(w^-){}^{N}
	\end{pmatrix}\;, \qquad i=1,\ldots,n-1 \;,
    \end{equation}
where $[z^+, z^-]$ and $[w^+, w^-]$ are the homogeneous coordinates on the two copies of $\Pbb^1$, and $K_i=K_{i,i+1}$ in \eqref{eq:curve of spectral parameter of gCPM, introduction} are restricted to be in $\text{PSL}(2,\R)$. We then show that these satisfy all the requirements of being the spectral curves of a hyperbolic $\SU{n}$-monopole. In particular, we show that 

\begin{enumerate}
    \item [(1)] By demanding $K_iK_{i+1}^{-1}$ to be an elliptic element of $\text{SL}(2,\R)$ for $i=1,\ldots,n-1$, we show that $\ipc{i}\cap\ipc{i+1}$ intersect at  $2N^2$ points. These points determine two divisors, $\ipc{i,i+1}$ and $\ipc{i+1,i}$, each associated with $N^2$ points. $\ipc{i,i+1}$ and $\ipc{i+1,i}$ are exchanged by a certain real structure of $\Pbb^1\times\Pbb^1$. 

    \item [(2)] Over $\ipc{i}$, certain line bundles determined by $(n,N)$ are isomorphic to line bundles determined by the divisors $\ipc{i,i+1}, \ipc{i+1,i}$, and $\ipc{i,i+1}+\ipc{i+1,i}$. 
\end{enumerate}

We then determine the hyperbolic $\SU{n}$-monopole configuration associated with this spectral data.

\paragraph{\small (2) GCPM from Hyperbolic $\SU{n}$-Monopoles (\S\ref{sec:gCPM from hyperbolic SU(n) monopoles}).} We then consider the reverse construction. We start from the spectral curves of hyperbolic $\SU{n}$-monopoles subject to maximal symmetry-breaking. We describe how from this collection of $n-1$ curves on $\Pbb^1\times\Pbb^1$, we can construct a curve on $\Pbb^{2n-1}$. After removing the reality condition, taking the limit of vanishing masses, and taking all the $n-1$ monopole charges to be equal to $N$, we recover the curve of the spectral parameter of the gCPM as given in \eqref{eq:curve of spectral parameter of gCPM, introduction}.

\paragraph{\small (3) The Correspondence.} We summarize our findings in the form of a proposition
\begin{prop*}\normalfont
    There is a one-to-one correspondence between 
    \begin{enumerate}
        \item [--] GCPMs whose curve of the spectral parameter $\Sigma\subset\Pbb^{2n-1}$ defined by $n-1$ $\text{SL}(2,\R)$-valued $K$-matrices $K_{i,i+1},\,i=1,\ldots,n-1$ in \eqref{eq:curve of spectral parameter of gCPM, introduction} such that $K_iK_{i+1}^{-1}$, with $K_i:=K_{i,i+1}$, is an elliptic element of $\text{SL}(2,\R)$; and

        \item [--] Hyperbolic $\SU{n}$-monopoles subject to maximal symmetry-breaking, magnetic charges $m_1=\ldots=m_{n-1}=N$, and vanishing asymptotic values of the Higgs field whose spectral data recovers the curve $\Sigma$.
    \end{enumerate}
\end{prop*}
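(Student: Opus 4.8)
The plan is to prove the correspondence by constructing a map in each direction and checking that the two are mutually inverse. The forward map is the content of \S\ref{sec:hyperbolic monopoles from gCPMs} and the reverse map that of \S\ref{sec:gCPM from hyperbolic SU(n) monopoles}, so what is needed here is to lay out the logical skeleton and isolate the real work.

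\emph{From a gCPM curve to monopole spectral data.} Given a gCPM whose curve $\Sigma\subset\Pbb^{2n-1}$ is cut out by \eqref{eq:curve of spectral parameter of gCPM, introduction} with $K_{i,i+1}\in\text{SL}(2,\R)$ and each $K_iK_{i+1}^{-1}$ elliptic ($K_i:=K_{i,i+1}$), I would first pass to the $n-1$ bidegree-$(N,N)$ curves $\ipc{i}\subset\Pbb^1\times\Pbb^1$ of \eqref{eq:spectral_data_gcpm}, with $K_i$ viewed in $\text{PSL}(2,\R)$. The task is then to verify, one axiom at a time, the hypotheses of the Murray--Singer reconstruction theorem for the spectral data of a hyperbolic $\SU{n}$-monopole, specialised to charges $m_1=\ldots=m_{n-1}=N$ and vanishing Higgs mass: (i) invariance of each $\ipc{i}$ under the relevant real structure on $\Pbb^1\times\Pbb^1$, immediate from $K_i\in\text{PSL}(2,\R)$; (ii) the intersection $\ipc{i}\cap\ipc{i+1}$ consists of $2N^2$ reduced points splitting as $\ipc{i,i+1}+\ipc{i+1,i}$, two mutually conjugate divisors of $N^2$ points each, which is exactly where ellipticity of $K_iK_{i+1}^{-1}$ enters, since an elliptic element of $\text{SL}(2,\R)$ has its fixed points off the real locus, forcing the $N$th-power intersection equations to have no real solution and to come in complex-conjugate pairs; and (iii) the prescribed line bundles over $\ipc{i}$ determined by $(n,N)$ are isomorphic to the bundles of the divisors $\ipc{i,i+1}$, $\ipc{i+1,i}$, and $\ipc{i,i+1}+\ipc{i+1,i}$ --- the analogue of Hitchin's triviality conditions. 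Once (i)--(iii) hold, the reconstruction theorem yields a unique hyperbolic $\SU{n}$-monopole; maximal symmetry breaking is automatic because the $n-1$ curves are pairwise distinct (ellipticity in particular excludes $K_i=K_{i+1}$), and the zero-mass, equal-charge specialisation is built into the normal form \eqref{eq:spectral_data_gcpm}.

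\emph{From monopole spectral data to a gCPM curve.} For the reverse direction I would start from the $n-1$ spectral curves of such a monopole. In the zero-mass limit with all charges equal to $N$ each spectral curve has bidegree $(N,N)$, and the reality axiom together with item (iii) puts it into the normal form \eqref{eq:spectral_data_gcpm} for a unique $K_i\in\text{PSL}(2,\R)$, while the intersection/divisor axiom forces $K_iK_{i+1}^{-1}$ elliptic. I would then reassemble the $\ipc{i}$ into a single curve in $\Pbb^{2n-1}$ by promoting the pairwise constraints on consecutive $\Pbb^1$-factors to constraints on all homogeneous coordinates, setting $K_{ij}$ to be the ordered product of the intervening $K_k$'s; checking that this reproduces \eqref{eq:curve of spectral parameter of gCPM, introduction} and that the $K_{ij}$ obey the cocycle-type relations required of gCPM $K$-matrices (cf.\ \eqref{eq:relations satisfied by matrices Kij of gCPM}) gives back a genuine gCPM curve.

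\emph{Mutual inverseness and the main obstacle.} The two constructions are inverse to one another because the only genuine moduli on either side are the $K$-matrices: on the monopole side the line-bundle data is rigidly pinned down by $(n,N)$ and the curves, so nothing extra is hidden, and on the gCPM side the $K_{ij}$ are recovered from the $K_{i,i+1}$ by the cocycle relations; composing the two maps therefore returns the original family $\{K_{i,i+1}\}$ up to the common $\text{PSL}(2,\R)$ coordinate change that both sides quotient by. I expect the genuine work to lie in steps (ii) and (iii) of the forward direction --- showing that the elementary intersection theory and line-bundle bookkeeping for the curves \eqref{eq:spectral_data_gcpm} reproduce \emph{exactly}, not merely up to unchecked conditions, the reality and triviality axioms of the Murray--Singer data, and in particular that ``$K_iK_{i+1}^{-1}$ elliptic'' is precisely the translation of the requirement that the two intersection divisors be disjoint from the real locus and mutually conjugate. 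The zero-mass (flat-monopole / nullaron) limit should be handled purely at the level of spectral data, so no analytic subtlety about the existence of the gauge field itself enters the bijection.
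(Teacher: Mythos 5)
Your forward map is essentially the paper's (Steps 1--3 of \S\ref{sec:hyperbolic monopoles from gCPMs}): the same curves $\ipc{i}$, the same reality condition forcing $K_i\in\text{SL}(2,\R)$, the same role for ellipticity (the paper derives it by solving $\det(K_i-\lambda_iK_{i+1})=0$ and demanding $\bar\lambda_i^{\pm}=\lambda_i^{\mp}$ so that the two $N^2$-point divisors are exchanged by $\sigma$; your ``fixed points off the real locus'' is the same condition, and note the paper only gets elliptic \emph{or parabolic}, with elliptic as the generic case), and the same line-bundle matching before invoking reconstruction. One inaccuracy: the zero-mass condition is not ``built into the normal form'' --- it is forced at the line-bundle step, where \eqref{eq:line bundles over spectral curves of a monopole} only agrees with \eqref{eq:line bundles over curves coming from gCPM} after sending $p_i\to 0$; your sketch should locate it there.

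The genuine gap is in your reverse direction. You claim that for a zero-mass, charge-$(N,\ldots,N)$ monopole ``the reality axiom together with item (iii) puts [each spectral curve] into the normal form \eqref{eq:spectral_data_gcpm} for a unique $K_i$''. This is not justified and is false in general: a spectral curve is an arbitrary real compact section of $\mcal{O}_{\ts}(N,N)$, i.e.\ a polynomial \eqref{eq:equation for spectral curves of hyperbolic su(n)-monopoles} with $(N+1)^2$ coefficients, and neither reality nor the triviality condition \eqref{eq:trivial bundle over ith spectral curve} eliminates the mixed monomials $z^{\alpha}w^{\beta}$ with $0<\alpha,\beta<N$. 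The paper's reverse construction instead \emph{imposes} $\mbb{Z}_{m_i}\times\mbb{Z}_{m_i}$-invariance, which is exactly what collapses $F_i$ to the four-term form \eqref{eq:equation for symmetric monopole curve}; the correspondence is therefore with the subfamily of monopoles whose spectral data has this symmetry (this is the force of ``whose spectral data recovers the curve $\Sigma$'' in the statement), and you must either build that restriction in explicitly --- making the reverse map a read-off of the $K_i$ --- or supply an argument for the symmetry, which you do not have. The paper then glues the $n-1$ symmetric curves inside $(\Pbb^1)^n$, passes to the $\mbb{Z}_N^{n-1}$ cover to land in $\Pbb^{2n-1}$, and checks the genus against \eqref{eq:genus of the curve of spectral parameter of gCPM}, none of which appears in your sketch. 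Finally, your remark that ``no analytic subtlety about the existence of the gauge field itself enters the bijection'' is too quick: the reconstruction step of your forward map rests on the standing assumption, stated in \S\ref{sec:general considerations}, that the zero-mass limit of the Murray--Singer spectral data exists and uniquely determines a flat hyperbolic $\SU{n}$-monopole; that assumption is load-bearing and should be acknowledged, not dismissed.
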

For a more precise version of the proposition, see Proposition \ref{prop:hyperbolic monopole/gCPM correspondence}. The identification of parameters in the two sides is given in Table \ref{tab:identification of the parameters in the correspodence}.

\begin{table}[H]
    \centering
    \ParametersOfCorrespondence
    \caption{Parameters of the two sides of the hyperbolic monopole/gCPM correspondence.}
    \label{tab:identification of the parameters in the correspodence}
\end{table}

\paragraph{\small (4) The Correspondence for Classical and Exceptional Groups (\S\ref{sec:correspondence for classical and exceptional groups}).} We then explore the correspondence for other classical and exceptional Lie algebras by embedding them in $\SU{n}$. Denoting monopole charges for a Lie algebra $\mfk{g}$ by $m^{\mfk{g}}$, we find the values of charges for hyperbolic monopoles, associated with classical groups, which correspond to the gCPM
\begin{eqaligned}
    B_{n}&: &\qquad m_1^{B_n}&=\ldots=m^{B_n}_{n-1}=2N\;, &\qquad m^{B_n}_n&=N\;,
    \\
    C_{n}&: &\qquad m_1^{C_n}&=\ldots=m^{C_n}_{n}=N\;,
    \\
    D_{n}&: &\qquad m_1^{D_n}&=\ldots=m^{D_n}_{n-2}=2N\;, &\qquad m^{D_n}_{\pm}&=N\;,
\end{eqaligned}
while for exceptional groups $\mfk{g}_2$ and $\mfk{f}_4$, we find
\begin{eqaligned}
    \mfk{g}_2&: &\qquad  m_1^{\mfk{g}_2}&=-3N\;, &\qquad m^{\mfk{g}_2}_2&=-5N\;,
    \\
    \mfk{f}_4&: &\qquad m_1^{\mfk{f}_4}&=m_4^{\mfk{f}_4}=2N\;, &\qquad m_2^{\mfk{f}_4}&=m_3^{\mfk{f}_4}=3N\;.
\end{eqaligned}
We comment on the negative values of charges for $\mfk{g}_2$ in \S\ref{sec:correspondence for classical and exceptional groups}.

\subsubsection*{Problem 2. gCPM and 4d Chern--Simons Theory (\S\ref{sec:gCPM and 4d CS theory}).} We next explain how to realize the gCPM in the 4d CS theory. The punchline of our discussion is the following
\begin{enumerate}
    \item [(1)] We first realize the curve of the spectral parameter of gCPM as a branched cover of $\Pbb^1$ (\S\ref{sec:curve of spectral parameter as a branched cover of P1}). From this point onward, we will exclusively work with $C\times\Pbb^1$. From our considerations, it would turn out that the R-matrix of the gCPM can only be computed in a non-perturbative formulation of the 4d CS theory. 

    \item [(2)] We then construct the one-form $\omega_{\Pbb^1}$ of the 4d CS theory on $C\times\Pbb^1$ (\S\ref{sec:the one-form on the curve of spectral parameter}). We see that by demanding only (1) topological invariance and (2) $\mbb{Z}_N^{n-1}$-invariance, $\omega_{\Pbb^1}$ is strongly constrained and the minimal choice is given by
    \begin{equation}\label{eq:one-form relevant for engineering gCPM in 4d CS, introduction}
    \omega_{\Pbb^1}=\frac{1}{\sqrt[\leftroot{-2}\uproot{2}N^{n-1}]{\prod_{i=1}^{2N^{n-1}}(z-z_{Q_i})}}\rd z\;,
    \end{equation}
    where $\{Q_1,\ldots,Q_{2N^{n-1}}\}$ are the branched points of the branched covering map. From this and the branched covering map, we would be able to construct the one-form on $\cpc$ (see \S\ref{sec:the one-form on the curve of spectral parameter} and \S\ref{sec:generalization to gCPM}, in particular \eqref{eq:one-form relevant for engineering cpm on SigmaN depending on branched points of pi} and \eqref{eq:one-form on the curve of spectral parameter of gCPM}). 
    A remarkable feature of \eqref{eq:one-form relevant for engineering gCPM in 4d CS, introduction} which would be crucial for our developments is (see \S\ref{sec:the one-form on the curve of spectral parameter} for the derivation)
    \begin{equation}\label{eq:vanishing of zbar derivative of omegaP1 for gCPM, introduction}
    \partial_{\bmb{z}}\omega_{\Pbb^1}=0\;.
    \end{equation}

    This relation together with the form of $\omega_{\Pbb^1}$ in \eqref{eq:one-form relevant for engineering gCPM in 4d CS, introduction} would provide arguments about the peculiarity of the gCPM (see, in particular, Remarks \ref{rmk:why gCPM is peculiar?} and \ref{rmk:importance of vanishing of dbar derivative of omegaP1}). 
    
    \item[(3)] Although we are not looking to set up a perturbation theory, we see that as a consequence of \eqref{eq:vanishing of zbar derivative of omegaP1 for gCPM, introduction}, no requirement for imposing boundary conditions at $Q_i$ is implied (see \S\ref{sec:gauge Lie algebra, boundary conditions, R-matrix}). This is in sharp contrast to all the other integrable spin models where $\partial_{\bmb{z}}\omega_{\Pbb^1}$ would lead to (derivatives of) delta functions at poles of $\omega_{\Pbb^1}$. A proper definition of perturbation theory would then require the imposition of appropriate boundary conditions on the gauge fields at the poles of $\omega$ \cite[\S 9.1]{CostelloWittenYamazaki201709} (see \ref{sec:lack of rapidity-difference property}).

    \item[(4)] Finally, we explain how the absence of rapidity-difference property can be explained from the form on $\omega_{\Pbb^1}$ in \eqref{eq:one-form relevant for engineering gCPM in 4d CS, introduction}.  
 
\end{enumerate}

\subsubsection*{Problem 3. The Origin of the Correspondence (\S\ref{sec:on the origin of the correspondence})}

Finally, we will uncover the origin of the hyperbolic monopole/gCPM correspondence. In particular,

\paragraph{\small (1) Hyperbolic Monopole from 6d Holomorphic Chern--Simons Theory (\S\ref{sec:hyperbolic monopoles from six dimensions}).} We first explain how to construct hyperbolic monopoles from the 6d hCS theory on $\sbrm$, the projective spinor bundle of Minkowski space $\R^{1,3}$. This involves
\begin{enumerate}
    \item [--] Equipped with some results from Murray and Singer, we first establish in Proposition \ref{prop:one-to-one correspondence between CR bundles on ZU and holomorphic vector bundles on PS+U} and Corollary \ref{cor:one-to-one correspondence between self-dual connection on U and holomorphic vector bundle on PS+(U)} that solutions of the self-duality equation on Minkowski space $\R^{1,3}$ (or an open set thereof) can be constructed from solutions of equations of motion of the 6d hCS theory, which describe connections of type $(1,1)$;

    \item [--] We then see explicitly how hyperbolic monopoles (see \S\ref{sec:hyperbolic monopoles as dilatation-invariant instantons}), their charges, masses, and spectral data can be described in terms of six-dimensional gauge field (see \S\ref{sec:masses, charges, and spectral data from six dimensions}). In particular, this reduction shows why there is no parameter corresponding to hyperbolic monopole masses in the gCPM side of the correspondence (see Remark \ref{rmk:no parameters corresponding to masses in the gCPM side}), and also why there is no reality condition on the curve of the spectral parameter (see Remark \ref{rmk:no reality condition on the data}).
\end{enumerate}

\paragraph{\small (2) gCPM from 6d Holomorphic Chern--Simons Theory} We then show how to realize the gCPM from the 6d hCS theory on $\sbe$, the projective spinor bundle of Euclidean space $\R^4$. More concretely, 
    \begin{enumerate}
        \item [--] In \S\ref{sec:4d hBF theory from 6d hCS theory}, we show that the dimensional reduction of the 6d hCS on $\sbe$ to $\R^2\times\Pbb^1$ gives the 4d holomorphic BF (hBF) theory;

        \item [--] In \S\ref{sec:emergence of 4d CS theory}, we then show how the 4d CS theory emerges from the 4d hBF theory at the finite value of a cut-off dictated by the volume of extra dimensions of $\sbe$.
    \end{enumerate}

\paragraph{\small (3) 6d Holomorphic Chern--Simons Theory from Ten Dimensions.} Next, in \S\ref{sec:6d hCS on PS+(M) from 10d hCS on PS+C13}, we explain the followings

\begin{enumerate}
    \item [--] In Lemma \ref{lem:6d hCS from 10d hCS}, we show that the 6d hCS theory on a complex three-dimensional manifold can be realized from the dimensional reduction of the BV action of the 10d hCS theory on a complex five-dimensional manifold if the former manifold is embedded holomorphically in the latter;

    \item [--] To connect this result to the correspondence, we show, in Proposition \ref{prop:PS+(M) is holomorphically embedded in PS+C13}, that $\sbm$ for both $M=\R^{1,3}$ and $\R^4$ can be holomorphically-embedded into $\sbcm$, the projective spinor bundle of the complexified Minkowski space $\C^{1,3}$ equipped with a fixed complex structure $\mcal{J}_{\C^{1.3}}$ given in \eqref{eq:master complex structure on PS+C13}.
\end{enumerate} 

\paragraph{\small (4) The Correspondence from Ten Dimensions (\S\ref{sec:chasing the correspondence to ten dimensions}).} Putting all these results together, we finally explain how the two sides of the correspondence can be realized by starting from the 10d hCS theory on $\sbcm$, equipped with the complex structure $\mcal{J}_{\C^{1,3}}$, reducing it on $\sbrm$ and $\sbe$, which in turn reduce the two sides of the correspondence. In a naive sense, once everything is reduced to four dimensions, the correspondence is the result of a Wick or inverse Wick rotation followed by imposing or removing certain reality conditions, as we will explain in \S\ref{sec:chasing the correspondence to ten dimensions}.

\subsection{Contents of the Paper}

We start in \S\ref{sec:generalized CPM} by a brief review of the gCPM. The topic of hyperbolic monopole is not well-studied in the physics literature, and furthermore it forms an essential ingredient of our developments. Therefore, a rather extensive review of hyperbolic monopoles with arbitrary boundary values of the Higgs field and their spectral data is warranted. We fulfill this task in \S\ref{sec:hyperbolic monopoles and spectral data}. We then generalize the correspondence between hyperbolic monopoles and gCPM to the group $\SU{n}$ in \S\ref{sec:generalized corresponence}, with additional exploration of the correspondence for classical and exceptional Lie algebras in \S\ref{sec:correspondence for classical and exceptional groups}. The engineering of the gCPM within the 4d CS theory is the subject of \S\ref{sec:gCPM and 4d CS theory}. We establish the origin of the correspondence in \S\ref{sec:on the origin of the correspondence}. Several appendices complement the work. In Appendix \ref{sec:more details on twistor space of hyperbolic space}, we provide more details on the twistor space of hyperbolic space, especially from the perspective of the circle action on the twistor space of $S^4$. The curve of the spectral parameter of the gCPM is a special case of complete intersections and, as such, we include some basic details on complete intersections in Appendix \ref{sec:basic facts about complete intersections}. Finally, Appendix \ref{sec:some basic facts and Lie algebra manipulations} includes all the details of computations in \S\ref{sec:correspondence for classical and exceptional groups}.

\section{Generalized Chiral Potts Model}
\label{sec:generalized CPM}
We start with a brief review of the relevant facts about the gCPMs and their curve of the spectral parameter following the original construction by Bazhanov et al.\  \cite{BazhanovKashaevMangazeevStroganov199105}. With the aid of minimal cyclic representations of $\text{U}_q(\wh{\mfk{sl}}(n,\C))$, it was shown that the R-matrix satisfies the Yang-Baxter equation for odd $N$ in \cite{DateJimboKeiMiwa199008,DateJimboKeiMiwa199103}. The proof for the general case later appeared in \cite{BazhanovKashaevMangazeev1990,KashaevMangazeevNakanishi199109}. We refer the reader to these references for the details.

\subsection{The Model}
 
The gCMP is a model obtained by considering $(n-1)$ copies of spins taking values in $\mbb{Z}_{N}$ on a square lattice $\mcal{L}$, while the rapidity parameters live on its medial lattice $\mcal{L}'$:

\begin{figure}[H]
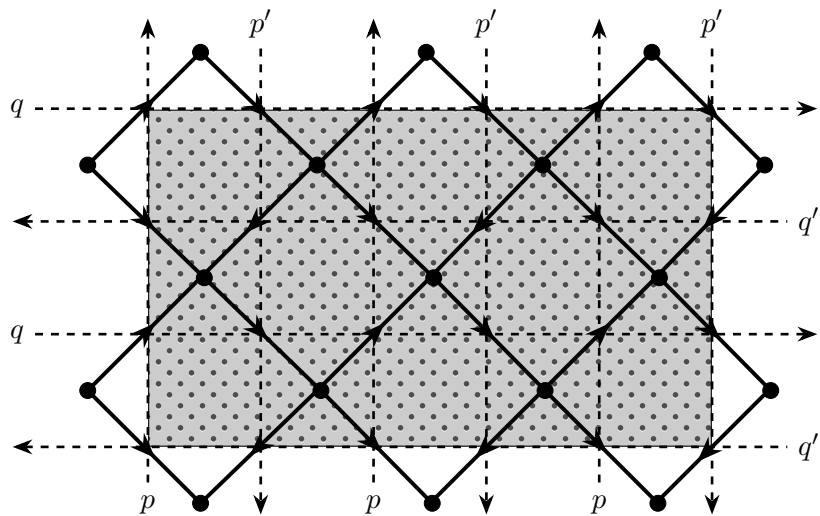

    \centering
    \LatticeDualLatticegCPM
    \caption{The solid and dashed lines represent the lattice $\mcal{L}$ and its median lattice $\mcal{L}'$, respectively, for gCPM. The spins are located on the square lattice while rapidities are located on its medial lattice.}
    \label{fig:lattice and dual lattice of gCPM}
\end{figure}

The spin variables are denoted as\footnote{What we will call $v_i$ corresponds to $v_i-v_{i+1}$ in the original article \cite{BazhanovKashaevMangazeevStroganov199105}.}
\begin{equation}
    \mbs{v}=(v_1,\ldots,v_n)\;, \qquad v_1,\ldots,v_n\in\mbb{Z}_N=\{0,\ldots,N-1\}\;,
\end{equation}
satisfying the relation
\begin{equation}
    \sum_{i=1}^n v_i=0\;, \quad (\text{mod}\,N)\;.
\end{equation}

The horizontal (vertical) lines of $\mcal{L}'$ carry rapidity (i.e.\ spectral) variables $p,p'$ ($q,q'$) in alternating order and also in different directions. The NW-SE edges of $\mcal{L}$ have the same NW-SE directions while NE-SW edges are in a checkerboard orientation. The rapidity variable lives on the algebraic curve of the spectral parameter, which we denote as $\cpc$. It is determined as the locus inside $\mbb{P}^{2n-1}$ determined by the following equations
\begin{equation}\label{eq:curve of spectral parameter of gCPM}
	\cpc:\qquad \begin{pmatrix}
		{z^+_{i}}^{N}
		\\
		{z^-_i}^{N}
	\end{pmatrix}
	= K_{ij}
	\begin{pmatrix}
		{z^+_j}^{N}
		\\
		{z^-_j}^{N}
	\end{pmatrix}\;,
	\qquad i,j=1,\cdots,n \;.
\end{equation}
where $(z^+_1:z^-_1:\ldots:z^+_n:z^-_n)$ are the set of $2n$ complex coordinates, and the $2\times 2$ matrices $K_{ij}$ satisfy
\begin{eqaligned}\label{eq:relations satisfied by matrices Kij of gCPM}
    \text{identity relation}&: \qquad \hphantom{\quad} K_{ii}=\mds{1}\;,
		\\
		\text{cocycle condition}&: \qquad K_{ij}K_{jk}K_{ki}=\mds{1}\;.
\end{eqaligned}
The cocycle condition implies that only $n-1$ of the matrices $K_{ij}$ with $i\ne j$ are independent. These $n-1$ matrices define $2n-2$ equations and hence \eqref{eq:curve of spectral parameter of gCPM} indeed defines a curve in $\Pbb^{2n-1}$. We can conveniently choose these to be the matrices $K_{i, i+1}$ for $i=1,\cdots,n-1$.  The relations \eqref{eq:relations satisfied by matrices Kij of gCPM} have obvious solutions 
\begin{equation}
    K_{ij}=M_iM_j^{-1}\;, \qquad i,j= 1,\ldots,n \;,
\end{equation}
with $M_i\in\text{SL}(2,\C)$. There are $n-1$ matrices $K_{i,i+1}$ each of which contributes three complex moduli, hence in total there are $3n-3$ complex parameters. However, the curve is invariant under the combined transformations
\begin{equation}\label{eq:redundancy in the curve of spectral parameter of gCPM}
    \begin{pmatrix}
        z_i^+
        \\
        z_i^-
    \end{pmatrix}\mapsto
    U_i 
    \begin{pmatrix}
        z_i^+
        \\
        z_i^-
    \end{pmatrix}\;, \qquad K_{ij}\mapsto U_i^NK_{ij}U_j^{-N}\;,
\end{equation}
with $U_i=\text{diag}(u_i,u_i^{-1})$. Removing these $n$ redundancies, we are left with $2n-3$ complex parameters that parameterize the moduli of the curve of spectral parameters. 

\smallskip For the gCPM, we only make the identification
\begin{equation}\label{eq:cyclic action of ZN(n-1) on coordinates of P2n-1}
	(z^+_{1},z^-_{1},\ldots,z^+_{n},z^-_{n})\sim \lambda(z^+_{1},z^-_{1},\ldots,z^+_{n},z^-_{n})\;,
    \qquad 
    \lambda\in\C^\times\;,
\end{equation}
i.e.\ these are homogeneous coordinates on $\Pbb^{2n-1}$. The equations \eqref{eq:curve of spectral parameter of gCPM} determines the algebraic curve $\cpc$ which is an $N^{n-1}$-fold covering of another curve, which is sometimes considered to be the curve of the spectral parameter of the gCMP. The latter is a further step quotient of $\cpc$ by a free action of $\mbb Z_{N}^{n-1}$:
\begin{eqaligned}\label{eq:ZNn-1 action on gCPM curve}
    (z^+_{1},z^-_{1}, \ldots,z^+_{n},z^-_{n})\mapsto (q^{k_1}z^+_{1},q^{k_1}z^-_{1},\ldots,q^{k_{n-1}}z^+_{n-1},q^{k_{n-1}}z^-_{n-1},z^+_{n},z^-_{n})\;,
\end{eqaligned}
where $q$ is a primitive $N$-th root of unity and $k_1,\cdots ,k_{n-1}\in \{0,1\cdots,N-1\}$.\footnote{The group of $N$\sth root of unity is isomorphic to $\mbb{Z}_N$.} We denote the actual curve of the spectral parameter of gCPM by $\cpc$ whose genus can be computed using the Riemann--Hurwitz formula as (see Appendix \ref{sec:basic facts about complete intersections} for some details)
\begin{equation}\label{eq:genus of the curve of spectral parameter of gCPM}
	g_{\cpc}=N^{2(n-1)}\left(N(n-1)-n\right)+1\;.
\end{equation}
The curve obtained by the $\mbb{Z}_N^{n-1}$-quotient of $\cpc$, where the action is given in \eqref{eq:ZNn-1 action on gCPM curve}, is denoted as $\pc$
\begin{equation}\label{eq:ZNxn-1 quotient of the curve of spectral parameter of gCPM}
    \pc:=\cpc/\mbb{Z}_N^{n-1}\;.
\end{equation}
 
\subsection{The Yang--Baxter Equation}

\smallskip Depending on the orientation of solid and dashed lines in Fig. \ref{fig:lattice and dual lattice of gCPM}, there are only two possible Boltzmann weights as follows

\begin{figure}[H]
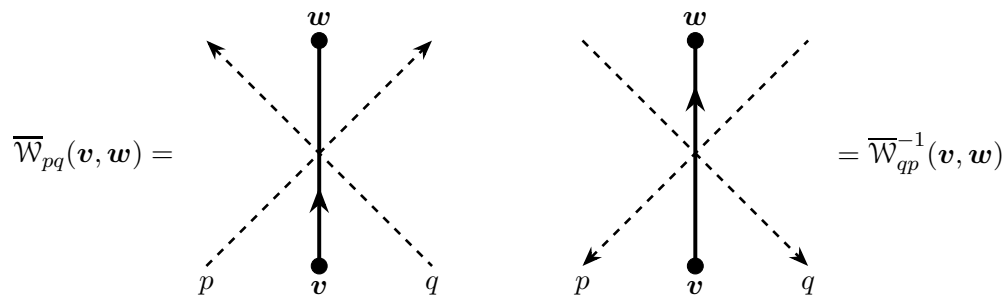

    \centering
    \BoltzmannWeightsgCPM
    \caption{The two possible Boltzmann weights of the gCPM.}
    \label{fig:possible Boltzmann weights of gCPMq}
\end{figure}

The Boltzmann weight $\overline{\mscr{W}}_{pq}(\mbs{v},\mbs{w})$ for rapidities $p$ and $q$ and the states $\mbs{v}$ and $\mbs{w}$ of the model can be written explicitly as follows
\begin{equation}
    \overline{\mscr{W}}_{pq}(\mbs{v},\mbs{w})=\omega^{Q(\mbs{v},\mbs{w})}\mscr{G}(\mbs{0},\mbs{v}-\mbs{w})\;,
\end{equation}
where $\omega:=\exp(2\pi\mfk{i}/N)$,
\begin{equation}
    Q(\mbs{v},\mbs{w}):=\sum_{i=1}^{n-1}v_i\sum_{j=1}^i(v_j-w_j) \;,
\end{equation}
and
\begin{equation}
    \mscr{G}(\mbs{0},\mbs{v}):=\frac{\prod_{I=1}^{|\mbs{v}|-1}\left(z_0^+(p)z_0^-(q)-z_0^+(q)z_0^-(p)\omega^I\right)}{\prod_{i=1}^{n-1}\prod_{I_i=0}^{v_i-1}\left(z_i^+(p)z_i^-(q)-z_i^+(q)z_i^-(p)\omega^{1+I_i}\right)} \;,
\end{equation}
with $|\mbs{v}|:=\sum_{i=1}^{n-1}v_i$ and $z^{\pm}_0(p)$ are local coordinates on the curve. $\overline{\mscr{W}}_{pq}(\mbs{v},\mbs{w})$ is defined up to an overall normalization. Finally, the R-matrix of the model is defined as the product of Boltzmann weights associated with the following elementary box of $\mcal{L}$:
\begin{figure}[H]
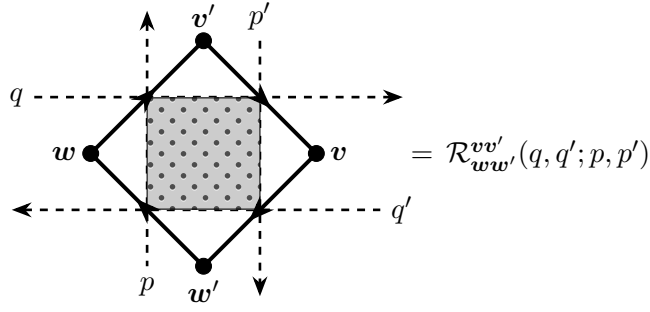

    \centering
    \ElementaryBoxgCPM
    \caption{The elementary box in the gCPM which gives the R-matrix of the model. The product of the Boltzmann weights as in \eqref{eq:R-matrix of gCPM} gives the R-matrix.}
    \label{fig:elementary box in gCPM which gives R-matrix}
\end{figure}
\noindent
and it is given by
\begin{equation}\label{eq:R-matrix of gCPM}
    \mcal{R}_{\mbs{ww'}}^{\mbs{vv'}}(q,q';p,p'):=\frac{\overline{\mscr{W}}_{qp}(\mbs{w},\mbs{v}')\overline{\mscr{W}}_{p'q}(\mbs{v}',\mbs{v})\overline{\mscr{W}}_{q'p'}(\mbs{v},\mbs{w}')}{\overline{\mscr{W}}_{q'p}(\mbs{w},\mbs{w}')} \;.
\end{equation}
Thinking of these R-matrices as intertwiners acting on a tensor product of some (representation) spaces $\mcal{V}_1\otimes \mcal{V}_2\otimes \mcal{V}_3$, one can show that they satisfy the usual form of the Yang--Baxter equation 
\begin{eqaligned}
    \mcal{R}_{12}(p, p'; q, q')\mcal{R}_{13}(p, p'; r, r')&\mcal{R}_{23}(q, q'; r, r')
    \\
    &=\mcal{R}_{23}(q,q';r,r')\mcal{R}_{13}(p,p';r,r')\mcal{R}_{12}(p,p';q,q') \;.
\end{eqaligned}

\section{Hyperbolic Monopoles and their Spectral Data}
\label{sec:hyperbolic monopoles and spectral data}
In this section, we review the construction of hyperbolic monopoles. We first explain some generalities on these monopoles. Next, we move to the elaboration of the twistor (or the Ward--Hitchin) correspondence for hyperbolic $\SU{n}$-monopoles with arbitrary boundary values of the Higgs field following \cite{MurraySinger199607,HurtubiseMurray1989} to which we refer for further details. The case with integer or half-integer values of the Higgs field, for the case of $\SU{2}$ has been treated in \cite{Atiyah1984,BraamAustin199008} while the case of $\SU{n}$ is considered in \cite{Chan201506} and in more details in \cite[\S 14]{Chan2017}.

\subsection{Monopoles on Hyperbolic Space}

We start with some generalities of hyperbolic monopoles. By a hyperbolic monopole, we mean a solution to the Bogomolny equation on the hyperbolic three space $\Hbb^3$. More concretely, consider a principal $G$-bundle over $\Hbb^3$ equipped with a connection $A$ and an adjoint-valued scalar $\phi$, the Higgs field. The Bogomolny equation reads
\begin{equation}\label{eq:Bogomolny equation}
    F=\star_{\Hbb^3}D\phi \;,
\end{equation}
where $F:=\rd A + A\wedge A$,
$D\phi:=\rd\phi+[A,\phi]_{\mfk{g}}$ and $\mfk{g}:=\text{Lie}(G)$.
We impose appropriate boundary conditions on the sphere at infinity  $S^2_\infty:=\partial\overline{\Hbb}^3$ with $\overline{\Hbb}^3$ denoting the closure of $\Hbb^3$.
The space of solutions of this equation, subject to appropriate boundary conditions and modulo gauge transformations, defines the moduli space of hyperbolic $G$-monopoles. 

\smallskip Choosing a local coordinate in the neighbourhood of $\partial\overline{\Hbb}^3$, we can expand the fields as
\begin{equation}\label{eq:expansion of higgs field and field strenght, near the boundary of H3}
    \phi=\phi_\infty+\mcal{O}(r^{n_\phi}) \;,
    \qquad F=F_\infty +\mcal{O}(r^{n_F}) \;,
\end{equation}
with $r$ being the radial coordinate away from $\partial\overline{\Hbb}^3$, $n_\phi$ and $n_F$ are positive numbers whose values depend on the boundary conditions, and we have used the notation
\begin{equation}
    \phi_\infty:=\phi\big|_{S^2_\infty} \;, \qquad 
    F_\infty:=F\big|_{S^2_\infty} \;. \qquad 
\end{equation}
A more precise version of this expansion will be given later (see \eqref{eq:expansion of connection near the boundary of H3}). 

\smallskip Hyperbolic monopoles have two characteristic parameters: (1) magnetic charges, (2) asymptotic values of the Higgs field $\phi$ or its masses \cite{Atiyah1984,BraamAustin199008,MurraySinger199607}. In the following, we explain how to compute these two parameters.

\hypertarget{magnetic charge of monopole}{\paragraph{Magnetic Charges of a Hyperbolic Monopole.}} The definition of the magnetic charges of a monopole depends on the breaking pattern of the gauge invariance at $S^2_\infty$ \cite{Weinberg198005,Weinberg198208,GoddardNuytsOlive1977,GoddardOlive197809,CorriganGoddard198108,WeinbergYi200609}.

\smallskip By an appropriate choice of basis, one can consider the Higgs field (and indeed any element of the Lie algebra $\mfk{g}$) in the Cartan subalgebra. We denote the generators of the
Cartan subalgebra of $\mfk{g}$ as $H_a,\,a=1,\ldots,\rnk{g}$, and write the asymptotic value of the Higgs field as
\begin{equation}\label{eq:expansion of higgs field at infinity}
    \phi_\infty=\sum_{i=1}^{\rnk{g}} p_iH_i \;,
    \qquad 
    p_1,\ldots,p_{\rnk{g}}\in\mbb{R} \;.
\end{equation}
Those generators of $\mfk{g}$ that do commute with $\phi_\infty$ form the remaining gauge invariance. As $\phi_\infty$ belongs to the chosen Cartan subalgebra, these unbroken generators could be of two types: (1) maximal symmetry-breaking: all the generators of the Cartan subalgebra and for all roots $\alpha$, $\alpha\cdot \mbs{p}=\sum_i \alpha_i p_i\ne 0$, with $\mbs{p}:=(p_1,\ldots,p_\rnk{g})$. In this case, the unbroken subgroup would be $\U{1}^{\rnk{g}}$; (2) Non-maximal symmetry-breaking: those generators $E_{\alpha}$ labeled by roots $\alpha$ such that $\alpha\cdot\mbs{p}=0$. Such generators would then generate a semi-simple Lie algebra $\mfk{k}$ whose corresponding group is $K$. The unbroken part of gauge invariance is then $K\times\U{1}^{\rnk{g}-\text{rnk}(\mfk{k})}$. In the following, we are exclusively considering the case of maximal symmetry-breaking. The reader interested in learning about the general case is referred to \cite[\S 5]{Murray198412}, \cite[\S 5]{GoddardOlive197809}, and \cite[\S 6.1]{WeinbergYi200609}.

\smallskip If the gauge group breaks from $G$ to a subgroup $H$, the vacuum manifold is $G/H$ \cite[\S 5.3-5.5]{GoddardOlive197809}. At infinity, the group of gauge transformations consists of maps $S^2_\infty\to G/H$\footnote{Note that this bundle, in this case, is trivial, as it is induced from a bundle on $\Hbb^3$, which is a contractible space and all bundles over which are trivial. If the three-space on which one studies the Bogomolny equation supports a non-trivial $G$-gauge bundle $P$, then the group of gauge transformations consists of global sections of $\text{Ad}P:=P\times_G G$.} and the non-trivial configurations are labeled by $\pi_2(G/H)$. Using the short exact sequence 
\begin{equation}
    0\to H\to G\to G/H\to 0 \;,
\end{equation}
and by assuming that $\pi_1(G)=0$ (by working with the universal cover of $G$ if necessary) and $\pi_2(G)=0$ (which holds for any connected Lie group), we have
\begin{equation}
    \pi_1(G/H)\simeq \pi_0(H) \;, 
    \qquad 
    \pi_2(G/H)\simeq \pi_1(H) \;.
\end{equation}
In the case of maximal symmetry-breaking, we have
\begin{equation}
    \pi_2(G/H)\simeq\pi_1(U(1)^{\rnk{g}})=\mbb{Z}^{\rnk{g}} \;,
\end{equation}
and the configuration is characterized by $\rnk{g}$ integer-valued topological quantum numbers, called magnetic charges. 

\smallskip For the maximal symmetry-breaking case, $H\simeq T$ with $T$ being the maximal torus of $G$, determining magnetic charges amounts to finding a basis for $\pi_2(G/H)$, which classifies the magnetic charges. In the case of maximal symmetry-breaking, there is a natural basis $\{b_1,\ldots,b_{\rnk{g}}\}$ of $\pi_2(G/H)$ labeled by simple roots $\{\alpha_1,\ldots,\alpha_{\rnk{g}}\}$ with respect to a choice of fundamental Weyl chamber. A convenient choice is obtained by fixing a point $p\in S^2_\infty$ which fixes a torus $T$, as the isotropy subgroup of $\phi_\infty(p)$, and we set the fundamental Weyl chamber to be the one that contains $\phi_\infty(p)$. A set of fundamental weights $\{\omega_1,\ldots,\omega_{\rnk{g}}\}$ would satisfy
\begin{equation}\label{eq:defining equation of fundamental weights}
    \frac{2\langle\omega_i,\alpha_j\rangle}{\langle\alpha_j,\alpha_j\rangle}=\delta_{ij} \;,
    \qquad 
    i,j=1,\ldots,\rnk{g}  \;. 
\end{equation}
Using the Killing form and identifying $\mfk{g}\simeq\mfk{g}^*$, we identify $H_i$ with the set of simple coroots $2\alpha_i/\langle\alpha_i,\alpha_i\rangle$, which are dual to the set of fundamental weights through \eqref{eq:defining equation of fundamental weights}. Hence,
\begin{equation}\label{eq:pairing of fundamental weights and simple coroots}
    \omega_i(H_j):=\langle\omega_i,H_j\rangle=\delta_{ij} \;, 
    \qquad 
    i,j=1,\ldots,\rnk{g} \;.
\end{equation}
Let $\mfk{t}:=\text{Lie}(T)$ be the Lie algebra of the maximal torus $T$ of $G$, and let $\Lambda\subset\mfk{t}^*$ be the weight lattice of $\mfk{g}$.\footnote{Strictly speaking, we should work with the complexification $G_\mbb{C}$ of $G$. But, when $H\simeq T$, then $G/H\simeq G_{\mbb{C}}/B$ for a Borel subgroup $B\subset G_\mbb{C}$ such that $G\cap B=H$.} Then, for any $\lambda\in\Lambda$, there is a one-dimensional representation $\lambda:T\to\U{1}$, and hence an associated line bundle $\mcal{L}_{-\lambda}:=G\times_{H}\U{1}$ to the principal $H$-bundle $G\to G/H$. It is defined as pairs $[g,x]$, for $g\in G$ and $x\in\U{1}$, satisfying the equivalence relation $(g,x)\sim(gh,\lambda(h)^{-1}x)$ over $G/H$ for any $h\in H$. One can define the isomorphism $\Lambda\otimes_{\mbb{Z}}\mbb{Q}\to H^2(G/H,\mbb{Q})$ which for a weight $\lambda$ takes the Chern class of $\mcal{L}_{-\lambda}$, i.e.\ $\lambda\to c(\mcal{L}_{-\lambda})$ \cite{BernsteinGelfandGelfand197306}. For a map $f:S^2_\infty\to G/H$ (or more precisely a homotopy class $[f]$ of such maps), there is a natural pairing $\langle\cdot,\cdot\rangle:\pi_2(G/H)\times H^2(G/H,\mbb{Q})\to H^2(S^2_\infty,\mbb{Q})\simeq\mbb{Q}$ given by
\begin{equation}
    \langle [f],c(\mcal{L}_{-\lambda})\rangle:=c(f^*\mcal{L}_{-\lambda}) \;,
\end{equation}
which in the basis $\{b_1,\ldots,b_{\rnk{g}}\}$ is given by \cite{BernsteinGelfandGelfand197306}
\begin{equation}\label{eq:the pairing of a basis element of pi2 with the chern-class of line bundle l}
    \langle b_i,c(\mcal{L}_{-\lambda})\rangle=\frac{2\langle\lambda,\alpha_i\rangle}{\langle\alpha_i,\alpha_i\rangle} \;,
    \qquad 
    i=1,\ldots,\rnk{g} \;.
\end{equation}
Note that $f^*\mcal{L}_{-\lambda}$ is a line bundle on $S^2_\infty$. The topological quantum numbers $\{m_1,\ldots,m_{\rnk{g}}\}$ can be defined by taking the class $[\phi_\infty]\in \pi_2(G/H)$ and set
\begin{equation}
    [\phi_\infty]=\sum_{i=1}^{\rnk{g}}m_i b_i \;.
\end{equation}
Using \eqref{eq:defining equation of fundamental weights}, \eqref{eq:the pairing of a basis element of pi2 with the chern-class of line bundle l}, and the fact that the pullback of $G\to G/H$ by $\phi_\infty$ is a principal $H$-bundle on $S^2_\infty$ with $H=T$, we have\footnote{Note that the Chern class of a line bundle $L$ is typically given by $c(L)=\frac{\mfk{i}}{2\pi}F$ with $F$ being the curvature of a connection on $L$. Here, we have chosen the normalization $c(L)=\frac{1}{2\pi}F$ instead.}
\begin{eqaligned}\label{eq:computation of magnetic charges}
        m_i&=\langle [\phi_\infty],c(\mcal{L}_{-\omega_i})\rangle \\
        &=c(\phi_\infty^*\mcal{L}_{-\omega_i}\rangle
        \\
        &=\frac{1}{2\pi}\bigintsss_{S^2_\infty} \omega_i(F_\infty)
        \\
        &=\frac{1}{4\pi}\bigintsss_{S^2_\infty}\rd\text{Vol}_{S^2_\infty} \omega_i(2\star_{S^2_\infty}F_\infty)
        \\
        &=\frac{1}{4\pi}\cdot 4\pi \cdot \omega_i(2\star_{S^2_\infty}F_\infty) \;,
\end{eqaligned}
which leads to 
\begin{equation}\label{eq:definition of magnetic charges of a monopole}
    m_i=\omega_i(2\star_{S^2_\infty}F_\infty) \;, 
    \qquad 
    i=1,\ldots,\rnk{g}  \;.
\end{equation}
Therefore, the $i$\textsuperscript{th} magnetic charge is the Chern number of the line bundle $\mcal{L}_{-\omega_i}$, associated with the $i$\textsuperscript{th} fundamental weight $\omega_i$, pulled back to $S^2_\infty$. 

\smallskip For the case of $G=\SU{n}$ with which we are mostly concerned, we take $\star_{S^2_\infty} F=\text{diag}(k_1,\ldots,k_n)/2$ for $k_1,\ldots,k_n\in\mbb{Z}$ and $\sum_i k_i=0$. 
Using the fact that for $\SU{n}$, $H_i=E_{i,i}-E_{i+1,i+1}$ with $E_{ij}$ being the elementary matrices, we have
\begin{equation}
    2\star_{S^2_\infty}F_\infty=\sum_{i=1}^n\left(\sum_{j=1}^ik_j\right)H_i \;,
\end{equation}
and hence using \eqref{eq:pairing of fundamental weights and simple coroots}, we arrive at
\begin{equation}\label{eq:magnetic charges for su(n) monopole}
    m_i=\omega_i(2\star_{S^2_\infty}F_\infty)=\sum_{j=1}^ik_j \;, 
    \qquad 
    i=1,\ldots,n \;.
\end{equation}
In particular, we have $m_n=\sum_{i=1}^n k_i=0$.

\paragraph{Masses of a Hyperbolic Monopole.} The second feature of a hyperbolic monopole is its masses. To define this notion, consider the expansion \eqref{eq:expansion of higgs field at infinity}. The $i$\textsuperscript{th} mass of the hyperbolic $G$-monopole is defined by the coefficient $p_i$ in \eqref{eq:expansion of higgs field at infinity}:
\begin{equation}\label{eq:definition of masses of hyperbolic monopoles}
    p_i=\omega_i(\phi_\infty) \;, 
    \qquad 
    i=1,\ldots,\rnk{g} \;.
\end{equation}

\subsection{Twistor Correspondence for Hyperbolic \texorpdfstring{$\SU{n}$}{SU(n)}-Monopoles}\label{sec:spectral data of hyperbolic su(n)-monopoles}

One of the most important aspects of the theory of hyperbolic monopole which best describes its integrability is the twistor correspondence. It states that there is a one-to-one correspondence between solutions of the Bogomolny equation on $\Hbb^3$ and certain holomorphic bundles over $\ts$, the twistor space of $\Hbb^3$. We begin in \S\ref{sec:twistor space of hyperbolic space} with the description of the twistor space of $\Hbb^3$, and then discuss the correspondence in \S\ref{sec:the twistor correspondence}. From now on, we will take $G=\SU{n}$. 

\subsubsection{Twistor Space of Hyperbolic Space}\label{sec:twistor space of hyperbolic space}

Let us describe the twistor space of hyperbolic three-space $\Hbb^3$. We provide two descriptions, first following \cite{Atiyah1984}, as a quotient space of some subspace of $\Pbb^3$ by $\ct$-action, and then by considering $\Hbb^3$ as the space of null rays in the upper cone of the Minkowski space, as described in \cite{MurraySinger199607}. We are mostly working with the latter description so we present it below, and relegate the first description to Appendix \ref{sec:more details on twistor space of hyperbolic space}.  More detailed studies of the space of geodesics on $\Hbb^3$ can be found in \cite{Georgiou2009,GeorgiouGuilfoyle200702}. 

\paragraph{Twistor Space from Null Rays in Minkowski Space.} A convenient description of the twistor space can be given in terms of thinking of $\Hbb^3$ as the space of rays of the future-pointing time-like vectors, denoted as $\mcal{U}$, in the Minkowski space $\R^{1,3}$. $\mcal{U}$, as an open cone in $\R^{1,3}$, is defined as
\begin{equation}\label{eq:definition of future-directed time-like vectors}
    \mcal{U}:=\left\{x:=(x_0,x_1,x_2,x_3)\in \R^{1,3}\,\big|\,-x_0^2+x_1^2+x_2^2+x_3^2<0,\,x_0>0\right\} \;.
\end{equation}
Furthermore, recall that the hyperboloid model of $\Hbb^3$ describes it as
\begin{equation}\label{eq:hyperboloid model of H3}
    \Hbb^3:=\left\{x\in \R^{1,3}\,\big|\,-x_0^2+x_1^2+x_2^2+x_3^2=-1\right\} \;.
\end{equation}
Therefore, if we consider the action of $\mbb{R}_+$, the set of positive real numbers, on $\mcal{U}$ given by rescaling
\begin{equation}\label{eq:r+ action on minkowski}
    x\to \lambda x \;,
    \qquad 
    \lambda\in\R_+ \;,
\end{equation}
the resulting quotient space can be identified with $\Hbb^3$. In the following, we use the notation
\begin{equation}\label{eq:length in minkowski space}
    |x|^2:=-x_0^2+x_1^2+x_2^2+x_3^2 \;, 
    \qquad 
    \forall x\in M  \;.
\end{equation}

\smallskip From the above description, $\ts$, the twistor space of $\Hbb^3$, can be described from $\tsu$, the twistor space of $\mcal{U}$, as the orbit space of the quotient of the latter by the $\R_+$-action. By definition, the twistor space of $\mcal{U}$ is the space of null geodesics (i.e.\ geodesics with null tangent vector) in $\mcal{U}$. A null geodesic in $\mcal{U}$ is the intersection of a straight line in $\R^{1,3}$, having a null tangent vector, with $\mcal{U}$. More precisely, $\tsu$ is defined as
\begin{equation}\label{eq:definition of twistor space of u}
    \tsu:=\left\{[z,w]\in\mbb{P}^3\big|\,z,w\in\mbb{C}^2,\,\langle z,w\rangle>0\right\} \;,
\end{equation}
where
\begin{equation}
    \langle z,w\rangle:=z^\dagger w=(\bmb{z}_0,\bmb{z}_1)\begin{pmatrix}
        w_0 \\ w_1 
    \end{pmatrix} \;.
\end{equation}

One considers a double fibration \cite{Wells197903,EastwoodPenroseWells198101}
\begin{equation}
\begin{tikzcd}\label{fig:correspondence space for u}
    & \arrow[ld,"\mu" swap] \mscr{C}_{\mcal{U}}\arrow[rd,"\nu"] &
    \\
    \tsu & & \mcal{U}
\end{tikzcd}
\end{equation}
where the correspondence space $\mscr{C}_{\mcal{U}}$ is defined as the space of all pairs $(\gamma,x)$, where $\gamma$ is a geodesic in $\mcal{U}$ together with a point  $x\in\gamma$
\begin{equation}\label{eq:correspondence space for U}
    \mscr{C}_{\mcal{U}}:=\left\{(x,\gamma)\in \mcal{U}\times\tsu\,\big|\,\gamma\in\tsu, \,x\in\gamma\right\} \;.
\end{equation}
If one identifies the Minkowski space with the space of Hermitian matrices through
\begin{equation}
    x\quad\longleftrightarrow \quad 
    \begin{pmatrix}
        x^0+x^3 & x^1-\mfk{i}x^2
        \\
        x^1+\mfk{i}x^2 & x^0-x^3
    \end{pmatrix} \;,
\end{equation}
then $\mscr{C}_{\mcal{U}}$ is defined through an incidence relation
\begin{equation}
    w=x\cdot z \;,
\end{equation}
where $\cdot$ is the matrix product, and the maps $\mu$ and $\nu$ in \eqref{fig:correspondence space for u} are restrictions of projections to the corresponding spaces. 

\smallskip Note that $\tsu$ is defined by one real condition \eqref{eq:definition of twistor space of u} in $\Pbb^3$, hence $\dim_{\R}\tsu=5$. While this space does not admit a complex structure, it instead admits a CR-structure, which describes a real hypersurface inside a complex manifold. One considers the CR-operator $\bar{\partial}_{\tsu}:\Omega^{0,\bullet}\to\Omega^{0,\bullet+1}$. A locally-defined function on $\tsu$ is called a CR function $f$ if $\bar{\partial}_{\tsu}f=0$. Given a CR-operator on $\tsu$, a CR-structure on $\tsu$ is induced by transition functions which are CR functions on $\tsu$. Next, consider a vector bundle $E\to\tsu$. A CR-operator $\bar{\partial}_E$ on $E$ is an operator which (1) satisfies the Leibniz rule $\bar{\partial}_E(f e)=\bar{\partial}_{\tsu}fe+f\bar{\partial}_Ee$ for any function $f$ on $\tsu$ and $e\in\Gamma(E,\tsu)$,\footnote{In the following, $\Gamma(E, M)$ would denote the space of sections of a vector bundle $E$ over $M$.} and (2) $\bar{\partial}_E^2=0$. One then says $\bar{\partial}_E$ is integrable and $(E,\bar{\partial}_E)$ is a CR vector bundle. 

\smallskip To define the twistor space of $\Hbb^3$, we need to consider the $\mbb{R}_+$-invariant part of $\tsu$. Hence, we first define the action of $\mbb{R}_+$ on $\tsu$ as (compare it with \eqref{eq:the C* action on CP3})
\begin{equation}\label{eq:R+ action on the twistor space of U}
    \lambda\cdot (z,w):=(\lambda^{-\frac{1}{2}}z,\lambda^{+\frac{1}{2}}w) \;, 
    \qquad 
    \forall\lambda\in\mbb{R}_+, \quad \forall (w,z)\in\tsu \;.
\end{equation}
The action has two obvious fixed lines $\Pbb^1_+$ and $\Pbb^1_-$ (see \eqref{eq:the fixed lines of the C* action on CP3}). Hence, the twistor space of $\Hbb^3$ can be then defined as 
\begin{equation}\label{eq:definition of twistor space of H3, second version}
    \ts:=\tsu/\mbb{R}_+=\left\{([z],[w])\in\mbb{P}^1_+\times\Pbb^1_-\,\big|\,\langle z,w\rangle\ne 0\right\},
\end{equation}
which leads to the following double fibration
\begin{equation}
\begin{tikzcd}\label{fig:correspondence space for twistor space}
    & \arrow[ld,"\mu" swap] \mscr{C}_{\Hbb^3}\arrow[rd,"\nu"] &
    \\
    \ts & & \Hbb^3
\end{tikzcd}
\end{equation}
where the correspondence space is defined analogous to $\mscr{C}_{\mcal{U}}$
\begin{equation}
    \mscr{C}_{\Hbb^3}:=\left\{(x,\gamma)\in\Hbb^3\times\ts\,\big|\,\gamma\in\ts,\,x\in\gamma\right\} \;.
\end{equation}

\smallskip Recall that geodesics on $\Hbb^3$ are parameterized with their end-points on $S^2_\infty$.
Let us denote the coordinates on $\Pbb^1_\pm$ as $z:=z_1/z_2$ and $w:=z_3/z_4$. From \eqref{eq:projection of fixed-lines of C*-action on CP3} and \eqref{eq:small and large limits of lambda}, a point $(z,w)$ of the minitwistor space \eqref{eq:minitwistor space of hyperbolic space} corresponds to a geodesics that starts at $z\in S^2_\infty$ and ends at $\bmb{w}\in S^2_\infty$. We would like to avoid geodesics that start and end at the same point. This motivates the introduction of a real structure $\sigma:\Pbb^1_-\times\Pbb^1_+\to\Pbb^1_-\times\Pbb^1_+$, induced from \eqref{eq:real structure on P3}
\begin{equation}\label{eq:real structure on twistor space}
    \sigma(z,w):=(\bmb{z},\bmb{w}) \;.
\end{equation}
This means that the transformed geodesic associated with the point $(\bmb{z},\bmb{w})$ starts at $\bmb{z}$ and ends at $w$. Therefore, the set of all geodesics that start and end at the same point are related by $\sigma$ and correspond to $(z,\bmb{z})\in\Pbb^1_-\times\Pbb^1_+$. We define the set of all such geodesics as
\begin{equation}\label{eq:set of geodesics that start and end at the same point}
    \Delta:=\left\{(z,w)\in\Pbb^1_+\times\Pbb^1_-\,\big|\,\bmb{w}=z\right\} \;.
\end{equation}
Excluding these points from the minitwistor space \eqref{eq:minitwistor space of hyperbolic space}, we end up with the twistor space of $\Hbb^3$
\begin{equation}
	\ts:= \Pbb_-^1\times\Pbb^1_+-\Delta \;,
\end{equation}
which is equipped with a real structure defined by 
$\sigma$.

\smallskip A line $\mbf{L}\subset\ts$ is called a real line if it is invariant under  $\sigma$
\begin{equation}\label{eq:definition of real lines}
    \sigma(\mbf{L})=\mbf{L} \;,
\end{equation}
or in more detail
\begin{equation}
        \sigma(\mbf{L}(z,w)):=\mbf{L}(\sigma(z,w))=\mbf{L}(\bar{z},\bar{w})=\mbf{L}(z,w) \;.
\end{equation}
In other words, a line is real if and only if $(z,w)\in\mbf{L}$ implies $\sigma(z,w)=(\bmb{z},\bmb{w})\in\mbf{L}$. 

\hypertarget{some relevant bundles}{\paragraph{Some Relevant Bundles.}} Bundles on spaces appearing in the double fibration \eqref{fig:correspondence space for u} and their $\mbb{R}_+$-invariant data play the main role in the twistor correspondence and also define the spectral data of hyperbolic monopoles. We thus spend some time on introducing some relevant bundles. 

\smallskip Holomorphic line bundles on $\ts$ are constructed by the pullback from the factors in $\ppm$. For $s_\mp\in\mbb{Z}$, we have\footnote{There are some differences between our conventions and the ones in \cite{MurraySinger199607}. For example, what we have denoted as $\mcal{O}_{\ts}(s_-,s_+)$ is denoted as $\mcal{O}(s_+,s_-)$ in loc.\ cit.}
\begin{equation}\label{eq:twisted structure sheaf of twistor space}
    \mcal{O}_{\ts}(s_-,s_+):=\left.\pi^*_{-}(\mcal{O}_{\mbb{P}^1_-}(s_-))\otimes\pi^*_+(\mcal{O}_{\mbb{P}^1_+}(s_+))\right|_{\ts}  \;,
\end{equation}
with the projections $\pi_\pm:\ppm\to\mbb{P}^1_\pm$, and then restriction to $\ts$. An important line bundle on $\ts$ is the (topologically) trivial one\footnote{In \cite[\S 3.1, Example 3]{MurraySinger199607}, this line bundle is denoted as $\wt{L}$.}
\begin{equation}\label{eq:line bundle L}
    L:=\mcal{O}_{\ts}(-1,1) \;.
\end{equation}
The space of sections of the corresponding bundle on $\tsu$ are CR-functions $f(z,w)$ satisfying
\begin{equation}
    f(\lambda^{-1/2} z,\lambda^{+1/2} w)=\lambda f(z,w) \;,
    \qquad 
    \forall \lambda\in\C^\times \;.
\end{equation}
More generally, for $p\in\mbb{C}$ and $q\in\mbb{Z}$, the most general holomorphic line bundles on $\ts$ have the generic form $L^p(q)$. A section of this line bundle corresponds to a CR-section $f$ of the corresponding bundle on $\tsu$ satisfying 
\begin{equation}\label{eq:sections of lp(q)}
    f(\zeta\lambda^{-1/2}z,\zeta\lambda^{+1/2}w)=\zeta^q\lambda^pf(z,w) \;, 
    \qquad 
    \mu,\lambda\in\C^\times \;. 
\end{equation}
These bundles on $\ts$ are isomorphic to $\mcal{O}_{\ts}(s_-,s_+)$ whose sections transform as
\begin{eqaligned}\label{eq:sections of o(q+,q-)}
    f(\zeta\lambda^{-1/2} z,\zeta\lambda^{+1/2} w)&=\left(\zeta\lambda^{-1/2}\right)^{s_-}\left(\zeta\lambda^{+1/2}\right)^{s_+}f(z,w)
    \\
    &=\zeta^{s_++s_-}\lambda^{\frac{s_+-s_-}{2}}f(z_-,z_+) \;.
\end{eqaligned}
Comparing with $\eqref{eq:sections of lp(q)}$, we conclude that
\begin{equation}\label{eq:relation between L and O bundles}
    p=\frac{s_+-s_-}{2} \;,
    \qquad  
    q=s_++s_- \;.
\end{equation}
We thus have
\begin{equation}\label{eq:Lp(q) in terms of O(q_+,q_-)}
    L^p(q)\simeq \mcal{O}_{\ts}((q-2p)/2,(q+2p)/2) \;.
\end{equation}
In particular, we have a topologically trivial bundle for $s_++s_-=0$, such as $L$ in \eqref{eq:line bundle L}. Furthermore, any tensor power of $L^{p}$ is trivial. From \eqref{eq:relation between L and O bundles}, we can identify
\begin{eqaligned}\label{eq:basis for holomorphic line bundles}
    L^{\frac{1}{2}}(1):=\mcal{O}_{\ts}(0,1) \;, 
    \qquad 
    L^{-\frac{1}{2}}(1):=\mcal{O}_{\ts}(1,0) \;.
\end{eqaligned}
These bundles are (topologically) isomorphic: $L^{-p}$ and $L^p$ are both topologically trivial, and hence $L^{p}(q)\simeq L^{-p}(q)$. In particular, 
\begin{equation}\label{eq:topological equivalence of O(n,0) and O(0,n)}
    L^{n/2}(n)\simeq\mcal{O}_{\ts}(0,n)\simeq\mcal{O}_{\ts}(n,0)\simeq L^{-n/2}(n) \;.
\end{equation}
By comparing \eqref{eq:sections of lp(q)} and \eqref{eq:sections of o(q+,q-)} shows that\footnote{While the two sides are bundles over two different bases, one should think of \eqref{eq:identification of OZH3(p,p) with OZU(2p)} as the correspondence between holomorphic vector bundles on $\ts$ and CR-vector bundles on $\tsu$.}
\begin{equation}\label{eq:identification of OZH3(p,p) with OZU(2p)}
    \mcal{O}_{\ts}(p,p)\longleftrightarrow \mcal{O}_{\tsu}(2p) \;.
\end{equation}

\smallskip We need some basic results about these bundles. Consider the following diagram associated with $\sigma:\ts\to\ts$
\begin{equation}\label{eq:pullback of real line with sigma}
    \begin{tikzcd}
        \sigma^*\mcal{O}_{\ts}(m,n) \arrow[r]\arrow[d,"\pi" swap] & \arrow[d,"\pi"]\mcal{O}_{\ts}(m,n)
        \\
        \ts \arrow[r,"\sigma" swap] & \ts 
    \end{tikzcd},
\end{equation}
where 
\begin{equation}
    \sigma^*\mcal{O}_{\ts}(m,n)={\mcal{O}}_{\ts}(m,n) \;.
\end{equation}
Note that the complex structure on $\sigma(\ts)$ is the complex conjugate of $\ts$, and a section of $\mcal{O}_{\ts}(m,n)$ on $\sigma(\ts)$ would transform as $\bmb{z}^m\bmb{w}^n$. Pull this back with $\sigma$ gives a section which transforms with $z^m w^n$, and hence a section of $\mcal{O}_{\ts}(m,n)$. 

\smallskip Consider a real line defined by \eqref{eq:definition of real lines}, for which we have
\begin{equation}
    \begin{tikzcd}
        \sigma^*\mcal{O}_{\mbf{L}}(m,n) \arrow[r]\arrow[d,"\pi" swap] & \arrow[d,"\pi"]\mcal{O}_{\mbf{L}}(m,n)
        \\
        \mbf{L} \arrow[r,"\sigma" swap] & \sigma(\mbf{L})=\mbf{L}
    \end{tikzcd}.
\end{equation}
The bundle $\mcal{O}_{\mbf{L}}(m,n)$ is defined by the embedding $\iota_{\mbf{L}}:\mbf{L}\hookrightarrow\ts$ as $\mcal{O}_{\mbf{L}}(m,n):=\iota^*_{\mbf{L}}\mcal{O}_{\ts}(m,n)$. In particular, on a real line $\mbf{L}$, the isomorphism \eqref{eq:pullback of real line with sigma} holds. It furthermore implies
\begin{equation}\label{eq:invariance of O(n,n) under sigma}
    \sigma^*\mcal{O}_{\mbf{L}}(n,n)\simeq\mcal{O}_{\mbf{L}}(n,n) \;.
\end{equation}

\begin{rmk}\normalfont
    As we will explain below, spectral curves $S$ of hyperbolic $\SU{n}$-monopoles are real lines which are sections of $\mcal{O}_{S}(m,m)$ for some $m$. Therefore, this bundle should be invariant under $\sigma$, a fact which is implied by \eqref{eq:invariance of O(n,n) under sigma}.
\end{rmk}

In the following, the bundle $\wt{L}\to\mcal{U}$ would denote the bundle of homogeneous functions on $\mcal{U}$ of degree $+1$, i.e.\ all functions satisfying $f(\lambda X)=\lambda f(X)$ for $\lambda\in\mbb{R}_+$. Note that $\lambda$ cannot be negative since it then maps $X_0>0$ to $\lambda X_0<0$, and hence does not preserve $\mcal{U}$.  

\smallskip A holomorphic vector bundle $E\to\ts$ is called non-degenerate if the restriction of $E$ to any real line in $\ts$ is a holomorphically-trivial bundle. Furthermore, $E$ is called a real bundle if there is an anti-holomorphic involution $\sigma_E:E\to E^*$, where $E^*$ denotes the conjugate vector bundle of $E$, that covers the map $\sigma:\ts\to\ts$. 

\smallskip Next consider a vector bundle $V\to\mscr{C}_{\mcal{U}}$. A $\bar{d}$-operator on $V$ is a linear differential operator $\bar{d}_V:\Omega^{p,q}\to\Omega^{p,q+1}$, which satisfies the Leibniz rule. $(V,\bar{d}_V)$ is called an integrable bundle if $\bar{d}^{\,^2}_V=0$, and is called non-degenerate if its restriction to a fiber of $\nu$ in \eqref{fig:correspondence space for u}, which is isomorphic to $\Pbb^1$, is a holomorphically-trivial bundle. 

\smallskip Finally, we point out that holomorphic bundles over the twistor space $\ts$ are key parts of the spectral data of hyperbolic monopoles. The holomorphic structure on vector bundles on $V\to\ts$ is induced by the CR-structure on the corresponding bundle on $\tsu$. The reason for this one-to-one correspondence is \cite[\S 3.1, Example 2]{MurraySinger199607}
\begin{equation}
    \Lambda^{0,1}_{\tsu}\simeq\pi^*\Lambda^{0,1}_{\ts} \;,
\end{equation}
where the isomorphism is given by the pullback. Hence, a section of $V$ is holomorphic if and only if the section of the corresponding bundle on $\tsu$ is CR. 

\subsubsection{The Correspondence} \label{sec:the twistor correspondence}
We now discuss the twistor correspondence, i.e.\ the one-to-one correspondence between solutions to the Bogomolny equation on $\Hbb^3$ for the gauge group $\SU{n}$ and holomorphic vector bundle on $\ts$.

\smallskip The basic idea of the twistor correspondence is encoded in the double fibration \eqref{fig:correspondence space for u}: mathematical structures on $\mcal{U}$ are pulled-back by $\nu$ to $\mscr{C}_{\mcal{U}}$ and then sent to the corresponding mathematical structures on $\tsu$ by $\mu$. Consider the self-duality equation on $\mcal{U}$, which is a connection on a principal $G$-bundle on $\mcal{U}$. It is more convenient to work with the corresponding associated vector bundle equipped with a self-dual connection. The main result is that there is a one-to-one correspondence between such vector bundles and integrable bundles on $\mscr{C}_{\mcal{U}}$, and in turn, the latter bundles are in one-to-one correspondence with CR-bundles on $\tsu$. Restricting to $\mbb{R}_+$-invariant data will provide the twistor correspondence between the solutions of the Bogomolny equation on $\Hbb^3$ and holomorphic vector bundles on $\ts$.

\hypertarget{correspondence between bundles over the double fibration}{\paragraph{Correspondence Between Bundles over the Double Fibration \eqref{fig:correspondence space for u}.}} We now briefly explain the correspondence between bundles with additional structures defined over spaces in the double fibration \eqref{fig:correspondence space for u}.

\smallskip Consider an integrable bundle $(\wt{E}_{\mscr{C}_{\mcal{U}}},\bar{d}_{\wt{E}_{\mscr{C}}})$ on $\mscr{C}_{\mcal{U}}$. Such a bundle naturally corresponds to a CR-bundle $E\to \tsu$ where the CR-structure is determined by the CR-operator $\bar{\partial}_E$ as follows: One first chooses a section $s:\tsu\to \mscr{C}_{\mcal{U}}$, and then set
\begin{equation}\label{eq:bundles on ZU from bundles on CU}
    E_s:=s^*\wt{E}_{\mscr{C}_{\mcal{U}}} \;, 
    \qquad 
    \bar{\partial}_{E_s}:=s^*\bar{d}_{\wt{E}_{\mscr{C}}} \;.
\end{equation}
For two different sections, say $s_1$ and $s_2$ of $\tsu\to\mscr{C}$, there is a bundle map between bundles constructed by $s_1$ and $s_2$ as \eqref{eq:bundles on ZU from bundles on CU}, which is given by the parallel transport of $\bar{d}_{\wt{E}}$ along the fibers of $\mu$ in \eqref{fig:correspondence space for u}. On the other hand, having a CR-bundle $E\to\tsu$ equipped with the CR-operator $\bar{\partial}_{E}$, a bundle $(\wt{E},\bar{d}_{\wt{E}})$ on $\mscr{C}$ is constructed by the pull-back operation
\begin{equation}
    \wt{E}:=\mu^* E \;,
    \qquad 
    \bar{d}_{\wt{E}}:=\mu^* \partial_{E} \;.
\end{equation}
There is a one-to-one correspondence between CR-bundles on $\tsu$ and integrable bundles on $\mscr{C}$, all modulo corresponding equivalences. For further details see \cite[Proposition 3.2]{MurraySinger199607}.

\smallskip Next, consider a vector bundle $\wt{E}\to\mcal{U}$ equipped with a self-dual connection $\wt{A}$. Then, one can define the bundle $\wt{E}_{\mscr{C}_{\mcal{U}}}\to\mscr{C}_{\mcal{U}}$, with $\wt{E}_{\mscr{C}_{\mcal{U}}}:=\nu^*\wt{E}$. The pulled-back connection induces a $\bar{d}$-operator on $\wt{E}_{\mscr{C}_{\mcal{U}}}$, which we denote as $\bar{d}_{\wt{E}_{\mscr{C}}}$. This bundle is integrable if and only if
\begin{equation}\label{eq:curvature of connection on C}
    \nu^*\wt{F}\in\Gamma(\mu^*\Lambda^{1,0}_{\tsu}\wedge\Lambda^1_{\mscr{C}_{\mcal{U}}}) \;,
\end{equation}
with $\wt{F}$ being the curvature of $\wt{A}$. Since $\wt{A}$ is self-dual, this is always the case. On the other hand, by construction, any vector bundle constructed from the pull-back of a vector bundle on $\mcal{U}$ would automatically be non-degenerate. This shows that there is a one-to-one correspondence between vector bundle equipped with a self-dual connection $(\wt{E},\wt{A})$ and non-degenerate integrable vector bundles $(\wt{E}_\mscr{C},\bar{d}_{\wt{E}_{\mscr{C}}})$ given explicitly by 
\begin{equation}\label{eq:one-to-one correspondence between self-dual connections on U and inegrable bundles on correspondence space}
    \wt{E}_{\mscr{C}}:=\nu^*\wt{E} \;, 
    \qquad 
    \bar{d}_{\wt{E}_{\mscr{C}}}:=\nu^*\mcal{D}_{\wt{A}} \;,
\end{equation}
with $\mcal{D}_{\wt{A}}$ is the covariant derivative defined by $\wt{A}$. This correspondence holds modulo corresponding equivalences.  For further details see \cite[Proposition 3.3]{MurraySinger199607}.

\smallskip Putting these two results together, one concludes that there is a one-to-one correspondence between rank-$n$ vector bundle on $\mcal{U}$ equipped with a self-dual connection and CR-bundles on $\tsu$. More concretely, one consider a section $f:\tsu\to\mcal{U}$ which assigns the point $f(\gamma)\in\gamma\subset\mcal{U}$ to $\gamma\in\tsu$. The choice of such section will provide a section $s$ of the projection $\mu$ in \eqref{fig:correspondence space for u} by setting 
\begin{equation}
    s(\gamma):=(\gamma,f(\gamma)) \;, 
    \qquad 
    \forall\gamma\in\tsu  \;.
\end{equation}
Conversely, having such a section $s$, one can construct a section $f:\tsu\to\mcal{U}$ by setting $f=\nu\circ s$.

\smallskip Now assume $\wt{E}\to\mcal{U}$ is a vector bundle equipped with a self-dual connection $\wt{A}$. The corresponding CR-bundle $E\to\tsu$ can be defined by
\begin{equation}\label{eq:inverse ward transform, the bundle}
    E:=f^*\wt{E} \;,
\end{equation}
where the CR-structure on $E$ is induced by the CR-operator $\bar{\partial}_E$
\begin{equation}\label{eq:inverse ward transform, the cr operator}
    \bar{\partial}_{E}:=(f^*\mcal{D}_{\wt{A}})^{0,1} \;.
\end{equation}
The CR-structure is defined as usual: a section $\psi$ of $E$ is CR if $\bar{\partial}_E\psi=0$. The two relations \eqref{eq:inverse ward transform, the bundle} and \eqref{eq:inverse ward transform, the cr operator} are called the Inverse Ward Transform (IWT) \cite{MurraySinger199607}.

\paragraph{$\mbb{R}_+$-Invariant Data.} The last step of establishing the twistor correspondence for hyperbolic $\SU{n}$-monopoles is to extract the $\R_+$-invariant part of the IWT (i.e.\ relations \eqref{eq:inverse ward transform, the bundle} and \eqref{eq:inverse ward transform, the cr operator}). Recall that the $\R_+$-action on $\mcal{U}$ is given by \eqref{eq:r+ action on minkowski}. The crucial point is that the pullback under $\pi:\mcal{U}\to\Hbb^3$ commutes with the $\R_+$-action. This will give rise to the following picture: Consider a vector bundle $\wt{E}_{\mcal{U}}\to\mcal{U}$ equipped with a self-dual connection $\wt{A}$. The $\mbb{R}_+$-invariant part of the data of this vector bundle gives rise to a rank-$n$ vector bundle $\wt{E}_{\Hbb^3}\to\Hbb^3$ together with a pair $(\wt{A}_+,\phi)$, where $\wt{A}_+$ is a connection on $\wt{E}_{\Hbb^3}$ and $\phi$\footnote{Murray and Singer use $\mfk{i}\phi$ for the self-adjoint Higgs field \cite{MurraySinger199607}.} is an adjoint-valued scalar, satisfying the Bogomolny equation
\begin{equation}
    \wt{F}_+=\star_{\Hbb^3}{\mcal{D}_{\wt{A}_+}}\phi \;,
\end{equation}
where $\wt{F}_+={\msf{d}}_{\Hbb^3}\wt{A}_++\frac{1}{2}[\wt{A}_+,\wt{A}_+]$. More explicitly, having the vector bundle $(\wt{E}_{\Hbb^3},\wt{A}_+,\phi)$, a vector bundle $(\wt{E}_\mcal{U},\wt{A})$, with a self-dual $\wt{A}$, can be constructed where
\begin{equation}
    \wt{E}_{\mcal{U}}=\pi^*\wt{E}_{\Hbb^3} \;,
\end{equation}
and
\begin{equation}
    \mcal{D}_{\wt{A}}=\pi^*\mcal{D}_{\wt{A}_+}+\pi^*\phi\,\rd_{\Hbb^3}\log|x| \;,
\end{equation}
with $|x|$ is defined in \eqref{eq:length in minkowski space}. The $\R_+$-invariance of the connection from this expression is easy to check. Conversely, having the vector bundle $(\wt{E}_{\mcal{U}},\wt{A})$, one can construct the corresponding vector bundle $(\wt{E}_{\Hbb^3},\wt{A}_+,\phi)$ by first choosing a section $s:\Hbb^3\to\mcal{U}$ of the projection $\pi$, and setting
\begin{equation}\label{eq:bundles on H3 from bundles on U}
    \wt{E}_{\Hbb^3,s}:=s^*\wt{E}_{\mcal{U}} \;,
\end{equation}
For two sections $s$ and $s'$, we get $\wt{E}_{\Hbb^3,s}$ and $\wt{E}_{\Hbb^3,s'}$, and we have 
\begin{equation}
    \wt{E}_{\Hbb^3,s}=s^*\wt{E}_{\mcal{U}}=s^*\circ\pi^*\wt{E}_{\Hbb^3,s'}=\wt{E}_{\Hbb^3,s'} \;,
\end{equation}
so the two vector bundles are isomorphic. 

\hypertarget{reality condition}{\paragraph{The Reality Condition.}} The bundle of (anti)self-dual two-forms on $\R^{1,3}$ is not real, hence to get real solutions of the self-duality equation, one needs to impose additional reality conditions on the gauge field $\wt{\mcal{A}}$ on $\mcal{U}$. This means that there would not be a real solution of the Bogomolny equations by looking into the $\R_+$-invariant data. To get a real solution, one needs to impose appropriate real conditions on the gauge field $\wt{A}_+$ and the Higgs field $\phi$. An appropriate real condition turns out to be
\begin{equation}\label{eq:reality condition on gauge and Higgs fields}
    \wt{A}^*_+=-\wt{A}_+ \;, \qquad \phi^*=\phi \;.
\end{equation}
This condition can be expressed in a more invariant manner, which the reader can find in \cite[\S 3.3]{MurraySinger199607}. Furthermore, it imposes a condition on the corresponding CR bundle $E$ on $\tsu$ defined in \eqref{eq:inverse ward transform, the bundle}, which can be found in loc.\ cit.  

\paragraph{Boundary Conditions.} Up to now, we have explained the twistor correspondence for hyperbolic $\SU{n}$-monopoles, through which rank-$n$ holomorphic vector bundles (with a real structure) on $\ts$ is assigned to a solution of the Bogomolny equation on $\Hbb^3$. However, to be able to solve the Bogomolny equation, one needs to impose boundary conditions. In turn, these boundary conditions make sure that the bundle $E$, defined through the inverse Ward transform \eqref{eq:inverse ward transform, the bundle}, exists. Furthermore, boundary conditions, through the twistor correspondence, would lead to two filtrations of $E$, which are the main ingredients in defining the spectral data of the monopole later on. Let us explain these ideas in some detail.

\smallskip Let $\wt{E}_{\Hbb^3}\to\Hbb^3$ be a rank-$n$ vector bundle equipped with a connection $\wt{A}_+$ and an $\text{ad}\, E$-valued Higgs field $\phi$. The corresponding vector bundle on $\mcal{U}$ is denoted as $\wt{E}_\mcal{U}$. On this data, we impose the following boundary conditions:

\hypertarget{bc1}{\paragraph{\small Boundary Condition 1.}} The first boundary condition is that the asymptotic value of the Higgs field exists on $\partial\overline{\Hbb}^3$, and it is given by \eqref{eq:definition of masses of hyperbolic monopoles}
\begin{equation}
    \phi_\infty=
    \begin{pmatrix}
        p_1
        \\
         & \ddots 
         \\
         & & p_n
    \end{pmatrix} \;, 
    \qquad 
    p_i\in\R  \;,
    \qquad 
    \sum_{i=1}^np_i=0  \;,
\end{equation}
where we hereafter assume without losing generality that
$p_1\ge p_2 \ge \dots \ge p_n$.

To specify other boundary conditions, we note that the gauge group $\SU{n}$ breaks to $\U{1}^{n-1}$ on $\partial\overline{\Hbb}^3$, hence in a neighborhood $\mscr{N}$ of $\partial\overline{\Hbb}^3$ homotopic to $S^2$, 
where the vector bundle $\wt{E}_{\mcal{U}}$ can be written as
\begin{equation}\label{eq:vector bundle EU as a direct sum of EUi}
    \wt{E}_\mcal{U}\Big|_{\mscr{N}}=\bigoplus_{i=1}^n \wt{E}_{\mcal{U},i} \;,
\end{equation}
with each $\wt{E}_{\mcal{U},i}$ is a one-dimensional complex line bundle defined by
\begin{equation}\label{eq:definition of 1d bundles tildeE on U}
    \wt{E}_{\mcal{U},i}:\quad \det(\phi-\lambda_i)=0 \;, \qquad i=1,\ldots,n \;,
\end{equation}
and we have
\begin{equation}
    \lambda_i\Big|_{\partial\overline{\Hbb}^3}=p_i \;, \qquad i=1\,\ldots,n \;.
\end{equation}
Furthermore, the connection $\wt{A}_+$ upon projection into the $i$\textsuperscript{th} factor of the summand \eqref{eq:vector bundle EU as a direct sum of EUi} gives a $\U{1}$-connection on $\wt{E}_{\mcal{U},i}|_{\partial\overline{\Hbb}^3}$. As $\overline{\mscr{N}}$, the closure of $\mscr{N}$, is homeomorphic to $S^2$, one can integrate the curvature of $\wt{A}_i$ to obtain the degree of $\wt{E}_{\mcal{U},i}$ 
\begin{equation}
    k_i:=\text{deg}\, \wt{E}_{\mcal{U},i}=\frac{\mfk{i}}{2\pi}\bigintsss_{\overline{\mscr{N}}}F_{\wt{A}_i}=\bigintsss_{\overline{\mscr{N}}}c_1({\wt{E}_{\mcal{U},i}}) \;, 
    \qquad 
    i=1,\ldots,n \;,
\end{equation}
where $F_{\wt{A}_i}$ denotes the curvature of $\wt{A}_i$, and $c_1({\wt{E}_{\mcal{U},i}})$ denotes the first Chern class of ${\wt{E}_{\mcal{U},i}}$. Noting that $c_1(\oplus_{j=1}^i\wt{E}_{\mcal{U},j})$ is $\sum_{j=1}^ik_j$, comparing with \eqref{eq:computation of magnetic charges} shows that $\sum_{j=1}^i k_i$, by construction, can be identified with \eqref{eq:magnetic charges for su(n) monopole}, the $i$\textsuperscript{th} magnetic charge of the monopole.

\smallskip The connection on $\wt{E}|_{\mscr{N}}$ would then have the following expansion
\begin{equation}\label{eq:expansion of connection near the boundary of H3}
    \nabla_{\wt{A}_+}=\bigoplus_{i=1}^n\nabla_{\wt{A}_i}+\text{diag}(\lambda_1,\ldots,\lambda_n)\rd_{\mcal{U}}\log|x|+ \mscr{E} \;,
\end{equation}
with  $\mscr{E}\in\Lambda^1(\overline{\mscr{N}},\text{Hom}(\wt{E}_{\mcal{U}}))$, and $x$ denotes a local coordinate on $\overline{\mscr{N}}$. The components of $\mscr{E}$ are denoted as $\mscr{E}_{ij}\in\Lambda^1(\overline{\mscr{N}},\text{Hom}(\wt{E}_{\mcal{U},i},\wt{E}_{\mcal{U},j}))$. We can then express the rest of the boundary conditions as

\hypertarget{bc2}{\paragraph{\small Boundary Condition 2.}} It consists of three parts

\begin{enumerate}
    \item [{\small\bf 2a.}] The connection $\bigoplus_{i=1}^n\nabla_{\wt{A}_i}$ on $S^2_\infty$ has a smooth extension to $\overline{\mscr{N}}$. 
    
    \item [{\small\bf 2b.}] One demands that $p_i-\lambda_i$ extend as sections $\phi_j\in \Gamma(\overline{\mscr{N}},\wt{L}^{-2})$, so that $\phi_i=|x|^{-2}(p_i-\lambda_i),\,i=1,\ldots,n$.

    \item [{\small  \bf 2c.}]  Finally, one requires that $\mscr{E}_{ij}$ extends as a section $\mscr{E}_{ij}^{\text{ext}}$ of $\wt{L}^{-|p_i-p_j|}$, so that $\mscr{E}_{ij}^{\text{ext}}=|x|^{-|p_i-p_j|}\mscr{E}_{ij}$.
\end{enumerate}
All integral hyperbolic monopoles satisfy these boundary conditions, and they guarantee the existence of (1) solutions of the Bogomolny equation subject to the reality condition \eqref{eq:reality condition on gauge and Higgs fields}, and (2) a well-defined notion of spectral data. Another justification for these boundary conditions is given in \cite[\S 4.2]{MurraySinger199607}. Using these boundary conditions, we write the connection \eqref{eq:expansion of connection near the boundary of H3} as
\begin{eqaligned}
    \nabla_{\wt{A}_+}&=\bigoplus_{i=1}^n\nabla_{\wt{A}_i}+\text{diag}(p_1-|x|^2\phi_1,\ldots,p_n-|x|^2\phi_n)\rd_{\mcal{U}}\log|x|,
    \\
    &+\mbs{x_p}^{-1}\cdot\mscr{E}^{\text{ext}}_++\mbs{x_p}\cdot\mscr{E}^{\text{ext}}_- \;,
\end{eqaligned}
where
\begin{equation}
    \mbs{x_p}:=
    \begin{pmatrix}
        |x|^{p_1} 
        \\
        & \ddots
        \\
        && |x|^{p_n}
    \end{pmatrix} \;.
\end{equation}

\smallskip To state the correspondence, we need to introduce a few more extra concepts. First consider the two maps $\pi_\pm:\tsu\to\partial\mcal{U}$ \footnote{By an abuse of notation, we have used the same notation as \eqref{eq:twisted structure sheaf of twistor space} for projection onto $\Pbb^1_\pm$. This hopefully does not cause any confusion for the cautious reader.} for $\forall z,w\in\C$ (recall the definition of $\tsu$ in \eqref{eq:definition of twistor space of u})
\begin{equation}\label{eq:sections pipm of the correspondence}
    \pi_+(z,w):=\frac{zz^\dagger}{\langle z,z\rangle} \;, 
    \qquad 
    \pi_-(z,w):=\frac{{\sigma}(w) \sigma(w)^\dagger}{\langle w,w\rangle} \;,
\end{equation}
where $\sigma z$ is defined through the real structure on $\mbb{P}^3$ \eqref{eq:real structure on P3}
\begin{equation}
    \sigma(w)=\begin{pmatrix}
         \bmb{w}_0 \\ \bmb{w}_1
    \end{pmatrix} \;, 
    \qquad 
    \sigma (w^\dagger)=(w_0,w_1) \;. 
\end{equation}
Recalling \eqref{eq:R+ action on the twistor space of U}, we see that $\pi_+$ and $\pi_-$ are both $\mbb{R}_+$-invariant (basically the numerator and denominator of the right-hand sides of \eqref{eq:sections pipm of the correspondence} are multiplied by $\lambda^{\pm 1}$ for $\pi_\pm$, and hence in total are $\R_+$-invariant). Therefore, they descend to maps $\pi_\pm:\ts\to S^2_\infty\subset\overline{\Hbb}^3$: For a geodesic $\gamma\in\ts$, $\pi_+(\gamma)$ and $\pi_-(\gamma)$ are its initial and final points, respectively. Therefore, they are sections of the correspondence space $\mscr{C}_{\mcal{U}}$ and can be used to perform the IWT \eqref{eq:inverse ward transform, the bundle} and \eqref{eq:inverse ward transform, the cr operator}. It is clear from \eqref{eq:sections pipm of the correspondence} that $\pi_+$ is just the projection to $S^2_\infty$, so it is orientation-preserving, while $\pi_-$ is the projection to $S^2_\infty$ combined with the antipodal map, hence it is orientation-reversing in total (compare it with \eqref{eq:projection of fixed-lines of C*-action on CP3}). 

\smallskip We can now state the twistor correspondence for the hyperbolic monopoles (see \cite[Theorem 4.1]{MurraySinger199607} for a proof)

\begin{thr}[The Twistor Correspondence for Non-Integral Hyperbolic $\SU{n}$-Monopoles]\normalfont \label{thr:twistor correspondence theorem}
    Let the pair $(\wt{A},\phi)$ be a solution of the Bogomolny equation \eqref{eq:Bogomolny equation} on $\Hbb^3$ with $\wt{A}$ being a connection on a vector bundle $\wt{E}$ and $\phi$ belonging to the adjoint bundle of $\wt{E}$, satisfying the boundary conditions \hyperlink{bc1}{1} and \hyperlink{bc2}{2}. For the sections \eqref{eq:sections pipm of the correspondence} of the correspondence restricted to $\ts$, there are two holomorphic vector bundles on $\ts$ defined by
    \begin{equation}
        E^{\pm}:=(\pi_{\pm}^*\wt{E},\bmb{\partial}_{E^\pm}) \;,
    \end{equation}
    with
    \begin{equation}
    \bmb{\partial}_{E^\pm}:=\pi_\pm^*\left(\bigoplus_{i=1}^n\nabla_{\wt{A}_i}+\mscr{E}^{\text{ext}}_\pm\right)\mp \text{diag}(p_1,\ldots,p_n)\bmb{\partial}\chi_\pm \;,
    \end{equation}
    and the real-valued functions $\chi_\pm:\tsu\to\mbb{R}$ are defined as
    \begin{equation}
        \chi_+(z,w):=\frac{\langle w,z\rangle}{\langle w,w\rangle} \;, 
        \qquad 
        \chi_-(z,w):=\frac{\langle z,w\rangle}{\langle z,z\rangle} \;.
    \end{equation}
    As a result, there are two holomorphic filtrations 
    \begin{equation}\label{eq:two filtrations of E}
        0\simeq E^\pm_0\subset E^\pm_1\subset \ldots\subset E^\pm_{n-1}\subset E^\pm_n\simeq E^\pm \;,
    \end{equation}
    where
    \begin{equation}\label{eq:definition of bundles Epmi over ZH3}
        E^+_i:=\pi_+^*\left(\bigoplus_{j=1}^{i}\wt{E}_{\mcal{U},n-i+j}\right) \;, 
        \qquad 
        E^-_i:=\pi_-^*\left(\bigoplus_{j=1}^{i}\wt{E}_{\mcal{U},j}\right) \;,
    \end{equation}
    with the bundle $\wt{E}_{\mcal{U},i}$ is defined in \eqref{eq:definition of 1d bundles tildeE on U}. Furthermore, the following isomorphisms hold
    \begin{equation}
        E^+_{n-i+1}/E^+_{n-i} \simeq L^{p_i+k_i/2}(k_i) \;, 
        \qquad 
        E^-_{i}/E^-_{i-1}\simeq L^{p_i+k_i/2}(-k_i) \;,
    \end{equation}
    where the bundle $L$ is defined in \eqref{eq:line bundle L}. \qed 
\end{thr}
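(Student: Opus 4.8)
The plan is to run the inverse Ward transform \eqref{eq:inverse ward transform, the bundle}--\eqref{eq:inverse ward transform, the cr operator} with the two boundary sections $\pi_\pm$ of \eqref{eq:sections pipm of the correspondence} in place of a generic interior section, and then to read off the filtrations and their subquotients from the asymptotic eigen-splitting \eqref{eq:vector bundle EU as a direct sum of EUi}. The first step is to check that $E^\pm=\pi_\pm^*\wt{E}$ is a well-defined holomorphic vector bundle on all of $\ts$. Since $\pi_\pm$ land on $S^2_\infty=\partial\overline{\Hbb}^3$ rather than in the interior of $\Hbb^3$, this is exactly where Boundary Conditions \hyperlink{bc1}{1} and \hyperlink{bc2}{2} enter: they guarantee that $\wt{E}_{\mcal{U}}$ and its connection extend to a neighbourhood $\overline{\mscr{N}}$ of $\partial\overline{\Hbb}^3$, so the pullbacks make sense; that $\bmb{\partial}_{E^\pm}$ as written is a genuine CR-operator, since it obeys the Leibniz rule by construction and $\bmb{\partial}_{E^\pm}^{2}=0$ follows from the integrability of the pulled-back $\bar{d}$-operator on $\mscr{C}_{\mcal{U}}$, self-duality being in force (cf.\ \eqref{eq:curvature of connection on C} and \eqref{eq:one-to-one correspondence between self-dual connections on U and inegrable bundles on correspondence space}); and that $E^\pm$ is non-degenerate, i.e.\ holomorphically trivial on every real line, which is automatic for a pullback along a section. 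The genuinely delicate point is regularity along $S^2_\infty$: the log-singular pieces $\text{diag}(\lambda_1,\dots,\lambda_n)\,\rd_{\mcal{U}}\log|x|$ and $\phi\,\rd_{\Hbb^3}\log|x|$ in the $\mcal{U}$-lift of the connection must, after pullback by $\pi_\pm$ and projection onto type $(0,1)$, combine into the smooth term $\mp\,\text{diag}(p_1,\dots,p_n)\,\bmb{\partial}\chi_\pm$; here one uses Boundary Condition 2b, that $p_i-\lambda_i$ decays to the correct order, together with the explicit form of $\pi_\pm$ to identify $(\pi_\pm^*\log|x|)^{0,1}$ with $\bmb{\partial}\chi_\pm$.

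Next one builds the filtrations. On $\overline{\mscr{N}}$ the Higgs field $\phi$ yields the eigen-decomposition $\wt{E}_{\mcal{U}}=\bigoplus_i\wt{E}_{\mcal{U},i}$, ordered so that $p_1\ge\dots\ge p_n$; by Boundary Condition 2c the off-diagonal piece $\mscr{E}_{ij}$ decays like $|x|^{|p_i-p_j|}$, and the conjugated splitting $\mscr{E}=\mbs{x_p}^{-1}\mscr{E}^{\text{ext}}_++\mbs{x_p}\mscr{E}^{\text{ext}}_-$ is triangular with respect to this ordering, so that after pullback by $\pi_+$ (resp.\ $\pi_-$) only the components of $\mscr{E}^{\text{ext}}$ moving the eigen-index in one fixed direction contribute to $\bmb{\partial}_{E^\pm}$. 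Consequently the partial sums $\pi_+^*\big(\bigoplus_{j=1}^{i}\wt{E}_{\mcal{U},n-i+j}\big)$ and $\pi_-^*\big(\bigoplus_{j=1}^{i}\wt{E}_{\mcal{U},j}\big)$ are preserved by the respective CR-operators, hence are holomorphic subbundles, and these are the two flags \eqref{eq:two filtrations of E}.

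It remains to compute the subquotients, each of which is $\pi_\pm^*\wt{E}_{\mcal{U},i}$ with its induced CR-structure, a holomorphic line bundle on $\ts$ and therefore of type $L^a(q)$. The integer $q$ is the $\C^\times$-weight carried by the transition functions of $\wt{E}_{\mcal{U},i}$: since $\pi_+$ depends only on the $\Pbb^1_+$ factor and $\pi_-$ only on the $\Pbb^1_-$ factor, $\pi_+^*\wt{E}_{\mcal{U},i}$ is topologically $\mcal{O}_{\ts}(0,k_i)$ and $\pi_-^*\wt{E}_{\mcal{U},i}$ is $\mcal{O}_{\ts}(-k_i,0)$, the sign reflecting the orientation-reversal of $\pi_-$ noted above \eqref{eq:sections pipm of the correspondence}; by \eqref{eq:relation between L and O bundles} both carry $a=k_i/2$, with $q=k_i$ and $q=-k_i$ respectively. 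The remaining real shift is produced by the surviving $\log|x|$ twist: the $\mp p_i\,\bmb{\partial}\chi_\pm$ term in the $i$-th diagonal block of $\bmb{\partial}_{E^\pm}$ multiplies the transition functions by a power of $\chi_\pm$, which shifts the $\C^\times$-weight along the topologically trivial $L$-direction by exactly $p_i$; adding this to $k_i/2$ and matching against \eqref{eq:sections of lp(q)} gives $E^+_{n-i+1}/E^+_{n-i}\simeq L^{p_i+k_i/2}(k_i)$ and $E^-_i/E^-_{i-1}\simeq L^{p_i+k_i/2}(-k_i)$.

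The step I expect to be the main obstacle is the regularity claim of the first paragraph: showing that pulling the genuinely singular $\mcal{U}$-connection back by the boundary-valued sections $\pi_\pm$ nevertheless produces a smooth integrable CR-operator on $\ts$, and that Boundary Conditions \hyperlink{bc1}{1} and \hyperlink{bc2}{2} are precisely what is needed both for this and for the very existence of $\wt{E}_{\mcal{U}}$ and of the inverse Ward transform bundle. The careful bookkeeping of the real twist $L^{p_i}$ in the last step — the feature that distinguishes non-integral from integral hyperbolic monopoles — is the second most delicate point, and is ultimately controlled by the splitting of the connection into the $\bmb{\partial}\chi_\pm$ part and the holomorphic pieces dictated by Boundary Condition 2.
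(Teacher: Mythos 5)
The paper does not actually prove this theorem: it is quoted from Murray--Singer, with the proof deferred to their Theorem 4.1, so there is no in-paper argument to compare against line by line. Your sketch follows essentially that same route, and the one the surrounding review sets up — inverse Ward transform applied at the boundary sections $\pi_\pm$, with Boundary Conditions 1 and 2 controlling the singular $\log|x|$ terms so that they assemble into $\mp\,\mathrm{diag}(p_1,\dots,p_n)\,\bar{\partial}\chi_\pm$, the triangular structure of $\mscr{E}^{\text{ext}}_\pm$ with respect to the ordering $p_1\ge\dots\ge p_n$ giving the two flags, and the degree/weight bookkeeping (orientation-preserving $\pi_+$ versus orientation-reversing $\pi_-$, plus the real twist by $p_i$) identifying the subquotients as $L^{p_i+k_i/2}(\pm k_i)$ — so the outline is sound, with the analytic core (regularity of the pulled-back operator at $S^2_\infty$) asserted rather than carried out, exactly as you acknowledge. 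One small slip: non-degeneracy of $E^\pm$ is not ``automatic for a pullback along a section,'' since $\pi_\pm$ restricted to a real line is a degree-one map onto $S^2_\infty$ rather than a constant map; the theorem makes no non-degeneracy claim, however, so this does not affect the argument.
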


\begin{rmk} \normalfont
   For the sake of comparison, recall that in the context of Euclidean $G$-monopoles, $E^\pm$ are the two bundles obtained by reductions of the structure group of the bundle $E$ from $G_\C$ (the complexification of $G$) to a Borel subgroup $B$ (for $E^+$) and its opposite subgroup $\bmb{B}$ (for $E^-$) \cite{Murray1983,Murray198306,Murray198412,HurtubiseMurray1989}. \qed 
\end{rmk}

Note that the degree of pullback of the bundle by projection $\pi_\pm$ has a bidegree $(\cdot,\cdot)$ referring to the degree over $(\Pbb^1_-,\Pbb^1_+)$. Therefore, the degree of $E^\pm_i$ over $\Pbb^1_{\mp}$ is zero. On the other hand, the degree of $E^+_i$ over $\Pbb^1_+$, which we denote by $\text{deg}_\pm$, is
\begin{equation}
  \text{deg}_+E^+_i=\sum_{j=1}^ik_{n-i+j}=-\sum_{j=1}^i k_j \;,
\end{equation}
while for the degree of $E^-_i$ over $\Pbb^1_-$, one needs to take into account that the map $\pi_-$ is orientation-reversing
\begin{equation}
   \text{deg}_-E^-_i=-\sum_{j=1}^ik_j \;,
\end{equation}
where the orientation-reversal is encoded in the additional $-1$ prefactor. Recalling \eqref{eq:magnetic charges for su(n) monopole}, we get 
\begin{equation}\label{eq:degree of Epm}
    \text{deg}\, E^+_i=(0,-m_i) \;, 
    \qquad 
    \text{deg}\, E^-_i=(-m_i,0) \;.
\end{equation}
Therefore, it follows that
\begin{equation}\label{eq:degree of quotient of Epm}
    \text{deg}(E^+/E^+_i)=(0, +m_i) \;, 
    \qquad  
    \text{deg}(E^-/E^-_i)=(+m_i, 0) \;.
\end{equation}
Also, recall that $E$ is a real bundle with the real structure $\sigma_E:E \to E^*$. In a complete analogy with the Euclidean case, we have \cite{Hitchin198212,Murray1983,Murray198306,Murray198412}
    \begin{equation}
        \sigma_E(E^+_i)\simeq (E^-_{n-i})^\perp \;, 
        \qquad 
        i=1,\ldots,n  \;,
    \end{equation}
where $(E^\pm_i)^\perp$ denotes the subspace of linear functionals on $E$ vanishing on $E^\pm_i$.

\subsubsection{Spectral Data and Recovering Monopole Solutions}\label{sec:spectral data and recovering monopole solutions}

The twistor correspondence gives a holomorphic bundle on $\ts$ from the data of a monopole configuration on $\Hbb^3$ and vice versa. However, one question remains to be answered, which is: {\it Given a solution of the Bogomolny equation (with maximal gauge-symmetry breaking), does it always come from such a construction?} To answer this question, one needs to introduce the notion of spectral data, a concept first introduced by Hitchin \cite{Hitchin198212}. For hyperbolic monopoles, this concept is defined canonically through the two filtrations $E^\pm_i$ constructed in Theorem \ref{thr:twistor correspondence theorem}. Then one argues that any solution of the Bogomolny equation on $\Hbb^3$ can be completely determined from its spectral data, which is the answer to the question we wished to respond to.

\paragraph{Spectral Curves.} 
Suppose that we have 
a hyperbolic $\SU{n}$-monopole associated with the holomorphic bundle $E$ on $\ts$ satisfying the boundary conditions \hyperlink{bc1}{1} and \hyperlink{bc2}{2}, giving rise to filtrations \eqref{eq:two filtrations of E} of $E$.
The spectral curves of the hyperbolic monopole are defined in a complete analogy with the Euclidean case \cite{Hitchin198212,Murray1983,Murray198306,Murray198412}. 
Consider the maps
\begin{equation}\label{eq:definition of ith spectral curve of monopole}
    \Gamma^-_i:\bigwedge\nolimits^{i} E^-_i\to \bigwedge\nolimits^{i}\left(E^+/E_{n-i}^+\right), \qquad i=1,\ldots,n-1,
\end{equation}
where $\bigwedge\nolimits^{i}$ denotes the $i$\textsuperscript{th} exterior power. The $i$\textsuperscript{th} spectral curve is defined as
\begin{equation}
    S_i:=\text{div}(\Gamma_i^-) \;, \qquad i=1,\ldots,n-1 \;,
\end{equation}
with $\text{div}(\Gamma^-_i)$ denotes the divisor of the map $\Gamma^-_i$, i.e.\ its zero locus. Recall that $\bigwedge\nolimits^{i}E^-_i=\det E^-_i$, the determinant line bundle of $E^-_i$. From \eqref{eq:degree of Epm}, we see that the degree of $\det E^-_i$ is just $(-m_i,0)$. From $\text{deg}(\mcal{O}_{\Pbb^1}(D))=\text{deg}(D)$ for any divisor $D$, we conclude that the associated line bundle is isomorphic to $\mcal{O}_{\ts}(-m_i,0)$ \cite{Grothendieck195701}. Similarly, $\bigwedge\nolimits^{i}(E^+/E^+_{n-i})=\det(E^+/E^+_{n-i})$ with degree $+m_i$, and the associated line bundle is isomorphic to $\mcal{O}_{\ts}(0,+m_i)$. Therefore, the spectral curves are determined by the divisors of the following maps
\begin{equation}\label{eq:the map Gamma-}
    \Gamma^-_i:\det E^-_i\to\det(E^+/E^+_{n-i}) \;, \qquad i=1,\ldots,n-1 \;,
\end{equation}
and thus $S_i$ is a section of $\mcal{O}_{\ts}(m_i,m_i)$. Taking into account \eqref{eq:identification of OZH3(p,p) with OZU(2p)}, we see that
\begin{equation}\label{eq:spectral curves of hm as divisors of gamma-}
    S_i\in\Gamma(\mcal{O}_{\ts}(m_i,m_i))\simeq\Gamma(\mcal{O}_{\tsu}(2m_i)) \;, 
    \qquad 
    i=1,\ldots,n-1 \;,
\end{equation}
where $\simeq$ should be understood as we explained around \eqref{eq:identification of OZH3(p,p) with OZU(2p)}. Alternatively, $S_i$ can be described as the divisor of
\begin{equation}
    \Gamma_i^+:\bigwedge\nolimits^{n-i}E^+_{n-i}\to\bigwedge\nolimits^{n-i}\left(E^-/E_{i}^-\right) \;,
    \qquad 
    i=1,\ldots,n-1 \;,
\end{equation}
where $\bigwedge\nolimits^{n-i}E^+_{n-i}\simeq\det E^+_{n-i}$ and $\bigwedge\nolimits^{n-i}\left(E^-/E_{i}^-\right)\simeq\det\left(E^-/E_{i}^-\right)$ are of degree $(0,-m_i)$ and $(m_i,0)$, respectively. Therefore, the spectral curve $S_i$ is again a section of $\mcal{O}_{\ts}(m_i,m_i)$, and the genus of each curve is given by
\begin{equation}\label{eq:genus of Si}
    g_{S_i}=(m_i-1)^2 \;,
    \qquad 
    i=1,\ldots,n-1 \;.
\end{equation}
Furthermore, $S_i$ is compact \cite[Lemma on pg.\ 990]{MurraySinger199607}, and real in the following sense
\begin{equation}\label{eq:reality of spectral curves of hyperbolic monopoles}
    \sigma(S_i)=S_i \;, \qquad i=1,\ldots,n-1 \;,
\end{equation}
where this equation should be understood as in \eqref{eq:definition of real lines}.

\hypertarget{spectral data}{\paragraph{Spectral Data.}} It turns out the spectral curves $S_i$ are not the only data needed for the recovery of a specific monopole solution, and the following extra details are necessary. Consider the sets $S_i\cap S_{i+1}$ of intersections of $S_i$ and $S_{i+1}$. It can be decomposed as $S_{i,i+1}\cup S_{i+1,i}$ where
\begin{eqaligned}\label{eq:definition of Sii+1 and Si+1i}
    S_{i,i+1}&:=\left\{s\in S_i\,\Big|\,\dim\left(E^+_{n-i-1}\cap E^-_i\right)\big|_s\ge 1\right\} \;,
    \\
    S_{i+1,i}&:=\left\{s\in S_i\,\Big|\,\dim\left(E^+_{n-i}\cap E^-_{i+1}\right)\big|_s\ge 2\right\} \;,
\end{eqaligned}
which are exchanged by the real structure $\sigma$. A monopole is called generic if $S_i$ and $S_{i+1}$ intersect transversally and $S_{i,i+1}$ and $S_{i+1,i}$ are distinct set of points. Each one of the sets $S_{i,i+1}$ and $S_{i+1,i}$ consists generically of $m_im_{i+1}$ points, which are exchanged by the real structure $\sigma$, hence in total the intersection locus
\begin{equation}\label{eq:intersection locus of adjacent spectral curves of monopoles}
    S_i\cap S_{i+1}=S_{i,i+1}\cup S_{i+1,i} \;, 
    \qquad 
    i=1,\ldots,n-1 \;,
\end{equation}
is a set of $2m_im_{i+1}$ points. As there are $n-1$ spectral curves $\{S_1,\ldots,S_{n-1}\}$, the condition of genericity means that there is a spectral curve associated with each node of the Dynkin diagram of $\mfk{su}(n)=\text{Lie}(\SU{n})$, and curves associated with adjacent nodes are intersecting transversally. This means that spectral curves are labeled by simple roots or the associated fundamental weights (see \eqref{eq:defining equation of fundamental weights}) of $\mfk{su}(n)$. The spectral curves $\{S_1,\ldots, S_{n-1}\}$ together with the information of the decomposition of their intersection is called the spectral data of hyperbolic $\SU{n}$-monopoles.  

\paragraph{Hyperbolic Monopoles from Spectral Data.} The next and final question to wrap up the twistor correspondence of hyperbolic $\SU{n}$-monopoles is how to recover a monopole from its spectral data. Here, we sketched the results following \cite{HurtubiseMurray1989,MurraySinger199607}.\footnote{While \cite{HurtubiseMurray1989} discussed monopoles in Euclidean space, their considerations can be adopted to 
hyperbolic space, as noted in \cite{MurraySinger199607}.} 

\smallskip One starts from the following sequence of sheaves 
\begin{equation}\label{eq:short exact sequence of sheaves to construct the bundle E}
    0\to E^+ \to \bigoplus_{i=1}^n  E^+/(E^+_{n-i}+E^-_{i-1})\to\bigoplus_{i=1}^{n-1} E^+/(E^+_{n-i}+E^-_i) \;.
\end{equation}
The first map is simply the projection to the corresponding equivalence class while the second map is given by \cite[Proposition 1.12]{MurraySinger199607}
\begin{equation}
    (e_1,\ldots,e_n)\mapsto (\delta_1(e_1)-\gamma_2(e_2),\ldots,\delta_{n-1}(e_{n-1})-\gamma_n(e_n)) \;,
\end{equation}
with the maps $\gamma_i$ and $\delta_i$ are the natural projections
\begin{eqaligned}\label{eq:maps gammai and deltai}
    \gamma_i&:\frac{E^+}{E^+_{n-i}+E^-_{i-1}}\mapsto\frac{E^+}{E^+_{n-i+1}+E^-_{i-1}} \;,
    \qquad &i&=1,\ldots,n-1 \;,
    \\
    \delta_i&:\frac{E^+}{E^+_{n-i}+E^-_{i-1}}\mapsto \frac{E^+}{E^+_{n-i+1}+E^-_{i}} \;,
    \qquad &i&=1,\ldots,n-1 \;.
\end{eqaligned}
One can show that this is a short exact sequence \cite[Proposition 1.12]{HurtubiseMurray1989}. According to loc.\ cit.\ the first and second sums in the sequence are given by \cite[eq. (1.11)]{HurtubiseMurray1989}
\begin{equation}
    \bigoplus_{i=1}^n E^+/(E^+_{n-i}+E^-_{i-1})=\bigoplus_{i=1}^n L^{p_i+k_i/2}(m_{i-1}+m_i)\otimes \mcal{I}(S_{i-1,i}) \;,
\end{equation}
and with $\mcal{I}(S_{i-1,i})$ denoting the sheaf of functions vanishing on $S_{i-1,i}$, and
\begin{equation}
    \bigoplus_{i=1}^n E^+/(E^+_{n-i}+E^-_{i})=\bigoplus_{i=1}^n L^{p_{i+1}+k_{i+1}/2}(m_{i}+m_{i+1})\otimes [-S_{i,i+1}]\Big|_{S_i} \;,
\end{equation}
where $L$ is the trivial line bundle \eqref{eq:line bundle L}, and $[D]$ denotes the line bundle associated with the divisor $D$. Therefore, the maps $\gamma_i$ and $\delta_i$ in  \eqref{eq:maps gammai and deltai} can be understood as
\begin{eqaligned}\label{eq:maps gammai and deltai, second form}
    \gamma_i&:L^{p_i+k_i/2}(m_{i-1}+m_i)\otimes \mcal{I}(S_{i-1,i})\mapsto L^{p_i+k_i/2}(m_{i-1}+m_i)\otimes [-S_{i-1,i}]\Big|_{S_{i-1}} \;,
    \\
    \delta_i&:L^{p_i+k_i/2}(m_{i}+m_{i+1})\otimes \mcal{I}(S_{i,i+1})\mapsto L^{p_{i+1}+k_{i+1}/2}(m_{i}+m_{i+1})\otimes [-S_{i,i+1}]\Big|_{S_{i}} \;.
\end{eqaligned}
This means that $\gamma_i$ is simply the restriction to $S_{i-1}$ while $\delta_i$ is the restriction to $S_i$ accompanied by multiplying with some meromorphic section. Its holomorphic part $\zeta_i$ can be read off from \eqref{eq:maps gammai and deltai, second form}. This is a section of 
\begin{equation}\label{eq:holomorphic part of the section that determines deltai}
    \zeta_i\in\Gamma(L^{p_{i+1}+k_{i+1}/2-p_i-k_i/2}(m_{i-1}+m_{i+1})\otimes[-S_{i-1,i}-S_{i,i+1}],S_i) \;.
\end{equation}
Therefore, over the $i$\textsuperscript{th} spectral curve $S_i$, there is a holomorphic section of the bundle $L^{p_{i+1}+k_{i+1}/2-p_i-k_i/2}(m_{i-1}+m_{i+1})$ with zeroes along $S_{i-1,i}\cup S_{i,i+1}$, i.e.\ \cite[Lemma 1.8]{HurtubiseMurray1989}
\begin{equation}\label{eq:trivial bundle over ith spectral curve}
    L^{p_{i+1}+k_{i+1}/2-p_i-k_i/2}(m_{i-1}+m_{i+1})\otimes[-S_{i-1,i}-S_{i,i+1}]\Big|_{S_i}\simeq\mcal{O}_{S_i} \;,
\end{equation}
for $i=1,\ldots,n-1$. This feature uniquely determines $\zeta_i$. This completes the construction of the bundle $E^+$, as the kernel of the second map of \eqref{eq:short exact sequence of sheaves to construct the bundle E}, from the spectral data.

\begin{rmk}\normalfont
    For the case of $\SU{2}$ hyperbolic monopole, there is only one spectral curve $S$ and hence no intersection sets $S_{i-1,i}$ and $S_{i,i+1}$. The condition \eqref{eq:trivial bundle over ith spectral curve} reduces to
    \begin{equation}
        \left.L^{p_2+k_2/2-p_1-k_1/2}(m_0+m_2)\right|_S\simeq \mcal{O}_S \;.
    \end{equation}
    We have $m_0=0=m_2=k_1+k_2$, which gives $k_2=-k_1=:k$ and also $p_2=-p_1=:p$. Hence, 
    \begin{equation}
        \left.L^{2p+k}\right|_S\simeq \mcal{O}_S \;,
    \end{equation}
    which is precisely the condition that defines the spectral curve $S$ \cite[eq. (3.7) and Proposition 3.5]{Atiyah1984}.\qed 
\end{rmk}

\smallskip Let us recapitulate the procedure of constructing the monopole solution 
\begin{itemize}
    \item[(1)] From the spectral data, i.e.\ curves together with the data of the splitting of their intersection locus, one constructs the sequence of sheaves \eqref{eq:short exact sequence of sheaves to construct the bundle E};

    \item[(2)] Then, the bundle $E^+$ can be recovered as the kernel of the second map in \eqref{eq:short exact sequence of sheaves to construct the bundle E};

    \item[(3)] The real structure on $E^+$ can be constructed using spectral data. We avoid explaining this and instead refer to \cite[\S 1.1e, pg. 53]{HurtubiseMurray1989};

    \item[(4)] One follows the procedure in \cite[Theorem 1.19]{HurtubiseMurray1989} to construct the monopole solution $(\wt{A},\phi)$ from the bundle $E^+$.
\end{itemize}

For our purpose of comparison of the data above to the data constructed from the curve of the spectral parameter of the gCPM, \eqref{eq:trivial bundle over ith spectral curve} is essential and provides one of the key relations. We are now equipped with enough background and are in a position to establish the correspondence between the gCPM and hyperbolic $\SU{n}$-monopoles to which we now turn.

\section{The Hyperbolic Monopole/gCPM Correspondence}
\label{sec:generalized corresponence}
We are now in a position to establish the correspondence between the spectral data of hyperbolic $\SU{n}$-monopoles and the curve of spectral parameter $\pc$, defined in \eqref{eq:ZNxn-1 quotient of the curve of spectral parameter of gCPM}, of the $\mbb{Z}_N^{n-1}$ generalized CPM. A point we would like to emphasize in the very beginning is the following: the correspondence works only when we consider 
special choices of the data, either on the gCPM side or the hyperbolic $\SU{n}$-monopole side, as we will explain below. 

\subsection{General Considerations}\label{sec:general considerations}

Before delving into the construction let us state an important assumption first. 

\paragraph{An Important Assumption.}  In the process of the construction, it becomes mandatory to take the boundary values of the Higgs field to vanish. To define the spectral data of such a configuration, we will work with {\it non-integral} hyperbolic monopoles and their spectral data, as we explained in \S\ref{sec:spectral data of hyperbolic su(n)-monopoles}. Such monopoles, as their name suggests, can have arbitrary masses, hence it is possible to take the zero-mass limit. The existence of hyperbolic $\SU{2}$-monopoles with an arbitrary mass has been proven in \cite{SibnerSibner199202,SibnerSibner1995,SibnerSibner201210}, and it has been generalized by Murray and Singer to hyperbolic $\SU{n}$-monopoles \cite{MurraySinger199607}. The construction in loc.\ cit.\ involves certain analytical subtleties. In particular, the boundary conditions used in \cite{MurraySinger199607} to define the spectral data is a different set of boundary conditions compared to their integral counterpart \cite{Chan201506,Chan2017}. However, as has been emphasized in \cite{Atiyah199106,AtiyahMurray1995,MurraySinger199607,JarvisNorbury199711}, taking the zero-mass limit is a subtle issue. An important point to emphasize is that a hyperbolic monopole with $p_i=0,\,i=1,\ldots$, is not a trivial configuration. For a gauge group $G$ with the Lie algebra $\mfk{g}=\text{Lie}(G)$, this can be seen by noting that 
\begin{eqgathered}\label{eq:laplacian of the square of higgs field}
    \Delta|\phi|^2=2\sum_{a=1}^{\dim\mfk{g}}\Delta\phi^a\phi_a+2|D\phi|^2=2|D\phi|^2\ge 0 \;,
\end{eqgathered}
where the second equality follows from the Bianchi identity $D F=0$ and the Bogomolny equation. Here, $\Delta:=D^2$ is the gauge covariant Laplacian, $D:=\rd\,+\,[A,\cdot]$, and the norm of the Higgs field is defined by\footnote{Here, we have $\text{tr}_{\mfk{g}}(\phi^2)=\text{tr}_{\mfk{g}}\left(\phi^a\phi^bJ_aJ_b\right)=\phi^a\phi^b\delta_{ab}=\phi^a\phi_a$, with $\{J^a,\,a=1,\ldots,\dim(\mfk{g})\}$ is a set of generators for $\mfk{g}$ with the chosen normalization.}
\begin{eqaligned}
    |\phi|^2:=\text{tr}_{\mfk{g}}(\phi^2)=\sum_{a=1}^{\dim\mfk{g}}\phi_a\phi^a \;.
\end{eqaligned}
From the Maximum Principle \cite[p.\ 248]{JaffeTaubes1980}, therefore, the value of the Higgs field is bounded by its value at the boundary, where the gauge group breaks to a subgroup. As we set $p_i=0$, it follows from \eqref{eq:laplacian of the square of higgs field} that $D\phi=0$. Then, the Bogomolny equation \eqref{eq:Bogomolny equation} leads to $F=0$. Such monopoles are called flat. 

\smallskip On the other hand, there are separate constructions of hyperbolic monopoles for integer (or half-integer) masses. The reason is that these monopoles can be thought of as circle-invariant instantons, as we briefly reviewed in Appendix \ref{sec:more details on twistor space of hyperbolic space}. The spectral data for integral hyperbolic $\SU{2}$-monopoles and the relation to the discrete Nahm equations is described in \cite{Atiyah1984,BraamAustin199008}. These results have been generalized to hyperbolic $\SU{n}$-monopoles in \cite{Chan201506,Chan2017}. A shortcoming of the latter references is that it has not been proven that the spectral data on the integral hyperbolic $\SU{n}$-monopole, with $n>2$, uniquely determines the monopole solution, a result that has been established for $n=2$ by Atiyah \cite{Atiyah1984}. As the construction of both integral and non-integral hyperbolic $\SU{n}$-monopoles is available, there are two ways to construct hyperbolic $\SU{n}$-monopole configurations with vanishing masses: (1) One can either consider the Murray--Singer construction in \cite{MurraySinger199607} and take its zero-mass limit or (2) one can work with integral hyperbolic $\SU{n}$-monopoles, as formulated in \cite{Chan201506,Chan2017} and then simply set all the masses to zero. Presumably, either way, the resulting spectral data is well-defined and uniquely determines a configuration of flat hyperbolic $\SU{n}$-monopole.\footnote{In the Euclidean setting, solutions with branch-type singularity along $S^2$ has been constructed in \cite{ForgacsHorvathPalla198102}. Such solutions have a nontrivial holonomy around $S^2$.} In this work, we exclusively consider the zero-mass limit of the construction of non-integral hyperbolic monopoles by Murray and Singer in \cite{MurraySinger199607} and assume that this limit exists and defines the spectral data of a flat hyperbolic $\SU{n}$-monopole. \qed

\smallskip With the understanding of this assumption, let us briefly summarize what we are aiming for in the following subsections
\begin{itemize}
    \item [(1)] In \S\ref{sec:hyperbolic monopoles from gCPMs}, we start from the curve of spectral parameter $\pc$ for the gCMP defined in $\Pbb^{2n-1}$ and is given in \eqref{eq:ZNxn-1 quotient of the curve of spectral parameter of gCPM}. As there are $n-1$ spectral curves for a hyperbolic $\SU{n}$-monopole on $\ts$, we construct $n-1$ curves out of $\pc$ from a natural $\C^\times$-action on $\Pbb^{2n-1}$, which we call $\ipc{1},\ldots,\ipc{n-1}$ on $\Pbb^1\times\Pbb^1$. We furthermore impose a reality condition on these curves which makes them candidates for the spectral curves of hyperbolic $\SU{n}$-monopoles. We show that these curves satisfy all the properties of the curves for a {\it particular} configuration of a hyperbolic $\SU{n}$-monopole with $n-1$ charges $(N,\ldots,N)$. Furthermore, we show that the bundle defined in the left-hand side of \eqref{eq:trivial bundle over ith spectral curve} is trivial upon restriction to $\ipc{i}$ and setting $p_i=0$. This identifies the particular configuration of hyperbolic $\SU{n}$-monopoles and spectral curves $S_1,\ldots,S_{n-1}$ corresponding to curves $\ipc{1},\ldots,\ipc{n-1}$. In particular, we realize
    \begin{equation}
        \ipc{i}\simeq S_i \;, \qquad i=1,\ldots,n-1 \;.
    \end{equation}

    \item [(2)] In \S\ref{sec:gCPM from hyperbolic SU(n) monopoles}, we reverse the construction, i.e.\ we start from $n-1$ curves $S_1,\ldots,S_{n-1}$ of hyperbolic $\SU{n}$-monopoles. By removing the reality condition and gluing these curves together, we reconstruct the curve of the spectral parameter of the gCPM.
\end{itemize}

This provides a one-to-one correspondence between the spectral data of a (special) hyperbolic $\SU{n}$-monopole and the data coming from the curve of the spectral parameter of the gCPM. 

\subsection{Establishing the Correspondence for \texorpdfstring{$\SU{n}$}{SU(n)}}

In this section, we provide the generalization of the observation in \cite{Atiyah199106,AtiyahMurray1995} and put it into a sound ground.

\subsubsection{Hyperbolic \texorpdfstring{$\SU{n}$}{SU(n)}-Monopoles from gCPMs}\label{sec:hyperbolic monopoles from gCPMs}

We present the construction of the spectral data of a hyperbolic $\SU{n}$-monopole from the curve of the spectral parameter of the gCPM in three steps. 

\hypertarget{step1}{\paragraph{Step 1: Constructing $n-1$ Curves in $\Pbb^1\times\Pbb^1$ out of $\pc$.}} As we explained in \S\ref{sec:generalized CPM}, the curve of the spectral parameter of the model is defined in $\Pbb^{2n-1}$. However, further scrutiny shows that \eqref{eq:curve of spectral parameter of gCPM} can indeed be regarded as a curve in the product of $n$ copies of $\Pbb^1$. This can be seen as follows. Recall that, as we have stated around \eqref{eq:cyclic action of ZN(n-1) on coordinates of P2n-1}, the curve of spectral parameter for the gCPM can be quotiented to $\pc=\cpc/\mbb{Z}_N^{n-1}$, where  $\cpc\subset\Pbb^{2n-1}$. As $\mbb{Z}_N^{n-1}\subset (\C^\times)^{n-1}$, we can generalize \eqref{eq:cyclic action of ZN(n-1) on coordinates of P2n-1} in a natural way and instead consider the following more general action of $(\C^{\times})^{n-1}$ on $\Pbb^{2n-1}$
\begin{equation}
    [z^+_1:z^-_1:\ldots:z^+_n:z^-_n]\mapsto [\lambda_1 z^+_1:\lambda_1 z^-_1:\ldots:\lambda_{n-1}z^+_{n-1}:\lambda_{n-1}z_{n-1}^-:z^+_n:z^-_n] \;,
\end{equation}
for any $(\lambda_1,\ldots,\lambda_{n-1})\in(\C^{\times})^{n-1}$. This action has $n$ fixed lines given by
\begin{equation}\label{eq:fixed lines of Cx-action on P2n-1}
    \Pbb_i := [z_i^+:z_i^-] \simeq [0:\ldots:0:z_i^+:z_i^-:0:\ldots:0] \;, 
    \qquad 
    i=1,\ldots,n  \;.
\end{equation}
Therefore, the fixed locus $\mcal{F}$ is the union of $n$ copies of $\Pbb^1$,  which for convenience we label as $\Pbb^1_i,\,i=1,\ldots,n$
\begin{equation}
    \mcal{F}:=\Pbb_1^1\cup\cdots\cup\Pbb^1_n \;.
\end{equation}
The action is free elsewhere. A $(\C^\times)^{n-1}$-orbit intersects $\Pbb_1^1,\ldots,\Pbb^1_n$ and the set of free  orbits can thus be identified with
\begin{equation}
    \{\Pbb^{2n-1}-\mcal{F}\}/{(\C^\times)^{n-1}}\simeq \Pbb_1^1\times\cdots\times\Pbb^1_n \;.
\end{equation}
For a point $([a_1:b_1],\ldots,[a_n:b_n])\in \Pbb_1^1\times\cdots\times\Pbb^1_n$, the generic fiber of the $(\C^\times)^{n-1}$-bundle $\{\Pbb^{2n-1}-\mcal{F}\}\to\Pbb_1^1\times\cdots\times\Pbb^1_n$ is given by 
\begin{equation}
    [a_1/\lambda_1:b_1/\lambda_1:\ldots:a_{n-1}/\lambda_{n-1}:b_{n-1}/\lambda_{n-1}:a_n:b_n], \quad (\lambda_1,\ldots,\lambda_{n-1})\in(\C^\times)^{n-1} \;.
\end{equation}
As we vary $(\lambda_1,\ldots,\lambda_{n-1})$, we generate the entire fiber, which is thus isomorphic to $(\C^\times)^{n-1}$, as it should be. Therefore, the projection $\{\Pbb^{2n-1}-\mcal{F}\}\to\Pbb_1^1\times\cdots\times\Pbb^1_n$ defines a rank-$(n-1)$ $(\C^\times)^{n-1}$-vector bundle over $\Pbb^1_i\times\cdots\times\Pbb^1_n$. For generic values of the matrices $K_{ij}$ in \eqref{eq:curve of spectral parameter of gCPM}, the curve $\cpc$ does not intersect $\mcal{F}$, which implies that the action of $\mbb{Z}_N^{n-1}$ on $\cpc$ is free. We thus conclude that $\pc$ indeed defines a curve in $\Pbb^1_1\times\cdots\times\Pbb^1_n$. 

\smallskip Next, recall that only $n-1$ equations in \eqref{eq:curve of spectral parameter of gCPM} are independent due to the cocycle conditions \eqref{eq:relations satisfied by matrices Kij of gCPM}. These equations can be conveniently chosen to be the ones defined by $K_{i,i+1}$ for $i=1,\ldots,n-1$ and the rest will be determined through \eqref{eq:relations satisfied by matrices Kij of gCPM}. We then define the following sequence of projections
\begin{equation}\label{eq:sequence of projection and forgetful maps}
    \Pbb_1^1\times\cdots\times\Pbb^1_n \xrightarrow{\text{pr}_i}\Pbb^1_i\times\Pbb^1_{i+1}\xrightarrow{\text{f}_i}\Pbb^1\times\Pbb^1 \;.
\end{equation}
Under ${\text{pr}}_i$, the equations for $\cpc$ are projected onto a curve in $\Pbb^1_i\times\Pbb^1_{i+1}$ and under the forgetful map $\text{f}_i$, the $i$-dependence of coordinates of the curve is forgotten, and we land on a curve in $\Pbb^1\times\Pbb^1$. Notice that there is still a hidden $i$-dependence in the matrices $K_{i,i+1}$, and hence we generate $n-1$ distinguished curves by this procedure. Denoting a point in $\Pbb^1\times\Pbb^1$ as $([z^+:z^-],[w^+:w^-])$, we can write these maps more explicitly. Under $\text{pr}_i$, we get the equations for the curve $\iipc{i}$ in $\Pbb^1_i\times\Pbb^1_{i+1}$
\begin{equation}\label{eq:equations for curves C'i in P1xP1}
	\iipc{i}:\quad \begin{pmatrix}
		(z^+_{i}){}^{N}
		\\
		(z^-_i){}^{N}
	\end{pmatrix}
	= K_{i}
	\begin{pmatrix}
		(z^+_{i+1}){}^{N}
		\\
		(z^-_{i+1}){}^{N}
	\end{pmatrix} \;,
    \qquad i=1,\ldots,n-1 \;,
\end{equation}
and under $\text{f}_i$, we get the equations for the curve $\icpc{i}$ in $\Pbb^1\times\Pbb^1$
\begin{equation}\label{eq:equations for curves tCi in P1xP1}
	\icpc{i}:\quad \begin{pmatrix}
		(z^+){}^{N}
		\\
		(z^-){}^{N}
	\end{pmatrix}
	= K_{i}
	\begin{pmatrix}
		(w^+){}^{N}
		\\
		(w^-){}^{N}
	\end{pmatrix} \;, \qquad i=1,\ldots,n-1 \;,
\end{equation}
where for the sake of brevity, we have denote $K_{i,i+1}$ with $K_i,\,i=1,\ldots,n-1$. Considering $f_i\circ\text{pr}_i$ for $i=1,\ldots,n-1$ produces $n-1$ curves. These curves have an obvious $\mbb{Z}_N\times\mbb{Z}_N$ symmetry, acting as
\begin{equation*}
    ([z^+,z^-],[w^+,w^-])\mapsto ([\omega z^+, z^-],[\omega' w^+, w^-]) \;,
\end{equation*}
with $\omega^N={\omega'}^N=1$ are two $N$\sth root of unity. In the following, we
consider the $n-1$ quotient curves
\begin{equation}\label{eq:Zn quotient of ith tilde curve}
    \ipc{i}:=\icpc{i}/(\mbb{Z}_N\times\mbb{Z}_N)\subset\Pbb^1\times\Pbb^1 \;, 
    \qquad 
    i=1,\ldots,n-1 \;.
\end{equation}
These are $n-1$ curves in $\Pbb^1\times\Pbb^1$ we wished to construct. 

\smallskip Alternatively, we can think of these curves as follows. We consider the following embeddings $\iota_i:\Pbb^3_i\hookrightarrow\Pbb^{2n-1}$ for $i=1,\ldots,n-1$
\begin{equation}\label{eq:embedding of p3 in p2n-1}
    \Pbb^3_i:\,\,[z_i^+:z^-_i:z_{i+1}^+:z_{i+1}^-]\simeq [0:\ldots:z_i^+:z^-_i:z_{i+1}^+:z_{i+1}^-:0:\ldots:0]\in\Pbb^{2n-1} \;.
\end{equation}
We next consider the $\C^\times$-action on $\Pbb^3_i$ as
\begin{equation}
    [z_i^+:z^-_i:z_{i+1}^+:z_{i+1}^-]\mapsto[\lambda z_i^+:\lambda z_{i}^-:z_{i+1}^+:z_{i+1}^-] \;, 
    \qquad 
    \lambda\in\C^\times  \;.
\end{equation}
Following the same argument as the case of $(\C^\times)^{n-1}$ action on $\Pbb^{2n-1}$, we end up with $n-1$ $\C^\times$-bundles
\begin{equation}
    \{\Pbb^3-(\Pbb^1_i\cup\Pbb^1_{i+1})\}/\C^\times\simeq \Pbb^1_i\times\Pbb^1_{i+1} \;, 
    \qquad 
    i=1,\ldots,n-1 \;,
\end{equation}
with generic fibers $\C^\times$. Here $\Pbb^1_i$ and $\Pbb^1_{i+1}$ denote the two fixed lines of $\C^\times$-action defined analogous to \eqref{eq:fixed lines of Cx-action on P2n-1}. Therefore, the projection $\{\Pbb^3-(\Pbb^1_i\cup\Pbb^1_{i+1})\}\to\Pbb^1_i\times\Pbb^1_{i+1}$ defines a $\C^\times$-line bundle over $\Pbb^1_i\times\Pbb^1_{i+1}$. Under the embedding $\iota_i$, the equations \eqref{eq:curve of spectral parameter of gCPM} of $\cpc$ become the ones for $\iipc{i}$ given in \eqref{eq:equations for curves C'i in P1xP1}.
If we now apply the forgetful map $\text{f}_i$, as defined in \eqref{eq:sequence of projection and forgetful maps}, we land on the equations for the curve $\icpc{i}$ given \eqref{eq:equations for curves tCi in P1xP1}, that by passing to the $\mbb{Z}_N\times\mbb{Z}_N$ quotient produces the curve $\ipc{i}$. Considering all the embeddings $\iota_i$ gives the set of curves $\{\ipc{1},\ldots,\ipc{n-1}\}$. From this point of view, one can see each of the curves in the set $\{\ipc{1},\ldots,\ipc{n-1}\}$ as the curve of spectral parameter for an ordinary chiral Potts model, which would be useful later for identifying the corresponding hyperbolic $\SU{n}$-monopole configuration.   

\smallskip Finally, we would like to point out that the curves $\ipc{1},\ldots,\ipc{n-1}$ enjoy a global symmetry under which 
\begin{equation}\label{eq:global symmetry of curves sigmai}
    \begin{pmatrix}
        {z^+}^N
        \\
        {z^-}^N
    \end{pmatrix}
    \mapsto 
    U 
    \begin{pmatrix}
        {z^+}^N
        \\
        {z^-}^N
    \end{pmatrix} \;,
    \qquad 
    \begin{pmatrix}
        {w^+}^N
        \\
        {w^-}^N
    \end{pmatrix}
    \mapsto 
    V
    \begin{pmatrix}
        {w^+}^N
        \\
        {w^-}^N
    \end{pmatrix} \;, 
    \qquad K_i\mapsto UK_iV^{-1} \;,
\end{equation}
for some diagonal matrices $U=\text{diag}(u,u^{-1})$ and $V=\text{diag}(v,v^{-1})$. It is a global symmetry in the sense that, unlike \eqref{eq:redundancy in the curve of spectral parameter of gCPM}, it acts the same way on all matrices $K_i,\,i=1,\ldots,n-1$. Another difference of this transformation with \eqref{eq:redundancy in the curve of spectral parameter of gCPM} is that it does not preserve the cocycle condition \eqref{eq:relations satisfied by matrices Kij of gCPM}. This is expected since the curves $\ipc{1},\ldots,\ipc{n-1}$ are considered independent. Defining $K_i$ as
\begin{equation}\label{eq:entries of matrix Ki}
    K_i=
    \begin{pmatrix}
        \alpha_i & \beta_i
        \\
        \gamma_i & \delta_i
    \end{pmatrix} \;, 
    \qquad i=1,\ldots,n-1 \;.
\end{equation}
Under \eqref{eq:global symmetry of curves sigmai}, we have
\begin{eqaligned}
    \alpha_i&\mapsto (uv)\alpha_i \;, 
    &\qquad \beta_i&\mapsto \left(\frac{u}{v}\right)\beta_i \;,
    \\
    \gamma_i&\mapsto \left(\frac{v}{u}\right)\gamma_i \;, 
    &\qquad \delta_i&\mapsto \left(\frac{1}{uv}\right)\delta_i \;.
\end{eqaligned}

To summarize, we started from the curve $\pc$, and constructed $n-1$ curves $\ipc{1},\ldots,\ipc{n-1}$ satisfying \eqref{eq:equations for curves tCi in P1xP1}. 

\paragraph{Step 2: Properties of the Curves $\{\ipc{1},\ldots,\ipc{n-1}\}$.} We now determine the properties of the  curves $\{\ipc{1},\ldots,\ipc{n-1}\}$. Once we described these properties it becomes evident that they are related to the spectral data of a {\it particular configuration} of hyperbolic $\SU{n}$-monopoles. In the next step, we then identify these monopole configurations. 

\smallskip The curves $\{\ipc{1},\ldots,\ipc{n-1}\}$ have the following properties.

\paragraph{\small $\ipc{i}$ is a section of $\mcal{O}_{\ts}(N,N)$.} From \eqref{eq:equations for curves tCi in P1xP1}, it is clear that $\ipc{i}$ has bi-degree $(N,N)$, and hence is a section of $\mcal{O}_{\ts}(N,N)\simeq\mcal{O}_{\tsu}(2N)$ (see \eqref{eq:identification of OZH3(p,p) with OZU(2p)}). Each curve has genus
\begin{equation}\label{eq:genus of sigmai}
    g_{\ipc{i}}=(N-1)^2 \;,
    \qquad
    i=1,\ldots,n-1 \;.
\end{equation}

\paragraph{\small $\ipc{i}$ defines a real curve.} The matrix $K_i$ in \eqref{eq:equations for curves tCi in P1xP1} which defines the curve $\ipc{i}$ is in general complex matrices belonging to $\text{SL}(2,\C)$. However, we would like to impose a constraint on them as follows. Recall that $\Pbb^1\times\Pbb^1$ had a real structure, given in \eqref{eq:real structure on twistor space}, descended from the real structure \eqref{eq:real structure on P3} on $\Pbb^3$. Under this real structure, the equations \eqref{eq:equations for curves tCi in P1xP1} of the curve $\ipc{i}$ transform to\footnote{Under $\sigma$, the coordinates on the two copies of $\Pbb^1$ is complex-conjugated; the coordinate $z$ in \eqref{eq:real structure on twistor space} is the inhomogeneous coordinate on $\Pbb^1$ and hence $\bar{z}$ corresponds to $[\bar{z}^+:\bar{z}^-]$. The same comment applies to $w, w^+,$ and $w^-$.}
\begin{equation}\label{eq:transformation of ith curve under real structure}
    \ipc{i}\quad\mapsto\quad\sigma(\ipc{i}):\qquad 
    \begin{pmatrix}
        \bar{z}^{+N}
        \\
        \bar{z}^{-N}
    \end{pmatrix}
    =K_i
    \begin{pmatrix}
        \bar{w}^{+N}
        \\
        \bar{w}^{-N}
    \end{pmatrix} \;.
\end{equation}
The extra constraint we put on our curve is the reality condition, i.e. we demand
\begin{equation}\label{eq:demanded reality condition on gCPM curves}
    \sigma(\ipc{i})=\Sigma^{(i)}_{N,n} \;, 
    \qquad i=1,\ldots,n-1 \;,
\end{equation}
where this equation should be understood as in the sense of \eqref{eq:definition of real lines}. Recall that this was one of the conditions on the spectral curves of hyperbolic $\SU{n}$-monopole (see \eqref{eq:reality of spectral curves of hyperbolic monopoles}). Taking the complex conjugate of both sides of \eqref{eq:equations for curves tCi in P1xP1} and comparing the result with \eqref{eq:transformation of ith curve under real structure} show that \eqref{eq:demanded reality condition on gCPM curves} implies the following condition on matrices $K_i$ 
\begin{equation}\label{eq:consequece of imposing reality condition for matrices Ki}
    \overline{K}_i=K_i \;,
    \qquad 
    i=1,\ldots,n-1 \;. 
\end{equation}
We thus have $n-1$ real curves defined by \eqref{eq:equations for curves tCi in P1xP1} in which all matrices $K_i$ satisfy \eqref{eq:consequece of imposing reality condition for matrices Ki} and belong to $\text{SL}(2,\R)$ (rather than $\text{SL}(2,\C)$). 

\paragraph{\small\bf $\ipc{i}\cap\ipc{i+1}$ consists of $N^2+N^2$ points exchanged by $\sigma$.} Recall that each of $\ipc{i}$ or $\ipc{i+1}$ has bi-degree $(N,N)$ in $\Pbb^1\times\Pbb^1$. Hence, by degree reasons (B\'ezout's Theorem), they intersect generically at $N^2+N^2=2N^2$ points. With the aim of comparing the result with the monopole side (see \hyperlink{spectral data}{Spectral Data} in \S\ref{sec:spectral data and recovering monopole solutions}), we would like to see what extra condition matrices $K_i$ need to satisfy such that the intersection points can be decomposed into two sets each of which contains $N^2$ points, and furthermore these sets are exchanged by the real structure $\sigma$ given in \eqref{eq:real structure on twistor space}. 

\smallskip Let us first determine the intersection loci of $\ipc{i}$ and $\ipc{i+1}$. At intersection points, the equations for $\ipc{i}$ and $\ipc{i+1}$ coincide. Considering \eqref{eq:equations for curves tCi in P1xP1}, we arrive at 
\begin{equation}\label{eq:equation for the intersection of curves}
    \ipc{i}\cap\ipc{i+1}:\qquad 
	(K_{i}-\lambda_i K_{i+1})
	\begin{pmatrix}
		(w^+){}^{N}
		\\
		(w^-){}^{N}
	\end{pmatrix}=0 \;,
    \qquad \lambda_i\in\mbb{C}^\times \;, 
\end{equation}
where we have subtracted the equations of $\ipc{i}$ and $\ipc{i+1}$, so
\begin{equation}\label{eq:matrix of Ki-lambdaiKi+1}
    K_i-\lambda_i K_{i+1}=
    \begin{pmatrix}
        \alpha_i-\lambda_i\alpha_{i+1} &  \beta_i-\lambda_i\beta_{i+1}
        \\
        \gamma_i-\lambda_i\gamma_{i+1} & \delta_i-\lambda_i\delta_{i+1}
    \end{pmatrix} \;.
\end{equation}
Furthermore, we considered that we are working in the projective line where we can rescale homogeneous coordinates by $\lambda_i\in\mbb{C}^\times$. \eqref{eq:equation for the intersection of curves} will have a nontrivial solution if and only if
\begin{equation}
    \det\left(K_{i}-\lambda_i K_{i+1}\right)=0 \;, 
    \qquad 
    \lambda\in\mbb{C}^\times \;,
\end{equation}
which can be written as
\begin{equation}\label{eq:eigenvalue equation for lambdai}
    \lambda_i^2-2A_i\lambda_i+B_i=0 \;,
\end{equation}
where %
\begin{equation}\label{eq:values of coefficients Ai and Bi}
    A_i:=\frac{\text{Tr}(K_iK_{i+1}^{-1})}{2\det K_{i+1}}=\frac{1}{2}\text{Tr}(K_iK_{i+1}^{-1}) \;, 
    \qquad 
    B_i:=\frac{\det K_i}{\det K_{i+1}}=1 \;,
\end{equation}
and we have used the fact that $\det K_i=1,\,i=1,\ldots,n-1$. \eqref{eq:eigenvalue equation for lambdai} has the following solutions
\begin{equation}\label{eq:solutions lambdapm}
    \lambda_i^{\pm}=A_i\pm\sqrt{A_i^2-1} \;.
\end{equation}
Each of these values leads to a solution $({w^+}^N,{w^-}^N)$. For $\lambda^\pm_i$, we can eliminate the second row of $K_i-\lambda_iK_{i+1}$ in \eqref{eq:matrix of Ki-lambdaiKi+1}, which leads to the equation
\begin{equation}\label{eq:equation of intersection in terms of w+ and w-}
    (\alpha_i-\lambda^\pm_i\alpha_{i+1}){w^+}^N+(\beta_i-\lambda^\pm_i\beta_{i+1}){w^-}^N=0 \;,
\end{equation}
and we then take ${w^+}^N$ as a parametric solution of ${w^-}^N$. The inhomogeneous coordinates $z$ and $w$ on $\Pbb^1_z\times\Pbb^1_w$ are given by 
\begin{equation}\label{eq:inhomogeneous coordinates on P1xP1}
    z:=\frac{z^+}{z^-} \;,
    \qquad w:=\frac{w^+}{w^-} \;,
\end{equation}
in terms of which, we can write \eqref{eq:equation of intersection in terms of w+ and w-} as\footnote{Note that the index $\lambda^\pm_i$ takes value from $i=1,\ldots,n-2$, and hence the range in \eqref{eq:equations determining the Nth roots of w}.}
\begin{equation}\label{eq:equations determining the Nth roots of w}
    w^N_{\pm,i}=-\frac{\beta_i-\lambda_i^\pm\beta_{i+1}}{\alpha_i-\lambda_i^\pm\alpha_{i+1}} \;,
    \qquad i=1,\ldots,n-2 \;.
\end{equation}
For each of $\lambda^\pm_i$, this equation has $N$ solutions, which we denote as $w_{\pm,i,I},\,I=1,\ldots,N$. Putting these solutions into 
equation \eqref{eq:equations for curves tCi in P1xP1}, and noting that $\det K_i\ne 0$, we arrive at the following equation for $z$
\begin{equation}\label{eq:z coordinates of intersection points in terms of w}
    z^N_{\pm,i,I}=\frac{\alpha_iw^N_{\pm,i,I}+\beta_i}{\gamma_iw_{\pm,i,I}^N+\delta_i} \;,
\end{equation}
which for each value of $I$ has $N$ solutions, and we denote them as $z_{\pm,i,IJ},\,I,J=1,\ldots,N$. Introducing the following notation
\begin{equation}\label{eq:Ppmi in terms of coordinates on P1xP1}
    P^\pm_{i,IJ}:=(z_{\pm,i,IJ},w_{\pm,i,I}) \;,
    \qquad 
    i=1,\ldots,n-2 \;,
    \quad
    I,J=1,\ldots,N \;,
\end{equation}
the intersection points associated with $\lambda^\pm_i$ define two divisors on $\ipc{i}$
\begin{eqaligned}\label{eq:intersection divisors for lambdaipm}
    \lambda^+_i&:&\quad \inpc{i}{i+1}&:=P^+_{i,11}+\ldots+P^+_{i,NN} \;,
    \\
    \lambda^-_i&:&\quad \inpc{i+1}{i}&:=P^-_{i,11}+\ldots+P^-_{i,NN} \;.
\end{eqaligned}
Each of these divisors contains $N^2$ distinct points and hence the total number of intersection points is $2N^2$.

\smallskip Next, we would like to impose the constraint that these two sets of $N^2$ points are exchanged by the real structure $\sigma$ in \eqref{eq:real structure on twistor space}. By a suitable ordering of the points in the two sets in \eqref{eq:intersection divisors for lambdaipm}, we get the following action of $\sigma$ on the solution set
\begin{equation}
    \sigma(P^+_{i,IJ})=P^-_{i,IJ} \;,
    \qquad 
    i=1,\ldots,n-2,\quad I,J=1,\ldots,N \;,
\end{equation}
which, by \eqref{eq:Ppmi in terms of coordinates on P1xP1} and \eqref{eq:real structure on twistor space}, gives
\begin{eqaligned}
    \sigma(z_{+,i,IJ})&=\bar{z}_{+,i,IJ}=z_{-,i,IJ} \;,
    \\
    \sigma(w_{+,i,I})&=\bar{w}_{+,i,I}=w_{-,i,I} \;.
\end{eqaligned}
As entries of $K_i$ are real, from \eqref{eq:equations determining the Nth roots of w} and $\bar{w}_{+,i,I}=w_{-,i,I}$, we arrive at
\begin{equation}\label{eq:complex conjugate of lambdapm}
    \bar{\lambda}^\pm_i=\lambda^{\mp}_i \;,
    \qquad 
    i=1,\ldots,n-2 \;.
\end{equation}
Once this condition is satisfied, \eqref{eq:z coordinates of intersection points in terms of w} implies that $\bar{z}_{+,i,IJ}=z_{-,i,IJ}$ is automatically satisfied. We thus need to make sure that \eqref{eq:complex conjugate of lambdapm} can always be satisfied. For this to happen, by \eqref{eq:values of coefficients Ai and Bi} and \eqref{eq:solutions lambdapm}, we should have 
\begin{equation}
    A_i^2-B_i=\frac{1}{4}\left(\text{Tr}(K_iK_{i+1}^{-1})\right)^2-1\le 0  \;.
\end{equation}
Since $K_i\in\text{SL}(2,\R)$, the classification of elements of $\text{SL}(2,\mbb{R})$\footnote{An element $M\in\text{SL}(2,\R)$ is called hyperbolic, parabolic, or elliptic depending on whether $|\text{Tr}M|-2$ is positive, zero, or negative. Hyperbolic elements are squeezing, parabolic elements are translation (shear), and elliptic elements are conjugate to a rotation.} implies that $K_iK_{i+1}^{-1}$ should be either an elliptic or parabolic element of $\text{SL}(2,\R)$, a condition that can always be satisfied by an appropriate choice of the matrices $K_1,\ldots,K_{n-1}$. In the following we will discuss the case where $K_iK_{i+1}^{-1}$ is elliptic, which is generically expected unless $\lambda_i^{+}=\lambda_i^{-}$.

\smallskip In summary, we demonstrated that $\ipc{i}\cap\ipc{i+1}$ consists of $2N^2$ points which can be assembled into the union of two sets $\inpc{i}{i+1}\cup \inpc{i+1}{i}$, defined in \eqref{eq:intersection divisors for lambdaipm}, with the property that 
\begin{equation}\label{eq:divisors associated with curves are exchanged by real structure}
    \sigma(\inpc{i}{i+1})=\inpc{i+1}{i} \;,
    \qquad 
    \sigma(\inpc{i+1}{i})=\inpc{i}{i+1} \;,
    \qquad 
    i=1,\ldots,n-1 \;. 
\end{equation}

\paragraph{\small\bf $\ipc{i}$ is associated with the $i$\sth simple root of $\SU{n}$.}

The curves $\{\ipc{1},\ldots,\ipc{n-1}\}$ are independent and distinct curves intersect at a finite number of points. We can label these curves as follows. Since $\mfk{su}(n)=\text{Lie}(\SU{n})$ has $n-1$ simple roots given by $\{e_1-e_2,\ldots,e_{n-1}-e_n\}$ with $e_i$ being the standard basis of $\R^n$, we can make the following assignment
\begin{equation}\label{eq:assignment of curves to simple roots of su(n)}
    \ipc{i}\longleftrightarrow e_i-e_{i+1} \;, 
    \qquad 
    i=1,\ldots,n \;. 
\end{equation}
Taking into account the properties of $\{\ipc{1},\ldots,\ipc{n-1}\}$, we see that the curves intersect at $2N^2$ points whenever they are associated with simple roots that label the adjacent nodes of $\mfk{su}(n)$ Dynkin diagram. Finally, notice that for $\ipc{i}$ and $\ipc{j}$ with $j\ne i\pm 1$, we do not impose any specific requirement of transversality and finiteness of their intersection. However, in our case, it turns out that those curves also intersect at $2N^2$ points for the same reason as for the case of $\ipc{i}$ and $\ipc{i\pm 1}$. According to \cite{MurraySinger199607}, spectral curves of a hyperbolic $\SU{n}$-monopole are labeled by simple roots of $\SU{n}$, and those associated with simple roots labeling the adjacent nodes of Dynkin diagram intersect at a finite number of points. As our main aim is to establish the connection to hyperbolic $\SU{n}$-monopoles, it motivates our assignment \eqref{eq:assignment of curves to simple roots of su(n)}.

\begin{rmk}\normalfont
    Regarding the correspondence established in \cite{Atiyah199106,AtiyahMurray1995}, the curve $\ipc{i}$ can be thought of as the curve of spectral parameter for the CPM with $\mbb{Z}_N$ spins, and as such, as the spectral curve of a hyperbolic $\SU{2}$-monopoles of magnetic charge $N$. We thus have a collection of $n-1$ curves associated with $n-1$ hyperbolic $\SU{2}$-monopoles, each of which has the charge $N$. These collection of $\SU{2}$ monopole configurations realizes a hyperbolic $\SU{n}$-monopole through the embedding of the $i$\sth hyperbolic $\SU{2}$-solution, called a fundamental monopole, along the $i$\sth simple root of $\SU{n}$ \cite{Weinberg198005,Weinberg198208}.  \qed 
\end{rmk}

\paragraph{$\ipc{i}$ and $\ipc{i\pm 1}$ determine special line bundles.} We would like to establish that the divisors associated with the intersection of curves $\ipc{i}$ define special line bundles. To see this, let us set $N=1$ in \eqref{eq:equations for curves tCi in P1xP1}. These are linear equations in $\Pbb^1\times\Pbb^1$, which in the homogeneous coordinates are given by
\begin{equation}
    [z_+:z_-]=[\alpha_iw_++\beta_i:\gamma_iw+\delta_i] \;, 
    \qquad 
    i=1,\ldots,n-1 \;,
\end{equation}
and imply $\lambda z=(\alpha_iw+\beta_i)/(\gamma_iw+\delta_i)$ for $\lambda\in\C^\times$ or the set of points $[z:\lambda z]\in\Pbb^1\times\Pbb^1$. Hence, these equations determine $n-1$ copies of $\Pbb^1$. The divisors $\inpc{i}{i+1}$ and $\inpc{i+1}{i}$ defined in \eqref{eq:intersection divisors for lambdaipm} for a fixed value of $i$ and $N=1$ contain a single point, and hence each has degree $+1$. Therefore, the corresponding line bundle on $\Pbb^1$ is $\mcal{O}_{\mbb{P}^1}(1)$. On the other hand, under the embedding $\iota:\Pbb^1\hookrightarrow\Pbb^1\times\Pbb^1$, we have $\iota^*\mcal{O}_{\ts}(m,n)\simeq\mcal{O}_{\Pbb^1}(m+n)$. Therefore, the corresponding line bundle on $\Pbb^1\times\Pbb^1$, restricted to $S_i$, is either $\mcal{O}_{\ts}(0,1)$ or $\mcal{O}_{\ts}(1,0)$. Recall that the curves $\ipc{i}$ are real in the sense of \eqref{eq:demanded reality condition on gCPM curves}, hence $\sigma^*(\mcal{O}_{\ts}(1,0))|_{\ipc{i}}=\mcal{O}_{\ts}(1,0)|_{\ipc{i}}\simeq\mcal{O}_{\ts}(0,1)|_{\ipc{i}}$ where we have used \eqref{eq:holomorphic isomorphism on a curve Sigmai}. Using \eqref{eq:divisors associated with curves are exchanged by real structure}, we therefore get the isomorphism $[\Sigma_{1,n}^{{(i,i+1)}}]\simeq\mcal{O}_{\ts}(0,1)|_{\ipc{i}}\simeq L^{-\frac{1}{2}}(1)|_{\ipc{i}}$ and $[\Sigma_{1,n}^{{(i+1,i)}}]\simeq\mcal{O}_{\ts}(1,0)|_{\ipc{i}}\simeq L^{+\frac{1}{2}}(1)|_{\ipc{i}}$ where the notation $[\cdots]$ denotes the line bundle associated with a divisor and we have used \eqref{eq:basis for holomorphic line bundles}.

\smallskip Switching to the general $N$, from \eqref{eq:equations for curves tCi in P1xP1}, \eqref{eq:entries of matrix Ki}, and \eqref{eq:inhomogeneous coordinates on P1xP1}, the curves are given by
\begin{equation}
    \gamma_iz^Nw^N+\delta_iz^N-\alpha_iw^N-\beta_i=0 \;, 
    \qquad i=1,\ldots,n-1 \;. 
\end{equation}
These are equations for $N$-fold branched covers of $\Pbb^1$ as the coordinate transformations $(z,w)\mapsto (z^N,w^N)$ shows. Therefore, from the discussion of the case of $N=1$, the degree of the pull-back of $\mcal{O}_{\ts}(0,1)$ to the cover is simply $N$, and hence we can identify it as
\begin{eqaligned}\label{eq:special line bundles defined by gCMP curves I}
    [\inpc{i}{i+1}]&\simeq\mcal{O}_{\ts}(N,0)\big|_{\ipc{i}}\simeq L^{-\frac{N}{2}}(N)\big|_{\ipc{i}} \;, 
    \\
    [\inpc{i+1}{i}]&\simeq\mcal{O}_{\ts}(0,N)\big|_{\ipc{i}}\simeq L^{+\frac{N}{2}}(N)\big|_{\ipc{i}}\simeq L^{-\frac{N}{2}}(N)\big|_{\ipc{i}} \;,
\end{eqaligned}
where we have used \eqref{eq:Lp(q) in terms of O(q_+,q_-)} and \eqref{eq:holomorphic isomorphism on a curve Sigmai}.

\begin{rmk}\normalfont
    Let us now explain the following isomorphism
    \begin{equation}\label{eq:holomorphic isomorphism on a curve Sigmai}
        \mcal{O}_{\ts}(kN,k'N)\big|_{\ipc{i}} \simeq \mcal{O}_{\ts}((k_--k)N,(k_++k)N)\big|_{\ipc{i}},
    \end{equation}
    for integers $k,k_\pm$ as holomorphic bundles. Consider the following sequence of maps
    \begin{equation}
        \ipc{i}\xrightarrow[]{c^{(i)}_{N,N}} \mbf{C}^{(i)}\xrightarrow[]{\iota^{(i)}}\ts,
    \end{equation}
    where $c_{N,N}^{(i)}$ is the branched covering map of degree $N^2$,\footnote{For more on this map, see \S\ref{sec:curve of spectral parameter as a branched cover of P1} and Appendix \ref{sec:basic facts about complete intersections}.} and $\iota^{(i)}$ is the embedding in $\ts$. $\mbf{C}^{(i)}\simeq\Pbb^1$. From 
    \begin{eqaligned}
        \#([\mbf{C}^{(i)}]\cap[\Pbb^1\times\{\text{pt}\}])&=1
        \\
        \#([\mbf{C}^{(i)}]\cap[\{\text{pt}\}\times\Pbb^1])&=1
    \end{eqaligned}
    where $\#$ denotes the intersection number, on $\ts$, $[\mbf{C}^{(i)}]\simeq\mcal{O}_{\ts}(1,1)$. $\mbf{C}^{(i)}\cap \mbf{C}^{(i+1)}$ is a divisor on $\mbb{C}^{(i)}$ or $\mbf{C}^{(i+1)}$, whose corresponding line bundle $[\mbf{C}^{(i)}\cap \mbf{C}^{(i+1)}]$ is $\mcal{O}_{\ts}(1,1)$, restricted to either of these curves. Then,
    \begin{eqaligned}
        [\ipc{i}\cap\ipc{i+1}]&=c^{(i)*}_{N,N}[\mbf{C}^{(i)}\cap \mbf{C}^{(i+1)}]
        \\
        &=c^{(i)*}_{N,N}\mcal{O}_{\ts}(1,1)\big|_{\mbf{C}^{(i)}}
        \\
        &=\mcal{O}_{\ts}(N,N)\big|_{\ipc{i}}.
    \end{eqaligned}
    From the isomorphism of $\mbb{C}^{(i)}$ and the fact that $\mbf{C}^{(i)}\cap \mbf{C}^{(i+1)}$, we see that one of these points correspond to $\mcal{O}_{\Pbb^1}(1)$, hence we have
    \begin{equation}
        \mcal{O}_{\ts}(1,0)\big|_{\mbf{C}^{(i)}}\simeq\mcal{O}_{\Pbb^1}(1)\simeq \mcal{O}_{\ts}(0,1)\big|_{\mbf{C}^{(i)}},
    \end{equation}
    a fact that on $\ipc{i}$ translates to 
    \begin{equation}
        \mcal{O}_{\ts}(N,0)\big|_{\ipc{i}}\simeq 
        \mcal{O}_{\ts}(0,N)\big|_{\ipc{i}}.
    \end{equation}
    More generally for any integer $r$
    \begin{equation}
        \mcal{O}_{\ts}(r,0)\big|_{\mbf{C}^{(i)}}\simeq \mcal{O}_{\ts}(0,r)\big|_{\mbf{C}^{(i)}},
    \end{equation}
    which leads to
    \begin{equation}
        \mcal{O}_{\ts}(Nr,0)\big|_{\ipc{i}}\simeq 
        \mcal{O}_{\ts}(0,Nr)\big|_{\ipc{i}}.
    \end{equation}
    Note that
    \begin{eqaligned}\label{eq:equivalence of bundles O(rN,0) and O((r-s)N,sN) on curves Sigmai}
        \mcal{O}_{\ts}(Nr,0)\big|_{\ipc{i}}&\simeq \mcal{O}_{\ts}((r-1)N,0)\big|_{\ipc{i}}\otimes \mcal{O}_{\ts}(N,0)\big|_{\ipc{i}}
        \\
        &=\mcal{O}_{\ts}((r-1)N,N)\big|_{\ipc{i}}
        \\
        &=\mcal{O}_{\ts}((r-s)N,sN)\big|_{\ipc{i}}.
    \end{eqaligned}
    Even more generally, we can start from the bundle $\mcal{O}_{\ts}(k_+,k_-)|_{\mbf{C}^{(i)}}$, which leads to the bundle $\mcal{O}_{\ts}(k_+N,k_-N)|_{\ipc{i}}$ upon pulling back by the branched covering map $c^{(i)}_{N,N}$. A computation similar to \eqref{eq:equivalence of bundles O(rN,0) and O((r-s)N,sN) on curves Sigmai} shows the holomorphic isomorphism \eqref{eq:holomorphic isomorphism on a curve Sigmai}. In particular, we have
    \begin{eqaligned}
        \mcal{O}_{\ts}(N,0)\big|_{\ipc{i}}&\simeq \mcal{O}_{\ts}(0,N)\big|_{\ipc{i}},
        \\
        \mcal{O}_{\ts}(2N,0)\big|_{\ipc{i}}&\simeq \mcal{O}_{\ts}(N,N)\big|_{\ipc{i}}\simeq \mcal{O}_{\ts}(0,2N)\big|_{\ipc{i}}.
    \end{eqaligned}
    We would like to emphasize that \eqref{eq:holomorphic isomorphism on a curve Sigmai} does not hold away from $\ipc{i}$, generically. Instead, as the line bundle $L$, defined in \eqref{eq:line bundle L} is topologically trivial, \eqref{eq:topological equivalence of O(n,0) and O(0,n)} holds.
    \qed
\end{rmk}

\smallskip Finally, due to the correspondence between line bundles and divisors on a curve $X$, we have the following
\begin{equation}
    D_1+D_2\longleftrightarrow [D_1]\otimes_{\mcal{O}_X}[D_2] \;,
\end{equation}
for any two divisors $D_1$ and $D_2$ on $X$. Applying this to our case, we get (where all bundles are restricted to $\ipc{i}$) 
\begin{eqaligned}\label{eq:special line bundles defined by gCMP curves II}
    [\inpc{i}{i+1}+\inpc{i+1}{i+2}]&=[\inpc{i}{i+1}]\otimes_{\mcal{O}_{\ts}}[\inpc{i+1}{i+2}]
    \\
    &=\mcal{O}_{\ts}(N,0)\otimes_{\mcal{O}_{\ts}}\mcal{O}_{\ts}(N,0)
    \\
    &=\mcal{O}_{\ts}(N,0)\otimes_{\mcal{O}_{\ts}}\mcal{O}_{\ts}(0,N)
    \\
    &\simeq\mcal{O}_{\ts}(N,N) \;,
\end{eqaligned}
where in the third line, we have used \eqref{eq:holomorphic isomorphism on a curve Sigmai}. The line bundles \eqref{eq:special line bundles defined by gCMP curves I} associated with the divisors \eqref{eq:intersection divisors for lambdaipm} are the special line bundles we were looking for.

\paragraph{\small $\ipc{i}$ is compact.} $\ipc{i}$ is the solution of an algebraic equation \eqref{eq:equations for curves tCi in P1xP1} in $\Pbb^1\times\Pbb^1$. Therefore, it is a closed set in the Zariski topology\footnote{Recall that the Zariski closed sets on $\Pbb^1\times\Pbb^1$ are given precisely by the zero locus of homogeneous polynomials of some degree in homogeneous coordinates on the two copies of $\Pbb^1$.} on $\Pbb^1\times\Pbb^1$, and as such compact. 
\qed

\begin{rmk}\normalfont
    For compactness, arguments similar to \cite[First Lemma of \S 5]{MurraySinger199607} can also be used. 
\end{rmk}

In summary, from the curve of the spectral parameter of the gCPM $\pc$ defined in \eqref{eq:curve of spectral parameter of gCPM}, we have constructed $n-1$ curves $\{\ipc{1},\ldots,\ipc{n-1}\}$ whose equations are given by \eqref{eq:equations for curves tCi in P1xP1} (where the $\mbb{Z}_N\times\mbb{Z}_N$ quotient in \eqref{eq:Zn quotient of ith tilde curve} is understood). These curves are of bi-degree $(N,N)$ in $\Pbb^1\times\Pbb^1$, real, in the sense of \eqref{eq:demanded reality condition on gCPM curves}, and compact. Their intersection defines divisors \eqref{eq:intersection divisors for lambdaipm} whose corresponding line bundles upon restriction to $\ipc{i}$ satisfy \eqref{eq:special line bundles defined by gCMP curves I}. 

\paragraph{Step 3: Identifying the Hyperbolic $\SU{n}$-Monopole Solution.} We now would like to show that the curves $\{\ipc{1},\ldots,\ipc{n-1}\}$ constructed above determine the spectral data of a particular class of hyperbolic $\SU{n}$-monopoles. By a reordering of the curves, if necessary, we claim
\begin{equation}\label{eq:claimed isomorphism between curves}
    \ipc{i}\simeq S_i \;, \qquad i=1,\ldots,n-1 \;,
\end{equation}
for some specific choice of magnetic charges, related to $N$, of the hyperbolic $\SU{n}$-monopole determined by curves $\{S_1,\ldots,S_{n-1}\}$.

\paragraph{\small Proof of \eqref{eq:claimed isomorphism between curves}.} We present the argument by comparing the properties of curves in the two sides of \eqref{eq:claimed isomorphism between curves} and then identifying these properties as follows. 

\begin{itemize}    
    
    \item[--] As we have explained in \S\ref{sec:spectral data and recovering monopole solutions}, for the case of maximal symmetry-breaking, there are $n-1$ spectral curves given by \eqref{eq:spectral curves of hm as divisors of gamma-}. Therefore, our curves correspond to such a monopole configuration.

    \item[--] The spectral curves $S_i$ are sections of $\mcal{O}_{\ts}(m_i,m_i)$ while $\ipc{i}$ are sections of $\mcal{O}_{\ts}(N,N)$, both on $\ts$. Therefore, the monopole configuration corresponding to spectral curves $\{\ipc{1},\ldots,\ipc{n-1}\}$ has the following magnetic charges
    \begin{equation}\label{eq:magnetic charges for curves coming from gCPM}
        m_i=N \;, \qquad i=1,\ldots,n-1 \;. 
    \end{equation}
    Then, \eqref{eq:genus of Si} and \eqref{eq:genus of sigmai} imply that $S_i$ and $\ipc{i}$ have the same genus.
    
    \item [--] Both $S_i$ (see \cite{MurraySinger199607}) and $\ipc{i}$ (see \eqref{eq:assignment of curves to simple roots of su(n)}) are associated with simple root of $\SU{n}$. Furthermore, those curves corresponding to adjacent nodes of the Dynkin diagram intersect at $2m_im_{i+1}=2N^2$ points, where we have used \eqref{eq:magnetic charges for curves coming from gCPM}. 
  
    \item[--] Both $S_i$ and $\ipc{i}$ are real curves in the sense of \eqref{eq:reality of spectral curves of hyperbolic monopoles} and \eqref{eq:demanded reality condition on gCPM curves}, respectively, and also compact. 

    \item[--] Consider the line bundle \eqref{eq:trivial bundle over ith spectral curve} with the values of $m_i$ given in \eqref{eq:magnetic charges for curves coming from gCPM}
    \begin{equation}\label{eq:line bundles on curves S_i}
        L^{p_{i+1}+k_{i+1}/2-p_i-k_i/2}(m_{i-1}+m_{i+1})\Big|_{S_i}\simeq [S_{i-1,i}+S_{i,i+1}] \;,
    \end{equation}
    On the other hand, \eqref{eq:magnetic charges for su(n) monopole} and \eqref{eq:magnetic charges for curves coming from gCPM} imply 
    \begin{equation}
        k_1=N \;, \qquad k_2=\ldots=k_{n-1}=0 \;, \qquad k_n=-N \;, 
    \end{equation}
    where the last relation follows from $m_n=0$. Then, \eqref{eq:line bundles on curves S_i} implies
    \begin{eqaligned}\label{eq:line bundles over spectral curves of a monopole}
        L^{p_2-p_1-N/2}(N)\Big|_{S_1}&\simeq [S_{1,2}] \;,
        \\
        L^{p_3-p_2}(2N)\Big|_{S_2}&\simeq [S_{1,2}+S_{2,3}] \;,
        \\
        \qquad &\vdots 
        \\
        L^{p_{n-1}-p_{n-2}}(2N)\Big|_{S_{n-2}}&\simeq [S_{n-3,n-2}+S_{n-2,n-1}] \;,
        \\
        L^{p_{n}-N/2-p_{n-1}}(N)\Big|_{S_{n-1}}&\simeq [S_{n-2,n-1}] \;,
    \end{eqaligned}
    while \eqref{eq:special line bundles defined by gCMP curves I} and \eqref{eq:special line bundles defined by gCMP curves II} give
    \begin{eqaligned}\label{eq:line bundles over curves coming from gCPM}
        L^{-N/2}(N)\Big|_{\ipc{1}}&\simeq [\inpc{1}{2}]\simeq\mcal{O}_{\ts}(N,0) \;,
        \\
        L^0(2N)\Big|_{\ipc{2}}&\simeq [\inpc{2}{1}+\inpc{2}{3}]\simeq\mcal{O}_{\ts}(N,N) \;,
        \\
        \qquad &\qquad \vdots 
        \\
        L^0(2N)\Big|_{\ipc{n-2}}&\simeq [\inpc{n-3}{n-2}+\inpc{n-2}{n-1}]\simeq\mcal{O}_{\ts}(N,N) \;,
        \\
        L^{-N/2}(N)\Big|_{\ipc{n-1}}&\simeq [\inpc{n-2}{n-1}]\simeq\mcal{O}_{\ts}(N,0)\Big|_{\ipc{n-1}}\simeq \mcal{O}_{\ts}(0,N)\Big|_{\ipc{n-1}} \;,
    \end{eqaligned}
    where we have used \eqref{eq:holomorphic isomorphism on a curve Sigmai} and \eqref{eq:Lp(q) in terms of O(q_+,q_-)}, from which it follows that $L^{-N/2}(N)\simeq\mcal{O}_{\ts}(0,N)$ and $L^0(2N)\simeq\mcal{O}_{\ts}(N,N)$. Therefore, if we take the limit
    \begin{equation}
        p_i\to 0 \;, \qquad i=1,\ldots,n-1 \;,
    \end{equation}
    we see that both sets of curves give the same line bundles upon restriction to the corresponding curves. Note that can identify the divisors $S_{i,i+1}$ and $S_{i+1,i}$ with those defined in \eqref{eq:intersection divisors for lambdaipm}
    \begin{equation}\label{eq:identification of divisors defined by curves Sigmai and Si}
        \inpc{i}{i+1}=S_{i,i+1} \;,
        \qquad \inpc{i+1}{i}=S_{i+1,i} \;,
        \qquad i=1,\ldots,n-1 \;,
    \end{equation}
    and hence the corresponding line bundles
    \begin{equation}\label{eq:identification of line bundles defined by curves Sigmai and Si}
        [\inpc{i}{i+1}]=[S_{i,i+1}] \;,
        \qquad [\inpc{i+1}{i}]=[S_{i+1,i}] \;,
        \qquad i=1,\ldots,n-1 \;.
    \end{equation}
    
\end{itemize}

Putting all these together, we see that the set of curves $\{\ipc{1},\ldots,\ipc{n-1}\}$ indeed define the spectral data of a configuration of flat hyperbolic $\SU{n}$-monopoles. We thus proved the claimed identification \eqref{eq:claimed isomorphism between curves}. \qed 

\subsubsection{gCPM from Hyperbolic \texorpdfstring{$\SU{n}$}{SU(n)}-Monopole}\label{sec:gCPM from hyperbolic SU(n) monopoles}

 So far, we have seen that starting from the curve of the spectral parameter of the gCPM, we can construct the spectral data of a hyperbolic $\SU{n}$-monopole. A natural question is to reverse the construction: we start from the spectral data of a hyperbolic $\SU{n}$-monopole and construct the curve of the spectral parameter of the gCPM. 

\smallskip Consider the spectral data of a hyperbolic $\SU{n}$-monopole with a maximal symmetry-breaking $\SU{n}\to\U{1}^{n-1}$ at $S^2_\infty$. This means (1) the curves $\{S_1,\ldots,S_{n-1}\}$ where $S_i$ is a section of $\mcal{O}_{\ts}(m_i,m_i)$; Therefore, they are described by equations of the form 
\begin{equation}\label{eq:equation for spectral curves of hyperbolic su(n)-monopoles}
    S_i:\quad F_i(z,w)=\sum_{\alpha,\beta=0}^{m_i}a_{\alpha,\beta}^{(i)} z^{\alpha}w^{\beta}=0 \;, 
    \qquad 
    i=1,\ldots,n-1 \;,
\end{equation}
where $z$ and $w$ are defined in \eqref{eq:inhomogeneous coordinates on P1xP1}, and with the assumption that
\begin{equation}
    a^{(i)}_{m_i,m_i}\ne 0 \;, 
    \qquad 
    i=1,\ldots,n-1 \;,
\end{equation}
and (2) $S_i\cap S_{i+1}$ consists of $2m_im_{i+1}$ points. The asymptotic values of Higgs fields are $\{p_1,\ldots,p_n\}$ subject to the constraint $\sum_{i=1}^np_i=0$. The construction of the corresponding curve of the spectral parameter of the gCPM goes as follows:

\paragraph{\small $S_i$ Determines a $\mbb{Z}_{m_i}\times\mbb{Z}_{m_i}$-Invariant Curve.} From $S_i$, we construct a $\mbb{Z}_{m_i}\times\mbb{Z}_{m_i}$-symmetric curve as follows. $\mbb{Z}_{m_i}\times\mbb{Z}_{m_i}$-invariance simply means
\begin{equation}
    F_i(\omega z,\omega' w)=F_i(z,w) \;, \qquad i=1,\ldots,n-1 \;,
\end{equation}
for two $m_i$\sth roots of unity $\omega$ and $\omega'$. From \eqref{eq:equation for spectral curves of hyperbolic su(n)-monopoles}, $F_i(z,w)$ transforms as
\begin{equation}
    F_i(\omega z,\omega' w)=\sum_{\alpha,\beta=0}^{m_i}a_{\alpha,\beta}^{(i)} \omega^\alpha\omega'^\beta z^{\alpha}w^{\beta} \;,
\end{equation}
and equating this with $F_i(z,w)$ gives the following constraints
\begin{equation}
    \omega^\alpha\omega'^\beta=1 \;,
\end{equation}
Setting $\omega=\exp(2\pi\mfk{i}s/m_i)$ and $\omega'=\exp(2\pi\mfk{i}s'/m_i)$ for $s,s'=0,\ldots,m_i-1$, we arrive at the following relation
\begin{equation}
    s\alpha+s'\beta=0 \;, \qquad \mod{m_i}\;.
\end{equation}
This relation must hold for all $s,s'=0,\ldots,m_i-1$. Hence, we conclude
\begin{equation}
    \alpha,\beta=0,m_i \;.
\end{equation}
Therefore, the $\mbb{Z}_{m_i}\times\mbb{Z}_{m_i}$-invariant curve $\ismc{i}$ constructed out of $S_i$ has the following equation
\begin{equation}\label{eq:equation for symmetric monopole curve}
    \ismc{i}:\quad F^{\text{sym}}_i(z,w)=\gamma_i z^{m_i}w^{m_i}+\delta_iz^{m_i}-\alpha_iw^{m_i}-\beta_i=0 \;,
    \qquad i=1,\ldots,n-1 \;,
\end{equation}
for some constants $\alpha_i,\beta_i,\gamma_i,\delta_i$. Recall that $S_i$ (and $\ismc{i}$) is real in the sense of \eqref{eq:reality of spectral curves of hyperbolic monopoles}, therefore $\alpha_i,\beta_i,\gamma_i,\delta_i\in\R$. Hence, we can write the equation in a more succinct form
\begin{equation}\label{eq:equation for symmetrized curves}
    S^\text{sym}_i:\quad 
    \begin{pmatrix}
        {z^+}^{m_i} 
        \\
        {z^-}^{m_i} 
    \end{pmatrix}
    =K_i
    \begin{pmatrix}
        {w^+}^{m_i}
        \\
        {w^-}^{m_i} 
    \end{pmatrix} \;,
    \qquad i=1,\ldots,n-1 \;,
\end{equation}
where the homogeneous coordinates $z^\pm$ and $w^\pm$ on $\Pbb^1\times\Pbb^1$ are defined in \eqref{eq:inhomogeneous coordinates on P1xP1}, and $K_i$ is of the form \eqref{eq:entries of matrix Ki}. 

\paragraph{\small $\{\ismc{1},\ldots,\ismc{n-1}\}$ Define a Curve in the Product of $n$ Copies of $\Pbb^1$.} Consider the product of $n$ copies of $\Pbb^1$. For the sake of distinguishing them, we denote them as $\Pbb^1_i$ for $i=1,\ldots,n$, their homogeneous coordinates as $z^{\pm}_i$, and their inhomogeneous coordinate as $z_i:=z^+_i/z^-_i$. Then, we have the following embeddings
\begin{equation}
    \iota_i:\ts\hookrightarrow \Pbb^1_1\times\ldots\times\Pbb^1_{i}\times\Pbb^1_{i+1}\times\ldots\times\Pbb^1_n \;,
    \qquad i=1,\ldots,n-1\;.
\end{equation}
$\iota_i$ embeds $\ts$ inside $\Pbb^1_i\times\Pbb^1_{i+1}$ and is given explicitly as
\begin{equation}
    \iota_i(z,w):=(0,\dots,0,z,w,0,\ldots,0) \;. 
\end{equation}
Under this embedding, the equation \eqref{eq:equation for symmetrized curves} of $\ismc{i}$ becomes
\begin{equation}\label{eq:equation for ith Ssym curve embedded in n copies of P1}
    S^\text{sym}_{n,\iota_i}:\quad 
    \begin{pmatrix}
        (z^+_i)^{m_i}
        \\
        (z^-_i)^{m_i} 
    \end{pmatrix}
    =K_{i,i+1}
    \begin{pmatrix}
        (z^+_{i+1})^{m_i}
        \\
        (z^-_{i+1})^{m_i}
    \end{pmatrix} \;,
    \qquad i=1,\ldots,n-1 \;,
\end{equation}
where we have redefined $K_i\to K_{i,i+1}$. Considering all these curves, we can consider their union, which is again a curve in $\Pbb^1\times\ldots\Pbb^1$
\begin{equation}\label{eq:curve in the product of n copies of P1}
    S^{\text{sym}}_{n;m_1,\ldots,m_{n-1}}:=\bigcup_{i=1}^{n-1}S_{n,\iota_i}^{\text{sym}} \;.
\end{equation}
Let us compute the genus of this curve. We first compute the degree of its canonical bundle. Let $P_i$ be the divisor in $\mbb{P}^1_1\times\ldots\times\mbb{P}^1_n$ pull-backed from a point in $\mbb{P}^1_i$, and consider the embedding $\iota:S^{\text{sym}}_{n;m_1,\ldots,m_{n-1}}\hookrightarrow \Pbb^1_1\times\ldots\times\Pbb^1_n$. By the adjunction formula, the canonical bundle of the curve is
\begin{equation}
    K_{S^{\text{sym}}_{n;m_1,\ldots,m_{n-1}}}=\iota^*\left(K_{\mbb{P}^1_1\times\ldots\times\mbb{P}^1_n}\otimes_{\mcal{O}_{\Pbb^1_1\times\ldots\times\Pbb^1_n}} \bigotimes_{i=1}^n\mcal{O}(m_i P_i+m_i P_{i+1})\right) \;,
\end{equation}
where 
\begin{equation}
    \mcal{O}(m_i P_i+m_i P_{i+1}):=\mcal{O}_{\mbb{P}^1_1\times\ldots\times\mbb{P}^1_n}(0,\ldots,0,m_i P_i,m_i P_{i+1},0,\ldots,0) \;,
\end{equation}
with no summation over $i$ is understood and
\begin{equation}
    \mcal{O}_{\Pbb^1_1\times\ldots\times\Pbb^1_n}(s_1,\ldots,s_n):=\bigotimes_{i=1}^n\pi^*_i\mcal{O}_{\Pbb^1_i}(s_i), \qquad s_1,\ldots,s_n\in\mbb{Z} \;,
\end{equation}
with $\pi_i:\Pbb^1_1\times\ldots\times\Pbb^1_n\to\Pbb^1_i$ is the natural projection. Using the fact\footnote{Recall that the canonical bundle of $\Pbb^1$ is $\mcal{O}_{\Pbb^1}(-2P)$ for any point $P\in\Pbb^1$.}
\begin{equation}
    K_{\Pbb^1_1\times\ldots\times\Pbb^1_n}=\mcal{O}_{\Pbb^1_1\times\ldots\times\Pbb^1_n}(-2P_1,\ldots,-2P_n) \;,
\end{equation}
we have
\begin{equation}\label{eq:canonical bundle of the curve}
    K_{S^{\text{sym}}_{n;m_1,\ldots,m_{n-1}}}=\iota^*\mcal{O}_{\Pbb^1_1\times\ldots\times\Pbb^1_n}(D) \;,
\end{equation}
where
\begin{equation}
    D:=((m_1-2)P_1,(m_2+m_1-2)P_2,\ldots,(m_{n-1}+m_{n-2}-2)P_{n-1},(m_{n-1}-2)P_n) \;.
\end{equation}
On the other hand, $\deg\iota=m_1\ldots m_{n-1}$. We thus can compute the degree of $K_{S^{\text{sym}}_{n;m_1,\ldots,m_{n-1}}}$ from \eqref{eq:canonical bundle of the curve} to be ($m_n=0$)
\begin{eqaligned}\label{eq:degree of the canonical bundle of the curve in the product of n copies of P1}
    \deg K_{S^{\text{sym}}_{n;m_1,\ldots,m_{n-1}}}&=\deg\iota\times \deg\mcal{O}_{\Pbb^1_1\times\ldots\times\Pbb^1_n}(D)
    \\
    &=\deg\iota\times\deg D
    \\
    &=2m_1\ldots m_{n-1}\left(\sum_{i=1}^{n-1}m_i-n\right) \;.
\end{eqaligned}
By equating this with $2g_{S^{\text{sym}}_{n;m_1,\ldots,m_{n-1}}}-2$, we get $g_{S^{\text{sym}}_{n;m_1,\ldots,m_{n-1}}}$, the genus of $S^{\text{sym}}_{n;m_1,\ldots,m_{n-1}}$, to be
\begin{equation}
    g_{S^{\text{sym}}_{n;m_1,\ldots,m_{n-1}}}=m_1\ldots m_{n-1}\left(\sum_{i=1}^{n-1}m_i-n\right)+1 \;.
\end{equation}

\begin{rmk}\normalfont 
    The curve we eventually compare with the curve of the spectral parameter of the gCPM is the $\mbb{Z}_{m_1}\times\ldots\times\mbb{Z}_{m_{n-1}}$ covering of $S^{\text{sym}}_{n;m_1,\ldots,m_{n-1}}$. Denoting that curve by $\wt{S}^{\text{sym}}_{n;m_1,\ldots,m_{n-1}}$, we have
    \begin{equation}\label{eq:Zm1x...xZmn-1 cover of Snm1...mn-1}
        S^{\text{sym}}_{n;m_1,\ldots,m_{n-1}}=\wt{S}^{\text{sym}}_{n;m_1,\ldots,m_{n-1}}/(\mbb{Z}_{m_1}\times\ldots\times\mbb{Z}_{m_{n-1}}) \;.
    \end{equation}
    From \eqref{eq:degree of the canonical bundle of the curve in the product of n copies of P1}, we have
    \begin{equation}
        \deg K_{\wt{S}^{\text{sym}}_{n;m_1,\ldots,m_{n-1}}}=2m^2_1\ldots m^2_{n-1}\left(\sum_{i=1}^{n-1}m_i-n\right) \;,
    \end{equation}
    hence its genus is
    \begin{equation}\label{eq:genus of Zm1x...Zmn-1 cover of the curve in product of n copies of P1 with of arbitrary degree}
        g_{\wt{S}^{\text{sym}}_{n;m_1,\ldots,m_{n-1}}}=m^2_1\ldots m^2_{n-1}\left(\sum_{i=1}^{n-1}m_i-n\right)+1 \;.
    \end{equation}
\end{rmk}

\paragraph{Recovering the Curve of the Spectral Parameter of the gCPM.} The comparison between \eqref{eq:equation for ith Ssym curve embedded in n copies of P1} and \eqref{eq:equations for curves tCi in P1xP1} shows that to connect the curve \eqref{eq:curve in the product of n copies of P1} to the curve of spectral parameter of the gCPM, we have to take all the magnetic charges to be the same
\begin{equation}
    m_1=\ldots=m_{n-1}=N. 
\end{equation}
This would then implement certain restrictions on the matrices $K_{i,i+1}$. We can summarize them as 
\begin{eqgathered}
    K_{i,i}=\mds{1} \;, 
    \qquad K_{i,j}K_{j,i}=\mds{1} \;,
    \qquad K_{i,j}K_{j,k}K_{k,i}=\mds{1} \;.
\end{eqgathered}
The second condition implies that the matrices $K_{i,j}$ have to be invertible. Taking the determinant of both sides of the cocycle condition tells us that all $K_i$s should have a unit determinant for the cocycle condition to be satisfied for all values of $i,j,$ and $k$. Therefore, $K_i$s belong to $\text{SL}(2,\mbb{R})$ subject to the redundancy \eqref{eq:redundancy in the curve of spectral parameter of gCPM}. Furthermore, the discussion in \hyperlink{step1}{Step 1: Constructing $n-1$ Curves in $\Pbb^1\times\Pbb^1$ out of $\pc$} in \S\ref{sec:hyperbolic monopoles from gCPMs} shows that the curve $\wt{S}^{\text{sym}}_{n;m_1,\ldots,m_{n-1}}$, defined in \eqref{eq:Zm1x...xZmn-1 cover of Snm1...mn-1}, can indeed be thought of as a curve in $\Pbb^{2n-1}$ whose genus can be computed from \eqref{eq:genus of Zm1x...Zmn-1 cover of the curve in product of n copies of P1 with of arbitrary degree} to be
\begin{equation}
    g_{\wt{S}^{\text{sym}}_{n;N,\ldots,N}}=N^{2(n-1)}(N(n-1)-n)+1 \;,
\end{equation}
which coincides with \eqref{eq:genus of the curve of spectral parameter of gCPM}. Finally, for \eqref{eq:line bundles over spectral curves of a monopole} and \eqref{eq:line bundles over curves coming from gCPM} to match, all $p_i$s have to be sent to zero. We thus successfully recovered the curve of the spectral parameter of a particular class of the gCPMs
\begin{equation}\label{eq:identification of curves constructed from spectral data of monopoles with gCPM curve}
    \wt{S}^{\text{sym}}_{n;N,\ldots,N}\simeq\cpc \;,
\end{equation}
where $\cpc$ is defined in \eqref{eq:curve of spectral parameter of gCPM}. Let us summarize our findings in the form of a proposition

\begin{prop}\normalfont\label{prop:hyperbolic monopole/gCPM correspondence}
    There is a one-to-one correspondence between 
    \begin{enumerate}
        \item [--] the curve $\cpc$ of the spectral parameter of the gCPM defined by $n-1$ $\text{SL}(2,\R)$-valued $K$-matrices $K_{i,i+1},\,i=1,\ldots,n-1$ in \eqref{eq:curve of spectral parameter of gCPM} such that $K_iK_{i+1}^{-1}$ with $K_i:=K_{i,i+1}$, is an elliptic element of $\text{SL}(2,\R)$; and

        \item [--] a generic hyperbolic $\SU{n}$-monopole  subject to maximal symmetry-breaking, $n-1$ magnetic charges $m_1=\ldots=m_{n-1}=N$, and vanishing boundary values of the Higgs field whose spectral data defines the curve \eqref{eq:Zm1x...xZmn-1 cover of Snm1...mn-1} which, upon setting the parameters, can be identified with $\cpc$, as in \eqref{eq:identification of curves constructed from spectral data of monopoles with gCPM curve}. 
    \end{enumerate}
\end{prop}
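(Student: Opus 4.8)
\emph{Proof proposal.} The statement is a synthesis of the two constructions carried out in \S\ref{sec:hyperbolic monopoles from gCPMs} and \S\ref{sec:gCPM from hyperbolic SU(n) monopoles}, together with a check that the maps they define are mutually inverse. \textbf{Forward direction (gCPM curve $\to$ flat monopole).} Starting from $\cpc$ of \eqref{eq:curve of spectral parameter of gCPM} with $K_{i,i+1}\in\text{SL}(2,\R)$ and $K_iK_{i+1}^{-1}$ elliptic, I would run Step 1 of \S\ref{sec:hyperbolic monopoles from gCPMs}: using the $(\C^\times)^{n-1}$-action on $\Pbb^{2n-1}$, the projections and forgetful maps of \eqref{eq:sequence of projection and forgetful maps}, and the $\mbb{Z}_N\times\mbb{Z}_N$-quotient \eqref{eq:Zn quotient of ith tilde curve}, one produces the curves $\ipc{1},\ldots,\ipc{n-1}$ of \eqref{eq:equations for curves tCi in P1xP1}. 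One then verifies, as in Step 2, that each $\ipc{i}$ is a compact, real (in the sense of \eqref{eq:demanded reality condition on gCPM curves}, whence \eqref{eq:consequece of imposing reality condition for matrices Ki}) section of $\mcal{O}_{\ts}(N,N)$ of genus $(N-1)^2$; that the ellipticity of $K_iK_{i+1}^{-1}$ forces $\overline{\lambda^\pm_i}=\lambda^\mp_i$ as in \eqref{eq:complex conjugate of lambdapm}, so that $\ipc{i}\cap\ipc{i+1}$ splits into $\inpc{i}{i+1}\cup\inpc{i+1}{i}$ of $N^2$ points each, exchanged by $\sigma$ as in \eqref{eq:divisors associated with curves are exchanged by real structure}; and that the associated divisor line bundles obey \eqref{eq:special line bundles defined by gCMP curves I} and \eqref{eq:special line bundles defined by gCMP curves II}. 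Comparing with the spectral-data axioms recalled in \S\ref{sec:spectral data and recovering monopole solutions}, and matching \eqref{eq:line bundles over curves coming from gCPM} against \eqref{eq:line bundles over spectral curves of a monopole} in the limit $p_i\to 0$, identifies $\{\ipc{i}\}$ with the spectral data of a hyperbolic $\SU{n}$-monopole of charges $m_1=\cdots=m_{n-1}=N$, maximal symmetry-breaking, and vanishing Higgs boundary values. Feeding this into the reconstruction of \S\ref{sec:spectral data and recovering monopole solutions} — the short exact sequence \eqref{eq:short exact sequence of sheaves to construct the bundle E}, recovery of $E^+$ as the kernel of its second map, and the Hurtubise--Murray procedure \cite{HurtubiseMurray1989,MurraySinger199607} — together with the standing assumption of \S\ref{sec:general considerations} that the zero-mass limit of the Murray--Singer construction exists and defines a flat monopole, produces the monopole $(\wt{A},\phi)$.

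\textbf{Reverse direction (flat monopole $\to$ gCPM curve).} Conversely, given such a monopole whose spectral curves $S_i$ are of the symmetric form \eqref{eq:equation for symmetrized curves} — which by the forward step is exactly the class that occurs here — I would run \S\ref{sec:gCPM from hyperbolic SU(n) monopoles}: $S_i=\ismc{i}$ is a real section of $\mcal{O}_{\ts}(N,N)$, so its defining matrix lies in $\text{SL}(2,\R)$; embedding the $\ismc{i}$ into the product of $n$ copies of $\Pbb^1$ via $\iota_i$, taking the union \eqref{eq:curve in the product of n copies of P1}, and passing to the $\mbb{Z}_N^{n-1}$-cover \eqref{eq:Zm1x...xZmn-1 cover of Snm1...mn-1} yields a curve whose genus, given by \eqref{eq:genus of Zm1x...Zmn-1 cover of the curve in product of n copies of P1 with of arbitrary degree} with $m_i=N$, equals $g_{\cpc}$ of \eqref{eq:genus of the curve of spectral parameter of gCPM}. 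Imposing that the matrices $K_{i,i+1}$ satisfy the identity and cocycle relations \eqref{eq:relations satisfied by matrices Kij of gCPM} for all triples re-assembles the curve inside $\Pbb^{2n-1}$ and gives $\wt{S}^{\text{sym}}_{n;N,\ldots,N}\simeq\cpc$ as in \eqref{eq:identification of curves constructed from spectral data of monopoles with gCPM curve}, the $p_i\to 0$ limit being precisely what makes \eqref{eq:line bundles over spectral curves of a monopole} and \eqref{eq:line bundles over curves coming from gCPM} coincide.

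\textbf{Mutual inverseness and the main obstacle.} Finally I would check that the two procedures invert one another: composing the forward and then the reverse construction returns the same $K_{i,i+1}$ up to the diagonal redundancy \eqref{eq:redundancy in the curve of spectral parameter of gCPM}, since forgetting and then restoring the labels of the $\Pbb^1$ factors, combined with the $\mbb{Z}_N$-quotient and its cover, are inverse operations, and the cocycle relations are uniquely restored once demanded on all triples; composing the reverse and then the forward construction returns the same spectral data, which is where the genericity hypothesis (transversal intersections and distinct $\inpc{i}{i+1}$, $\inpc{i+1}{i}$) and the uniqueness of the monopole attached to its spectral data enter. \emph{The main obstacle is precisely this last uniqueness:} for $\SU{2}$ it is Atiyah's theorem \cite{Atiyah1984}, but for $\SU{n}$ with $n>2$ it has not been established in full generality, so the one-to-one character of the correspondence in that range rests on the standing assumption of \S\ref{sec:general considerations} that the zero-mass limit of the Murray--Singer construction exists and that the resulting spectral data is complete. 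A secondary subtlety is verifying that the real structure $\sigma$ and the ellipticity of $K_iK_{i+1}^{-1}$ are preserved in both directions, which follows from \eqref{eq:consequece of imposing reality condition for matrices Ki}, \eqref{eq:complex conjugate of lambdapm}, and \eqref{eq:divisors associated with curves are exchanged by real structure}.
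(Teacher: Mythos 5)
Your proposal is correct and follows essentially the same route as the paper: the proposition there is simply a summary of the forward construction of \S\ref{sec:hyperbolic monopoles from gCPMs} and the reverse construction of \S\ref{sec:gCPM from hyperbolic SU(n) monopoles}, which you reproduce faithfully, including the $p_i\to 0$ matching of line bundles and the role of ellipticity of $K_iK_{i+1}^{-1}$. Your added remarks on mutual inverseness and on the unproven uniqueness of the $\SU{n}$, $n>2$, monopole from its spectral data correctly locate the same caveat the paper itself places in the standing assumption of \S\ref{sec:general considerations}.
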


\subsection{Generalization to Other Lie Algebras}\label{sec:correspondence for classical and exceptional groups}

In this section, we explain how to extend the results on hyperbolic monopoles to other classical and exceptional Lie algebras. Spectral data of Euclidean monopoles associated with classical Lie groups is constructed in \cite{HurtubiseMurray1989}. The method of loc.\ cit.\ can be applied to hyperbolic monopoles associated with classical Lie groups, as has been used for the case of $\mfk{su}(n)$ by Murray and Singer in \cite{MurraySinger199607}. We use this construction for other classical Lie algebras, $\mfk{so}(n)$ and $\mfk{sp}(n)$ in this section. The basic idea is that the spectral curves $S^{\mfk{g}}_i$ of a $\mfk{g}$-monopole, where $\mfk{g}$ is a classical Lie algebra, are labeled by fundamental weights of $\mfk{g}$, which we denote as $\omega^{\mfk{g}}_i$ (see \hyperlink{spectral data}{Spectral Data} in \S\ref{sec:spectral data and recovering monopole solutions}). Once we embed $\mfk{g}$ inside $\mfk{su}(n)$, the main point is to find the relation between the fundamental weights of the algebra and $\mfk{su}(n)$ under this embedding. The number of spectral curves depends on the symmetry-breaking pattern at infinity. We are exclusively concerned with the maximal symmetry-breaking case in which there are $\rnk{\mfk{g}}$ spectral curves $\{S^{\mfk{g}}_1,\ldots,S^{\mfk{g}}_{\rnk{\mfk{g}}}\}$ and $\rnk{g}$ number of magnetic charges $(m_1^{\mfk{g}},\ldots,m_{\rnk{g}}^{\mfk{g}})$ associated with these fundamental weights, as can be seen from \eqref{eq:definition of magnetic charges of a monopole}. This will provide the relation between spectral curves and magnetic charges of $\mfk{g}$-monopoles and the corresponding monopoles associated with classical Lie algebras.\footnote{For the relation between hyperbolic monopole solutions for orthogonal and symplectic groups and those for the group $\SU{n}$ in terms of discrete Nahm data, see \cite[pg. 105, Theorem 56]{Chan2017}.} All the details, which are based on some basic facts about Lie algebras, are collected in Appendix \ref{sec:some basic facts and Lie algebra manipulations}. Note that for the case of orthogonal and symplectic groups, the holomorphic vector bundles carry additional structures. Here, we impose these additional requirements on the curve \cite{HurtubiseMurray1989}.  

\smallskip Furthermore, to the best of our knowledge, a detailed study of the construction of spectral data for monopoles associated with exceptional Lie groups has not been carried out in the literature.\footnote{In this regard see the related works \cite{Mendizabal202306} and \cite[\S 5.4]{Mendizabal2024}. The Euclidean $\mfk{g}_2$-monopoles have been studied in \cite{ShnirZhilin201508}.} In particular, we do not know (1) whether the spectral curves are labeled by fundamental weights; (2) whether the spectral data depends on additional conditions; and (3) whether the spectral data uniquely determines the monopole solution. Here, we assume that by embedding these Lie algebras in a classical Lie algebra the spectral data would characterize the monopole configuration. Therefore, whatever we present for the case of exceptional Lie groups should be considered as mere speculation, and a proper investigation is warranted. 

\smallskip Given the correspondence between the gCPM and hyperbolic monopoles for the $\mathfrak{su}{n}$ theories, it is natural to speculate on the existence of the generalizations of the gCPM associated with a general semisimple Lie algebra, whose spectral data matches that of the hyperbolic monopoles associated with the same Lie algebra. It seems very few are known in this direction and we hope that our paper stimulates future exploration in this direction, see \S\ref{sec:discussion and future direction} for further discussion.

\paragraph{Lie Algebra $B_n:=\mfk{so}(2n+1)$.} Let us first consider the Lie algebra $B_n=\mfk{so}(2n+1)$, which can be embedded inside $\mfk{su}(2n+1)$. For maximal symmetry-breaking, there are $n$ spectral curves $\{S_1^{B_n},\ldots,S_n^{B_n}\}$ for the case of $\mfk{so}(2n+1)$ and $2n$ spectral curves $\{S_1^{A_{2n}},\ldots,S_{2n}^{A_{2n}}\}$ for $\mfk{su}(2n+1)$. These curves are related as (see \eqref{eq:relation between fundamental weights of Bn and A2n})
\begin{eqaligned}\label{eq:relation of spectral curves: so(2n+1) and su(2n+1)}
    S_i^{A_{2n}}&=S^{A_{2n}}_{2n+1-i}=S^{B_n}_i \;, 
    &\qquad a&=1,\cdots,n-1 \;,
		\\
		S^{A_{2n}}_n&=S^{A_{2n}}_{n+1}=2S^{B_{n}}_n \;, &\qquad (&\text{with multiplicity $2$}) \;,
\end{eqaligned}
while the magnetic charges of the configuration related to the gCPMs are given by (see \eqref{eq:relation between Bn and A2n charges I} and \eqref{eq:relation between Bn and A2n charges II})
\begin{equation}\label{eq:charges of so(2n+1) monopoles associated with a gCPM}
	m^{B_n}_1=\cdots=m^{B_n}_{n-1}=2N \;, \qquad m^{B_n}_{n}=N \;,
\end{equation}
where $N$ is some positive integer. The spectral curves associated with $\SO(2n+1)$ should satisfy some extra conditions: (1) Over $S^{B_n}_n$, we have the isomorphism of line bundles $L^0(2N)\simeq\mcal{O}_{\ts}(N,N)\simeq[S^{A_{2n}}_{n-1,n}]$ \cite[pg.\ 41, (C-$1_k$)]{HurtubiseMurray1989}. Recall from \eqref{eq:definition of Sii+1 and Si+1i} and \eqref{eq:intersection locus of adjacent spectral curves of monopoles} that  $S^{A_{2n}}_{n-1}\cap S^{A_{2n}}_{n}=S^{A_{2n}}_{n-1,n}\cup S^{A_{2n}}_{n,n-1}$. Furthermore, from \eqref{eq:special line bundles defined by gCMP curves I} and $N\to 2N$,\footnote{Recall that the corresponding $\mfk{su}(2n+1)$ monopole has $2n$ magnetic charges of value $2N$, see \eqref{eq:relation between Bn and A2n charges I} and \eqref{eq:relation between Bn and A2n charges II}.} we have $[S_{n-1,n}^{A_{2n}}]|_{S_i}\simeq\mcal{O}_{\ts}(0,2N)|_{S^{A_{2n}}_n}=\mcal{O}_{\ts}(N,0)|_{S^{A_{2n}}_n}\otimes_{\mcal{O}_{\ts}}\mcal{O}_{\ts}(0,N)|_{S^{A_{2n}}_n}\simeq\mcal{O}_{\ts}(N,N)|_{S^{A_{2n}}_n}$, where we have the isomorphism of $\mcal{O}_{\ts}(N,0)$ and $\mcal{O}_{\ts}(0,N)$ on $S^{A_{2n}}_n$. This condition is thus satisfied on $S^{A_{2n}}_n$ and in particular over $S^{B_n}_n$; (2) Furthermore, there is a positivity condition on a real constant, stated in \cite[pg. 41, (C-$4_{k}$)]{HurtubiseMurray1989}, which we assume it holds. Due to the second relation in \eqref{eq:relation of spectral curves: so(2n+1) and su(2n+1)}, there are $n$ independent $K_{i,j}$, which are $K_{i,i+1},\,i=1,\ldots,n-1$ and the rest are determined from the cocycle condition. However, we need all matrices $K_{i,i+1}\,i=1,\ldots,2n-1$ to define the spectral data of the corresponding $A_{2n}$-monopole. The rest of the construction of the corresponding gCPM follows the case of $\mfk{su}(2n+1)$ and the discussion in \S\ref{sec:gCPM from hyperbolic SU(n) monopoles}. 

\paragraph{Lie Algebra $C_n:=\mfk{sp}(2n)$.} We next consider the algebra $C_n=\mfk{sp}(2n)$, which can be embedded in $\mfk{su}(2n)$. 
The relation between spectral curves is (see \eqref{eq:relation between fundamental weights of Cn and A2n-1})
\begin{equation}\label{eq:relation between spectral curves of hyperbolic sp(n) and su(2n) monopoles}
	S^{A_{2n-1}}_i=S^{A_{2n-1}}_{2n-i}=S^{C_n}_i \;,
    \qquad i=1,\cdots,n \;,
\end{equation}
and the charges are given by (see \eqref{eq:relation between Cn and A2n-1 charges})
\begin{equation}\label{eq:charges of Cn monopoles}
	m^{C_n}_1=\cdots=m^{C_n}_n=N \;,
\end{equation}
for some positive integer $N$. There are no extra requirements for $\mfk{sp}(2n)$ monopoles \cite{HurtubiseMurray1989}, and the construction of the curve of the spectral parameter of the corresponding gCPM follows the case of $\mfk{su}(2n)$ in \S\ref{sec:gCPM from hyperbolic SU(n) monopoles}.

\paragraph{The Lie Algebra $D_n:=\mfk{so}(2n)$.} 
Finally consider $D_n=\mfk{so}(2n)$ that can embedded in $\mfk{su}(2n)$.
Their spectral curves are related as (see \eqref{eq:relation between fundamental weights of Dn and A2n-1})
\begin{equation}\label{eq:relation of spectral curves of hyperbolic so(2n) and su(2n) monopoles}
	\begin{gathered}
		S^{A_{2n-1}}_i=S^{A_{2n-1}}_{2n-i}=S^{D_n}_i \;, \qquad i=1,\cdots,n-2 \;,
		\\
		S^{A_{2n-1}}_{n-1}=S^{A_{2n-1}}_{n+1}=S^{D_n}_+\cup S^{D_n}_- \;,
		\\
		S^{A_{2n-1}}_n=2S^{D_n}_+ \;,\quad (\text{with multiplicity $2$}) \;,
	\end{gathered}
\end{equation}
with charges of the $\mfk{so}(2n)$ monopole are (see \eqref{eq:relation between Dn and A2n-1 charges I}, \eqref{eq:relation between Dn and A2n-1 charges II}, and \eqref{eq:relation between Dn and A2n-1 charges III})
\begin{equation}\label{eq:charges of so(2n) monopoles}
	m^{D_n}_1=\cdots=m^{D_n}_{n-2}=2N \;, \qquad m^{D_n}_{\pm}=N \;,
\end{equation}
Similar to the case of $B_n$, there are additional conditions on spectral data: (1) Over $S^{D_n}_+$, $L^0(2N)\simeq[S^{A_{2n-1}}_{n-2,n-1}]$ and over $S^{D_n}_-$, $L^0(2N)\simeq[S^{A_{2n-1}}_{n-1,n-2}]$ \cite[pg.\ 40, (C-$1_{\pm}$)]{HurtubiseMurray1989}. Recall that $S^{A_{2n-1}}_{k-2}\cap S^{A_{2n-1}}_{k-1}$ consists of $8N^2$ points where $S^{A_{2n-1}}_{n-2,n-1}$ contains $4N^2$ of them. The divisor associated with these points defines a line bundle that, from \eqref{eq:special line bundles defined by gCMP curves I}, is $\mcal{O}_{\ts}(2N,0)$ on both curves. On the other hand, $L^0(2N)\simeq\mcal{O}_{\ts}(N,N)\simeq\mcal{O}_{\ts}(2N,0)$. Therefore, $L^0(2N)\simeq[S^{A_{2n-1}}_{n-2,n-1}]$ holds on the entire curve $S^{A_{2n-1}}_{k-1}$ including its $S^{D_n}_+$ segment. For the same reason $L^0(2N)\simeq[S^{A_{2n-1}}_{n-1,n-2}]$ also holds;  (2) We also assume the extra positivity constrains \cite[pg 40, (C-$1_{\pm}$)]{HurtubiseMurray1989}. The rest of the construction follows similar to the case of $B_n$ and $C_n$ groups.

\paragraph{Exceptional Lie Algebra $\mfk{g}_2$.}  The Lie algebra $\mfk{g}_2$ has rank 2 and dimension 14, with fundamental representations of dimensions $7$ and $17$. This Lie algebra can be embedded into $\mfk{so}(7)$, and in turn in $\mfk{su}(7)$. The relation between magnetic charges imply that if we take $m_i^{\mfk{su}(7)}=2N,\,i=1,\ldots,6$, then (see \eqref{eq:magnetic charges of g2 monopoles in terms of those of so(7) and su(7) monopoles})
\begin{equation}\label{eq:magnetic charges of the g2-monopole associated with a gCPM}
    m_1^{\mfk{g}_2}=-3N \;, \qquad m^{\mfk{g}_2}_2=-5N \;.
\end{equation}
In the mathematical literature, it is customary to take magnetic charges to have only positive values \cite[\S 3]{Murray198412} and \cite{AtiyahHitchin198812}. However, physically, nothing prevents us from taking negative magnetic charges and it is consistent with Dirac's quantization condition \cite[\S 2.1]{Shnir2005}. Therefore, \eqref{eq:magnetic charges of the g2-monopole associated with a gCPM} makes sense and would correspond to anti-monopoles with negative charges.

\paragraph{Exceptional Lie Algebra $\mfk{f}_4$.} The Lie algebra $\mfk{f}_4$ has rank $4$, dimension $52$, and has a fundamental representation of dimension $26$. As such it can be embedded in $\mfk{so}(26)$, and in turn $\mfk{su}(26)$. Considerations similar to what we have explained above show that the magnetic charges of an $\mfk{f}_4$-monopole associated with a gCPM are
\begin{equation}
    m_1^{\mfk{f}_4}=2N \;,
    \qquad m_2^{\mfk{f}_4}=3N \;,
    \qquad m_3^{\mfk{f}_4}=3N \;,
    \qquad  m_4^{\mfk{f}_4}=2N \;.
\end{equation}

\paragraph{Exceptional Lie Algebra $\mfk{e}_6, \mfk{e}_7,$ and $\mfk{e}_8$.} The details for $\mfk{e}_6, \mfk{e}_7,$ and $\mfk{e}_8$ follow the same arguments as above and we leave them to interested readers.

\section{Generalized CPM and 4d Chern--Simons Theory}
\label{sec:gCPM and 4d CS theory}

In recent years, it has been discovered that the 4d Chern--Simons (CS) theory provides a framework to unify and explain many aspects of discrete and continuous integrable models \cite{Costello201303,Costello201308,CostelloWittenYamazaki201709, CostelloYamazakiWitten201802,CostelloYamazaki201908}. This theory is defined on a product manifold of the form $\Sigma\times C$.\footnote{In fact, it should be possible to define the theory along any four-dimensional manifolds that admit a transverse holomorphic foliation.} The theory is topological along the topological plane $C$ and holomorphic along the holomorphic plane $\Sigma$. The action is
\begin{equation}\label{eq:4dCS action}
    S=\frac{1}{2\pi}\bigintsss_{\Sigma\times C}\omega\wedge \text{CS}(A) \;,
\end{equation}
where $\omega$ is a meromorphic one-form on $\Sigma$, and
\begin{equation}
    \text{CS}(A)\equiv \text{Tr}_{\mfk{g}}\left(A\wedge \rd A+\frac{2}{3}A\wedge A\wedge A\right)
\end{equation}
is the CS three-form and the trace is taken over the gauge Lie algebra $\mfk{g}$ of the theory. A $(1+1)$d quantum integrable spin-chain model (equivalently a two-dimensional lattice model) could be constructed by considering an arrangement of line defects lying along the topological plane $C$ and supported at different points of $\Sigma$. These defects carry representations of the gauge Lie algebra (or quantum-mechanical deformations extension), which in turn gives the representations at sites of the spin-chain model. The Yang--Baxter (YB) equation is an immediate consequence of two extra dimensions. Furthermore, the R-matrix can be computed by a perturbative analysis of the exchange between line defects. This simple yet powerful method could reproduce many of the properties of integrable lattice models and integrable field theories. Related to our previous discussion, there are two natural questions then: 
\begin{enumerate}
    \item [(1)] Can the gCPM be realized within 4d CS theory?

    \item [(2)] Can the 4d CS theory provide a conceptual explanation of the hyperbolic monopole/gCPM correspondence uncovered in \S\ref{sec:generalized corresponence}?
\end{enumerate}

In this section, we will investigate the first question and provide arguments that it is indeed possible to incorporate the gCPM into the 4d CS theory. We investigate the second question in \S\ref{sec:on the origin of the correspondence} and discover that the 4d CS theory is part of a bigger picture within which we can unify the two sides of the correspondence. 

\subsection{CPM from 4d Chern--Simons Theory}\label{sec:CPM from 4d CS theory}

As we just explained, the spectral parameter of an integrable model takes value in the holomorphic plane $\Sigma$. Therefore, applying the usual wisdom of the 4d CS theory, we would like to study the 4d CS theory on $\Sigma\times C$, where $\Sigma$ is the curve $\cpc$ of the spectral parameter of the gCPM, defined in \eqref{eq:curve of spectral parameter of gCPM}. In this section, we exclusively consider the case of $n=2$, corresponding to the ordinary CPM, and denote the corresponding curve as $\tcpn$, and its free $\mbb{Z}_N$ quotient by $\Sigma_N$. For the gCPMs, we only collect the results in \S\ref{sec:generalization to gCPM} and leave the obvious generalization of details to the interested reader.

\subsubsection{The Curve of Spectral Parameter as a Ramified Cover of \texorpdfstring{$\Pbb^1$}{P(1)}}\label{sec:curve of spectral parameter as a branched cover of P1}

As a consequence of the Riemann--Roch theorem, any meromorphic one-form $\omega_{\tcpn}$ on $\wt{\Sigma}_{N}$ satisfies\footnote{The left-hand side of \eqref{eq:Riemann-Roch theorem on tildeSigma} is the degree of any canonical divisor on $\tcpn$, which can be computed from the Riemann--Roch theorem, while the right-hand side is, by definition, the degree of the canonical divisor associated with $\omega_{\tcpn}$.}
\begin{equation}\label{eq:Riemann-Roch theorem on tildeSigma}
   \text{on $\wt{\Sigma}_{N}$}:\qquad  2N^3-4N^2=\text{$\#$ of zeroes of $\omega_{\tcpn}$}-\text{$\#$ of poles of $\omega_{\tcpn}$} \;,
\end{equation}
Since there is no fixed-point for the $\mbb{Z}_N$-action on $\tcpn$, we have an unramified covering map, and the one-form $\omega_{\cpn}$ on $\Sigma_N$, corresponding to $\omega_{\tcpn}$, satisfies
\begin{equation}\label{eq:Riemann-Roch theorem on Sigma}
    \text{on ${\Sigma}_{N}$}:\qquad 2N^2-4N=\text{$\#$ of zeroes of $\omega_{\cpn}$}-\text{$\#$ of poles of $\omega_{\cpn}$} \;.
\end{equation}
This implies that the number of zeroes and poles of the one-form $\omega$ on $\Sigma_{N}$ is not uniquely determined. Regarding the counting in \cite[eq.\ (3.14)]{CostelloWittenYamazaki201709} and $N\ge 2$, this implies that the engineering of the CPM within the 4d CS theory has a unique feature: the meromorphic one-form $\omega$ must have zeroes as well as poles, while for the rest of integrable lattice models studied in \cite{CostelloWittenYamazaki201709,CostelloYamazakiWitten201802} it was sufficient to have a one-form with only poles. Furthermore, only the ratio $\omega_{\cpn}/\hbar$ appears in the path integral, hence the existence of a zero for $\omega$ corresponds to the effective limit $\hbar\to\infty$ near the zero. As $\hbar$ is the parameter in the series expansion of R-matrix computations in 4d CS theory, the perturbative methods in $\hbar\to\infty$ cannot be trusted and one has to resort to a non-perturbative formulation of the 4d CS theory. It should be possible to achieve such a non-perturbative definition along the lines of \cite{Witten201101}. The construction of the 4d CS theory within string theory is known for the theory with a bosonic gauge Lie algebra \cite{CostelloYagi201810,AshwinkumarTanZhao201806} and with a gauge Lie superalgebra \cite{IshtiaqueMoosavianRaghavendranYagi202110}. However, none of these works provide a practical recipe for constructing the R-matrix of an integrable spin-chain model within a non-perturbative formulation of the 4d CS theory. This is an outstanding question, which we will not pursue it in this work.

\smallskip The strategy we use in this work is to realize $\cpn$ as a branched cover of $\Pbb^1$. This will allow us to work over $\Pbb^1$ instead of $\cpn$ and construct the one-form on $\cpn$ from the one on $\Pbb^1$. Recall that $\tcpn$ is obtained by the intersection of two hypersurfaces in $\Pbb^3$, and the genus of the curve is $N^3-2N^2+1$. By the  Riemann's Existence Theorem, every Riemann surface admits a nonconstant meromorphic function and as such can be realized as a branched cover of $\Pbb^1$, and $\tcpn$ is no exception. Let $\wt{\pi}_{N}:\tcpn\to\Pbb^1$ denote the covering map ramified\footnote{A map $\pi:X\to Y$ between two Riemann surfaces is ramified at a point $P\in X$ if in local coordinates around $P$ and $\pi(P)$, $\pi$ can be written as the map $z\mapsto z^r$ with $r\ge 2$. The point $\pi(P)$ is called the branch point of $\pi$. Such maps are called branched or ramified covering maps.} at a finite number of points $\{P_1, \ldots, P_s\}$ for some positive integer $s$, which we determine momentarily. It is given explicitly by $\wt{\pi}_{N}([x_0:x_1:x_2:x_3])=[x_0:x_3]$ (see Appendix \ref{sec:basic facts about complete intersections}). We have (see \eqref{eq:data of gCPM curve as a complete intersection})
\begin{equation}\label{eq:degree of ramification divisor on tildeSigmaN}
    \deg R_{\wt{\pi}_{N}}=\sum_{i=1}^se_{P_i}-1=2N^2(N-1) \;,
\end{equation}
where $R_{\wt{\pi}_{N}}$ is the ramification divisor of $\wt{\pi}_{N}$ and $e_{P}$ denotes the ramification index at $P\in \tcpn$. If we consider the $\mathbb{Z}_N$-quotient of this curve (which is an unramified degree-$N$ covering map $\text{pr}_N:\tcpn\to\cpn$), we can think of $\cpn$ as a ramified cover of $\Pbb^1$, given by $\pi_N:\cpn\to\Pbb^1$ with $\wt{\pi}_N=\pi_N\circ\text{pr}_N$, where\footnote{Note that the genera of $\tcpn$ and $\cpn$ are related as $g_{\tcpn}-1=N(g_{\cpn}-1)$.}
\begin{equation}\label{eq:degree of ramification divisor on SigmaN}
    \deg R_{\pi_N}=\sum_{i=1}^se_{P_i}-1=2N(N-1) \;.
\end{equation}
The ramification locus $\{P_1, \ldots, P_{2N}\}$ can be determined by evaluating the Hessian of the defining equation (see Appendix \ref{sec:basic facts about complete intersections}):
\begin{equation}
    \det\begin{pmatrix}
    0 & -N\beta x_2^{N-1}
        \\
    -Nx_1^{N-1} & -N\gamma x_2^{N-1}
    \end{pmatrix}=-N^2\beta x_1^{N-1}x_2^{N-1}=0 \;.
\end{equation}
For obvious reasons, both $x_1$ and $x_2$ cannot vanish. Considering two sets of solutions by setting $x_1=0,x_2=1$ and $x_1=1,x_2=0$ and focusing on $\mbb{Z}_N$-invariant solutions, we get the ramification loci to be the points $[\delta^{-1/N}:0:1:y]$ and $[(\beta\gamma^{-1})^{1/N}:1:0:y']$, where $y^N=-\gamma\delta^{-1}$ and $y'^N=\gamma^{-1}$, hence each of which contains $N$ points. As such, there are $2N$ points at which $\pi_N:\cpn\to\Pbb^1$ is ramified, hence $s=2N$, with ramification indices $e_{P_i}=N,\,i=1,\ldots,s=2N$. This means that the branched covering map has the following explicit form, in a suitable coordinate $z$
around each point:
\begin{equation}\label{eq:branched covering map of cpm over P1}
    \pi_N(z)\sim
    \left\{\begin{aligned}
        &z \;, &\qquad &\text{away from all $P_i$} \;,
        \\
        &z^N \;, &\qquad &\text{near one of $P_i$} \;.
    \end{aligned}\right.
\end{equation}
With this description in hand, the action of the 4d CS theory on $C\times \cpn$ can be written as an action on $C\times \Pbb^1$ as follows. From the identity,
\begin{equation}
    \bigintsss_{\phi(X)}\Omega=\bigintsss_X\phi^*\Omega \;,
\end{equation}
for a map $\phi:X\to\phi(X)$ and a differential form $\Omega$, and using the fact that the branched covering map $\pi$ is surjective, we arrive at
\begin{equation}\label{eq:4d cs on SigmaN in terms of the 4d CS on P1}
    \bigintsss_{C\times\cpn}\omega_{\cpn}\wedge\text{CS}(A_{\cpn})=\bigintsss_{C\times\Pbb^1}\omega_{\Pbb^1}\wedge\text{CS}(A_{\Pbb^1}) \;,
\end{equation}
where we have denoted the one-form on $\Pbb^1$ by $\omega_{\Pbb^1}$, and
\begin{equation}
    \omega_{\cpn}\wedge\text{CS}(A_{\cpn})=\pi^*_N\left(\omega_{\Pbb^1}\wedge\text{CS}(A_{\Pbb^1})\right) \;.
\end{equation}
We have distinguished the gauge field on $\cpn$ and $\Pbb^1$ by denoting them as $A_{\cpn}$ and $A_{\Pbb^1}$, respectively. We thus get,
\begin{equation}\label{eq:one-form on cpm curve as the pullback of one-form on P1}
    \omega_{\cpn}=\pi^*_N\omega_{\Pbb^1} \;,
    \qquad 
    \text{CS}(A_{\cpn})=\pi^*_N\text{CS}(A_{\Pbb^1}) \;.
\end{equation}

\begin{rmk}[The Necessity of Nonperturbative Formulation of the 4d CS Theory] \normalfont We have explained above that the existence of zero for $\omega_{\Sigma_N}$ implies the necessity for a non-perturbative treatment of the 4d CS theory for the purpose of computation of the R-matrix of CPM. The reason was that the existence of zeroes demands the divergence of the loop-counting parameter $\hbar$, which appears in the path integral as $\hbar^{-1}S$ with $S$ given in \eqref{eq:4dCS action}. However, the mapping \eqref{eq:4d cs on SigmaN in terms of the 4d CS on P1} and the formulation on $C\times\Pbb^1$ does not change the value of the loop-counting parameter and it still is divergent as $\hbar\to\infty$. Therefore, the formulation on $C\times\Pbb^1$ does not help with this major challenge. However, it is helpful in some respects as we will explain below (see in particular \S\ref{sec:the one-form on the curve of spectral parameter}).  
\end{rmk}

\subsubsection{The One-Form on the Curve of Spectral Parameter}\label{sec:the one-form on the curve of spectral parameter}

The next task is to obtain the one-form $\omega_{\Pbb^1}$ and $\omega_{\Sigma_N}$. What are the principles based on which we can proceed? However, we proceed by invoking the following two reasonable principles: 
\begin{enumerate}
    \item [(1)] {\bf Preserving Topological Invariance along $C$.} As the CPM is a discrete integrable system, the construction of its R-matrix using the 4d CS theory demands the preservation of topological invariance along $C$. On the other hand, the experience of engineering integrable fields theories using 4d CS theory has taught us that the existence of zeroes for the one-form $\omega$ in \eqref{eq:4dCS action} often requires breaking topological invariance \cite{CostelloYamazaki201908}. Therefore, to ensure the preservation of topological invariance along $C$, we require that the one-form $\omega$ in \eqref{eq:4dCS action} does not have a zero. However \eqref{eq:Riemann-Roch theorem on Sigma} implies that it is impossible to require this on $\cpn$. On the other hand, on $\Pbb^1$, as a consequence of the Riemann--Roch theorem, we have
    \begin{equation}
    -2=\text{$\#$ of zeroes of $\omega_{\Pbb^1}$}-\text{$\#$ of poles of $\omega_{\Pbb^1}$} \;.
    \end{equation}
    We can thus certainly require the one-form $\omega_{\Pbb^1}$ not to have a zero, which we do in our construction of $\omega_{\Pbb^1}$.

    \item[(2)] {\bf Imposing $\mbb{Z}_N$-Invariance.} Recall that $\cpn$ is obtained from $\tcpn$ by the $\mbb{Z}_N$-covering map $\text{pr}_N$. Therefore, any well-defined object in general, and $\omega_{\cpn}$ in particular, must be $\mbb{Z}_N$-invariant up to possibly an overall constant phase. 
\end{enumerate}
As we will explain below, these two requirements will almost uniquely determine $\omega_{\Pbb^1}$ (up to possibly an overall constant phase). 

\smallskip Let us start with assuming that $\omega_{\Pbb^1}$ does not have a zero. Then, the most general meromorphic one-form on $\Pbb^1$ is given by
\begin{equation}
    \omega_{\Pbb^1}=\frac{1}{\prod_{i=1}^{\#}(z^{r_i}-z_i)}\rd z \;,
\end{equation}
for some integers $\#,\{r_1,\ldots,r_{\#}\}$, and points $\{z_1,\ldots,z_{\#}\}$ of $\Pbb^1$. Performing $z\to 1/z$ transformation, we get
\begin{equation}
    \omega_{\Pbb^1}\mapsto \frac{z^{\sum_ir_i-2}}{\prod_{i=1}^{\#}(1-z_iz^{r_i})}\rd z \;.
\end{equation}
If we require that $\omega_{\Pbb^1}$ does not have a zero, $\sum_i r_i-2=0$ should be zero. This gives the two possibilities $\#=1$ and $r_1=2$ or $\#=2$ and $r_1=r_2=1$
\begin{equation}\label{eq:two possibilities of omegaP1 without branch cut}
    \omega^{(1)}_{\Pbb^1}=\frac{1}{z^2-z_1}\rd z \;, 
    \qquad 
    \omega^{(2)}_{\Pbb^1}=\frac{1}{(z-z_1)(z-z_2)}\rd z \;,
\end{equation}
Using \eqref{eq:branched covering map of cpm over P1} and \eqref{eq:one-form on cpm curve as the pullback of one-form on P1}, we see that
\begin{equation}
    \omega^{(1)}_{\cpn}\sim
    \left\{
    \begin{aligned}
        &\frac{1}{z^2-z_1}\rd z \;, &\qquad &\text{away from all $P_i$} \;,
        \\
        &\frac{Nz^{N-1}}{z^{2N}-z_1}\rd z \;, &\qquad &\text{near one of $P_i$} \;,.
    \end{aligned}
    \right.
\end{equation}
and 
\begin{equation}
    \omega^{(2)}_{\cpn}\sim
    \left\{
    \begin{aligned}
        &\frac{1}{(z-z_1)(z-z_2)}\rd z \;, &\qquad &\text{away from all $P_i$} \;,
        \\
        &\frac{Nz^{N-1}}{(z^N-z_1)(z^{N}-z_2)}\rd z \;, &\qquad &\text{near one of $P_i$} \;.
    \end{aligned}
    \right.
\end{equation}
If we perform a $\mbb{Z}_N$-transformation $z\to q z$, with $q$ being an $N$\textsuperscript{th} root of unity, none of these forms are invariant. Hence, neither is the correct form of the required one-form on $\cpn$. 

\smallskip As the second attempt, we consider the following one-form
\begin{equation}\label{eq:most general form of omegaP1 without zero}
    \omega_{\Pbb^1}=\frac{1}{\prod_{i=1}^{\#}(z^{r_i}-z_i)^{1/s_i}}\rd z \;,
\end{equation}
for two sets of positive integers $\{r_1,\ldots,r_{\#}\}$ and $\{s_1,\ldots,s_{\#}\}$. $s_i$s cannot be negative as that would lead to a zero of $\omega_{\Pbb^1}$. The above analysis tells us that for having $\mbb{Z}_N$-invariance, we should set\footnote{Instead of \eqref{eq:possible values of r_i for engineering cpm}, one can instead take the non-minimal choice $r_i=n_i N$ for positive integers $\{n_1,\ldots,n_{\#}\}$. However, this is not needed as we demand the minimal choice for $\mbb{Z}_N$-invariance.\label{ftn:choice of ris}}
\begin{equation}\label{eq:possible values of r_i for engineering cpm}
    r_1=\ldots=r_\#=N \;.
\end{equation}
Performing $z\mapsto 1/z$ transformation and demanding no zero for the resulting form, we get the constraint
\begin{equation}
    N\sum_{i=1}^{\#}\frac{1}{s_i}-2=0 \;.
\end{equation}
One of the simplest integer solutions of this constraint is\footnote{Similar to the choice of $r_i$s, explained in Footnote \ref{ftn:choice of ris}, this is the minimal solution. One can always choose non-minimal solutions such as $\#=4$ and $s_i=2N$.}
\begin{equation}
    \#=2 \;, \qquad s_1=s_2=N \;.
\end{equation}
We thus arrive at\footnote{Here, we have assumed that all the $N$\textsuperscript{th} root factors belong to the same branch, i.e. $\sqrt[\leftroot{-2}\uproot{2}N]{z^{t_1}}\times\sqrt[\leftroot{-2}\uproot{2}N]{z^{t_2}}=\sqrt[\leftroot{-2}\uproot{2}N]{z^{t_1+t_2}}$, for some $t_1$ and $t_2$, and similarly for other factors. If different factors belong to different branches of the $N$\textsuperscript{th} root function, we get an overall constant phase that does not affect manipulations.}
\begin{equation}\label{eq:one-form relevant for engineering cpm on P1}
    \omega_{\Pbb^1}=\frac{1}{\sqrt[\leftroot{-2}\uproot{2}N]{(z^N-z_1)(z^N-z_2)}}\rd z \;,
\end{equation}
from which, by using \eqref{eq:branched covering map of cpm over P1} and \eqref{eq:one-form on cpm curve as the pullback of one-form on P1}, we get
\begin{equation}\label{eq:one-form relevant for engineering cpm on SigmaN}
    \omega_{\cpn}\sim
    \left\{
    \begin{aligned}
        &\frac{1}{\sqrt[\leftroot{-2}\uproot{2}N]{(z^N-z_1)(z^N-z_2)}}\rd z \;, 
        &\qquad &\text{away from all $P_i$} \;,
        \\
        &\frac{Nz^{N-1}}{\sqrt[\leftroot{-2}\uproot{2}N]{(z^{N^2}-z_1)(z^{N^2}-z_2)}}\rd z \;, &\qquad &\text{near one of $P_i$} \;,
    \end{aligned}
    \right.
\end{equation}
which, by construction, is $\mbb{Z}_N$-invariant.\footnote{Note that away from a ramification point, $\omega_{\cpn}$ is invariant only up to an overall irrelevant constant phase that would not affect computations.} It is clear from this form that $\omega_{\cpn}$ has zeroes of order $N-1$ near a ramification point of $\pi_N$. The one-form $\omega_{\Pbb^1}$ \eqref{eq:one-form relevant for engineering cpm on P1} and the corresponding $\mbb{Z}_N$-invariant one-form $\omega_{\cpn}$ \eqref{eq:one-form relevant for engineering cpm on SigmaN} are the unique one-forms required for engineering the CPM within the 4d CS theory. However, these forms have the undesirable feature of having branch cuts, which introduce a sort of non-locality into the problem. This is expected since we are trying to describe a higher-genus curve in terms of data on $\Pbb^1$. On the other hand, as we will see in \S\ref{sec:lack of rapidity-difference property}, this structure explains the main property of the R-matrix of the model, namely the lack of rapidity-difference. 

\begin{rmk}[Why is the gCPM a Peculiar Integrable Spin System?]\label{rmk:why gCPM is peculiar?}\normalfont 
    Our construction of $\omega_{\Pbb^1}$ suggests why the CPM (also gCPM, as we will see below) is so special. We derived $\omega_{\Pbb^1}$ by demanding the topological invariance along $C$ and the $\mbb{Z}_N$ (or $\mbb{Z}_N^{n-1}$ in the case of the gCPM) invariance, which then fixes  $\omega_{\Pbb^1}$. To engineer any integrable spin model (not an integrable field theory), even those that are currently unknown, within 4d CS theory, it is not conceivable that the first requirement can be relaxed, as it is essential for the derivation of Yang--Baxter equation. However, this restricts the possibilities to either the two cases in \eqref{eq:two possibilities of omegaP1 without branch cut}, which has already been studied in \cite{CostelloWittenYamazaki201709}, or \eqref{eq:one-form relevant for engineering cpm on P1} (and \eqref{eq:one-form relevant for engineering gcpm on P1} below for the gCPM, also the more general situation discussed around \eqref{eq:requirement of not having a zero for omegaP1}). Any other choice of $\omega_{\Pbb^1}$ would necessarily have a zero, with the possibility of breaking the topological invariance along $C$. This seems to be a plausible explanation of why the gCPM is such a peculiar integrable spin model. Hence, it is not surprising that there are no other known integrable spin models whose curve of the spectral parameter has a genus greater than one. See also Remark \ref{rmk:importance of vanishing of dbar derivative of omegaP1} and also the discussion around \eqref{eq:requirement of not having a zero for omegaP1}.
\end{rmk}

\smallskip We can fix the branched points of \eqref{eq:one-form relevant for engineering cpm on P1} and \eqref{eq:one-form relevant for engineering cpm on SigmaN} as follows. Let $\{z_{1,1},\ldots,{z_{1,N}}\}$ and $\{z_{2,1},\ldots,{z_{2,N}}\}$ be the set of solutions of $z^N=z_1$ and $z^N=z_2$ and hence the $2N$ branched points of $\omega_{\Pbb^1}$. Let us first consider the case of $N=2$, where $\cpn$ is an elliptic curve, and see whether we can recover the standard one-form $\rd z$ on an elliptic curve. In this case, near a branch point, say $z_{1,1}$, $\omega_{\Pbb^1}$ takes the form $\omega_{\Pbb^1}\sim 1/\sqrt{z\times G(z)}\,\rd z$, for some holomorphic function $G(z)$ that is not zero at $z_{1,1}$. Upon pulling back to $\Sigma_2$, we have $\omega_{\Sigma_2}\sim 1/\sqrt{z\times G(z)}\,\rd z$ away from the ramification locus of $\pi$ and $\omega_{\Sigma_2}\sim (2z)/\sqrt{z^2\times G(z^2)}\,\rd z$ near a ramification point say $z_{1,1}$. Therefore, to recover the standard form $\rd z$, we need to take $z_{1,1}$ to be one of the branched points of $\pi$ (the factor $1/\sqrt{G(z^2)}$ is irrelevant). This observation suggests the fixing of the branched points $\{z_{1,1},\ldots,{z_{1,N}}\}$ and $\{z_{2,1},\ldots,{z_{2,N}}\}$ as follows: Let $\{Q_1,\ldots,Q_{2N}\}$ be the set of branched points of $\pi$ on $\Pbb^1$ defined by $Q_i:=\pi_N(P_i)$, we then set
\begin{equation}
    z_{1,i}=z_{Q_i} \;, \qquad z_{2,i}=z_{Q_{N+i}} \;, \qquad i=1,\ldots,N \;.
\end{equation}
We can thus write the one-forms \eqref{eq:one-form relevant for engineering cpm on P1} and \eqref{eq:one-form relevant for engineering cpm on SigmaN} in a more compact form as
\begin{equation}\label{eq:one-form relevant for engineering cpm on P1 depending on branched points of pi}
    \omega_{\Pbb^1}=\frac{1}{\sqrt[\leftroot{-2}\uproot{2}N]{\prod_{i=1}^{2N}(z-z_{Q_i})}}\rd z \;,
\end{equation}
and
\begin{equation}\label{eq:one-form relevant for engineering cpm on SigmaN depending on branched points of pi}
    \omega_{\cpn}\sim
    \left\{
    \begin{aligned}
        &\frac{1}{\sqrt[\leftroot{-2}\uproot{2}N]{\prod_{i=1}^{2N}(z-z_{Q_i})}}\rd z \;, &\qquad &\text{away from all $P_i$s} \;,
        \\
        &\frac{Nz^{N-1}}{\sqrt[\leftroot{-2}\uproot{2}N]{\prod_{i=1}^{2N}(z^N-z_{Q_i})}}\rd z \;, &\qquad &\text{near one of $P_i$s} \;.
    \end{aligned}
    \right.
\end{equation}
This concludes our construction of one-forms on $\Pbb^1$ and $\Sigma_N$.

\subsubsection{Gauge Lie Algebra, Boundary Conditions, and R-Matrix}\label{sec:gauge Lie algebra, boundary conditions, R-matrix}

In this section, we discuss the choice of the gauge Lie algebra for the theory, boundary conditions for the gauge fields, and the R-matrix.

\paragraph{The Gauge Lie Algebra.} We need to specify the gauge Lie algebra $\mfk{g}$ for the 4d CS theory, which appears explicitly in the definition of the theory in \eqref{eq:4dCS action}. In general, the spin models are related to the representation theory of quantum groups. In the case of the gCPM, it is known that the R-matrix of the model arises as the intertwiner of certain $N^{n-1}$-dimensional representations, the so-called minimal cyclic representations, of $\text{U}_q(\wh{\mfk{sl}}(n,\C))$ with $q$ being an $N$\textsuperscript{th} root of unity \cite{BazhanovKashaevMangazeevStroganov199105,DateJimboKeiMiwa199008,DateJimboKeiMiwa199103,DateJimboMikiMiwa199104}. Therefore, for $n=2$, one has to take the gauge Lie algebra of the 4d CS to be $\mfk{g}=\mfk{sl}(2,\C)$.

\paragraph{Boundary Conditions.} To formulate a perturbation theory of the 4d CS theory on $C\times\Pbb^1$, we need to impose boundary conditions on gauge fields. The location at which one needs to impose boundary conditions can be understood as follows. The only relevant term in the action is $\omega_{\Pbb^1}\wedge\text{Tr}_{\mfk{g}}(A\wedge\rd A)$.  Indeed only $\omega_{\Pbb^1}\wedge\text{Tr}_{\mfk{g}}(A\wedge\partial_{\bar{z}} A)$ is relevant since, upon integration by parts, it leads to a term $\partial_{\bar{z}}\omega_{\Pbb^1}\text{Tr}_{\mfk{g}}(A\wedge\delta A)$ in the variation of the action. In the case of other integrable spin-chain models, $\partial_{\bar{z}}\omega_{\Pbb^1}$ involves (a $\partial_{\bmb{z}}$-derivative of) delta functions at the location of the poles which, in turn, would lead to the term $\text{Tr}_{\mfk{g}}(A\wedge\delta A)|_{C\times\{\text{set of poles of $\omega_{\Pbb^1}$}\}}$ \cite[\S 9.1]{CostelloWittenYamazaki201709}. One thus needs to impose appropriate boundary conditions on the gauge fields at the poles of $\omega_{\Pbb^1}$ to formulate a well-defined perturbation theory around a classical solution to the equations of motion. In our case, however, regarding the form \eqref{eq:one-form relevant for engineering cpm on P1 depending on branched points of pi}, the situation is different due to the existence of branch points $\{z_{Q_1},\ldots,z_{Q_{2N}}\}$. Let us first compute $\partial_{\bar{z}}\omega_{\Pbb^1}$. From \eqref{eq:one-form relevant for engineering cpm on P1 depending on branched points of pi}, it is 
\begin{equation}\label{eq:zbar partial derivative of omegaP1}
    \partial_{\bar{z}}\omega_{\Pbb^1}=2\pi\mfk{i}\sum_{i=1}^{2N}\left\{\sqrt[\leftroot{-2}\uproot{2}N]{\frac{(z-z_{Q_i})^{N-1}}{\prod_{j=1,j\ne i}^{2N}(z-z_{Q_j})}}\delta^{(2)}(z-z_{Q_i},\bar{z}-\bar{z}_{Q_i})\right\}\rd z \;,
\end{equation}
with $\delta^{(2)}(z,\bar{z})$ denotes the delta-function distribution.
However, this expression simply vanishes: away from the branched points, it vanishes due to delta functions, and at a branched point, say $z_{Q_i}$, it is zero due to the factor $(z-z_{Q_i})^{(N-1)/N}$ in the numerator. Hence
\begin{equation}\label{eq:vanishing of zbar derivative of omega}
    \partial_{\bar{z}}\omega_{\Pbb^1}=0 \;.
\end{equation}
Therefore, the term $\partial_{\bar{z}}\omega_{\Pbb^1}\wedge\text{Tr}_{\mfk{g}}(A\wedge\delta A)$ in the variation of the action is automatically zero, and we do not need to impose boundary conditions on the gauge field to kill it. 

\smallskip For completeness, let us explain the derivation of \eqref{eq:zbar partial derivative of omegaP1}. Without loss of generality, we take a branch of the $N$\textsuperscript{th}-root function and take all factors involving the $N$\textsuperscript{th}-root function to belong to the same branch. We first multiply the numerator and denominator of $\omega_{\Pbb^1}$ in \eqref{eq:one-form relevant for engineering cpm on P1 depending on branched points of pi} by\footnote{We can choose different branches for different factors but the final result of vanishing of $\partial_{\bar{z}}\omega_{\Pbb^1}$ will not be affected as different branches are related by an overall phase. Here, we have chosen the same branch to be ably to multiply $\sqrt[\leftroot{-2}\uproot{2}N]{z}\times\sqrt[\leftroot{-2}\uproot{2}N]{z}=\sqrt[\leftroot{-2}\uproot{2}N]{z^2}$ and similar type of manipulations without dealing with extra phases which would not affect the final result.}
\begin{equation}
    \left(\prod_{j=1}^{2N}(z-z_{Q_j})\right)^{(N-1)/N} \;,
\end{equation}
which gives
\begin{equation}\label{eq:different expression for omegaN}
    \omega_{\Pbb^1}=\frac{\left(\prod_{j=1}^{2N}(z-z_{Q_j})\right)^{(N-1)/N}}{\left(\prod_{j=1}^{2N}(z-z_{Q_j})\right)}\rd z \;.
\end{equation}
As the numerator is holomorphic, its $\partial_{\bar{z}}$-derivative vanishes, and we only need to deal with the denominator. By using the identity
\begin{equation}
    \partial_{\bar{z}}\left(\frac{1}{z-z_0}\right)=2\pi\mfk{i}\,\delta^{(2)}(z-z_0,\bar{z}-\bar{z}_0) 
\end{equation}
in the $\partial_{\bar{z}}$-derivative of the denominator of \eqref{eq:different expression for omegaN} repeatedly, we arrive at the expression \eqref{eq:zbar partial derivative of omegaP1}.

\smallskip As $\cpn$ and $C$ do not have boundaries, we do not have any boundary condition. Since we are not setting up a perturbation theory, we would not pursue this further in this work. It is further evidence that the perturbation theory is not the right approach to the computation of R-matrix for the CPM and one indeed needs to resort to a non-perturbative definition. 

\begin{rmk}[Behavior of the Gauge Field at Ramification Points]\normalfont
    To make the action finite near a ramification locus, we demand the mildest vanishing behavior that cures the issue. From \eqref{eq:one-form relevant for engineering cpm on P1}, we see that near a branched point, $\omega_{\Pbb^1}$ behaves as $z^{-1/N}$. As the action is quadratic in the gauge field, even a linear vanishing behavior would do the job. Note that this is {\it not} a boundary condition and we demand it to make the integrand of the action \eqref{eq:4dCS action} of the 4d CS theory well-defined.
\end{rmk}

\paragraph{The R-Matrix and Bazhanov--Stroganov Procedure.}

The R-matrix of CPM can be presented in various ways: As the original weights $\mscr{W}$ and $\overline{\mscr{W}}$, as vertex weights, or as IRF weights. These are related to each other \cite{Wu197110,KadanoffWegner197112}. The computation of the R-matrix of the CPM from the 4d CS theory should follow the same procedure as the other integrable lattice models \cite{CostelloWittenYamazaki201709}. As the R-matrix is the intertwiner of tensor product of two $N$-dimensional minimal cyclic representations of $\text{U}_q(\wh{\mfk{sl}}(2,\C))$ at $q$ being a root of unity, one takes two lines in this representation, and then computes the R-matrix as in Fig. \ref{fig:reprentation of R-matrix of CPM as a vertex model}.

\begin{figure}[H]
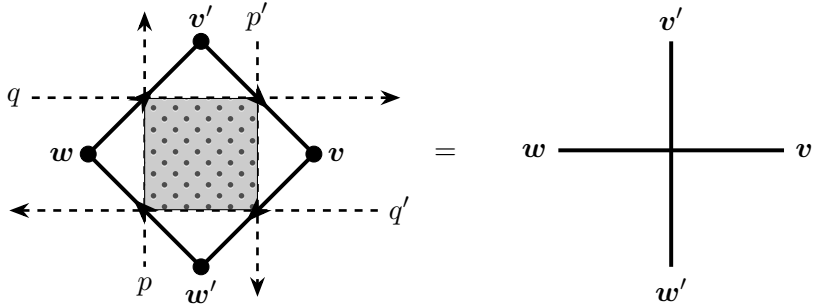

    \centering
    \ElementaryBoxEqualRMatrix
    \caption{The representation of R-matrix of CPM as a vertex model. The two lines in the right carry minimal cyclic representations of $\text{U}_q(\wh{\mfk{sl}}(2,\C)$ and their intertwiner computes the R-matrix.}
    \label{fig:reprentation of R-matrix of CPM as a vertex model}
\end{figure} 

By a procedure due to Bazhanov and Stroganov, it turns out that, at least for odd $N$, one only needs to compute the intertwiner of a minimal cyclic representation and a highest-weight spin-$1/2$ representation \cite{BazhanovStroganov199005}. One starts from the R-matrix of the six-vertex model at $q^N=1$ for odd $N$. The corresponding YB equation is satisfied by intertwiners of spin-$1/2$ representations, denoted as $\mcal{R}_{\mcal{V}_{\frac{1}{2}}\mcal{V}'_{\frac{1}{2}}}$. Let us assume that one can construct the intertwiner $\mcal{R}_{\mcal{V}_{\text{C}},\mcal{V}'_{\frac{1}{2}}}$ of one minimal cyclic representation and one spin-$1/2$ representation. Then the YB equation associated with the triple tensor product $\mcal{V}_{\text{C}}\otimes\mcal{V}'_{\text{C}}\otimes\mcal{V}_{\frac{1}{2}}$ would depend quadratically on $\mcal{R}_{\mcal{V}_{\text{C}},\mcal{V}'_{\frac{1}{2}}}$ and linearly on $\mcal{R}_{\mcal{V}_{\text{C}}\mcal{V}'_{\text{C}}}$, the interwiner of two minimal cyclic representations. Hence one can solve it for $\mcal{R}_{\mcal{V}_{\text{C}}\mcal{V}'_{\text{C}}}$ easily. $\mcal{R}_{\mcal{V}_{\text{C}}\mcal{V}'_{\text{C}}}$ is related to the R-matrix of CPM by some choice of parameters \cite[\S 4]{BazhanovStroganov199005}. Of course, according to our arguments in \S\ref{sec:curve of spectral parameter as a branched cover of P1}, any computation of this sort has to be performed in a nonperturbative completion of the 4d CS theory. We will not attempt it in this work.

\subsubsection{Lack of the Rapidity-Difference Property}\label{sec:lack of rapidity-difference property}

Up to this point, we have discussed the engineering of CPM within the 4d CS theory. However, to boost our proposal, we still need to argue for the main property of the R-matrix of CPM, i.e.\ the lack of rapidity-difference property. This property means that the R-matrix of an integrable spin model satisfies
    \begin{equation}
        \mcal{R}(z_1,z_2)=\mcal{R}(z_1-z_2) \;,
    \end{equation}
    i.e.\ the R-matrix, as a function of the rapidity parameter, is a {\it single-valued} function of the difference of rapidities on the curve of the spectral parameter. In the Belavin--Drinfeld classification of solutions to classical YB equation, the rapidity-different property is assumed \cite{BelavinDrinfeld198207,BelavinDrinfeld1998}. This assumption restricts the possible choice of the curve of the spectral parameter to $\C,\C^\times,$ and the elliptic curve $\mbb{E}_\tau$, where in the latter case the complex structure $\tau$ can appear as a dynamical parameter in the YB equation \cite{Felder199407,Felder199412}. This result is compatible with the formal group law. Once, the rapidity-difference property is relaxed, it would not be surprising that there could be R-matrices where the curve of the spectral parameter has a genus higher than one. This is obvious because even the meaning of $z_1-z_2$ on a higher-genus curve is unclear.

\smallskip In our discussion, the lack of this property follows from the structure of $\omega_{\Pbb^1}$, given in \eqref{eq:one-form relevant for engineering cpm on P1}, in particular having branch cuts. We consider $\cpn$ as a branched cover of $\Pbb^1$ and studied everything on $C\times\Pbb^1$ including the absence of the rapidity-difference property of the R-matrix. Although we do not set up a perturbation theory, we can still get a glimpse into the origin of this feature of the model in the formulation on $C\times\Pbb^1$. Recall that the equations of motion of the 4d CS theory on $C\times\Pbb^1$ are given by
\begin{equation}
    \omega_{\Pbb^1}\wedge F=0 \;,
\end{equation}
where $F:=\rd A+\frac{1}{2}[A,A]$ is the curvature of the connection $A$ and $\rd=\rd x\partial_x+\rd y\partial_y+\rd\bar{z}\partial_{\bar{z}}$ with $x$ and $y$ denote the local coordinate on the topological plane $C$. Therefore, the free propagator two-form $P$ of the theory satisfies the following relation
\begin{equation}
    \frac{\mfk{i}}{2\pi}\omega_{\Pbb^1}\wedge\rd P=\delta^{(4)}(x,y,z,\bar{z})\,\rd\text{Vol}_{C\times\Pbb^1} \;,
\end{equation}
where $\delta^{(4)}(x,y,z,\bar{z})$ denotes the four-dimensional delta-function distribution on $C\times\Pbb^1$ and $\rd\text{Vol}_{C\times\Pbb^1}$ is the volume form of $C\times\Pbb^1$. Using \eqref{eq:vanishing of zbar derivative of omega}, we can write this equation as
\begin{equation}
    \frac{\mfk{i}}{2\pi}\rd z\wedge\rd\left(\frac{P}{\sqrt[\leftroot{-2}\uproot{2}N]{\prod_{i=1}^{2N}(z-z_{Q_i})}}\right)=\delta^{(4)}(x,y,z,\bar{z})\,\rd\text{Vol}_{C\times\Pbb^1} \;.
\end{equation}
What appears inside the bracket has already been computed in a Lorentz-like gauge in \cite[eq.\ (4.5)]{CostelloWittenYamazaki201709}. Hence ($P=\text{Tr}_{\mfk{g}}(P^{ab}J_aJ_b)$ with $\{J_1,\ldots,J_{\dim\mfk{g}}\}$ is a basis for the gauge Lie algebra $\mfk{g}$ of the 4d CS theory)
\begin{equation}\label{eq:propagator for 4d CS in the presence of branch points}
    P^{ab}(x,y,z,\bar{z})=\left(\frac{\delta^{ab}}{2\pi}\cdot \frac{\sqrt[\leftroot{-2}\uproot{2}N]{\prod_{i=1}^{2N}(z-z_{Q_i})}}{(x^2+y^2+z\bar{z})^2}\right)\left(x\rd y\wedge\rd\bar{z}+y\rd \bar{z}\wedge\rd x+2\bar{z}\rd x\wedge\rd y\right) \;.
\end{equation}
In the cases that the R-matrix of the integrable spin model can be computed perturbatively, the classical R-matrix is essentially calculated by integrating the $\rd x\wedge \rd y$ component of $P$ along $x$ and $y$ where two Wilson lines are stretched. In those cases, where the factor $\sqrt[\leftroot{-2}\uproot{2}N]{\prod_{i=1}^{2N}(z-z_{Q_i})}$ is absent, the rapidity-difference property is preserved as the R-matrix is a {\it single-valued} function of the rapidity parameter \cite[eqs.\ (4.11)--(4.14)]{CostelloWittenYamazaki201709}. However, in the case of CPM, the presence of the factor $\sqrt[\leftroot{-2}\uproot{2}N]{\prod_{i=1}^{2N}(z-z_{Q_i})}$ in \eqref{eq:propagator for 4d CS in the presence of branch points} is a sign of breaking of the rapidity-difference property. It is lost simply because \eqref{eq:propagator for 4d CS in the presence of branch points} is not a single-valued function of $z$. In a full, nonperturbative treatment of the 4d CS theory, the propagator \eqref{eq:propagator for 4d CS in the presence of branch points} is certainly modified but it is expected that the modification would not restore the rapidity-difference property. Therefore, the lack of this crucial feature of the R-matrix of the CPM  can be seen from the formulation on $C\times\Pbb^1$, although in a non-rigorous and heuristic manner. 

\subsection{Generalization to gCPM}\label{sec:generalization to gCPM}

What we have explained so far for CPM can be readily generalized to the gCPM. We would only quote the final results, as the interested reader can easily fill in the details by an obvious generalization of the arguments for the CPM.

\smallskip The curve $\pc$ defined in \eqref{eq:ZNxn-1 quotient of the curve of spectral parameter of gCPM} can be realized as a branched cover of $\Pbb^1$.  From \eqref{eq:data of gCPM curve as a complete intersection},\footnote{Recall that $\pc$ is the free $\mbb{Z}_N^{n-1}$-quotient of $\cpc$.} the branched covering map $\pi_{N,n}:\pc\to\Pbb^1$ is given by explicitly by
(in a suitable coordinate $z$ defined in a local neighborhood)
\begin{equation}
    \pi_{N,n}\sim 
    \left\{
    \begin{aligned}
        &z \;, &\qquad  &{\text{away from the ramification points} \;,}
        \\
        & z^{N^{(n-1)}} \;, &\qquad  &\hphantom{away}{\text{near a ramification point} \;,}
    \end{aligned}
    \right.
\end{equation} 
and has degree
\begin{equation}
    \deg \pi_{N,n}=N^{n-1} \;.
\end{equation}
The degree of the ramification divisor again follows from \eqref{eq:data of gCPM curve as a complete intersection} to be
\begin{equation}
    \deg R_{\pi_{N,n}}=2N^{(n-1)}(N-1)(n-1) \;,
\end{equation}
hence there are $2N^{n-1}$ ramification points each of which has the ramification index $(N-1)(n-1)+1$. We denote these points as $\{P_1,\ldots,P_{2N^{n-1}}\}$ and the corresponding branch points as $\{Q_1,\ldots,Q_{2N^{n-1}}\}$. 

\smallskip The one-forms on $\Pbb^1$ and $\pc$ are given by $(\#,r_i,s_i)=(2,N^{n-1},N^{n-1})$  in \eqref{eq:most general form of omegaP1 without zero})
\begin{equation}\label{eq:one-form relevant for engineering gcpm on P1}
    \omega_{\Pbb^1}=\frac{1}{\sqrt[\leftroot{-2}\uproot{2}N^{n-1}]{\prod_{i=1}^{2N^{n-1}}(z-z_{Q_i})}}\rd z \;,
\end{equation}
and
\begin{equation}\label{eq:one-form on the curve of spectral parameter of gCPM}
    \omega_{\pc}\sim
    \left\{
    \begin{aligned}
        &\frac{1}{\sqrt[\leftroot{-2}\uproot{2}N^{n-1}]{\prod_{i=1}^{2N^{n-1}}(z-z_{Q_i})}}\rd z \;, &\qquad &\text{away from all $P_i$s \;,}
        \\
        &\frac{N^{n-1}z^{N^{n-1}-1}}{\sqrt[\leftroot{-2}\uproot{2}N^{n-1}]{\prod_{i=1}^{2N(n-1)}(z^{N^{n-1}}-z_{Q_i})}}\rd z \;, &\qquad &\text{near one of $P_i$s \;.}
    \end{aligned}
    \right.
\end{equation}
The following relation still holds
\begin{equation}\label{eq:vanishing of zbar derivative of omegaP1 for gCPM}
    \partial_{\bmb{z}}\omega_{\Pbb^1}=0 \;.
\end{equation}
The relevant gauge Lie algebra of the 4d CS would be $\mfk{sl}(n,\C)$. The question of boundary condition and the lack of rapidity-difference property are verbatim for the case of the CPM. 
 
\smallskip With these results at hand, we conclude our discussion of the engineering of gCPM within 4d CS theory.

\section{Origin of the Hyperbolic Monopole/gCPM Correspondence}
\label{sec:on the origin of the correspondence}

In \S\ref{sec:generalized corresponence}, we established a correspondence between the spectral data of hyperbolic $\SU{n}$-monopoles and a class of $\mathbb{Z}_N^{n-1}$ gCPMs. An immediate question is whether this is just pure coincidence or there is a deeper reason for it. In this section, we argue that this correspondence is tied to the existence of 6d and eventually 10d theories whose two different incarnations describe the two sides of the hyperbolic monopole/gCPM correspondence. 

\smallskip The six-dimensional theory turns out to be a holomorphic version of CS theory in six dimensions introduced by Witten in the context of the open-string field theory \cite{Witten199207}
\begin{equation}\label{eq:6dCS action}
    S=\frac{1}{2\pi}\bigintsss_{X}\Omega\wedge\CS(\mcal{A}) \;,
\end{equation}
with 
\begin{equation}\label{eq:CS term in the case of 6d hCS theory}
    \CS(\mcal{A}):=\text{Tr}_{\mfk{g}}\left(\mathcal{A}\wedge\bmb{\partial}\mathcal{A}+\frac{2}{3}\mathcal{A}\wedge\mathcal{A}\wedge\mathcal{A}\right) \;.
\end{equation}
Here, $X$ is a complex three-manifold, $\Omega$ is a section of the canonical bundle of $X$ of degree $(3,0)$, $\mathcal{A}$ is a connection on a principal $G$-bundle over $X$ with $\mfk{g}:=\text{Lie}(G)$, and $\bmb{\partial}$ is the $(0,1)$-part of the exterior derivative $\rd=\partial+\bmb{\partial}$. As usual, we are working with the corresponding associated bundle $\mcal{W}\to X$. Let $(z,w_1,w_2)$ be the local holomorphic coordinates on $X$, then
\begin{equation}\label{eq:expansion of A and del bar in 6d hCS in terms of local coordinate}
    \mcal{A}=\mcal{A}_{\bmb{z}}\rd\bmb{z}+\sum_{i=1}^2 \mcal{A}_{\bmb{w}_i}\rd\bmb{w}_i \;,
    \qquad 
    \bmb{\partial}=\rd\bmb{z}\partial_{\bmb{z}}+\sum_{i=1}^2\rd\bmb{w}_i\partial_{\bmb{w}_i} \;.
\end{equation}
In general, to integrate $\Omega \wedge \CS(\mcal{A})$, a global section $\Omega$ of the canonical bundle of $M$ is chosen.  To have a global section, the canonical bundle has to be trivial, which is the case of Calabi--Yau (CY) manifolds (see also \cite{BarberisDottiVerbitsky200712} and references in \cite[pg.\ 1]{Tosatti201401} for other examples). The condition for being CY is not necessary and \eqref{eq:6dCS action} can be defined on more general complex three-manifolds \cite[\S 2]{DonaldsonThomas199606}. In such situations, $\Omega$ is not a holomorphic volume form but is still a global section of the canonical bundle of $X$. 
The complex structure is irrelevant for integration, and one can always work with the underlying smooth manifold. One can define the integration on the underlying smooth manifold by choosing a partition of unity $\{\phi_\alpha\}$ supported on open sets of an open covering $\{U_\alpha\}$ of $X$. Then, the integration on $X$ can be defined using the partition of unity arguments by defining the integral only on $U_\alpha$. Therefore, working on general complex manifolds for which there is no holomorphic volume form poses no issue. 

\smallskip To relate this theory to hyperbolic monopoles on the one hand and the gCPM on the other hand, the basic idea is that both of these systems are intimately connected to four-dimensional physics: As we have seen in \S\ref{sec:hyperbolic monopoles and spectral data}, hyperbolic monopoles of arbitrary mass can be constructed from instantons on $\mcal{U}$, defined in \eqref{eq:definition of future-directed time-like vectors}, the solid cone of future-directed time-like vectors on the Minkowski space $\mbb{R}^{1,3}$. Hence, the relevant four-dimensional spacetime in this case is $\R^{1,3}$ or an open set $\mcal{U}$ thereof. On the other hand, based on our observations in \S\ref{sec:gCPM and 4d CS theory}, gCPM is related to 4d CS theory on $C\times\Pbb^1$ with a certain choice of the one-form $\omega$ given in \eqref{eq:one-form relevant for engineering cpm on P1} and \eqref{eq:one-form relevant for engineering gcpm on P1}. It is then natural to think that this theory is the dimensional reduction of a six-dimensional theory on $T^2\times C \times\Pbb^1$, where $T^2$ is a torus with a small volume and becomes a copy of $\R^2$ in the microscopic regime. Therefore, the relevant spacetime in this case is $\R^2\times C$, which can be taken to be $\R^4$ if we take $C=\R^2$. Due to the involvement of $\Pbb^1$, the relevant six-dimensional space turns out to be the (right-handed) projective spinor bundle\footnote{To be more pedantic, we should say right-handed or left-handed projective spinor bundle and write $\Pbb S_\pm(M)$ for these bundles. However, to avoid cluttering, we sometimes omit the right-/left-handed adjective. In the following, we assume this is understood.} of a four-dimensional spacetime $M$ which we denote as $\sbm$. This space is defined through an incidence relation which relates events in spacetime $M$ to points in its twistor space $\mcal{Z}_M$. For a spacetime $M$ of arbitrary signature, $\sbm\simeq M\times\Pbb^1$ as a smooth manifold, which has a natural almost-complex structure \cite[\S 9.1]{WardWells199108}. This almost-complex structure is integrable if and only if $M$ is antiself-dual \cite[Theorem 4.1]{AtiyahHitchinSinger197809}.\footnote{Under the Hodge-star operation, the Weyl tensor of a Riemannian four-manifold $M$ is decomposed as $W=W_++W_-$, where $W_\pm$ corresponds to the two eigenvalues of the star operation. $M$ is self-dual if $W_-=0$ and anti-self-dual if $W_+=0$. For more details, see \cite[\S 1, pg.\ 428]{AtiyahHitchinSinger197809}. Everything can be stated in terms of the left-handed projective spinor bundle.} In both cases of interest, $\R^4$ and $\R^{1,3}$, the spacetime is antiself-dual and indeed flat. Therefore, $\sbm$ has an integrable complex structure, hence a three-dimensional complex manifold. Since the 6d hCS theory, defined in \eqref{eq:6dCS action}, depends only on the choice of a complex structure, we can define it on $\sbm$
\begin{equation}\label{eq:6d hCS action on projective spinor bundle}
    S=\frac{1}{2\pi}\bigintsss_{\sbm}\Omega\wedge\CS(\mcal{A}) \;.
\end{equation}
The salient feature of $\Pbb S(M)$ is that it defines a double fibration
\begin{equation}
\begin{tikzcd}\label{fig:correspondence defined by projective spinor bundle}
    & \arrow[ld,"\pi_{\mcal{Z}_M}" swap] \sbm\arrow[rd,"\pi_M"] &
    \\
    \mcal{Z}_M & & M
\end{tikzcd} 
\end{equation}
where $\pi_M$ and $\pi_{\mcal{Z}_M}$ would be defined momentarily. We will not need many details of twistor geometry. In the notation of \cite[\S 1.4]{Adamo201712}, a point of the projective spinor bundle is the equivalence class of pairs $[(x^{\alpha\dot{\alpha}},\psi_\beta)]$, where $\alpha,\dot{\alpha}=1,2$ are the 2-spinor indices,
\begin{equation}\label{eq:2-spinor notation of spacetime events}
    [x^{\alpha\dot{\alpha}}]:=
    \begin{pmatrix}
        x^0+x^3 & x^1-\mfk{i}x^2
        \\
        x^1+\mfk{i}x^2 & x^0-x^3
    \end{pmatrix} \;,
\end{equation}
with $(x^0,x^1,x^2,x^3)$ is a spacetime point and $\psi_\beta$ is a projective spinor defined up to a multiplication by $\lambda\in\C^{\times}$. Hence $([x^{\alpha\dot{\alpha}}],\psi_\beta)\sim ([x^{\alpha\dot{\alpha}}],\lambda\psi_\beta)$. $\pi_{M}(([x^{\alpha\dot{\alpha}}],\psi_\beta))=[x^{\alpha\dot{\alpha}}]$ is a the projection to the spacetime points while $\pi_{\mcal{Z}_M}(([x^{\alpha\dot{\alpha}}],\psi_\beta))=(x^{\beta\dot{\alpha}}\psi_\beta,\psi_\alpha)$ imposes the incidence relation.

\smallskip To realize the two sides of the correspondence, we proceed as follows:

\begin{enumerate}
    \item [--] {\small\bf The Hyperbolic Monopole Side.} We take $M=\R^{1,3}$ or the open subset $\mcal{U}\subset\R^{1,3}$,\footnote{The condition of being an (anti)self-dual manifold descends from $\R^{1,3}$ to its open sets including $\mcal{U}$. It is an open subset of $\R^{1,3}$ endowed with the usual Euclidean topology. This can be seen as follows: The cone of time-like vectors with respect to an event $x\in\R^{1,3}$, which we denote as $\text{Con}^{\text{T}}(x)$, is defined by $\left\{y\in\R^{1,3}\,|\,-(x_0-y_0)^2+\sum_{i=1}^3(x_i-y_i)^2 <0 \right\}$. We then define $f:\R^{1,3}\to\R$ by $f(u):=-(x_0-u_0)^2+\sum_{i=1}^3(x_i-u_i)^2$, which is obviously a continuous function. We have $f^{-1}(-\infty,0)=\text{Con}^{\text{T}}(x)-\{x\}$. As $(-\infty,0)$ is open in $\R$ and $f$ is continuous, we conclude that $\text{Con}^{\text{T}}(x)-\{x\}$ is open in Euclidean topology. On the other hand, $\text{Con}^{\text{T}}(x)-\{x\}=\mcal{U}_+\cup\,\mcal{U}_-$, where $\mcal{U}_+=\mcal{U}$ and $\mcal{U}_-$ denote the cone of future-directed and past-directed time-like vectors, respectively. Since $\mcal{U}_+$ and $\mcal{U}_-$ are disjoint, and their union is open, both of them have to be open in Euclidean topology. If either or both of $\mcal{U}_\pm$ is closed, then their union would be closed. We emphasize that a general embedded submanifold of an (anti)self-dual manifold is not necessarily (anti)self-dual.} defined in \eqref{eq:definition of future-directed time-like vectors}. Then, the spaces appearing in \eqref{fig:correspondence defined by projective spinor bundle} become
    \begin{equation}\label{eq:choice of spaces in double fibration for hyperbolic monopole side}
        M=\mcal{U} \;,
        \qquad \mcal{Z}_M=\tsu \;,
        \qquad \sbm=\mscr{C}_{\mcal{U}} \;,
    \end{equation}
    as in \eqref{fig:correspondence space for u}.
    A further reduction of this double fibration by the dilatation (i.e.\ $\R_+$-action) will give the double fibration \eqref{fig:correspondence space for twistor space} and the construction explained in \S\ref{sec:hyperbolic monopoles and spectral data}. We shall explain how the 6d hCS theory describes antiself-dual instantons on $\mcal{U}$, and hence hyperbolic monopoles.

    \item[--] {\small\bf The gCPM Side.} We take $M=\R^4$, and the space appearing in \eqref{fig:correspondence defined by projective spinor bundle} become
    \begin{equation}\label{eq:choice of spaces in double fibration for gCPM side}
        M=\R^4 \;, 
        \qquad \mcal{Z}_M=\tsrf=\R^4\times\Pbb^1\to\Pbb^1 \;,
        \qquad \sbm=\tsrf \;,
    \end{equation}
    where $\tsrf$ denotes the twistor space of $\R^4$ and the fibration is holomorphic. We then argue how the dimensional reduction of 6d hCS theory to $\R^2\times\Pbb^1$ gives the 4d CS theory that describes the gCPM. 
\end{enumerate}

\smallskip Let us now explain how it works in detail.

\subsection{Hyperbolic Monopoles from Six Dimensions}\label{sec:hyperbolic monopoles from six dimensions}

We first discuss how to get hyperbolic monopoles from six dimensions. 

\smallskip Using \eqref{eq:choice of spaces in double fibration for hyperbolic monopole side}, the double fibration \eqref{fig:correspondence defined by projective spinor bundle} becomes
\begin{equation}\label{fig:double fibration in the case of instantons on U}
\begin{tikzcd}
    & \arrow[ld,"\pi_{\tsu}" swap] \Pbb S(\mcal{U})\arrow[rd,"\pi_\mcal{U}"] &
    \\
    \tsu & & \mcal{U}
\end{tikzcd}
\end{equation}
where $\Pbb S(\mcal{U})$ is an $\R_+$-fibration over $\mcal{Z}_\mcal{U}$ \cite{Porter198305}. We first explain how to realize self-dual connections on $\mcal{U}$ in terms of the 6d hCS theory. Upon $\R_+$-action, these connections will give solutions of the Bogomolny equation in the hyperbolic space. Finally, we elaborate on how to realize the characteristic features of hyperbolic monopoles, i.e.\ their masses and charges, and their spectral data in this picture.  

\subsubsection{Instantons from 6d Chern--Simons Theory}\label{sec:instantons from 6d CS theory}

Let us explain how instantons on $\mcal{U}$ descend from six dimensions. The basic idea is to use the results of Murray and Singer in \cite[Propositions 3.2 and 3.3]{MurraySinger199607}, as we briefly recalled in \hyperlink{correspondence between bundles over the double fibration}{Correspondence Between Bundles over the Double Fibration \eqref{fig:correspondence space for u}} in \S\ref{sec:the twistor correspondence}. According to \cite[Proposition 3.2]{MurraySinger199607}, there is a one-to-one correspondence between CR-bundles over $\tsu$ and integrable bundles over $\mscr{C}_\mcal{U}$, and according to \cite[Proposition 3.3]{MurraySinger199607}, there is a one-to-one correspondence between the latter and bundles with self-dual connections on a vector bundle over $\mcal{U}$. These results together with \eqref{fig:correspondence space for u} and \eqref{fig:double fibration in the case of instantons on U} give the following diagram
\begin{equation}\label{fig:square of correspondences for instantons on U}
\begin{tikzcd}
    & \arrow[ld,"\mu" swap] \mscr{C}_\mcal{U}\arrow[rd,"\nu"] &
    \\
    \mcal{Z}_{\mcal{U}} & & \mcal{U}
    \\
    & \arrow[ul,"\pi_{\tsu}"] \Pbb S(\mcal{U})\arrow[ur,"\pi_\mcal{U}" swap] &
\end{tikzcd}.
\end{equation}
The missing link is the following
\begin{prop}\normalfont \label{prop:one-to-one correspondence between CR bundles on ZU and holomorphic vector bundles on PS+U}
    Let $\mcal{U}$ be the open cone of future-directed timelike vectors in Minkowski space, $\tsu$ be its five-dimensional twistor space, and $\Pbb S(\mcal{U})$ be its projective spinor bundle. Then, there is a one-to-one correspondence between CR-bundles over $\tsu$ and holomorphic vector bundles over $\Pbb S(\mcal{U})$ equipped with a Hermitian structure. 
\end{prop}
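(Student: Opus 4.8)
The plan is to identify the $\R_+$-invariant picture of this statement with the already-established correspondences and then strip away the quotient. More precisely, the key observation is that $\Pbb S(\mcal{U}) = \mscr{C}_{\mcal{U}}$ as spaces in the double fibration (this is the content of \eqref{eq:choice of spaces in double fibration for hyperbolic monopole side}), and $\mscr{C}_{\mcal{U}}$ already sits in the double fibration \eqref{fig:correspondence space for u} over $\tsu$ via $\mu$, while $\Pbb S(\mcal{U})$ sits over $\tsu$ via $\pi_{\tsu}$. So the first step is to verify that these two maps agree: $\mu = \pi_{\tsu}$ (and $\nu = \pi_\mcal{U}$) under the identification $\mscr{C}_{\mcal{U}} \simeq \Pbb S(\mcal{U})$. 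This is essentially a check that the incidence relation $w = x\cdot z$ defining $\mscr{C}_{\mcal{U}}$ coincides with the incidence relation $\pi_{\mcal{Z}_M}([x^{\alpha\dot\alpha}],\psi_\beta) = (x^{\beta\dot\alpha}\psi_\beta,\psi_\alpha)$ defining $\Pbb S(M)$ when $M = \mcal{U}$ — a routine translation between the Hermitian-matrix and $2$-spinor notations already set up in the excerpt.

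\textbf{Main argument.} Granting this identification, the correspondence we want is exactly the one between CR-bundles over $\tsu$ and integrable bundles over $\mscr{C}_{\mcal{U}}$ established in \cite[Proposition 3.2]{MurraySinger199607} (and recalled in the excerpt around \eqref{eq:bundles on ZU from bundles on CU}), \emph{except} that on the $\Pbb S(\mcal{U})$ side we are now asking for genuine \emph{holomorphic} vector bundles rather than integrable $\bar d$-bundles, and we want a Hermitian structure. Thus the second and central step is to promote an integrable $\bar d$-bundle $(\wt E, \bar d_{\wt E})$ on $\mscr{C}_{\mcal{U}}$ to a holomorphic bundle on $\Pbb S(\mcal{U})$: since $\Pbb S(\mcal{U})$ carries an integrable almost-complex structure (because $\mcal{U}$ is flat, hence antiself-dual, so the Atiyah--Hitchin--Singer criterion applies — exactly as quoted in \S\ref{sec:on the origin of the correspondence}), the operator $\bar d_{\wt E}$ restricted to $\Lambda^{0,1}$ together with integrability $\bar d^2_{\wt E} = 0$ is precisely a Dolbeault operator, and the Newlander--Nirenberg / Koszul--Malgrange theorem upgrades it to a holomorphic structure. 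Conversely a holomorphic bundle gives back its $\bar\partial$-operator, which is integrable. The Hermitian structure is added by hand: any smooth complex vector bundle admits a Hermitian metric (partition of unity), and the correspondence of \cite[Proposition 3.2]{MurraySinger199607} is compatible with the reality/Hermitian data on the CR side, so one transports the Hermitian structure along the correspondence; one should check it is well-defined up to the natural equivalences (parallel transport along fibers of $\mu$, as in the discussion after \eqref{eq:bundles on ZU from bundles on CU}).

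\textbf{Assembling the diagram.} Finally I would assemble \eqref{fig:square of correspondences for instantons on U}: composing the CR/integrable correspondence on the left edge with the identification $\mscr{C}_{\mcal{U}} = \Pbb S(\mcal{U})$ gives the desired bijection, and one notes it is the ``missing link'' that completes the square — so that a CR-bundle on $\tsu$ corresponds, going around either way, to a self-dual connection on $\mcal{U}$ (via \cite[Proposition 3.3]{MurraySinger199607}) and to a holomorphic Hermitian bundle on $\Pbb S(\mcal{U})$. I expect the \emph{main obstacle} to be the precise bookkeeping of which equivalences to mod out by on each side — CR-bundles on $\tsu$ up to CR-isomorphism, integrable bundles on $\mscr{C}_{\mcal{U}}$ up to the parallel-transport equivalence along $\mu$-fibers, holomorphic bundles on $\Pbb S(\mcal{U})$ up to holomorphic isomorphism — and in particular showing the non-degeneracy / fiber-triviality hypotheses match up, together with a careful argument that the Hermitian structure is canonically determined (not just existent) by the CR-side data. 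The almost-complex-to-complex upgrade itself, though it invokes a nontrivial theorem, is standard once flatness of $\mcal{U}$ is invoked; the spinor-vs-matrix incidence-relation check is bookkeeping.
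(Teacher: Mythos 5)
Your route is viable, but it is not the paper's, and the difference is worth spelling out. The paper proves Proposition \ref{prop:one-to-one correspondence between CR bundles on ZU and holomorphic vector bundles on PS+U} directly: in one direction it pulls a CR bundle $(W,\bar\partial_W)$ back along the $\R_+$-fibration $\pi_{\tsu}:\Pbb S(\mcal{U})\to\tsu$ and checks $\bar\partial^2_{\pi^*W}=\pi^*(\bar\partial_W^2)=0$; in the converse direction it restricts a holomorphic bundle along a section of that fibration and proves independence of the section by using the Hermitian structure to produce the unique compatible (Chern) connection whose $(0,1)$-part is $\bar\partial_{\mcal{W}}$, and then parallel-transporting along the one-dimensional fibers to get a canonical isomorphism intertwining the induced CR operators. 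So in the paper the Hermitian structure is load-bearing: it is precisely what makes the inverse map well defined. You instead identify $\Pbb S(\mcal{U})\simeq\mscr{C}_{\mcal{U}}$ (consistent with \eqref{eq:choice of spaces in double fibration for hyperbolic monopole side}) and outsource the correspondence to \cite[Proposition 3.2]{MurraySinger199607}, then upgrade integrable $\bar d$-bundles to holomorphic bundles via Koszul--Malgrange. That is a legitimate shortcut and buys brevity, but it leaves two debts unpaid. First, you must verify that the $(p,q)$-bigrading entering Murray--Singer's notion of an integrable bundle on $\mscr{C}_{\mcal{U}}$ is the Dolbeault bigrading of the integrable (Atiyah--Hitchin--Singer) complex structure on $\Pbb S(\mcal{U})$; only then is ``integrable $\bar d$-bundle'' literally ``holomorphic bundle'' and your Koszul--Malgrange step applies verbatim. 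Second, your handling of the Hermitian structure is muddled: a CR bundle in this proposition carries no Hermitian datum to ``transport,'' and a metric chosen by partition of unity is not canonically determined, so as stated this does not set up anything one-to-one; more importantly, it misses that in the paper the metric is an input used to prove section-independence rather than an output to be constructed (in your route that burden is hidden inside the equivalences of Murray--Singer's proposition, and you should say so explicitly). Finally, your worry about non-degeneracy is misplaced at this stage: it plays no role in this proposition and only enters in Corollary \ref{cor:one-to-one correspondence between self-dual connection on U and holomorphic vector bundle on PS+(U)}.
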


\begin{proof}
    Let $(W,\bmb{\partial}_W)$ be a CR bundle over $\tsu$. Then, we define a vector bundle over $\Pbb S(\mcal{U})$ and a $\bmb{\partial}$ operator on it by 
    \begin{equation}
        \mcal{W}:=\pi_{\tsu}^*W \;,
        \qquad  \bmb{\partial}_{\mcal{W}}:=\pi_{\tsu}^*\bmb{\partial}_W \;.
    \end{equation}
    As a CR vector bundle is always a complex vector bundle equipped with a CR structure, its pullback is also a complex vector bundle. It is easy to see that $\bmb{\partial}_{\mcal{W}}$ induces a holomorphic structure on the complex vector bundle $\mcal{W}$
    \begin{eqaligned}
        \bmb{\partial}_{\mcal{W}}^2&=\bmb{\partial}_{\mcal{W}}\wedge \bmb{\partial}_{\mcal{W}}
        \\
        &=\pi_{\tsu}^*\bmb{\partial}_W\wedge \pi_{\tsu}^*\bmb{\partial}_W
        \\
        &=\pi_{\tsu}^*(\bmb{\partial}_W\wedge \bmb{\partial}_W)
        \\
        &=0 \;,
    \end{eqaligned}
    where in the last line, we have used the fact that $\bmb{\partial}_W$ defines a CR structure on $\tsu$ (i.e.\ $\bmb{\partial}_W\wedge \bmb{\partial}_W=0$). Hence, $(\mcal{W},\bmb{\partial}_\mcal{W})$ is a holomorphic vector bundle over $\Pbb S(\mcal{U})$.

    \smallskip Conversely, assume $(\mcal{W},\bmb{\partial}_\mcal{W})$ is a holomorphic vector bundle on $\Pbb S(\mcal{U})$. Then, we define a bundle over $\tsu$ by
    \begin{equation}
        W:=s^*_{\tsu}\mcal{W} \;, \qquad \bmb{\partial}_W:=s^*_{\tsu}\mcal{W} \;,
    \end{equation}
    where $s_{\tsu}$ is a section of the $\R_+$-projection $\Pbb S(\mcal{U})\to\tsu$. A similar argument shows that $(W,\bmb{\partial}_W)$ has a CR structure. If $s'_{\tsu}$ is another section of $\Pbb S(\mcal{U})\to\tsu$, with the corresponding CR vector bundle $(W',\bmb{\partial}_{W'})$ given by $W'={s'}_{\tsu}^*\mcal{W}$ and $\bmb{\partial}_{W'}={s'}_{\tsu}^*\bmb{\partial}_{\mcal{W}}$, we need to show that $(W,\bmb{\partial}_W)$ and $(W',\bmb{\partial}_{W'})$ are isomorphic CR vector bundles. Let $\gamma\in\tsu$ and define a curve $c:[0,1]\to\Pbb S(\mcal{U})$ with $c(0)=s_{\tsu}(\gamma)$ and $c(1)=s'_{\tsu}(\gamma)$. As $\mcal{W}$ is equipped with a Hermitian structure, $\bmb{\partial}_{\mcal{W}}$ defines the $(0,1)$-part of a unique connection $\nabla$ on $\mcal{W}$ whose curvature is of type $(1,1)$.\footnote{For a holomorphic vector bundle $\mcal{W}\to X$ with a Hermitian structure $H$ on a complex manifold $X$, there exists a unique compatible $H$-connection $\mcal{A}$ whose $(0,1)$-part coincides with the holomorphic structure on $\mcal{W}$ \cite[Theorem 10.3]{Moroianu200703}. It is easy to see that the curvature of this connection is of type $(1,1)$.} We define a parallel section $\psi\in\Gamma(\mcal{W},\Pbb S(\mcal{U}))$, which satisfies $\nabla_{\dot{c}}\psi=0$ and $\psi(c(0))=w$ with $w\in\rho^{-1}(c(0))$ is fixed, $\rho:\mcal{W}\to\Pbb S(\mcal{U})$ is the vector bundle projection, and $\dot{c}(t)$ is the tangent to the curve $c(t)$. Such a section is unique, as it satisfies a first-order differential equation and a fixed boundary condition. Then, by construction, $\psi(c(1))=w'\in\rho^{-1}(c(1))$, which provides a linear isomorphism between the fibers $\mcal{W}_{c(0)}$ and $\mcal{W}_{c(1)}$ over $c(0)$ and $c(1)$. As $W_\gamma=s^*_{\tsu}\mcal{W}_{c(0)}$ and $W'_\gamma={s'}^*_{\tsu}\mcal{W}_{c(1)}$, we end-up with  a unique bundle map $\Xi_{\psi}:W\to W'$ which intertwines the holomorphic structures, i.e.\ $\bmb{\partial}_{W'}\circ\Xi_\psi=\Xi_\psi\circ\bmb{\partial}_W$. Therefore, $(W,\bmb{\partial}_W)$ and $(W',\bmb{\partial}_{W'})$ are isomorphic CR vector bundles. 

    \smallskip We can thus conclude the proposition.  
\end{proof}

This result has an immediate consequence

\begin{cor}\normalfont\label{cor:one-to-one correspondence between self-dual connection on U and holomorphic vector bundle on PS+(U)}
    There is a one-to-one correspondence between bundles with self-dual connections on $\mcal{U}$ and holomorphic vector bundle on $\Pbb S(\mcal{U})$ equipped with a Hermitian structure.
\end{cor}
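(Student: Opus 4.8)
The plan is to chain together the correspondences already established in the excerpt. The statement we must prove, Corollary \ref{cor:one-to-one correspondence between self-dual connection on U and holomorphic vector bundle on PS+(U)}, asserts a one-to-one correspondence between vector bundles with self-dual connections on $\mcal{U}$ and holomorphic vector bundles on $\Pbb S(\mcal{U})$ equipped with a Hermitian structure. The key observation is that both sides have already been related to the same intermediate object, namely CR-bundles on the five-dimensional twistor space $\tsu$. Indeed, putting together \cite[Propositions 3.2 and 3.3]{MurraySinger199607}, recalled in \hyperlink{correspondence between bundles over the double fibration}{Correspondence Between Bundles over the Double Fibration \eqref{fig:correspondence space for u}}, one obtains a one-to-one correspondence between bundles with self-dual connections on $\mcal{U}$ and CR-bundles on $\tsu$ (via the intermediate step of non-degenerate integrable bundles on $\mscr{C}_\mcal{U}$). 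On the other hand, Proposition \ref{prop:one-to-one correspondence between CR bundles on ZU and holomorphic vector bundles on PS+U} gives a one-to-one correspondence between CR-bundles on $\tsu$ and holomorphic vector bundles on $\Pbb S(\mcal{U})$ with a Hermitian structure.

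Concretely, I would argue as follows. First I would invoke the composite of \cite[Propositions 3.2 and 3.3]{MurraySinger199607}: a vector bundle $\wt{E}\to\mcal{U}$ with a self-dual connection $\wt{A}$ produces, via pullback by $\nu$ in \eqref{fig:correspondence space for u}, a non-degenerate integrable bundle $(\nu^*\wt{E},\nu^*\mcal{D}_{\wt{A}})$ on $\mscr{C}_\mcal{U}$ (this is integrable precisely because self-duality forces $\nu^*\wt{F}$ to satisfy \eqref{eq:curvature of connection on C}), and pushing this down via a section of $\mu$ yields a CR-bundle on $\tsu$; conversely the pullback by $\mu$ and the inverse Ward transform reconstruct the self-dual connection, all modulo the appropriate equivalences. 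Second, I would apply Proposition \ref{prop:one-to-one correspondence between CR bundles on ZU and holomorphic vector bundles on PS+U} to convert the CR-bundle on $\tsu$ into a holomorphic vector bundle on $\Pbb S(\mcal{U})$ with a Hermitian structure and back. Composing the two bijections — and checking that the equivalence relations on each side match up — gives the desired correspondence. The diagram \eqref{fig:square of correspondences for instantons on U} encodes exactly this composition of arrows, with $\Pbb S(\mcal{U})=\mscr{C}_\mcal{U}$ being the common middle space of both double fibrations.

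The one genuinely substantive point is ensuring compatibility of the two constructions through the common space $\mscr{C}_\mcal{U}=\Pbb S(\mcal{U})$. The correspondence of Murray--Singer factors through sections of $\mu:\mscr{C}_\mcal{U}\to\tsu$, while Proposition \ref{prop:one-to-one correspondence between CR bundles on ZU and holomorphic vector bundles on PS+U} factors through sections of the $\R_+$-projection $\Pbb S(\mcal{U})\to\tsu$. One must verify that $\mu$ and this $\R_+$-projection are (up to the identification of the two models of the six-manifold) the same map, so that the two notions of ``parallel transport along fibers'' used to show well-definedness agree. I expect this identification to be the main obstacle, though it is essentially a bookkeeping matter: both fibrations have fiber $\mscr{C}_\mcal{U}\to\tsu$ isomorphic to the geodesics' parameter line, and the almost-complex/CR structures are matched by construction since the holomorphic structure on $\Pbb S(\mcal{U})$ is induced from the CR structure on $\tsu$ precisely as in \hyperlink{some relevant bundles}{Some Relevant Bundles}. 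A secondary check is that the Hermitian structure assumed on $\Pbb S(\mcal{U})$ is not an extra datum that obstructs the bijection: on the self-dual side it corresponds to the choice of a unitary reduction of the structure group (which can always be made, since any complex vector bundle admits a Hermitian metric), and its presence is exactly what makes the holomorphic structure determine a unique $(1,1)$-connection used in the proof of Proposition \ref{prop:one-to-one correspondence between CR bundles on ZU and holomorphic vector bundles on PS+U}. Once these compatibilities are in place, the corollary follows immediately by transitivity of the two one-to-one correspondences.
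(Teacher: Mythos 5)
Your proposal is correct and follows essentially the same route as the paper, whose proof of the corollary is simply that it follows immediately by composing \cite[Propositions 3.2 and 3.3]{MurraySinger199607} with Proposition \ref{prop:one-to-one correspondence between CR bundles on ZU and holomorphic vector bundles on PS+U}. Your additional compatibility check between the projection $\mu$ and the $\R_+$-projection $\Pbb S(\mcal{U})\to\tsu$ is a reasonable bookkeeping remark, but it does not change the argument, which is the same transitivity-of-bijections proof the paper gives.
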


\begin{proof}
    It immediately follows from \cite[Propositions 3.2 and 3.3]{MurraySinger199607} and Proposition \ref{prop:one-to-one correspondence between CR bundles on ZU and holomorphic vector bundles on PS+U}.
\end{proof}

\smallskip Recall that for any connection $\mcal{A}$ on a complex vector bundle $\mcal{W}\to X$ over a complex manifold, we have 
\begin{equation}
    \mcal{F}^{(0,2)}=(\bmb{\partial}_{\mcal{A}})^2 \;,
\end{equation}
with $\mcal{F}=\rd\mcal{A}+\mcal{A}\wedge\mcal{A}$ is the curvature of $\mcal{A}=\mcal{A}^{(1,0)}+\mcal{A}^{(0,1)}$, and $\bmb{\partial}_{\mcal{A}}:=\bmb{\partial}+\mcal{A}^{(0,1)}$.  Therefore, if $\mcal{F}^{(0,2)}=0$, the $(0,1)$-part of the connection defines a holomorphic structure on $\mcal{W}$ given by $\bmb{\partial}_{\mcal{A}}^{(0,1)}$, which then would be an integrable operator on $\mcal{W}$.\footnote{The existence of a holomorphic structure on a complex vector bundle over a complex manifold is non-trivial. For a result in this direction see \cite{BanicaPutinar198706}.} Putting these results together, there is a clear way to obtain self-dual connections on $\mcal{U}$ from the 6d hCS theory: it amounts to showing that the 6d hCS theory, as formulated in \eqref{eq:6d hCS action on projective spinor bundle}, is the field theory description of connections on a vector bundle $\mcal{W}\to\Pbb S(\mcal{U})$ whose curvature is of type $(1,1)$. 

\smallskip It is easy to show that the 6d hCS theory is such a theory. The equations of motion of \eqref{eq:6d hCS action on projective spinor bundle} are
\begin{equation}\label{eq:equation of motion of 6d hCS theory}
    \Omega\wedge\mcal{F}=0 \;.
\end{equation}
$\mcal{F}$ has the decomposition $\mcal{F}=\mcal{F}^{(2,0)}+\mcal{F}^{(1,1)}+\mcal{F}^{(0,2)}$ and  \eqref{eq:equation of motion of 6d hCS theory} implies $\mcal{F}^{(0,2)}=0$. There are no other constraints on the components of $\mcal{F}$. On the other hand, recall that $\mcal{A}$ in \eqref{eq:expansion of A and del bar in 6d hCS in terms of local coordinate} has only anti-holomorphic components, hence $\mcal{F}^{(2,0)}=0$ automatically. Therefore, $\mcal{F}=\mcal{F}^{(1,1)}$ and is of type $(1,1)$. As such, the 6d hCS theory serves the purpose of describing connections on a holomorphic bundle over $\Pbb S(\mcal{U})$ whose curvature is of type $(1,1)$. By Corollary \ref{cor:one-to-one correspondence between self-dual connection on U and holomorphic vector bundle on PS+(U)}, it then follows that any connection of this sort would give rise to a self-dual connection $\wt{A}$ on $\mcal{U}$. To construct this self-dual connection explicitly, we choose a section $s_{\mcal{U}}:\mcal{U}\to\Pbb S(\mcal{U})$ and define 
\begin{equation}\label{eq:self-dual connection on U in terms of connection on PS+(U)}
    \wt{A}:=s_\mcal{U}^*\mcal{A} \;.
\end{equation}
The commutativity of the diagram \eqref{fig:square of correspondences for instantons on U} ensures the existence and uniqueness of $\wt{A}$.

 \smallskip As a final comment, we would like to highlight \cite{Popov199803,Popov199806} where the following was realized:
 Let $X$ be a three-dimensional complex manifold realized as a $\Pbb^1$-fibration over a self-dual Riemannian spacetime $M$.
 Then the condition of holomorphicity of a bundle $\mcal{W}\to X$ is described by the 6d hCS theory on $X$ and leads to self-duality equations on $M$. In particular, $X$ could be the twistor space of $\R^4$. However, the method used in loc.\ cit. is different than the ones we used here. This relation was later proposed more concretely in \cite{Costello202004} based on which the realization of self-duality in the form of the Yang equation \cite{Yang197706} was worked out explicitly in \cite{BittlestonSkinner202011}.

\subsubsection{Hyperbolic Monopoles as Dilatation-Invariant Instantons}\label{sec:hyperbolic monopoles as dilatation-invariant instantons}

As we explained in \S\ref{sec:hyperbolic monopoles and spectral data}, to construct hyperbolic $\SU{n}$-monopoles with arbitrary masses, Murray and Singer realize hyperbolic $\SU{n}$-monopoles as dilatation-invariant instantons on the open cone $\mcal{U}$ of future-pointing time-like vectors on the Minkowski spacetime \cite{MurraySinger199607}. Let us first establish this fact in an elementary fashion.

\smallskip Recall that $\mcal{U}$, defined in \eqref{eq:definition of future-directed time-like vectors}, is an open set of the Minkowski space and itself is a semi-Riemannian manifold. From the description of $\Hbb^3$ given in \eqref{eq:hyperboloid model of H3}, we can find an orthogonal decomposition of the tangent bundle of $\mcal{U}$ using the usual Minkowski inner product. Consider the embedding $\iota:\Hbb^3\to\mcal{U}$. Then,
\begin{equation}
    \iota^*T\mcal{U}\simeq T\Hbb^3\oplus N_{\Hbb^3/\mcal{U}} \;,
\end{equation}
where $N_{\Hbb^3/\mcal{U}}$ is the normal bundle of $\Hbb^3$ in $\mcal{U}$. Dually, we have
\begin{equation}\label{eq:decomposition of cotangent bundle}
    \iota^*T^*\mcal{U}\simeq T^*\Hbb^3\oplus N^\vee_{\Hbb^3/\mcal{U}} \;,
\end{equation}
where $N^\vee_{\Hbb^3/\mcal{U}}$ is the conormal bundle dual to $N_{\Hbb^3/\mcal{U}}$. Then, the metric can be decomposed as
\begin{equation}
    g_{\mcal{U}}=g_{\Hbb^3}\oplus g_{N_{\Hbb^3/\mcal{U}}} \;,
\end{equation}
where $g_{\Hbb^3}$ and $g_{N_{\Hbb^3/\mcal{U}}}$ are metrics on  $T\Hbb^3$ and $N_{\Hbb^3/\mcal{U}}$, respectively. Since $g_{N_{\Hbb^3/\mcal{U}}}\in \Gamma(\text{Sym}^2(N^\vee_{\Hbb^3/\mcal{U}}))$, a section of the second symmetric power of the conormal bundle $N^\vee_{\Hbb^3/\mcal{U}}$, we can always choose a local coordinate such that its only component is $-1$, as it is time-like, and hence $\det g_{\mcal{U}}=-\det g_{\Hbb^3}$. From \eqref{eq:decomposition of cotangent bundle}, we have a decomposition of a connection $\wt{A}$ on a vector bundle on $\mcal{U}$ as 
\begin{equation}\label{eq:decomposition of A on U in terms of A on H3 and Higgs field}
    s_\pi^*\wt{A}=A_{\Hbb^3}+\phi\,\rd x^\perp \;,
\end{equation}
with $s_\pi$ is a section of $\mcal{U}\to\Hbb^3$, and $A_{\Hbb^3}$ and $\phi$ denote the components of $\wt{A}$ along $\Hbb^3$ and $x^\perp$, a local coordinate normal to $\Hbb^3$.  Since we take $\mcal{U}$ as an $\R_+$-fibration over $\Hbb^3$, the normal direction is $\R_+$ whose local coordinate is $x^\perp$. On the other hand, the self-duality equation is\footnote{To avoid cluttering, we avoid using $s_\pi^*\wt{A}$. Everything involving gauge potential $s_\pi^*\wt{A}$ should be understood as $\wt{A}\circ s_\pi$.}
\begin{eqaligned}
    \wt{F}&=\wt{F}_{\rho'\sigma'}\rd x^{\rho'}\wedge\rd x^{\sigma'}=(\star_{\mcal{U}}\wt{F})=\wt{F}_{\mu\nu}\star_{\mcal{U}}\rd x^\mu\wedge\rd x^\nu
    \\
    &=\frac{1}{2}\sqrt{|\det g_\mcal{U}|} g_\mcal{U}^{\mu\mu'}g_\mcal{U}^{\nu\nu'}\varepsilon_{\mu'\nu'\rho'\sigma'}\wt{F}_{\mu\nu}\rd x^{\rho'}\wedge\rd x^{\sigma'} \;.
\end{eqaligned}
Assuming that $\wt{A}$ is $x^\perp$-independent (i.e.\ dilatation-invariant) $\wt{F}_{ix^\perp}=D_i\phi$, and we get
\begin{eqaligned}
    {F_{\Hbb}}_{ij}&=\frac{1}{2}\varepsilon_{ijkx^\perp}\sqrt{\det g_{\Hbb^3}}g_{\mcal{U}}^{kl}g_{\mcal{U}}^{x^\perp x^\perp}\wt{F}_{lx^\perp}+\frac{1}{2}\varepsilon_{ijx^\perp k}\sqrt{\det g_{\Hbb^3}}g_{\mcal{U}}^{x^\perp x^\perp}g_{\mcal{U}}^{lk} \wt{F}_{x^\perp l}
    \\
    &=-\frac{1}{2}\varepsilon_{x^\perp ijk}\sqrt{\det g_{\Hbb^3}}g_{\mcal{U}}^{kl}g_{\mcal{U}}^{x^\perp x^\perp}\wt{F}_{lx^\perp}-\frac{1}{2}\varepsilon_{x^\perp ij k}\sqrt{\det g_{\Hbb^3}}g_{\mcal{U}}^{x^\perp x^\perp}g_{\mcal{U}}^{lk} \wt{F}_{l x^\perp}
    \\
    &=\sqrt{\det g_{\Hbb^3}}\varepsilon_{ij}{}{}^k D_k\phi=(\star_{\Hbb^3}D\phi)_{ij} \;,
\end{eqaligned}
where $F_{\Hbb^3}:=\rd A_{\Hbb^3}+A_{\Hbb^3}\wedge A_{\Hbb^3}$ and in the last line, we have used $g_{\mcal{U}}^{x^\perp x^\perp}=-1$. Comparing to \eqref{eq:Bogomolny equation}, we see that the dilatation-invariant self-dual instantons on $\mcal{U}$ give the Bogomolny equation on $\Hbb^3$.  

\smallskip Under the $\R_+$-quotient, we arrive at the following double fibration

\begin{equation}\label{fig:double fibration in the case of monopoles on H3}
\begin{tikzcd}
    & \arrow[ld,"\mu" swap] \tsu \arrow[rd,"\nu"] &
    \\
    \ts & & \Hbb^3
\end{tikzcd}.
\end{equation}
with $\tsu\simeq \Pbb S(\mcal{U})/\R_+$, and by abuse of notation, we have used the same notation for the projection maps as is used for the case of instantons in diagram \eqref{fig:double fibration in the case of instantons on U}.

\begin{rmk}[Bundle of Self-Dual Two Forms on the Minkowski Space]\label{rmk:bundle of self-dual two-forms on the Minkowski space}\normalfont
    We emphasize that the bundle of self-dual two-forms on the Minkowski space is not real and hence there could not be any real self-dual two-form on the Minkowski space. In particular no real solution to the self-duality equation on the Minkowski space exists. To get such real solutions, one has to impose certain reality conditions on complex solutions, as was implemented by Murray and Singer \cite[\S 3.3]{MurraySinger199607} and we briefly reminded in \hyperlink{reality condition}{The Reality Condition} in \S\ref{sec:the twistor correspondence}. \qed 
\end{rmk}

With the relation \eqref{eq:self-dual connection on U in terms of connection on PS+(U)}, the decomposition \eqref{eq:decomposition of A on U in terms of A on H3 and Higgs field}, and the double fibration \eqref{fig:double fibration in the case of monopoles on H3} at hand, we can easily deduce the properties of hyperbolic monopoles. These are done in \S\ref{sec:masses, charges, and spectral data from six dimensions}.

\subsubsection{Masses, Charges, and Spectral Data from Six Dimensions}\label{sec:masses, charges, and spectral data from six dimensions}

Let us explain how to recover the masses, charges, and spectral data of hyperbolic monopoles from six dimensions. 

\paragraph{Masses and Charges from Six Dimensions.} Having the six-dimensional gauge field $\mcal{A}$, we can define a self-dual gauge field on $\mcal{U}$ using \eqref{eq:self-dual connection on U in terms of connection on PS+(U)}, and then decompose it using \eqref{eq:decomposition of A on U in terms of A on H3 and Higgs field}, where both $A_{\Hbb^3}$ and $\phi$ are $x^\perp$-independent. Hence, we can identify 
\begin{equation}\label{eq:connection and Higgs field on H3 in terms of 6d connection}
    A_{\Hbb^3}=((s_\pi\circ s_{\mcal{U}})^*\mcal{A})_{\Hbb^3} \;, 
    \qquad \phi=((s_\pi\circ s_{\mcal{U}})^*\mcal{A})_{x^\perp} \;,
\end{equation}
where $s_\pi:\Hbb^3\to\mcal{U}$ is a section of the $\R_+$ projection, and the subscripts denote the components of $s_{\mcal{U}}^*\mcal{A}$ along those directions. 

\smallskip The definition of charges of a hyperbolic $G$-monopole, where $\mfk{g}=\text{Lie}(G)$ in \eqref{eq:6d hCS action on projective spinor bundle}, with the maximal symmetry-breaking pattern is given in \eqref{eq:definition of magnetic charges of a monopole}. From \eqref{eq:connection and Higgs field on H3 in terms of 6d connection}, we have
\begin{eqaligned}
    m_i=\omega_i(2\star_{S^2_\infty}{F_{\Hbb^3}}\big|_{S^2_\infty}) \;,
    \qquad i=1,\ldots,\rnk{g} \;,
\end{eqaligned}
and $F_{\Hbb^3}$ is given explicitly in terms of $\mcal{A}$ as
\begin{eqaligned}
    \qquad F_{\Hbb^3}:&=\rd A_{\Hbb^3}+A_{\Hbb^3}\wedge A_{\Hbb^3}
    \\
    &=\rd ((s_\pi\circ s_{\mcal{U}})^*\mcal{A})_{\Hbb^3}+((s_\pi\circ s_{\mcal{U}})^*\mcal{A})_{\Hbb^3}\wedge ((s_\pi\circ s_{\mcal{U}})^*\mcal{A})_{\Hbb^3} \;.
\end{eqaligned}
Recall that $\omega_i$s are the set of fundamental weights of $\mfk{g}$.

\smallskip On the other hand, the masses of a hyperbolic $G$-monopole are defined in \eqref{eq:definition of masses of hyperbolic monopoles}. We can thus identify the masses of the hyperbolic monopoles as
\begin{equation}
    p_i=\omega_i\left((s_\pi\circ s_{\mcal{U}})^*\mcal{A})_{x^\perp}\big|_{S^2_\infty}\right) \;,
    \qquad i=1,\ldots,\rnk{\mfk{g}} \;.
\end{equation}

We thus successfully expressed the masses and charges of hyperbolic $G$-monopoles very explicitly from six dimensions. 

\begin{rmk}[The Choice of Gauge Lie Algebra for 6d hCS Theory]\label{rmk:choice of gauge lie algebra for 6d hCS theory, the hyperbolic monopole side}\normalfont
    We still need to specify the gauge Lie algebra of 6d theory in \eqref{eq:6d hCS action on projective spinor bundle}, which through the above procedure gives the gauge group $\SU{n}$ for the hyperbolic monopole. From \eqref{eq:connection and Higgs field on H3 in terms of 6d connection}, we see that the natural choice is $\mfk{g}=\mfk{su}(n)$. However, the 6d hCS is naturally formulated using a complex Lie algebra. Hence, $\mfk{g}=\mfk{sl}(n,\C)$ is the natural candidate. We have explained in Remark \ref{rmk:bundle of self-dual two-forms on the Minkowski space} that some reality conditions must be imposed to make real solutions possible. Imposing the reality conditions leaves us with the real forms of $\mfk{sl}(n,\C)$, which are either the compact real form $\mfk{su}(n)$, or the split (i.e.\ noncompact) real form $\mfk{sl}(n,\R)$. The compactness of the gauge group follows from the charge quantization \cite{Yang197004}. The latter, through the Dirac quantization condition, is connected to the existence of monopoles \cite{GoddardNuytsOlive1977,GoddardOlive197809}. Hence, there are no magnetic monopole solutions for non-compact gauge groups. We are thus left with the only possibility for the gauge group of the hyperbolic monopole which is $\SU{n}$ whose gauge Lie algebra $\mfk{su}(n)$ is the compact real form of the gauge Lie algebra $\mfk{sl}(n,\C)$ of the 6d hCS theory. See also Remark \ref{rmk:choice of gauge lie algebra for 6d hCS theory, the gCPM side}   \qed 
\end{rmk}

The next task is to work out the spectral data, which we do next.

\paragraph{Spectral Data from Six Dimensions.}

Let us finally explain how to recover the spectral data of a hyperbolic $\SU{n}$-monopole from six dimensions. Let $\mcal{W}\to\Pbb S(\mcal{U})$ be a holomorphic vector bundle equipped with a holomorphic structure $\bmb{\partial}_W=\bmb{\partial}+\mcal{A}$, where $\mcal{A}$ is a solution to the equations of motion \eqref{eq:equation of motion of 6d hCS theory} of the 6d hCS theory. By Corollary \ref{cor:one-to-one correspondence between self-dual connection on U and holomorphic vector bundle on PS+(U)}, we can choose a section $s_{\mcal{U}}:\mcal{U}\to\Pbb S(\mcal{U})$ and define the corresponding bundle on $\mcal{U}$ by
\begin{equation}\label{eq:bundle on U from bundle W on PS+(U)}
    \wt{V}:=s^*_\mcal{U}\mcal{W} \;,
\end{equation}
with a self-dual connection as \eqref{eq:self-dual connection on U in terms of connection on PS+(U)}. Furthermore, from \eqref{eq:bundles on H3 from bundles on U}, we can define the corresponding bundles $\wt{E}\to\Hbb^3$ as
\begin{equation}\label{eq:bundle on H3 from W bundle on PS+(U)}
    \wt{E}:=s_{\pi}^*\wt{V}=(s_\pi\circ s_\mcal{U})^*\mcal{W} \;.
\end{equation}
It is equipped with a connection and Higgs field that by \eqref{eq:decomposition of A on U in terms of A on H3 and Higgs field} are given in \eqref{eq:connection and Higgs field on H3 in terms of 6d connection}. This is essentially what one needs to construct the spectral data of hyperbolic $\SU{n}$-monopole. Let us briefly summarize the construction of the spectral data
\begin{enumerate}
    \item [(1)] We first impose boundary conditions, as explained in \hyperlink{reality condition}{Boundary Conditions} in \S\ref{sec:the twistor correspondence}.

    \item [(2)] By the breaking of gauge invariance near $S^2_\infty$ as $\SU{n}\to\U{1}^{n-1}$, from \eqref{eq:vector bundle EU as a direct sum of EUi}, the bundle \eqref{eq:bundle on H3 from W bundle on PS+(U)} would have a direct-sum decomposition
    \begin{equation}
        \wt{E}=\bigoplus_{i=1}^n \wt{E}_i \;,
    \end{equation}
    where each one-dimensional bundle $E_i$ is given by \eqref{eq:definition of 1d bundles tildeE on U} and as such is determined by the Higgs field \eqref{eq:connection and Higgs field on H3 in terms of 6d connection}.

    \item [(3)] From these bundles, one can define bundles
    \begin{equation}
        E^\pm=\pi^*_\pm\wt{E} \;,
    \end{equation}
    with $\pi_\pm$ are given in \eqref{eq:sections pipm of the correspondence}, which in turn, by the virtue of \eqref{eq:definition of bundles Epmi over ZH3}, will have the decompositions
    \begin{equation}
        E^+_i=\pi^*_+\left(\bigoplus_{j=1}^{i}\wt{E}_{n-j+1}\right) \;, \qquad 
        E^-_i=\pi^*_-\left(\bigoplus_{j=1}^{i}\wt{E}_{j}\right) \;.
    \end{equation}
    By Theorem \ref{thr:twistor correspondence theorem}, these subbundles sit in filtrations
    \begin{equation}
        0\simeq E^\pm_0\subset E^\pm_1\subset \ldots\subset E^\pm_{n-1}\subset E^\pm_n\simeq E^\pm \;.
    \end{equation}
    Note that we have to impose boundary conditions to be able to use Theorem \ref{thr:twistor correspondence theorem}.
    
    \item[(4)] One can then define the $i$\textsuperscript{th} spectral curve $S_i$, using \eqref{eq:the map Gamma-}, as the divisor of the map
    \begin{equation}
        \det E_i^-\to \det E^+/E^+_{n-1} \;, \qquad i=1,\ldots,n-1 \;,
    \end{equation}
    which defines the spectral data of a generic monopole by the definition given in \hyperlink{spectral data}{Spectral Data} in \S\ref{sec:spectral data and recovering monopole solutions}. 

    \item[(5)] The spectral curves, by construction, would recover the monopole solution by what we have explained in \S\ref{sec:spectral data and recovering monopole solutions}.
\end{enumerate}

One can thus reconstruct the spectral data of hyperbolic $\SU{n}$-monopoles from six dimensions. With this result, we conclude our construction of hyperbolic $\SU{n}$-monopole from the 6d hCS theory. 

\subsection{Generalized CPM from Six Dimensions}\label{sec:gCPM from six dimensions}

In this section, we discuss the other side of the hyperbolic monopole/gCPM correspondence. Based on discussions in \S\ref{sec:gCPM and 4d CS theory}, we explain how gCPM can be realized from the viewpoint of 6d hCS theory.

\smallskip From \eqref{fig:correspondence defined by projective spinor bundle} and \eqref{eq:choice of spaces in double fibration for gCPM side}, we consider the following double fibration
\begin{equation}\label{fig:double fibration in the case of gCPM}
\begin{tikzcd}
    & \arrow[ld,"\pi_{\tsrf}" swap] \sbe\arrow[rd,"\pi_{\R^4}"] &
    \\
    \tsrf & & \R^4
\end{tikzcd},
\end{equation}
where $\pi_{\tsrf}$ is trivial in this case since $\sbe\simeq \tsrf$ with $\tsrf$ being the twistor space of $\R^4$. From this, we would perform the following steps: We consider the 6d hCS theory on the $\sbe\simeq \tsrf$. These spaces have many equivalent descriptions. We take the description  $\R^4\times\Pbb^1\to\Pbb^1$, where the projection is holomorphic. We explain how the integration over an $\R^2$ plane (or rather its compactification to a 2-torus) inside $\R^4\times\Pbb^1$ will result in the 4d holomorphic BF (hBF) theory on $\R^2\times\Pbb^1$. The latter theory would reduce to the 4d CS on $\R^2\times\Pbb^1$ in the limit of the small volume of the torus. We use a specific holomorphic volume form $\Omega$ on $\R^4\times\Pbb^1$, which will coincide with the one-form $\omega_{\Pbb^1}$ given in \eqref{eq:one-form relevant for engineering gcpm on P1}. Based on our discussion in \S\ref{sec:gCPM and 4d CS theory}, we recover the gCPM from this description of the 4d CS theory, hence from six dimensions. 

\paragraph{Twistor Space of $\R^4$.} We have $\sbe=\R^4\times\Pbb^1$ as a smooth manifold \cite[\S 4, pg.\ 438, Example 1]{AtiyahHitchinSinger197809}. Topologically, $\tsrf$ can be considered as $\Pbb^3$ from which a line is removed,\footnote{Recall from \S\ref{sec:more details on twistor space of hyperbolic space} that the twistor space of $S^4$ is $\Pbb^3$. To get to $\R^4$, one needs to remove the point at infinity from $S^4$, which corresponds to removing the corresponding line $\Pbb^1$ from $\Pbb^3$.} hence $\tsrf\simeq \Pbb^3\backslash\Pbb^1$. $\tsrf$ can also be thought of as the total space of the fibration $\tsrf\to\Pbb^1$ where the map is given by the projection onto the second factor. Since $\mbb{R}^4$ is a hyper-K\"ahler manifold, it admits a $\Pbb^1$-worth of complex structures, which is the $\Pbb^1$ factor in $\tsrf$, i.e.\ a complex structure on $\mbb{R}^4$ is given by a point in $\Pbb^1$. An almost complex structure on $\tsrf$ is given by a pair $(\mcal{J}_z,\mcal{J})$, where $\mcal{J}_z$ is the particular complex structure associated to a point $z\in\Pbb^1$ and $\mcal{J}$ is the standard complex structure on $\Pbb^1$ in which $z$ is a holomorphic coordinate. One can then use the Newlander--Nirenberg theorem to show that this is integrable and indeed defines a complex structure on $\mbb{R}^4\times\Pbb^1$ with respect to which $\R^4\times\Pbb^1\to\Pbb^1$ is holomorphic. Another description of $\tsrf$ is as the total space of $\mcal{O}_{\Pbb^1}(1)\oplus\mcal{O}_{\Pbb^1}(1)\to\Pbb^1$. 

\paragraph{The Choice of Sections of Canonical Bundle.} We will see that the section $\Omega$ of canonical bundle in \eqref{eq:6d hCS action on projective spinor bundle} will take the following form
\begin{equation}\label{eq:choice of section of canonical bundle of twistor space}
    \Omega=\Omega(z)\,\rd w_1\wedge\rd w_2\wedge \rd z \;,
\end{equation}
with $\Omega(z)$ a function of only $z$, which we will identify its form. To connect to the 4d CS theory, certain assumptions are needed and then this form will be dictated and there is not much freedom in choosing it. To have a well-defined action \eqref{eq:6d hCS action on projective spinor bundle}, the integrated should be invariant under the projective transformations of either number of holomorphic coordinates, a condition that is automatically satisfied since $\Omega\wedge\CS(\mcal{A})$ is a differential form. We only demand it to be locally nonvanishing (i.e.\ on an open set), not having a pole, and also closed. As we will see below, $\Omega(z)$ would be identified with $\omega_{\Pbb^1}$, given in \eqref{eq:one-form relevant for engineering cpm on P1} or \eqref{eq:one-form relevant for engineering gcpm on P1}, and as such is locally nonvanishing and closed top-degree holomorphic form and a well-defined section of the canonical bundle of $\sbm$. 

\subsubsection{4d Holomorphic BF Theory from 6d Chern--Simons Theory}\label{sec:4d hBF theory from 6d hCS theory}

\smallskip As one of the main ingredients of our arguments is the dimensional reduction of the 6d hCS theory along a plane, let us briefly explain how the procedure will always give rise to the 4d holomorphic BF (hBF) theory in the orthogonal directions. The hBF theories have been introduced in \cite{Popov199909} and have been further studied in \cite{IvanovaPopov200002,BaulieuTanzini200412}.

\smallskip Consider the 6d hCS theory on $\sbe=\R^2_{w_1}\times\R^2_{w_2}\times\Pbb^1_z$, where the subscripts show the holomorphic coordinates along the corresponding directions, the integrand of \eqref{eq:6d hCS action on projective spinor bundle} can be written as
\begin{equation}\label{eq:rewriting the lagrangian of 6d hCS}
    \Omega\wedge\CS(\mcal{A})=\left(\Omega\wedge\rd\bmb{w}_1\wedge\rd\bmb{w}_2\wedge\rd\bmb{z}\right)\,\text{Tr}_{\mfk{g}}(\mcal{A}_{\bmb{w}_1}\partial_{\bmb{w}_2}\mcal{A}_{\bmb{z}}-\mcal{A}_{\bmb{z}}\partial_{\bmb{w}_2}\mcal{A}_{\bmb{w}_1}+2\mcal{A}_{\bmb{w}_2}\mcal{F}_{\bmb{z}\bmb{w}_1}) \;,
\end{equation}
where $\mcal{F}_{\bmb{z}\bmb{w}_1}=\partial_{\bmb{z}}\mcal{A}_{\bmb{w}_1}-\partial_{\bmb{w}_1}\mcal{A}_{\bmb{z}}+[\mcal{A}_{\bmb{z}},\mcal{A}_{\bmb{w}_1}]$, and we have performed some integration by parts, and have used (see Remark \ref{rmk:importance of vanishing of dbar derivative of omegaP1} for more details)
\begin{equation}\label{eq:vanishing of anti-holomorphic derivative of Omega}
    \partial_{\bmb{z}}\Omega=0=\partial_{\bmb{w}_1}\!\Omega \;,
\end{equation}
which follows from \eqref{eq:choice of section of canonical bundle of twistor space} and our assumption that $\Omega(z)$ does not have a pole, and some properties of $\text{Tr}_{\mfk{g}}$. We would like to integrate over the $\R^2_{w_2}$ part of the fiber of the fibration $\R^4\times\Pbb^1\to\Pbb^1$. First notice that the 6d theory does not depend on the choice of a metric and only a complex structure. To integrate over the fiber, we first compactify $\R^2_{w_2}$ to a torus $T^2$ and choose a metric of the form
\begin{equation}
    g_{\sbe}=g_{T^2}\oplus g_{\R^2_{{w}_1}}\oplus g_{\Pbb^1_{z}} \;.
\end{equation}
The metric must be compatible with the chosen complex structure $\mcal{J}_z$ associated with $z\in\Pbb^1$. It is always possible to choose such a metric on $\R^4\times\Pbb^1$. For a complex structure
\begin{equation}\label{eq:choice of complex structure}
    \mathcal{J}_z=\mfk{i}\left(\partial_z\otimes\rd z-\partial_{\bmb{z}}\otimes\rd\bbar{z}\right)\oplus\mfk{i}\sum_{i=1}^2\left(\partial_{w_i}\otimes\rd w_i-\partial_{\bmb{w}_i}\otimes\rd \bbar{w}_i\right) \;,
\end{equation}
it is easy to see that the metric
\begin{equation}
    \rd s^2_{\sbe}=(g_{z\bmb{z}}\rd z\otimes\rd \bbar{z}) \oplus\sum_{i=1}^2 g_{w_i\bmb{w}_i}\rd w_i\otimes\rd\bbar{w}_i \;,
\end{equation}
for some functions $g_{z\bmb{z}}$ and $g_{w_i\bmb{w}_i}$, which in principle could depend on all of the coordinates $(z,\bmb{z},w_i,\bmb{w}_i)$, is clearly compatible with the complex structure \eqref{eq:choice of complex structure}.

\begin{rmk}\normalfont 
    For the purpose of integration, it does not matter whether we think of $\sbe$ as a complex manifold or a smooth real manifold. We only need to keep the complex structure of $\Pbb^1\subset\sbe$. We think of it as a smooth real manifold isomorphic to $\R^4\times\Pbb^1$ where $\R^4$ is equipped with a specific complex structure explained in Remark \ref{rmk:complex structure on PS+R4}.
\end{rmk}

\begin{rmk}[The Choice of Complex Structure on $\sbe$]\normalfont
    In the above construction, we have used a specific complex structure on $\sbe$ in which $w_1$ and $w_2$ are holomorphic. As we will explain in the proof of Proposition \ref{prop:PS+(M) is holomorphically embedded in PS+C13}, the complex structure needs to be a very specific one, the one defined in \eqref{eq:complex structure on R4}. More specifically, the coordinates $(w_1,w_2)$ must coincide with coordinates $(x^+_E,x^-_E)$ defined in \eqref{eq:complex structure on R4}. This is mandatory to be able to use the result discussed in \S\ref{sec:6d hCS on PS+(M) from 10d hCS on PS+C13}. As it will be clear, the construction does not have much freedom and indeed is very rigid. \label{rmk:complex structure on PS+R4}
\end{rmk}

\smallskip As we would like to perform a dimensional reduction, we will assume that all fields and also $\Omega=\Omega(z,w_1,w_2)\rd w_1\wedge\rd w_2 \wedge \rd z$ are $(w_2,\bmb{w}_2)$-independent. We choose the metric $g_{T^2}$ judiciously such that (1) it is only a function of $(w_2,\bmb{w}_2)$ and (2) $\text{Vol}(T^2)\to 0$ in the IR limit. Then, the first term inside the trace in \eqref{eq:rewriting the lagrangian of 6d hCS} vanishes and by integrating over $(w_2,\bmb{w}_2)$, we arrive at the action of the hBF theory on $\R^2_{w_1}\times\Pbb^1_z$
\begin{equation}\label{eq:action for hBF theory in 4d CS side}
    S=\frac{1}{2\pi}\bigintsss_{\R^2\times\Pbb^1}\text{Tr}_{\mfk{g}}(B^{(2,0)}\wedge F^{(0,2)}) \;,
\end{equation}
with the following field redefinition ($w_1\to w$) 
\begin{eqaligned}
    \label{eq:fields of the 4d hBF theory in terms of fields of 6d hCS theory}
    B^{(2,0)}&:=\text{Vol}(T^2)\Omega(z,w)\rd z\wedge \mcal{A}_{\bmb{w}_2}\rd w \;,
    \\
    F^{(0,2)}&:=\mcal{F}_{\bmb{w}\bar{z}}\rd\bmb{w}\wedge\rd\bmb{z} \;.
\end{eqaligned}
We now explain how this action can be recovered from the action of the 4d CS theory.

\begin{rmk}\normalfont 
    In principle, we could have chosen a more general form of $\Omega$ which depends on $w_2$. This would not affect the analysis drastically. The only difference would have been that integration of $\R^2_{w_2}$ would produce a prefactor $c\times\text{Vol}(T^2)$, for some finite constant $c$, instead of $\text{Vol}(T^2)$. This constant can always be reabsorbed in the field redefinition of four-dimensional fields. However, one can argue that $\Omega$ should not depend on $w_2$. By \eqref{eq:vanishing of anti-holomorphic derivative of Omega} and Remark \ref{rmk:importance of vanishing of dbar derivative of omegaP1} below, ${\Omega}$ should not have a pole, and furthermore it should not have a zero, as $\Omega$ should be nonzero on an open set. This will reduce the possibilities to two cases: (1) Either one needs to impose appropriate boundary conditions at the location of poles of $\Omega$; or (2) Assume independence from $w_1,w_2,$ and $z$. As we will see, it is enough for us to take $\Omega$ to be independent of $w_1$ and $w_2$. We will realize in \S\ref{sec:emergence of 4d CS theory} that connection to the 4d CS theory implies that $\Omega$ cannot be independent of $z$ but \eqref{eq:vanishing of zbar derivative of omegaP1 for gCPM} comes to the rescue. See in particular Remark \ref{rmk:importance of vanishing of dbar derivative of omegaP1}.
    
\end{rmk}

\subsubsection{Emergence of 4d Chern--Simons Theory} \label{sec:emergence of 4d CS theory}

The theory \eqref{eq:action for hBF theory in 4d CS side} is not immediately connected to the 4d CS theory. However, the latter reduces to \eqref{eq:action for hBF theory in 4d CS side} in the appropriate limit, as we will now explain. 

\smallskip Consider the action \eqref{eq:4dCS action} and the connection $A=A_{w}\rd w+A_{\bmb{w}}\rd\bmb{w}+A_{\bmb{z}}\rd\bmb{z}$. The integrated multiplied with $\text{Vol}(T^2)$, after some integration by parts, can be written as 
\begin{eqaligned}\label{eq:action of 4dCS multiplied with volume of torus}
    \omega\wedge\CS(A)&=2\text{Vol}(T^2)\text{Tr}_{\mfk{g}}\left(\omega\wedge (A_w\rd w)\wedge (F_{\bmb{w}\bmb{z}}\rd\bmb{w}\rd\bmb{z})\right)
    \\
    &+2\text{Vol}(T^2)\text{Tr}_{\mfk{g}}\left(\omega\wedge(A_{\bmb{z}}\partial_{w}A_{\bmb{w}})\rd\bmb{z}\wedge\rd w\wedge\rd\bmb{w}\right)
    \\
    &=\text{Tr}_{\mfk{g}}(B\wedge F)+2\text{Vol}(T^2)\text{Tr}_{\mfk{g}}\left(\omega\wedge(A_{\bmb{z}}\partial_{w}A_{\bmb{w}})\rd\bmb{z}\wedge\rd w\wedge\rd\bmb{w}\right) \;,
\end{eqaligned}
where we have introduced the following field redefinitions
\begin{equation}\label{eq:fields of the 4d hBF in terms of fields of 4d CS theory}
    B:=2\text{Vol}(T^2)\,\omega\wedge A_w\rd w \;,
    \qquad F:=F_{\bmb{w}\bmb{z}}\rd\bmb{w}\rd\bmb{z} \;.
\end{equation}
If we now send $\text{Vol}(T^2)\to 0$, we see that the second term of \eqref{eq:action of 4dCS multiplied with volume of torus} drops out and the first term can be identified with \eqref{eq:action for hBF theory in 4d CS side} provided that we assume $\Omega(z,w)=\Omega(z)$ and
\begin{equation}\label{eq:identification of holomorphic volume form in 6d and one-form of hCS theory}
    \Omega(z)\rd z=\omega(z)\rd z \;,
\end{equation}
which in turn, using \eqref{eq:fields of the 4d hBF theory in terms of fields of 6d hCS theory}, will give the following identifications
\begin{equation}
    B= B^{(0,2)} \;, \qquad  F= F^{(0,2)} \;.
\end{equation}
By comparing \eqref{eq:fields of the 4d hBF theory in terms of fields of 6d hCS theory} and \eqref{eq:fields of the 4d hBF in terms of fields of 4d CS theory}, the fields of the 4d and 6d hCS theories can be identified as
\begin{equation}
    2A_w=\mcal{A}_{\bmb{w}_2} \;, \qquad F_{\bmb{w}\bmb{z}}=\mcal{F}_{\bmb{w}\bmb{z}} \;.
\end{equation}
Note that $(\text{Vol}(T^2))^{1/2}$ behaves like the scale of the effective field theory. The above computation shows that in the deep IR when $\text{Vol}(T^2)\to 0$, the effective theory is the 4d hBF theory while if we keep $\text{Vol}(T^2)$ small but finite, the effective theory is the 4d CS theory. 

\smallskip We still need to make contact with the gCPM. From \eqref{eq:identification of holomorphic volume form in 6d and one-form of hCS theory}, this is achieved by setting
\begin{equation}\label{eq:identification of holomorphic volume form in 6d and one-form describing gCPM}
    \Omega(z)\rd z=\omega(z)\rd z=\omega_{\Pbb^1} \;,
\end{equation}
where the latter is given in \eqref{eq:one-form relevant for engineering gcpm on P1}. Recall from \eqref{eq:vanishing of zbar derivative of omegaP1 for gCPM} that $\partial_{\bmb{z}}\omega_{\Pbb^1}=0$, and it is perfectly fine to identify it with $\Omega(z)$ due to \eqref{eq:vanishing of anti-holomorphic derivative of Omega}. We thus see that, at the finite scale of effective field theory set by the volume of extra dimensions of $\sbe$, the action of the 6d hCS theory on $\sbe$ reduces to the following action of 4d CS theory on $\R^2\times \Pbb^1$ 
\begin{equation}
    S=\frac{1}{2\pi}\bigintsss_{\R^2\times\Pbb^1}\omega_{\Pbb^1}\wedge\CS(A) \;,
\end{equation}
with the identification \eqref{eq:identification of holomorphic volume form in 6d and one-form describing gCPM} of 4d and 6d fields. Hence, by what we have explained in \S\ref{sec:gCPM and 4d CS theory}, this reduction realizes the discrete integrable-model side of the hyperbolic monopole/gCPM correspondence. 

\begin{rmk}[The Importance of $\partial_{\bmb{z}}\omega_{\Pbb^1}=0$]\label{rmk:importance of vanishing of dbar derivative of omegaP1}\normalfont
    The identification \eqref{eq:identification of holomorphic volume form in 6d and one-form describing gCPM} shows the importance of \eqref{eq:vanishing of zbar derivative of omegaP1 for gCPM}, which is two-fold:

    \begin{enumerate}
        \item [--] {\small\bf Potential Breaking of the Gauge Invariance.} One of the main concerns of \cite{Costello202004,BittlestonSkinner202011} was the danger of potential breaking of the gauge invariance at the location of poles of $\Omega$. This is due to the fact that by performing the gauge transformation $\mcal{A}+\bmb{\partial}\alpha$, for some $\mfk{g}$-valued function $\alpha$, the first term of \eqref{eq:CS term in the case of 6d hCS theory} transforms as
        \begin{equation}
        \text{Tr}_{\mfk{g}}(\mcal{A}\bmb{\partial}\mcal{A})\quad\mapsto\quad \text{Tr}_{\mfk{g}}(\mcal{A}\bmb{\partial}\mcal{A})+\text{Tr}_{\mfk{g}}(\bmb{\partial}\alpha\wedge\bmb{\partial}\mcal{A}) \;.
        \end{equation}
        By performing an integration by parts, the second term leads to $\bmb{\partial}\Omega\wedge\alpha\wedge\bmb{\partial}\mcal{A}$. Arguments similar to what we have presented in \S\ref{sec:gauge Lie algebra, boundary conditions, R-matrix} show that if $\Omega$ has a pole, then $\bmb{\partial}\Omega$ is proportional to (possibly a derivative of) delta functions. This can potentially lead to the breaking of gauge invariance. In our case, the identification \eqref{eq:identification of holomorphic volume form in 6d and one-form describing gCPM} and \eqref{eq:vanishing of zbar derivative of omegaP1 for gCPM} guarantee that there are no contributions of this sort and the gauge invariance is preserved.

        \item [--] {\small\bf Peculiarity of gCPM.} One may naively think that the reductions from the 6d hCS to the 4d CS can be done such that the resulting theory describes other integrable models studied in \cite{CostelloWittenYamazaki201709} and possibly relate it to an integrable field theory. However, this is not possible precisely because of \eqref{eq:vanishing of zbar derivative of omegaP1 for gCPM}. In the case of other integrable spin models, $\partial_{\bmb{z}}\omega_{\Pbb^1}$ gives a (possibly derivative of) delta function $\delta^{(2)}(z,\bmb{z})$ \cite[\S 9.1]{CostelloWittenYamazaki201709}, and as a result, we cannot simply identify it with $\Omega$, and in turn, we cannot apply the argument of reducing the 6d hCS theory to the 4d CS theory. 
        This provides another explanation of the peculiarity of the gCPM. See also Remark \ref{rmk:why gCPM is peculiar?}.
    \end{enumerate}
    
    These arguments demonstrate the essentiality of the relation \eqref{eq:vanishing of zbar derivative of omegaP1 for gCPM}.\qed
\end{rmk}

\begin{rmk}[The Choice of Gauge Lie Algebra for 6d hCS Theory]\label{rmk:choice of gauge lie algebra for 6d hCS theory, the gCPM side}\normalfont
    We still need to specify the gauge Lie algebra of the 6d theory in \eqref{eq:6d hCS action on projective spinor bundle}. Based on our discussion in \S\ref{sec:gauge Lie algebra, boundary conditions, R-matrix}, the natural choice of the gauge Lie algebra for the 4d CS theory is $\mfk{sl}(n,\C)$. This naturally descends from the 6d hCS theory with the gauge algebra $\mfk{sl}(n,\C)$. Recall from Remark \ref{rmk:choice of gauge lie algebra for 6d hCS theory, the hyperbolic monopole side} that we had to impose reality condition to get instanton solutions on the Minkowski space. Here, we do not need to impose such a reality condition and hence the gauge algebra of the 4d CS theory will be $\mfk{sl}(n,\C)$. \qed 
\end{rmk}

\begin{rmk}[No Parameters Corresponding to Masses of Higgs Fields in gCPM]\normalfont
    In the gCPM side of the correspondence, there are no parameters corresponding to the boundary values $\{p_1,\ldots,p_n\}$ of the Higgs field. Recall that at the boundary sphere $S^2_\infty$, the gauge invariance breaks as $\SU{n}\to\U{1}^{n-1}$, due to the VEV of the Higgs field, which is an adjoint-valued scalar field emerges by reduction from 4d to 3d. These parameters deform the theory. However, there is no scalar field in the 6d and 4d CS theories for which we can turn on background VEV. It is not possible to give a background VEV to a gauge field as it breaks gauge invariance, which we do not want to do. This explains why there are no parameters corresponding to the masses of the Higgs fields in the gCPM.\label{rmk:no parameters corresponding to masses in the gCPM side}   \qed
\end{rmk}

\begin{rmk}[No Reality Condition on the Data]\normalfont\label{rmk:no reality condition on the data}
    We have seen two types of reality conditions one needs to impose on the data of hyperbolic monopoles (1) the reality condition to have real solutions of the Bogomolny equation (see \hyperlink{reality condition}{The Reality Condition} in \S\ref{sec:the twistor correspondence}), and (2) And the reality of spectral curves in $\ts$, in the sense of \eqref{eq:reality of spectral curves of hyperbolic monopoles}. Both of these reality conditions are absent in the gCPM side of the correspondence since we start from the 6d hCS theory on a different space $\sbe$ and compare it to the 6d hCS theory on $\sbrm$ that gives rise to the hyperbolic monopole side of the correspondence. In particular, the curve of the spectral parameter of the gCPM is a complex and not real curve.
\end{rmk}

\subsection{The Correspondence from Ten Dimensions}

Up to now, we have established that the two sides of the hyperbolic monopole/gCPM correspondence are described in terms of the 6d hCS theory on $\sbm$, where $M$ is the Minkowski space $\R^{1,3}$ (or the open subset $\mcal{U}\subset\R^{1,3}$) in the monopole side and the Euclidean space $\R^4$ in the gCPM side. In this section, we explain how these two sides can be unified in ten-dimensional physics.

\smallskip The basic idea is clear: $\R^{1,3}$ and $\R^4$ can both be realized as real slices of the complexified Minkowski spacetime, which we denote as $\C^{1,3}$. This space is simply $\C^4$ equipped with the Minkowski quadratic pairing $\langle\cdot,\cdot\rangle:\C^{1,3}\times\C^{1,3}\to\C$
\begin{equation}
    \langle x,y\rangle:=-x_0y_0+\sum_{i=1}^3x_iy_i \;, \qquad \forall x,y\in\C^{1,3} \;.
\end{equation}
It has a projective spinor bundle $\sbcm\simeq \C^{1,3}\times\Pbb^1$ of dimension $\dim_{\C}(\sbcm)=5$ and sits in a double fibration
\begin{equation}\label{fig:double fibration in the case of complexified Minkowski space}
\begin{tikzcd}
    & \arrow[ld,"\pi_{\mcal{Z}_{\C^{1,3}}}" swap] \sbcm\arrow[rd,"\pi_{\C^{1,3}}"] &
    \\
    \mcal{Z}_{\C^{1,3}} & & \C^{1,3}
\end{tikzcd},
\end{equation}
where $\mcal{Z}_{\C^{1,3}}\simeq\Pbb^3\backslash\Pbb^1$ is the twistor space of $\C^{1,3}$. $\pi_{\C^{1,3}}$ is simply projection to the spacetime points while $\pi_{\mcal{Z}_{\C^{1,3}}}$ imposes the incidence relation (see below \eqref{eq:2-spinor notation of spacetime events}).

\paragraph{Inclusion of $\sbm$ in $\sbcm$.} The most obvious real slice is that of $\R^{1,3}$, which can be included in $\C^{1,3}$ as
\begin{equation}\label{eq:embedding of R13 in C13}
    \iota_{\R^{1,3}}(x_0,x_1,x_2,x_3)=(x_0,x_1,x_2,x_3) \;,
\end{equation}
while $\R^4$ can be included as
\begin{equation}\label{eq:embedding of R4 in C13}
    \iota_{\R^4}(x_0,x_1,x_2,x_3)=(\mfk{i}x_0,x_1,x_2,x_3) \;.
\end{equation}

\smallskip Consider the following diagram with $M=\R^{1,3}$ or $\R^4$
\begin{equation}
\begin{tikzcd}[column sep=1.5cm, row sep=1cm]
    i_M^*\sbcm \arrow[d, "\pi_M" swap] \arrow[r] &\sbcm\arrow[d,"\pi_{\C^{1,3}}"] 
    \\
    M \arrow[r, "i_M" swap] & \C^{1,3}
\end{tikzcd},
\end{equation}
The fiber of $i_M^*\sbcm$ at a point $x\in M$ is simply the fiber over its image in $\C^{1,3}$. From what we explained below \eqref{eq:2-spinor notation of spacetime events}, we see that this fiber is just a projective spinor. Hence, $i_M^*\sbcm$ can be identified with $\sbm$. We can thus include $\sbm\hookrightarrow\sbcm$.  

\subsubsection{6d hCS Theory from 10d hCS Theory}\label{sec:6d hCS on PS+(M) from 10d hCS on PS+C13}

We would like to identify a 10d theory on $\sbcm$ which reduces to the 6d hCS theory on $\sbm$ upon dimensional reduction. The basic idea is again clear from \cite[pg.\ 34, Lemma 7.1.1]{CostelloLi201606} where the following is proven: The dimensional reduction of the hCS on $\C^5$ to any $\C^k\subset \C^5$ gives the hCS theory on $\C^k$, where the starting theory on $\C^5$ had been worked out by Baulieu \cite{Baulieu201009}. The BV action of the hCS on $\C^k$ is given by
\begin{equation}\label{eq:action of 10 hCS theory}
    S=\frac{1}{2\pi}\bigintsss_{\C^{k|5-k}}\left(\prod_{i=1}^{5-k}{\rd\epsilon_i}\wedge\prod_{j=1}^k\rd w_j\right)\wedge\CS(\mcal{A}) \;,
\end{equation}
with $\CS(\mcal{A})$ is the same as \eqref{eq:CS term in the case of 6d hCS theory}. Here, ${w_1,\ldots,w_k}$ are coordinates along $\C^k$, $\epsilon_i$s are odd variables keeping track of the cohomological degree of the fields in the BV quantization, and $\mcal{A}\in\Omega^{0,*}(\C^{k|5-k})\otimes\mfk{gl}(n,\C)[1]$, where $[1]$ means the shift in the cohomological degrees. One substitute $\mcal{A}$, expanded in polynomials of $\epsilon_i$, in \eqref{eq:action of 10 hCS theory} and picks up only the term in $\epsilon_1\ldots\epsilon_{5-k}\Omega^{0,k}(\C^k)$. 

\smallskip However, we cannot directly use this result since $\sbcm$ is not, strictly speaking, a copy of $\C^5$. Instead, we prove the following 

\begin{lem}\normalfont\label{lem:6d hCS from 10d hCS}
    Consider the hCS theory with gauge Lie algebra $\mfk{g}$ on $X$ with $\dim_{\C}X=5$, and let $Y$ be a holomorphically-embedded submanifold of $X$ of $\dim_{\C}Y=k$. Then, the dimensional reduction of the hCS theory from $X$ to $Y$ gives the hCS theory on $Y$. 
\end{lem}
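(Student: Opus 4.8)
The plan is to reduce the statement to the local model computed in \cite[Lemma 7.1.1]{CostelloLi201606} by covering $X$ with holomorphic charts adapted to the embedded submanifold $Y$. First I would invoke the holomorphic tubular neighbourhood theorem: since $Y\hookrightarrow X$ is a holomorphically-embedded complex submanifold with $\dim_{\C}Y=k$ and $\dim_{\C}X=5$, there is an open neighbourhood $\mcal{N}$ of $Y$ in $X$ biholomorphic to a neighbourhood of the zero section of the holomorphic normal bundle $N_{Y/X}\to Y$, which has rank $5-k$. Pulling back to contractible charts $\{U_\alpha\}$ of $Y$, the normal bundle trivialises, so over each $U_\alpha$ we obtain holomorphic coordinates $(w_1,\ldots,w_k,v_1,\ldots,v_{5-k})$ in which $Y$ is cut out by $v_1=\cdots=v_{5-k}=0$. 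This is exactly the coordinate setup in which the BV action \eqref{eq:action of 10 hCS theory} of the 10d hCS theory is written, with the $v_i$ (resp.\ their anti-holomorphic differentials and the associated BV odd variables) playing the role of the transverse directions.

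Next I would perform the dimensional reduction chart by chart. In each chart one expands the full BV field $\mcal{A}\in\Omega^{0,*}(X)\otimes\mfk{g}[1]$ as a polynomial in $\rd\bmb{v}_i$, assumes all component fields are independent of $(v_i,\bmb{v}_i)$ (this is the content of ``dimensional reduction to $Y$''), and integrates the action density $\Omega\wedge\CS(\mcal{A})$ over the transverse $\C^{0,5-k}$-directions. As in \cite{CostelloLi201606}, only the term of $\mcal{A}$ proportional to $\rd\bmb{v}_1\wedge\cdots\wedge\rd\bmb{v}_{5-k}$ survives the fibre integration, and the CS three-form structure forces the surviving piece to be $\CS$ of the reduced field $\mcal{A}_Y\in\Omega^{0,*}(Y)\otimes\mfk{g}[1]$ built from the $v$-independent components. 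Here I would use the key fact that a holomorphic volume form on $X$ restricted to $\mcal{N}$ can be written, in the adapted coordinates, as $\Omega = \Omega_Y\wedge\rd v_1\wedge\cdots\wedge\rd v_{5-k}$ up to a holomorphic factor that is absorbed into field redefinitions, so that the fibre integration produces precisely $\Omega_Y\wedge\CS(\mcal{A}_Y)$, where $\Omega_Y=\iota_Y^*\Omega$ is a section of the canonical bundle of $Y$. This reproduces the 6d/$k$d hCS action \eqref{eq:6dCS action} on $Y$.

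Finally I would patch the local computations together. Since the hCS action density is a globally well-defined differential form on $X$ (the complex structure being irrelevant for integration, as emphasised in the discussion around \eqref{eq:6dCS action}), one chooses a partition of unity $\{\phi_\alpha\}$ subordinate to $\{U_\alpha\}$ and defines the reduced action as the sum of the chart-wise integrals; the transition functions between adapted charts are holomorphic and preserve the splitting of $\Omega$ up to holomorphic prefactors, so the local reduced actions glue to the global hCS action on $Y$ with holomorphic volume element $\iota_Y^*\Omega$. One should also check compatibility with the BV structure: the transverse odd variables $\epsilon_i$ of \eqref{eq:action of 10 hCS theory} are matched to the Grassmann-odd generators dual to $\rd\bmb{v}_i$, and the BV bracket of the $X$-theory restricts to that of the $Y$-theory on $v$-independent fields, so the reduction is a morphism of BV theories. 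I expect the main obstacle to be precisely this patching step: making sure the holomorphic factor by which $\Omega$ fails to split as $\Omega_Y\wedge\rd v_1\wedge\cdots\wedge\rd v_{5-k}$ can be consistently absorbed by a \emph{global} (not merely local) redefinition of the fields, and that this redefinition is compatible across charts; in practice this is controlled because the ambiguity is a nowhere-vanishing holomorphic function on overlaps, i.e.\ a \v Cech cocycle valued in $\mcal{O}_X^\times$, which one checks acts trivially on the reduced theory. Everything else is a routine unwinding of the local model of \cite{CostelloLi201606}.
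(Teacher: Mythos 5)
Your overall strategy (reduce to the local model of \cite[Lemma 7.1.1]{CostelloLi201606}) is in the right spirit, but the first step contains a genuine gap: there is no ``holomorphic tubular neighbourhood theorem'' in the generality you invoke. Unlike the smooth category, a holomorphically-embedded complex submanifold $Y\subset X$ need \emph{not} admit a neighbourhood biholomorphic to a neighbourhood of the zero section of its holomorphic normal bundle; the existence of such a neighbourhood is obstructed in general (this is a classical phenomenon in the deformation theory of neighbourhoods of submanifolds). Since the lemma is stated for an arbitrary complex five-fold $X$ and an arbitrary holomorphic embedding $Y\hookrightarrow X$, your chart-wise reduction plus partition-of-unity patching rests on a foundation that may simply fail, and the subsequent bookkeeping (the splitting of $\Omega$ as $\Omega_Y\wedge\rd v_1\wedge\cdots\wedge\rd v_{5-k}$ up to an $\mcal{O}_X^\times$-valued ambiguity to be ``absorbed globally'') inherits the problem. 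A second, smaller inaccuracy: after the Berezin/fibre integration it is only the top $\epsilon$-component of the \emph{Lagrangian density} that is selected; the normal-direction components of the 10d field do not drop out but survive as part of the BV field content on $Y$, exactly as in \eqref{eq:one-form part of the BV field on Y in terms of fields on X}, where the honest gauge field on $Y$ is a sum of tangential and conormal components of the 10d superfield.

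The paper's proof avoids the tubular neighbourhood entirely and needs no patching. One chooses a metric on $X$ compatible with its complex structure, which gives the splitting $\iota^*T^{*(0,1)}X\simeq T^{*(0,1)}Y\oplus N^{\vee(0,1)}_{Y/X}$ of \eqref{eq:pullback of anti-holomorphic cotangent bundle of ambient space}; dimensional reduction is then performed purely at the level of field content by identifying the anti-holomorphic conormal differentials $\rd\bmb{v}_i$ with odd variables $\epsilon_i$, so that the reduced fields are sections of $\Omega^{0,*}(Y^{k|5-k})\otimes\mfk{g}[1]$ on the split supermanifold $Y^{k|5-k}=Y\times\C^{0|5-k}$, i.e.\ exactly the BV field content of hCS theory on $Y$ with action \eqref{eq:action of hCS on an embedded submanfold of the ambient space}. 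The section $\Omega_Y$ of the canonical bundle of $Y$ is then built globally by contracting $\Omega_X$ with orthonormal sections of the holomorphic normal bundle, as in \eqref{eq:section of canonical bundle of Y in terms of that of X}, so no local splitting of $\Omega$ or \v{C}ech argument is required. If you reformulate your argument along these lines — infinitesimal normal data only, conormal directions as odd directions, no extension of fields off $Y$ — the gap disappears and you recover the paper's proof.
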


\begin{proof}
    The basic idea is the same as loc.\ cit. We consider the dimensional reduction of the field content of the theory. The BV field content of hCS on $X$ is given by
    \begin{equation}
        \mcal{A}^{\text{BV}}_X\in \Omega^{(0,*)}(X)\otimes\mfk{g}[1] \;.
    \end{equation}
    Consider the holomorphic embedding $\iota:Y\to X$. If we equip $X$ with a metric $g_X$ compatible with its complex structure $\mcal{J}_X$, we would have
    \begin{equation}\label{eq:pullback of anti-holomorphic cotangent bundle of ambient space}
        \iota^* T^{*(0,1)}X=T^{*(0,1)}Y\oplus {N^{\vee (0,1)}_{Y/X}} \;,
    \end{equation}
    where ${N^{\vee(0,1)}_{Y/X}}$ is the anti-holomorphic conormal bundle of $Y$ in $X$. This means $\bigwedge\nolimits^*(\iota^* T^{*(0,1)}X)$ is the bundle
    \begin{equation}\label{eq:bundle on the ambient space in which fields take value}
        \left(\bigoplus_{n=0}^5\bigoplus_{p=0}^k\bigoplus_{n-p=0}^{5-k}\bigwedge\nolimits^p T^{*(0,1)}Y\bigwedge\nolimits^{n-p} N^\vee_{Y/X}\right)\otimes\mfk{g}[1] \;.
    \end{equation}
    Therefore, by identifying the anti-holomorphic conormal directions as $\rd\bmb{v}_i\sim\epsilon_i,\,i=1,\ldots,5-k$, where $\bmb{v}_i$s are those of $\bmb{w}_i$s that are normal to $Y$. Then, \eqref{eq:bundle on the ambient space in which fields take value} gives a general field on $Y$ as a section of the bundle
    \begin{equation}
        \left(\bigoplus_{n=0}^5\bigoplus_{p=0}^k\bigoplus_{n-p=0}^{5-k}\bigwedge\nolimits^p T^{*(0,1)}Y\bigwedge\nolimits^{n-p} \C^{0|5-k}\right)\otimes\mfk{g}[1] \;,
    \end{equation}
    where $\C^{0|5-k}$ is a trivial vector bundle over $X$ with purely $(5-k)$-dimensional odd fibers, generated by odd variables $\epsilon_1,\ldots,\epsilon_{5-k}$. The space of sections of this bundle can thus be identified with $\Omega^{0,*}(Y^{k|5-k})\otimes\mfk{g}[1]$, where $Y^{k|5-k}$ is the split supermanifold $Y\times \C^{0|5-k}$. The latter is the field content of hCS theory on $Y$ with the action
    \begin{equation}\label{eq:action of hCS on an embedded submanfold of the ambient space}
        S_Y=\frac{1}{2\pi}\bigintsss_{Y^{k|5-k}}\left(\prod_{i=1}^{5-k}\rd\epsilon_i\wedge\Omega_Y\right)\wedge\CS(\mcal{A}^{\text{BV}}_Y) \;,
    \end{equation}
    where $\Omega_Y$ is a section of the canonical bundle of $Y$, $\CS(\mcal{A}^{\text{BV}}_Y)$ is given by \eqref{eq:CS term in the case of 6d hCS theory}, and $\mcal{A}^{\text{BV}}_Y$ can be expressed in terms of $\mcal{A}^{\text{BV}}_X$. Recall that $\mcal{A}^{\text{BV}}_X$ is the following formal sum
    \begin{equation}\label{eq:field of 10d hCS on X as a formal sum of various fields}
        \mcal{A}^{\text{BV}}_X=\sum_{p=0}^5\sum_{i_1,\ldots,i_p=1}^5\frac{1}{p!}\mcal{A}^{\text{BV,(p)}}_{X,\bmb{w}_{i_1}\ldots\bmb{w}_{i_p}}\rd\bmb{w}_{i_1}\ldots\rd\bmb{w}_{i_p} \;,
    \end{equation}
    then $\mcal{A}^{\text{BV}}_Y$ is determined by the identification $\rd\bmb{v}_i\simeq \epsilon_i$, where $(v_1,v_2)$ are normal directions to $Y$. For example, the one-form part of $\mcal{A}^{\text{BV}}_Y$ is given by
    \begin{equation}\label{eq:one-form part of the BV field on Y in terms of fields on X}
        \mcal{A}^{(1),\text{BV}}_Y=\mcal{A}^{(1),\text{BV}}_{Y,\bmb{w}_i}\rd\bmb{w}_i=\left(\mcal{A}^{(3),\text{BV}}_{X,\bmb{w}_i\bmb{v}_1\bmb{v}_2}\epsilon_1\epsilon_2+\sum_{j=1}^2\mcal{A}^{(2),\text{BV}}_{X,\bmb{w}_i\bmb{v}_j}\epsilon_j+\mcal{A}^{(1),\text{BV}}_{X,\bmb{w}_i}\right)\rd\bmb{w}_i \;,
    \end{equation}
    where $\mcal{A}^{(1),\text{BV}}_Y$ denotes the one-form part of the BV fields on $Y$. This completes the proof.
\end{proof}

\smallskip The case we are interested in is the following: 
\begin{equation}\label{eq:space X, Y and value k of interest}
    X=\sbcm \;, \qquad Y=\sbm \;, \qquad M=\R^{1,3},\R^4 \;, \qquad k=3 \;.
\end{equation}

Note that \eqref{eq:pullback of anti-holomorphic cotangent bundle of ambient space} is true if and only if $Y$ is a holomorphically-embedded submanifold of $X$. Hence, we first need to show that $\sbm$ can be holomorphically embedded in $\sbcm$. The following proposition is the proof of this fact

\begin{prop}\label{prop:PS+(M) is holomorphically embedded in PS+C13}\normalfont
    Let $\sbm$ and $\sbcm$ be the projective spinor bundles of $M=\R^{1,3}$ or $\R^4$, and $\C^{1,3}$, respectively. Then, $\sbm$ can be holomorphically embedded in $\sbcm$ equipped with a fixed complex structure.
\end{prop}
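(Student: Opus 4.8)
The statement is essentially a concrete embedding-of-twistor-fibrations claim, so the approach is to produce the embedding explicitly and then check holomorphicity against a \emph{fixed} complex structure on $\sbcm$. First I would describe $\sbcm$ concretely: as a smooth manifold it is $\C^{1,3}\times\Pbb^1$, and a point is an equivalence class $[(x^{\alpha\dot\alpha},\psi_\beta)]$ with $x^{\alpha\dot\alpha}\in\C^4$ (via the $2$-spinor presentation \eqref{eq:2-spinor notation of spacetime events}, now with complex entries) and $\psi_\beta$ a projective spinor. I would fix once and for all the complex structure $\mcal{J}_{\C^{1,3}}$ referred to in the statement — the one given in \eqref{eq:master complex structure on PS+C13}, in which the four ambient coordinates $(x_0,x_1,x_2,x_3)$ and the $\Pbb^1$-coordinate are all holomorphic in the standard way. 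The key point is that this is the \emph{same} complex structure for both real slices; the two embeddings differ only in how the real four-manifold sits inside $\C^4$.

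\textbf{Step 1: the embeddings of the base.} Use \eqref{eq:embedding of R13 in C13} and \eqref{eq:embedding of R4 in C13}: $\iota_{\R^{1,3}}(x)=(x_0,x_1,x_2,x_3)$ and $\iota_{\R^4}(x)=(\mfk{i}x_0,x_1,x_2,x_3)$. Both are real-analytic embeddings of the respective real four-manifold into $\C^{1,3}$.

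\textbf{Step 2: lift to the spinor bundles.} As explained just before \S\ref{sec:6d hCS on PS+(M) from 10d hCS on PS+C13}, pulling back the fibration $\pi_{\C^{1,3}}:\sbcm\to\C^{1,3}$ along $i_M$ reproduces $\sbm$, because the fibre over $x\in M$ is the fibre over $i_M(x)\in\C^{1,3}$, which is just a projective spinor; this identification gives the inclusion $j_M:\sbm\hookrightarrow\sbcm$ as a smooth embedding covering $i_M$, acting trivially on the $\Pbb^1$-factor.

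\textbf{Step 3: holomorphicity.} This is where the real work is. I would compute, in the standard holomorphic chart on $\sbcm$ (with holomorphic coordinates $z_0,z_1,z_2,z_3$ on $\C^{1,3}$ and the affine $\Pbb^1$-coordinate $\zeta$), the pullback $j_M^*$ of the $(0,1)$-forms $\rd\bar z_a,\rd\bar\zeta$ and show it lands inside the span of the $(0,1)$-forms for the natural complex structure on $\sbm$. Concretely: on $\R^4$ the natural complex structure is the one singled out in Remark \ref{rmk:complex structure on PS+R4}, namely the pair $(\mcal{J}_z,\mcal{J})$ with holomorphic coordinates $(x^+_E,x^-_E)$ of \eqref{eq:complex structure on R4} on $\R^4$ together with $z$ on $\Pbb^1$, and under $\iota_{\R^4}$ the ambient holomorphic coordinates restrict to precisely these combinations (the factor of $\mfk{i}$ on $x_0$ is exactly what converts the Minkowski null coordinates into the Euclidean ones); hence $j_{\R^4}$ is holomorphic. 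On $\R^{1,3}$ one works with the almost-complex structure on $\Pbb S(\R^{1,3})\simeq\R^{1,3}\times\Pbb^1$ which is integrable because $\R^{1,3}$ is (anti-)self-dual (as recalled after \eqref{eq:6d hCS action on projective spinor bundle}); one checks directly that the real slice $\iota_{\R^{1,3}}$ together with the identity on $\Pbb^1$ pulls the integrable complex structure of $\sbcm$ back to this one. In both cases this is a finite check: restrict the ambient $\bar\partial$-operator along the $2n$ real directions of $M$ and verify the induced almost-complex structure coincides with the prescribed one, so that $j_M$ is a holomorphic immersion; injectivity and properness onto its image are immediate from Step 2, giving a holomorphic embedding.

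\textbf{Main obstacle.} The only genuinely delicate point is Step 3 for the Lorentzian slice $M=\R^{1,3}$: one must make sure the ``fixed complex structure $\mcal{J}_{\C^{1,3}}$'' is chosen compatibly so that its restriction along \emph{both} \eqref{eq:embedding of R13 in C13} and \eqref{eq:embedding of R4 in C13} is integrable and matches the complex structures used earlier — for $\R^4$ this forces the specific choice of holomorphic coordinates of Remark \ref{rmk:complex structure on PS+R4}, and for $\R^{1,3}$ it must be the structure underlying $\mscr{C}_{\mcal{U}}$ used in Theorem \ref{thr:twistor correspondence theorem}. So the heart of the proof is bookkeeping: writing $\mcal{J}_{\C^{1,3}}$ explicitly in \eqref{eq:master complex structure on PS+C13}-adapted coordinates, pulling it back along each slice, and recognizing the two results as the complex structures of $\Pbb S(\R^{1,3})$ and $\sbe=\tsrf$ respectively. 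Everything else — smoothness of the embedding, the fibrewise identification of spinors, the trivial action on $\Pbb^1$ — follows from the double-fibration description already set up in the text.
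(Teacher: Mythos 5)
Your overall architecture (embed the base via \eqref{eq:embedding of R13 in C13}/\eqref{eq:embedding of R4 in C13}, act trivially on the spinor, verify a smooth embedding, then check holomorphicity in coordinates) matches the paper's proof, but there is a genuine gap exactly where the proof actually lives: the identification of the fixed complex structure $\mcal{J}_{\C^{1,3}}$. You describe it as the structure ``in which the four ambient coordinates $(x_0,x_1,x_2,x_3)$ and the $\Pbb^1$-coordinate are all holomorphic in the standard way.'' With that choice both real slices $\iota_{\R^{1,3}}(\R^{1,3})$ and $\iota_{\R^4}(\R^4)$ are totally real $4$-planes in $\C^{1,3}$, so the image of $\sbm\simeq M\times\Pbb^1$ has tangent spaces that are not invariant under the ambient complex structure (only the $\Pbb^1$-direction is a complex line); it is then not a complex submanifold, and no complex structure on $\sbm$ whatsoever can make the inclusion holomorphic. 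In particular your Step 3 assertion for $\R^4$ --- that the ambient holomorphic coordinates restrict to $(x^+_E,x^-_E)$ --- fails for this choice, and the same objection applies to the Lorentzian slice. The paper's proof instead hinges on a specific, non-obvious choice: $\mcal{J}_{\C^{1,3}}$ is defined in \eqref{eq:master complex structure on PS+C13} by declaring the combinations $y^{\pm}_1=y^{1\dot{1}}\pm\mfk{i}\,y^{1\dot{2}}$, $y^{\pm}_2=y^{2\dot{2}}\pm\mfk{i}\,y^{2\dot{1}}$ (together with the spinor) holomorphic, and correspondingly nonstandard structures $\mcal{J}_{\R^{1,3}}$ and $\mcal{J}_{\R^4}$ are introduced on the slices in \eqref{eq:complex structure on R13} and \eqref{eq:complex structure on R4}. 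The content of the proof is precisely the ``bookkeeping'' you defer: writing $\mcal{I}_M$ in these adapted coordinates and observing that it becomes a linear map onto the coordinate locus $\{y^-_1=y^+_2=0\}$, manifestly holomorphic for both $M=\R^{1,3}$ and $M=\R^4$ with the same ambient structure.

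So the missing piece is not the smooth-embedding part (your Steps 1--2 coincide with part (1) of the paper's argument, which likewise checks the differential is injective and that the map is a topological embedding), but the construction of the one complex structure on $\sbcm$ that simultaneously renders both slices complex, together with the induced structures on $\sbrm$ and $\sbe$. Your ``Main obstacle'' paragraph correctly senses that this is the crux, yet the proposal never resolves it: it starts from a candidate structure under which the claim is false, invokes ``one checks directly'' for the Lorentzian slice (where the paper does not appeal to the integrability statement for Riemannian anti-self-dual $M$ but simply exhibits $\mcal{J}_{\R^{1,3}}$ by explicit holomorphic coordinates), and gestures at matching the structure underlying $\mscr{C}_{\mcal{U}}$ rather than producing \eqref{eq:complex structure on R13}. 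Without exhibiting a concrete $\mcal{J}_{\C^{1,3}}$ for which both restrictions are complex submanifolds --- which is the entire point of the proposition --- the argument does not go through.
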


\begin{proof} Consider the inclusion map $\mcal{I}_M:\sbm\hookrightarrow\sbcm$. By what we have explained around \eqref{eq:2-spinor notation of spacetime events}, this is defined by
    \begin{equation}\label{eq:inclusion of PS+(M) in PS+ C13}
        \mcal{I}_M([x^{\alpha\dot{\alpha}},\lambda_\beta]):=[(\iota_M(x^{\alpha\dot{\alpha}}),\lambda_\beta] \;,
    \end{equation}
    where $\iota_M$ is given by \eqref{eq:embedding of R13 in C13} or \eqref{eq:embedding of R4 in C13}. Note that
    \begin{equation}
        \iota_M[x^{\alpha\dot{\alpha}}]=
        \begin{pmatrix}
            \delta_M x^0+x^3 & x^1-\mfk{i}x^2
            \\
            x^1+\mfk{i}x^2 & \delta_M x^0-x^3
        \end{pmatrix} \;,
    \end{equation}
    with
    \begin{equation}
        \delta_{M}=
        \left\{
        \begin{aligned}
        &1 \;, &\qquad M&=\R^{1,3} \;,
        \\
        &\mfk{i} \;, &\qquad M&=\R^{4}  \;.
        \end{aligned}
        \right.
    \end{equation}
    To prove the result, we proceed in two steps:
    \begin{enumerate}
        \item [(1)] We first show that the inclusion $\mcal{I}_M$ realizes $\sbm$ as a smooth embedded submanifold of $\sbcm$;

        \item [(2)] We then show that there is a complex structure on $\sbm$ in which the inclusion $\mcal{I}_M$ is holomorphic. 
    \end{enumerate}

\smallskip To prove (1), we need to show that $\mcal{I}_M$ is a smooth immersion as well as a topological embedding. The tangent space of $\sbm$ at the point $[(x^{\alpha\dot{\alpha}},\psi_\beta)]$ is generated by $\{\partial_{x^{\alpha\dot{\alpha}}},\partial_{\psi_\beta}\}$. Under $\mcal{I}_M$, the basis is mapped to the basis $\{\partial_{y^{\alpha\dot{\alpha}}},\partial_{\lambda_\beta}\}$ over the point $[([y^{\alpha\dot{\alpha}}],\chi_\beta)]=\mcal{I}_M[([x^{\alpha\dot{\alpha}}]),\psi_\beta)]\in\sbcm$ with
\begin{eqaligned}\label{eq:coordinates y on PS+C13 in terms of coordinates x on PS+M}
    y^{1\dot{1}}&=\delta_M x^0+x^3 \;, &\qquad y^{1\dot{2}}&=x^1-\mfk{i}x^2 \;,
    \\
    y^{2\dot{1}}&=x^1+\mfk{i}x^2 \;, &\qquad  y^{2\dot{2}}&=\delta_M x^0-x^3 \;,
\end{eqaligned}
and $\chi_\beta=\psi_\beta$. Then, the differential $\rd\mcal{I}_M$ maps $\partial_{y^{\alpha\dot{\alpha}}}=\delta_M^{-1}\partial_{x^{\alpha\dot{\alpha}}}$ for $(\alpha,\dot{\alpha})=(1,\dot{1}),(2,\dot{2})$ while $\partial_{y^{\alpha\dot{\alpha}}}=\partial_{x^{\alpha\dot{\alpha}}}$ for $(\alpha,\dot{\alpha})=(1,\dot{2}),(2,\dot{1})$. This is clearly an injective map. As we move the point $[([x^{\alpha\dot{\alpha}}],\psi_\beta)]$ around, the map changes smoothly. Hence, $\mcal{I}_M$ is indeed a smooth immersion. We next need to show that it is a topological embedding, i.e.\ it is a homeomorphism into its image. \eqref{eq:inclusion of PS+(M) in PS+ C13} shows that $\mcal{I}_M$ is smooth, injective, and surjective into its image with an obvious continuous inverse. Hence, $\mcal{I}$ is a topological embedding. We thus conclude that $\sbm$ is a smooth embedded submanifold of $\sbcm$.

\smallskip To prove (2), we would be constructive and explicitly construct complex structures on $\sbm$ and $\sbcm$. Note from  \eqref{eq:coordinates y on PS+C13 in terms of coordinates x on PS+M} that
\begin{eqaligned}
    y^{1\dot{1}}&=\frac{1}{2}\left(1+\delta_M\right)x^{1\dot{1}}-\frac{1}{2}\left(1-\delta_M\right)x^{2\dot{2}} \;,
    &\qquad y^{1\dot{2}}=x^{1\dot{2}} \;,
    \\
    y^{2\dot{2}}&=\frac{1}{2}\left(-1+\delta_M\right)x^{1\dot{1}}+\frac{1}{2}\left(1+\delta_M\right)x^{2\dot{2}} \;,
    &\qquad y^{2\dot{1}}=x^{2\dot{1}} \;.
\end{eqaligned}

For $M=\R^{1,3}$, we define a complex structure $\mcal{J}_{\C^{1,3}}$ on $\sbcm$ in which the following coordinates are holomorphic:
\begin{equation}\label{eq:master complex structure on PS+C13}
   \mcal{J}_{\C^{1,3}}: \qquad \left\{y^\pm_1:=y^{1\dot{1}}\pm\mfk{i}y^{1\dot{2}},y^\pm_2:=y^{2\dot{2}}\pm\mfk{i}y^{2\dot{1}};\chi_\beta\right\} \;.
\end{equation}
Note that $y^{\alpha\dot{\alpha}}$s are all complex coordinates and hence $y^+_i$ and $y^-_i$ are not complex conjugates of each other. We also define a complex structure $\mcal{J}_{\R^{1,3}}$ on $\sbrm$ in which the following coordinates are holomorphic
\begin{equation}\label{eq:complex structure on R13}
    \mcal{J}_{\R^{1,3}}: \qquad 
    \left\{x^+_L:=x^{1\dot{1}}+\mfk{i}x^{1\dot{2}},x^-_L:=x^{2\dot{2}}-\mfk{i}x^{2\dot{1}},\psi_\beta\right\} \;.
\end{equation}
Then 
\begin{equation}
    (y^+_1,y^-_1,y^+_2,y^-_2;\chi_\beta)=\mcal{I}_{\R^{1,3}}(x^+_L,x^-_L,\psi_\beta)=(x^+_L,0,0,x^-_L;\psi_\beta) \;.
\end{equation}
It is evident that the inclusion $\mcal{I}_M$ is holomorphic in these complex structures. 

\smallskip On the other hand for $M=\R^4$, it follows from \eqref{eq:coordinates y on PS+C13 in terms of coordinates x on PS+M} that
\begin{equation}
    y^{1\dot{1}}=\frac{1+\mfk{i}}{2}\left(x^{1\dot{1}}+\mfk{i}x^{2\dot{2}}\right) \;, \qquad y^{2\dot{2}}=\frac{-1+\mfk{i}}{2}\left(x^{1\dot{1}}-\mfk{i}x^{2\dot{2}}\right) \;.
\end{equation}
We define a complex structure $\mcal{J}_{\R^4}$ on $\sbe$ in which the following coordintes are holomorphic
\begin{equation}\label{eq:complex structure on R4}
    \mcal{J}_{\R^4}: \qquad 
    \left\{x^+_E:=x^{1\dot{1}}+\mfk{i}x^{2\dot{2}}+\frac{2\mfk{i}}{1+\mfk{i}}x^{1\dot{2}},x^-_E:=-x^{1\dot{1}}+\mfk{i}x^{2\dot{2}}+\frac{2\mfk{i}}{1-\mfk{i}}x^{2\dot{1}},\psi_\beta\right\} \;.
\end{equation}
Then in the complex structure on $\sbcm$ in which the holomorphic coordinates are \eqref{eq:master complex structure on PS+C13}, we have
\begin{equation}
    (y^+_1,y^-_1,y^+_2,y^-_2,\chi_\beta)=\mcal{I}_{\R^4}(x^+_E,x^-_E,\chi_\beta)=\left(\frac{1+\mfk{i}}{2}x^+_E,0,0,\frac{1-\mfk{i}}{2}x^-_E,\psi_\beta\right) \;,
\end{equation}
It is again evident that $\mcal{I}_{\R^4}$ is holomorphic in these complex structures. 

\smallskip We have thus shown that $\sbm$ with the complex structure $\mcal{J}_M$ given in \eqref{eq:complex structure on R13} or \eqref{eq:complex structure on R4} can be holomorphically embedded in $\sbcm$ with the complex structure $\mcal{J}_{\C^{1,3}}$ given in \eqref{eq:master complex structure on PS+C13}. Hence $\mcal{I}_M$ is a holomorphic embedding for both $M=\R^{1,3}$ and $M=\R^4$. This concludes the proof.

\end{proof}

\begin{rmk}[The Relevance of Open-String Field Theory]\normalfont \label{rmk:relevance of open-string field theory}
    A B-brane in the B-model topological string theory is supported on a holomorphic submanifold $Y$ of the target $X$, which is a CY manifold and is represented by the complex of coherent sheaves on $Y$ modulo quasi-isomorphisms. The space of open-string states stretched between branes represented by the complexes $\mcal{B}^\bullet_1$ and $\mcal{B}_2^{\bullet}$, which determines the field content of the theory living on $Y$, is given by $\text{Hom}(\mcal{B}^\bullet_1,\mcal{B}_2^{\bullet})$. For the simplest case of $n$ copies of the structure sheaf of $\C^k,k=1,\ldots,5$, it is shown in \cite[\S 7.1]{CostelloLi201606} that the field content of the theory matches with the field content of the hCS on $\C^k$. This is the basic idea of the relation between the holomorphic twist of the theory living on the brane and the hCS theory. In our case, $X=\sbcm=\C^{1,3}\times\Pbb^1$ and $Y=\sbm$ with $M=\R^{1,3}$ or $\R^4$. As such, $\sbcm$ is not a CY manifold, and hence there cannot be a topological B-model onto it. However, the hCS theory on a complex five-dimensional manifold such as $\sbcm$ still makes sense, at least classically, which is the point we are using here. Whether the hCS theory on $\sbcm$ makes sense quantum-mechanically is not what we would consider in this work. Therefore, the relation of our construction to open-string field theory is not immediately clear. However, recall that the condition of being CY for the target comes from the requirement of conformal invariance of the sigma models into these targets. In the full formulation of string theory, it is expected that the theory can be defined on general backgrounds, including nonconformal ones \cite{Zwiebach199606}. In that context, conformal backgrounds are only classical solutions to the string-field-theory equations of motion, and it might be possible to define a topological version of the theory on non-CY targets.  \qed 
\end{rmk}

With these results at hand, we are now ready to track the origin of the correspondence in ten dimensions. 

\subsubsection{Chasing the Correspondence to Ten Dimensions}\label{sec:chasing the correspondence to ten dimensions}

We first would like to see explicitly that the dimensional reduction of the 10d hCS theory on $\sbcm$ to $\sbm$ gives the 6d hCS theory on $\sbm$. This is easy to establish. We only need to substitute \eqref{eq:one-form part of the BV field on Y in terms of fields on X} into \eqref{eq:action of hCS on an embedded submanfold of the ambient space}, by which we can explicitly see that, after integration over $\C^{0|2}$ in the split supermanifold $Y^{3|2}=Y\times\C^{0|2}$, the BV action on $Y$ is given by 
\begin{equation}\label{eq:action of 6d CS on a holomorphically-embedded submanifold Y}
    S_Y=\frac{1}{2\pi}\bigintsss_{Y}\Omega_Y\wedge\CS(\mcal{A}_Y^{(1)})+\ldots \;,
\end{equation}
where from \eqref{eq:one-form part of the BV field on Y in terms of fields on X}, we have the ordinary gauge field $\mcal{A}_Y^{(1)}$ expressed in terms of 10d fields as
\begin{equation}\label{eq:ordinary gauge field on Y in terms of fields on X}
    \mcal{A}_Y^{(1)}=\sum_{i=1}^3\mcal{A}_{Y,\bmb{w}_i}\rd\bmb{w}_i=\sum_{i=1}^3\left(\mcal{A}_{X,\bmb{w}_i}^{(1),\text{BV}}+\sum_{j=1}^2\mcal{A}_{X,\bmb{w}_i\bmb{v}_j}^{(2),\text{BV}}+\mcal{A}_{X,\bmb{w}_i\bmb{v}_1\bmb{v}_2}^{(3),\text{BV}}\right)\rd\bmb{w}_i \;.
\end{equation}
Note that here $\bmb{v}_i$ are not differential-form indices as those directions lie normal to $Y$. Furthermore, $\Omega_Y$ can be expressed in terms of $\Omega_X$ by first picking a metric $g_X$ compatible with the complex structure, using which we would have the decomposition \eqref{eq:pullback of anti-holomorphic cotangent bundle of ambient space}. We then choose two orthonormal sections $\mbs{n}_1$ and $\mbs{n}_2$ of $N^{(1,0)}_{Y/X}$, the holomorphic normal bundle of $Y$ in $X$, and define
\begin{equation}\label{eq:section of canonical bundle of Y in terms of that of X}
    \Omega_Y:=\iota_{\mbs{n}_1}\iota_{\mbs{n}_2}\Omega_X \;,
\end{equation}
where $\iota_{\mbs{n}_i}$ denotes the contraction with $\mbs{n}_i$. The construction does not depend on the metric $g_X$ and any metric compatible with the complex structure $\mcal{J}_X$ would do the job.  The first term of \eqref{eq:action of 6d CS on a holomorphically-embedded submanifold Y} coincides with the action \eqref{eq:6d hCS action on projective spinor bundle} of 6d hCS theory on the projective spinor bundle of $M$ provided $Y=\sbm$, and $\ldots$ denotes the rest of the BV action. 

\smallskip We can thus realize the origin of the correspondence in ten dimensions as follows: One starts from the following BV action of 10 hCS theory
    \begin{equation}\label{eq:BV action of 10d hCS theory}
        S_{X}=\frac{1}{2\pi}\bigintsss_{X}\Omega_{X}\wedge\CS(\mcal{A}_{X}) \;,
        \qquad X=\sbcm \;,
    \end{equation}
    where $\Omega_X$ is a section of the canonical bundle\footnote{Recall that holomorphic volume forms exist only for CY manifolds.} of $\sbcm$ of degree $(5,0)$, and the field $\mcal{A}_X\in\Omega^{(0,*)}(\sbcm)\otimes\mfk{sl}(n,\C)[1]$, hence it is given by the formal sum \eqref{eq:field of 10d hCS on X as a formal sum of various fields}. We equip $\sbcm$ with the complex structure $\mcal{J}_{\C^{1,3}}$ given in \eqref{eq:master complex structure on PS+C13}. One can then get to the two sides of the correspondence as follows:

\paragraph{The Hyperbolic $\SU{n}$-Monopole Side.} We equip $\sbrm$ with the complex structure $\mcal{J}_{\R^{1,3}}$ defined in \eqref{eq:complex structure on R13}, and holomorphcally-embed it inside $\sbcm$ by the results of Proposition \ref{prop:PS+(M) is holomorphically embedded in PS+C13}. We then dimensionally-reduce \eqref{eq:BV action of 10d hCS theory} to $\sbrm$, which, by what we have explained around \eqref{eq:action of 6d CS on a holomorphically-embedded submanifold Y} gives the following BV action of the 6d hCS theory
\begin{equation}\label{eq:BV action on PS+R13}
    S_{Y}=\frac{1}{2\pi}\bigintsss_{Y}\Omega_{Y}\wedge\CS(\mcal{A}^{(1)}_{Y})+\ldots \;,
    \qquad Y=\sbrm \;,
\end{equation}
where $\Omega_{Y}$ is a (not necessarily global) section of the canonical bundle of $\sbrm$, that can be constructed as in \eqref{eq:section of canonical bundle of Y in terms of that of X}, $\mcal{A}^{(1)}_{Y}$ is the gauge field of the 6d hCS theory and can be written explicitly in terms of 10d fields as \eqref{eq:ordinary gauge field on Y in terms of fields on X}, and $\ldots$ denotes the rest of the BV action. Once we are in six dimensions, one can recover the data of hyperbolic $\SU{n}$-monopoles, including its masses, charges, and spectral data, by what we have explained in \S\ref{sec:masses, charges, and spectral data from six dimensions}. In particular, we have to impose a reality condition, which reduces the gauge Lie algebra from $\mfk{sl}(n,\C)$ to $\mfk{su}(n)$, as we explained in Remark \ref{rmk:bundle of self-dual two-forms on the Minkowski space}. 

\paragraph{The gCPM Side.} We proceed similarly. We equip $\sbe$ with the complex structure $\mcal{J}_{\R^4}$ defined in \eqref{eq:complex structure on R4}, and embed it holomorphically inside $\sbcm$ by the results of Proposition \ref{prop:PS+(M) is holomorphically embedded in PS+C13}. We then dimensionally-reduce \eqref{eq:BV action of 10d hCS theory} to $\sbe$, which, by what we have explained around \eqref{eq:action of 6d CS on a holomorphically-embedded submanifold Y} gives the following BV action of 6d hCS theory
\begin{equation}\label{eq:BV action on PS+R4}
    S_{Y}=\frac{1}{2\pi}\bigintsss_{Y}\Omega_{Y}\wedge\CS(\mcal{A}^{(1)}_{Y})+\ldots \;,
    \qquad Y=\sbe \;,
\end{equation}
where $\Omega_{Y}$ is a section of the canonical bundle of $\sbe$, that again can be constructed as in \eqref{eq:section of canonical bundle of Y in terms of that of X}, $\mcal{A}^{(1)}_{Y}$ is the gauge field of 6d hCS theory and can be written explicitly in terms of 10d fields as \eqref{eq:ordinary gauge field on Y in terms of fields on X}, and $\ldots$ denotes the rest of the BV action. Once we are in six dimensions, the gCPM can be recovered from what we have explained in \S\ref{sec:gCPM from six dimensions}. 

\smallskip As $\sbcm$ is equipped with a fixed complex structure  \eqref{eq:master complex structure on PS+C13}, and the physics of the 10d hCS theory only depends on the choice of complex structure, it follows that we can successfully recover the two sides of the hyperbolic monopoles/gCPM correspondence as two different manifestations of the same ten-dimensional physics of hCS theory on $\sbcm$. 

\smallskip In a naive sense, the correspondence can be summarized as performing a Wick rotation and imposing or lifting reality conditions on the data. In more detail, we start in 10d and dimensionally reduce to the 6d hCS on $\sbrm$, which realizes the hyperbolic monopole side after imposing reality conditions. We now Wick-rotate $\R^{1,3}\to\R^4$, relax the reality condition on the spectral data, and realize the gCPM side of the correspondence. Conversely, we can start in 10d, and dimensionally reduce the theory to the 6d hCS on $\sbe$, which then realizes the gCPM side of the correspondence. We then perform an inverse Wick rotation $\R^4\to\R^{1,3}$, impose reality condition on the data, and recover the hyperbolic monopole side of the correspondence. With this heuristic comment, we finalize our discussion.

\section{Discussion and Future Directions}
\label{sec:discussion and future direction}

In this work, we have explored the relationship between the spectral data of magnetic monopoles in hyperbolic space and the curve of the spectral parameter of the gCPM. We generalized the observation of Atiyah and Murray \cite{Atiyah199106,AtiyahMurray1995} for the group $\SU{2}$ to the group $\SU{n}$ in \S\ref{sec:generalized corresponence}. We then proposed a realization of the gCPM inside the 4d CS theory in \S\ref{sec:gCPM and 4d CS theory}, which explains its various features. Finally, we explored the origin of the correspondence in \S\ref{sec:on the origin of the correspondence}. This involved three steps: (1) we first showed in \S\ref{sec:hyperbolic monopoles from six dimensions} and \S\ref{sec:gCPM from six dimensions} that the two sides of the correspondence can be realized using the 6d hCS theory formulated on projective spinor bundle of the Minkowski space $\R^{1,3}$, in the case of hyperbolic monopoles, and the projective spinor bundle of $\R^4$, in the case of the gCPM; (2) We then explained that these projective spinor bundles can be holomorphically-embedded in the projective spinor bundle of the complexified Minkowski space $\C^{1,3}$, which is a complex five-dimensional manifold whose complex structure we fix. (3) We then explained that the 6d hCS on projective spinor bundles of $\R^{1,3}$ and $\R^4$ can be realized as the dimensional reduction of the 10d hCS formulated on the projective spinor bundle of $\C^{1,3}$. These are the content of \S\ref{sec:6d hCS on PS+(M) from 10d hCS on PS+C13}. Putting these three points together, we concluded in \S\ref{sec:chasing the correspondence to ten dimensions} that the hyperbolic monopole/gCPM correspondence is the incarnation of the single ten-dimensional physics. 

\smallskip Finally, we collect some of the most interesting puzzles for future investigation.

\paragraph{Explicit Computation of the R-Matrix of the gCPM.} Our discussion of the relationship between the gCPM and the 4d CS theory lacks the explicit computation of the R-matrix of the gCPM. Our discussion in \S\ref{sec:CPM from 4d CS theory} indicates that this would not be a perturbative computation. This observation is consistent with the fact that gCPM does not have a classical R-matrix, and hence the full quantum R-matrix cannot be realized as a formal expansion whose first non-trivial term is the classical R-matrix. Therefore, the explicit computation of R-matrix is an outstanding challenge on multiple levels:

\begin{itemize}
  \item[(1)]{\small\bf Non-Perturbative Definition of 4d CS Theory.} The first issue is to formulate CS theory non-perturbatively. This is more than just writing the path integral formally as 
  \begin{equation}\label{eq:formal definition of path integral of 4d CS theory}
      \langle\mbs{L}_h\mbs{L}_v\rangle\sim\bigintsss\mcal{D}A\,\,\mbs{L}_h\mbs{L}_v\exp\left(-\frac{1}{2\pi\hbar}\bigintsss_{C\times\Pbb^1}\omega_{\Pbb^1}\wedge\CS(A)\right) \;,
  \end{equation}
  for some horizontal and vertical line defects $\mbs{L}_h$ and $\mbs{L}_v$, respectively, and possibly sum over the saddle points, with $\omega_{\Pbb^1}$ as in \eqref{eq:one-form relevant for engineering gcpm on P1}. One possible route to such a non-perturbative formulation is the realization of the theory within string theory. The latter is well-known \cite{CostelloYagi201810,AshwinkumarTanZhao201806,IshtiaqueMoosavianRaghavendranYagi202110}. Following \cite{Witten201101}, one of the most serious attempts to provide a nonperturbative formulation of the 4d CS theory is \cite{AshwinkumarTanZhao201806}. However, the recipe given in this work \cite[eq.\ (3.14)]{AshwinkumarTanZhao201806} is formal and probably not useful for practical computations. One thus needs to provide a more practical nonperturbative recipe for the R-matrix computation. As is usual with any field theory, a proper nonperturbative formulation is a challenge; that 4d CS theory is a semi-holomorphic field theory might simplify the procedure but does not trivialize it.

  \smallskip Another route could be based on what we have explained in \S\ref{sec:gCPM from six dimensions}, where the 4d CS theory emerges as the dimensional reduction of the 6d hCS theory. Therefore, it is conceivable that the non-perturbative definition of the 4d CS theory would involve the 6d hCS theory, and eventually, based on what we have explained \S\ref{sec:6d hCS on PS+(M) from 10d hCS on PS+C13}, the 10d hCS theory.   

  \item[(2)] {\small\bf 4d CS Theory at the Roots of Unity.} It is known that the gCPMs are related to certain irreducible representations of the quantum group $\text{U}_q(\widehat{\mfk{sl}}(n,\mbb{C}))$ (or more precisely a trivial extension thereof)  at a root of unity, the so-called minimal cyclic representations \cite{DateJimboKeiMiwa199008,DateJimboKeiMiwa199103,DateJimboMikiMiwa199104}. These are families of $N^{n-1}$-dimensional representations parameterized by $z\in(\mbb{C}^\times)^{3n-1}$, where  $N\ge 3$ is an odd number determining an $N$\textsuperscript{th} root of unity $q$. This means that a more elaborate incorporation of the gCPM in the 4d CS theory would involve the formulation of the latter at the roots of unity. Since this is intimately related to the nonperturbative formulation of the theory, it is expected that the loop-counting parameter $\hbar$ in \eqref{eq:formal definition of path integral of 4d CS theory} will be related to parameters $n$ and $N$ through some relation. There are three different limits that might be helpful in figuring out the precise relationship between $(n,N)$ and $\hbar$: (1) $N\to\infty$ while keeping $n$ finite; (2) $N,n\to\infty$ while keeping $n/N$ finite; (3) $N,n\to\infty$ while $n/N\to 0$ \cite{AuYangPerk199305,AuYangPerk199906}. It was also realized in \cite{Tarasov199204,Tarasov199211} that the intertwiners of some other irreducible cyclic representations are related to the gCPM.  

  \item[(3)]  {\small\bf Line Defects Carrying Minimal Cyclic Representations.} The R-matrix $\mcal{R}(z,z')$ is the intertwiner of certain tensor product $\pi_{z,z'}$ of minimal cyclic representations \cite{DateJimboKeiMiwa199008,DateJimboKeiMiwa199103,DateJimboMikiMiwa199104}. The existence of an R-matrix forces the parameters $(z,z')$ to lie on a certain algebraic variety, i.e. the curve of the spectral parameter of the model. Therefore, based on what we explained in \S\ref{sec:correspondence for classical and exceptional groups}, one needs to consider two line defects carrying minimal cyclic representations of $\text{U}_q(\wh{\mfk{sl}}(n,\C))$ and perform the computation of the R-matrix. Hence, the construction of such line defects is a significant and essential step. It is expected that the representation theory of quantum groups $\text{U}_q(\widehat{\mfk{so}}(2n+1,\mbb{C}))$, $\text{U}_q(\widehat{\mfk{sp}}(2n,\mbb{C}))$, and $\text{U}_q(\widehat{\mfk{so}}(2n,\mbb{C}))$ and more generally $\text{U}_q(\widehat{\mfk{g}})$ should provide insight into the structure of the gCPM for these groups and may even lead to the realization of new integrable models with higher-genus curve of the spectral parameter. Certain irreducible representations of non-affine quantum groups have been considered in the literature \cite{DateJimboMikiMiwa199104,DeConciniKac1990,ChariPressley1991,Schnizer199201,Schnizer199305}. See also the discussion in \S\ref{sec:gauge Lie algebra, boundary conditions, R-matrix} for Bazhanov--Stroganov procedure, a generalization of which for gCPM should be straightforward.
\end{itemize}

It is clear that what we have explored in this work is just the tip of a huge iceberg. 

\paragraph{Integrable Spin Models with Higher-Genus Curve of Spectral Parameter.} From the basic philosophy of the 4d CS theory, the existence of an integrable lattice model is tied to the preservation of topological symmetry along the topological plane. Based on this idea, we have argued in \S\ref{sec:generalization to gCPM} that together with the requirement of $\mbb{Z}_N^{n-1}$-invariance, this would almost uniquely fix the form of the one-form of the 4d CS theory. Let us recall the most general form of one-form, other than the rational one, which does not have a zero
\begin{equation}
    \omega_{\Pbb^1}=\frac{1}{\prod_{i=1}^{\#}(z^{r_i}-z_i)^{1/s_i}}\rd z \;,
\end{equation}
with both $r_i$ and $s_i$ are positive integers. If we perform $z\to 1/z$ transformation, the requirement of not having a zero gives
\begin{equation}\label{eq:requirement of not having a zero for omegaP1}
    \sum_{i=1}^{\#}\frac{r_i}{s_i}-2=0 \;.
\end{equation}
This is a very stringent constraint. However, we cannot proceed further without further information. For example, one can ask about the existence of a model in which spins take value in $\mbb{Z}_{N_1}\times\ldots\times\mbb{Z}_{N_n}$ with $N_i\ne N_j$ for $i\ne j$. This would set $r_i=N_i,\,i=1,\ldots,n$, and \eqref{eq:requirement of not having a zero for omegaP1} would reduce to
\begin{equation}
    \sum_{i=1}^{n}\frac{N_{i}}{s_i}-2=0 \;.
\end{equation}
One of the minimal solutions is
\begin{equation}
    \#=2 \;,
    \qquad (r_1,r_2)=(N_1,N_2) \;,
    \qquad (s_1,s_2)=(N_1,N_2) \;.
\end{equation}
The gCPM is a particular instance of this solution for $r_i=s_i=N^{n-1},\,i=1,2$. This simple analysis suggests that there may be an integrable model in which the spins take value in $\mbb{Z}_{N_1}\times\mbb{Z}_{N_2}$ and the one-form $\omega_{\Pbb^1}$ would take the form
\begin{equation}
    \omega_{\Pbb^1}=\frac{1}{\sqrt[\leftroot{-2}\uproot{2} N_1]{\prod_{i=1}^{N_1}(z-z_i)}\sqrt[\leftroot{-2}\uproot{2} N_2]{\prod_{i=1}^{N_2}(z-z'_i)}}\rd z \;, \qquad i,j=1,2,\,i\ne j \;,
\end{equation}
for some $\{z_1,\ldots,z_{N_1}\}$ and $\{z'_1,\ldots,z'_{N_2}\}$. This still satisfies $\partial_{\bmb{z}}\omega_{\Pbb^1}=0$. It would be very interesting to explore this direction for two specific questions: (1) Would the requirement of not having a zero in the topological plane be enough to conclude that there is an integrable model whose associated one-form is determined by that requirement? (2) Would it be possible for $\omega_{\Pbb^1}$ to have zeroes with the possibility of breaking topological invariance but then it is restored at the quantum level by some mechanism?

\paragraph{Treating Hyperbolic Monopoles as Circle-Invariant Instantons.} In this work, we have used the work of Murray and Singer to work out the hyperbolic monopole/gCPM correspondence \cite{MurraySinger199607}. Using this work, it is at least meaningful to take the limit of vanishing boundary values of the Higgs field and define the spectral data. However, there are some partial results on the construction of spectral data of hyperbolic $\SU{n}$-monopole for integral boundary values of the Higgs field in the spirit of the original work of Atiyah \cite[\S 14]{Chan2017}. This work uses a different set of boundary conditions than Murray and Singer. It is unclear how one deals with vanishing boundary values of the Higgs field since there is no limit of this sort. It might be possible to set the boundary values to zero without taking a limit. It would be interesting to (1) complete the work of loc.\ cit.\ and provide a proper generalization of the work of Atiyah to the group $\SU{n}$; in particular, it would be interesting to show that the spectral data uniquely recovers the hyperbolic monopole solution, and (2) work out the correspondence with gCPM and compare the results with what has been explored in \S\ref{sec:generalized corresponence}. It might also be useful to provide a formal formulation of spectral data in the language of algebraic geometry along the lines of \cite{HurtubiseMurray1990}.

\paragraph{Possible Role of K\"ahler CS Theory.} We have seen in \S\ref{sec:instantons from 6d CS theory} that instantons can be realized from the 6d CS theory (see also \cite{Popov199803,Popov199806,Costello202004,BittlestonSkinner202011}). Although the case we discussed was related to instantons on the Minkowski space, it is desirable to have an action whose equations of motion describe (anti-)instantons on $\R^4$, as hyperbolic monopoles with integer boundary values of Higgs field have been constructed as circle-invariant instantons on $\R^4$ \cite{Atiyah1984,Chakrabarti198601,Nash198608}. Such an action does actually exists and is dubbed K\"ahler CS theory \cite{NairSchiff199008,NairSchiff199203}. Consider a five-dimensional product manifold of the form $M\times \R$, where $(M,\omega)$ is a K\"ahler manifold equipped with the K\"ahler structure $\omega$. Then, the action of K\"ahler CS theory is given by
\begin{equation}
    S=\frac{k}{2\pi}\bigintsss_{M\times \R}\omega\wedge\CS(\mcal{A})+\bigintsss_{M\times \R}\text{Tr}_{\mfk{g}}\left((\Phi+\overline{\Phi})\mcal{F}\right) \;,
\end{equation}
where $k$ denotes the coupling of the theory, $\mcal{A}$ is the gauge field on $M\times \R$, and $\Phi$ is a scalar field. If we denote by $F$ the field strength on $M$, the relevant equations of motion of the theory are
\begin{equation}
    F^{(2,0)}=0=F^{(0,2)} \;,
    \qquad \omega\wedge F=0 \;.
\end{equation}
It turns out that $\Phi$ does not affect the analysis, and the space of solutions of these equations up to gauge transformations turns out to be the moduli space of anti-instantons on $M$. Therefore, one can get hyperbolic monopoles with integer boundary values of the Higgs field by circle-reduction of these equations as in \cite{Atiyah1984}. However, it is not quite clear how to connect this theory to the gCPM. One possibility is that one can connect this theory to the 6d hCS theory on $\sbe$ and then to the gCPM following what has been explained in \S\ref{sec:gCPM from six dimensions}. It would be interesting to explore this direction further. 

\paragraph{Possible Roles of Supersymmetric Gauge Theories, and String Field Theory.}  It is known that supersymmetric gauge theories can be formulated in twistor spaces \cite{BoelsMasonSkinner200604}. On the one hand, the holomorphic twist of the ten-dimensional maximally supersymmetric gauge theory on $\C^5$ is the 10d hCS theory \cite{Baulieu201009}. To connect these ideas to this work, it would be interesting to see whether holomorphic twists of supersymmetric gauge theories on $\sbm$ for $M=\R^{1,3},\R^4$ and $\sbcm$ are related to the hCS theories, in particular in six and ten dimensions. This may be a route to connect our results to string theory.

\smallskip  On the other hand, the action of the 6d CS theory describes the open sector of string field theory (i.e.\ spacetime) realization of the B-model topological string theory \cite{Witten199207}. As we already mentioned in Remark \ref{rmk:relevance of open-string field theory}, $\sbcm$, $\sbrm$, and $\sbe$ are not CY manifolds: As smooth Riemannian manifolds, all of these are of the form $\R^{\#}\times S^2$ for some $\#$. The Ricci curvature $\text{R}$ of such manifolds is $\text{R}(\R^{\#}\times  S^2)=\text{R}(\R^{\#})+\text{R}(S^2)={1}/{2}$. Therefore, it is not immediately clear how the construction of this work is related to string field theory. As we previously mentioned in Remark \ref{rmk:relevance of open-string field theory}, such manifolds are expected to appear as backgrounds in the full formulation of the theory \cite{Zwiebach199606}. It would be highly desirable but extremely challenging to understand the set of possible backgrounds of string field theory. This work provides further evidence of the importance of this mostly-neglected question at the foundation of string theory. We leave this highly non-trivial follow-up of this work to ambitious readers. Another possible alternative route to string theory is detailed below.

\paragraph{Holomorphic Chern--Simons Theories in Complex $(2k+1)$-Dimensions.} We have discussed in \S\ref{sec:on the origin of the correspondence} how the action of the hCS theory on $\sbcm$ reduces to the action of the 6d CS theory on $\sbm$ with $M=\R^{1,3}$ or $\R^4$. Instead of the action of the BV action of the hCS theory, it is tempting to propose the following action in complex dimension $2k+1$
\begin{equation}\label{eq:action of hCS theory in 2k+1 complex dimension}
    S_{2k+1}=\frac{1}{2\pi}\bigintsss_{X}\Omega_{2k+1}\wedge\text{CS}_{2k+1}(\mcal{A}) \;,
    \qquad k=1,2,\ldots \;,
\end{equation}
where $X$ is a $(2k+1)$-dimensional complex manifold with the holomorphic volume form $\Omega_{2k+1}$ and $\text{CS}_{2k+1}(\mcal{A})$ is the Chern--Simons $(2k+1)$-form depending on a connection $\mcal{A}=\sum_{i=1}^{2k+1}\mcal{A}_{\bmb{z}_i}\rd\bmb{z}_i$. In the simplest situation of $k=1$, this is just the action of the 6d CS theory which we have visited many times in this work. The next simplest case with $k=2$ gives a 10d theory. We can perform a dimensional reduction to six dimensions following the procedure explained in \S\ref{sec:4d hBF theory from 6d hCS theory}: One needs to choose a metric compatible with the complex structure of $\sbcm$, choose a suitable metric along the two complex directions, say $v_{i},\,i=1,2$, along which once to reduce the theory, assume the fields and $\Omega_{2k+1}$ are independent of $v_1$ and $v_2$, and finally integrate over $v_1$ and $v_2$. Taking $X=Y\times T^4$, with $Y$ being complex three-dimensional and $T^4$ denotes a four-torus, the resulting action is 
\begin{eqaligned}
    \wt{S}_3=\frac{1}{2\pi}\cdot\text{Vol}(T^4)\bigintsss_Y\Omega_3\wedge \text{Tr}_{\mfk{g}}\left([\mcal{A}_{\bmb{v}_1},\mcal{A}_{\bmb{v}_2}]\text{CS}_3(\mcal{A}_3)\right)+\ldots \;,
\end{eqaligned}
where $\mcal{A}_3=\sum_{i=1}^3\mcal{A}_{\bmb{w}_i}\rd\bmb{w}_i$ and $\ldots$ denotes other terms coming from the dimensional reduction. If we take $\mcal{A}_{\bmb{v}_i}$ to be constants, then the first term almost resembles the action of the 6d hCS theory. To the best of our knowledge, the possible role of the hCS theories described by the actions \eqref{eq:action of hCS theory in 2k+1 complex dimension} in integrability, gauge theory, string theory (for the particular case of $k=2$) have not been explored in the literature. It would be interesting to understand more about these theories.

\paragraph{Non-Maximal Symmetry Breaking.} In this work, we have considered the maximal symmetry-breaking pattern of hyperbolic monopole solutions. To the best of our knowledge, the explicit proof that hyperbolic monopoles with arbitrary symmetry-breaking patterns are integrable and can be determined by their spectral data has not appeared in the literature. However, hyperbolic $\SU{n}$-monopoles are expected to be integrable for any values and numbers of monopole charges $(m_1,\ldots,m_k)\in\mbb{Z}_{\ge}^{k}$\footnote{$\mbb{Z}_{\ge}$ denotes the set of non-negative integers.} with $k\le n-1$. This is associated with the symmetry-breaking pattern $\SU{n}\to H\times\U{1}^{k}$ with $H\subset \SU{n}$ has rank $r-k$. Hence, the construction discussed in \S\ref{sec:gCPM from hyperbolic SU(n) monopoles} suggests that there may be a corresponding lattice integrable model. It is plausible that such symmetry-breaking patterns are associated with gCPMs with spin variables taking values in $\mbb{Z}_N^{k-2}$. It would be interesting to investigate whether one can construct new integrable models for such symmetry-breaking patterns of hyperbolic $\SU{n}$ monopoles.

\paragraph{Possible Generalization of the gCPM.} The hyperbolic monopole/gCPM correspondence established in \S\ref{sec:generalized corresponence} may pave the way for the construction of generalizations of the gCPM. As hyperbolic $\SU{n}$-monopoles are expected to be integrable for any values of the monopole charge $(N_1, \ldots, N_{n-1})\in\mbb{Z}^{n-1}$, the construction discussed in \S\ref{sec:gCPM from hyperbolic SU(n) monopoles} suggests that the corresponding lattice integrable system, if exists, is also integrable. The curve $\Sigma$ of the spectral parameter would be the curve defined by the following set of equations
    \begin{equation}
        \begin{pmatrix}
            (z_i^+)^{N_i}
            \\
            (z_i^-)^{N_i}
        \end{pmatrix}
        =K_{ij}
        \begin{pmatrix}
            (z_j^+)^{N_j}
            \\
            (z_j^-)^{N_j}
        \end{pmatrix} \;,
        \qquad i,j=1,\ldots,n \;,
    \end{equation}
    where the matrices $K_{ij}$ still satisfy the identity and cocycle conditions \eqref{eq:relations satisfied by matrices Kij of gCPM}, and would have genus (see \eqref{eq:genus of Zm1x...Zmn-1 cover of the curve in product of n copies of P1 with of arbitrary degree} or \eqref{eq:genus of a general complete intersection})
    \begin{equation}
        g_{\Sigma}=1+N^2_1\ldots N^2_{n-1}\left(\sum_{i=1}^{n-1}N_i-n\right) \;.
    \end{equation}
    This supposed-to-be-integrable lattice model would be a further generalization of the generalized chiral Potts model where the $n-1$ spin variables are taking values in $\mbb{Z}_{N_1}\times\cdots\times\mbb{Z}_{N_{n-1}}$. The existence of such an integrable model would be highly non-trivial. This is partially due to comments around \eqref{eq:requirement of not having a zero for omegaP1}.

\paragraph{Discrete Nahm Equations and String Theory.}  It is well-known \cite{Donaldson198409,HurtubiseMurray1989} that monopoles on $\R^3$ are in one-to-one correspondence with the solutions of Nahm's equations \cite{Nahm198109,Nahm198211}. On the other hand, Nahm's equations have a natural realization within string theory \cite{Diaconescu199608}. 

\smallskip There is an analog to Nahm's equations and the ADHM data for hyperbolic monopoles. For the gauge group $\SU{2}$, this was proposed by Braam and Austin \cite{BraamAustin199008} and their integrability is shown in \cite{MurraySinger199903}. Their generalization to the group $\SU{n}$ was discussed in \cite{Chan201506,Chan2017}. Therefore, there are two natural questions (1) can hyperbolic monopoles be constructed in string theory? (2) have discrete Nahm's equations a realization in string theory? The answers to these questions pave the way for a possible understanding of the hyperbolic monopoles/gCPM correspondence as a (sequence) of string dualities, along the same lines as the Bethe/Gauge correspondence \cite{CostelloYagi201810,IshtiaqueMoosavianRaghavendranYagi202110}. 

\paragraph{Hyperbolic Monopoles and Rational Maps into Flags.} It is well-known that the moduli space of monopoles on $\R^3$ can be realized as the space of rational maps into flag manifolds. This conjecture was first put forward by Atiyah \cite{Atiyah198412} and proved by Donaldson for the gauge group $\SU{2}$ \cite{Donaldson198409}. For the case of maximal symmetry-breaking with classical gauge groups, it is proven by Hurtubise and Murray \cite{Hurtubise198506,Hurtubise198912,HurtubiseMurray1989}. A further generalization for arbitrary compact semisimple gauge group and arbitrary symmetry-breaking pattern has been proved by Jarvis \cite{Jarvis199807,Jarvis200001}.

\smallskip There is a similar story for hyperbolic monopoles, conjectured by Atiyah \cite{Atiyah1984}. The case with $\SU{2}$ gauge group was established in \cite{JarvisNorbury199709}, while the proof is still lacking for other gauge groups. This would provide a connection between the space of such maps and the moduli space of hyperbolic monopoles, studied in \cite{Atiyah1984,Braam198902,Nash2006,Nash200810,Hitchin2008,BielawskiSchwachhofer201104,BielawskiSchwachhofer201201,FigueroaOFarrillGharamti201311,Gharamti2015,FranchettiRoss202302} (see also \cite{Sutcliffe202112} for the connection to the hyperbolic analog of Atiyah--Hitchin manifolds), along the lines of \cite{MannMilgram199307}. Furthermore, a natural question related to this work is which points/loci of the moduli space of hyperbolic monopoles or rational maps lead to a generalization of the CPM. 

\section*{Acknowledgments} 
We would like to thank Roland Bittleston, Kevin Costello, and Edward Witten for stimulating discussion. The research of SFM is funded by the ERC Consolidator Grant
\#864828 “Algebraic Foundations of Supersymmetric Quantum Field Theory” (SCFTAlg).
The work of MY was supported in
part by the JSPS Grant-in-Aid for Scientific Research (Grant No.\ 20H05860, 23K17689, 23K25865), and
by JST, Japan (PRESTO Grant No. JPMJPR225A, Moonshot R\& D Grant No.\ JPMJMS2061).
Kavli IPMU is supported by World Premier International Research Center Initiative (WPI), MEXT, Japan.

\appendix

\section{More Details on Twistor Space of Hyperbolic Space}
\label{sec:more details on twistor space of hyperbolic space}
For the sake of completeness, in this appendix, we explain the construction of twistor space of hyperbolic space from the $\ct$-action on the twistor space of $S^4$, i.e.\ $\Pbb^3$.

\smallskip Recall that instantons on $\R^4$ are solutions of the (anti)self-duality equation
\begin{equation}\label{eq:antiselfduality equation}
    F=\pm \star_{\R^4}F \;,
\end{equation}
on $\R^4$, where $\star_{\R^4}$ denotes the Hodge star operation on $\R^4$. For convenience, we work with the antiself-dual solutions or anti-instantons. As an equality of differential forms, this equation is invariant under any coordinate transformation of $\R^4$. In particular, it is invariant under a conformal rescaling. Denote the coordinate on $\R^4$ to be $(x_1,x_2,r,\theta)$, in which the flat metric on $\R^4$ is
\begin{equation}\label{eq:the metric on R4 in terms of metric on H3}
	ds^2_{\R^4}=r^2\left(\frac{\rd x_1^2+\rd x_2^2+\rd r^2}{r^2}+\rd \theta^2\right) \;.
\end{equation} 
The invariance of \eqref{eq:antiselfduality equation} under conformal rescaling implies that we can work with the conformally-rescaled metric on $\R^4$
\begin{equation}\label{eq:conformally-rescaled metric on R4}
    {ds'}^2_{\R^4}=\frac{\rd x_1^2+\rd x_2^2+\rd r^2}{r^2}+\rd \theta^2 \;.
\end{equation}
This metric is singular at $r=0$, which is simply a copy of $x_1x_2$-plane that we denote as $\R^2_{12}$. This metric is thus smooth only along $\R^4-\R^2$. Identifying the first term of \eqref{eq:conformally-rescaled metric on R4} as the Poincar\'e metric on $\Hbb^3$, we realize the conformal equivalence
\begin{equation}\label{eq:conformal equivalence}
	\R^4-\R^2\simeq \Hbb^3\times S^1 \;.
\end{equation}
This relation naturally leads to the study of $S^1$-invariant solutions of \eqref{eq:antiselfduality equation}, which can then be interpreted as hyperbolic monopoles on $\Hbb^3$. More precisely, this equivalence sends $(x_1,x_2,r,e^{\mfk{i}\theta})\in\Hbb^3\times S^1$ to $(x_1,x_2,re^{\mfk{i}\theta})\in \R^4-\R^2$. If we consider the conformal compactification of $\R^4-\R^2$ i.e. $S^4-S^2$, this equivalence shows that $S^4$ is an $S^1$-equivariant conformal compactification of $\Hbb^3\times S^1$. Finally, the boundary of $\Hbb^3$ (or more precisely its closure $\overline{\Hbb}^3$), located at $r=0$\footnote{This is similar to the more familiar case of the upper half-plane $\Hbb^2$ endowed with the hyperbolic metric where $\text{Im}(z)=0$ is the boundary of $\overline{\Hbb}^2$.} is denoted as $S^2_\infty:=\partial\overline{\Hbb}^3$. On the other hand, it is well-known that an anti-instanton $\SU{n}$ solution on $S^4$ corresponds to a rank-$n$ holomorphic vector bundle on $\Pbb^3$ \cite[Theorem 2.9]{Atiyah1979}. Therefore, studying hyperbolic monopoles on $\Hbb^3$ naturally leads to studying the $S^1$-action (or its complexification $\C^\times$) on $\Pbb^3$.\footnote{We denote the complex projective space simply as $\Pbb^{\#}$ while for the quaternionic projective space we use the notation $\mbb{HP}^{\#}$.} This will lead to the notion of twistor space of $\Hbb^3$, which we elaborate on in the following. 

\begin{rmk} \normalfont
    By defining hyperbolic monopoles as circle-invariant instantons, the role of the Higgs field is played by the component of the 4d gauge field along the circle, i.e. $A_\theta$. Rescaling $R\to \lambda R$, changes $R\rd\theta\to \lambda R\rd\theta$ leads to $A_\theta\to\lambda^{-1}A_\theta$. Therefore, the values of the Higgs field and especially its masses would be rescaled. On the other hand, the scalar curvature $\mbf{R}$ of $\Hbb^3$ will be rescaled as $\mbf{R}\to\lambda^{-1}\mbf{R}$ too. This means that $p_i/\mbf{R}$ is an invariant parameter of the configuration. If we send $p\to 0$, then $\mbf{R}\to\infty$, and vice versa. Fixing $\mbf{R}=-1$ means the $i$\sth component of the Higgs field is $p_i$, and unlike the case of Euclidean monopoles cannot be gauged away.  Due to the lack of an absolute scale in the Euclidean case, there is no such rescaling and we can always take the $i$\sth component of the Higgs field to be $1$.  \qed  
\end{rmk}

\smallskip $\mbb{R}^4$ can be identified as the set of quaternions and its compactification $S^4$ can then be identified with the quaternionic projective line. The set of quaternions is defined as
\begin{equation}
	\Hbb:=\{a_0+a_1\mfk{i}+a_2\mfk{j}+a_3\mfk{k}|\,a_i\in\R\} \;,
\end{equation}
with $\mfk{i}^2=\mfk{j}^2=\mfk{k}^2=-1$ and $\mfk{i}\mfk{j}=\mfk{k}$ and its cyclic permutations. We have
\begin{eqaligned}
    a_0+a_1\mfk{i}+a_2\mfk{j}+a_3\mfk{k}&=a_0+a_1\mfk{i}+a_2\mfk{j}+a_3\mfk{i}\mfk{j}
    \\
    &=a_0+a_1\mfk{i}+(a_2+a_3\mfk{i})\mfk{j} \;.
\end{eqaligned}
Hence, $\mbb{H}$ can be identified with $\C^2$ as $\Hbb\simeq\{z_1+z_2\mfk{j}\,|\,z_1,z_2\in\C\}=\C^2$. We define the scalar multiplication from the left. The quaternionic projective line $\mbb{HP}^1$ is the quotient of $\Hbb^2$, the two-dimensional quaternionic space, by the right multiplication: we identify $(h_1,h_2)\sim  \lambda(h_1,h_2)$ for $\lambda\in\Hbb^\times$ and $(h_1,h_2) \in \Hbb^2\simeq \C^4$. One then has the projection $\pi:\Pbb^3\to \mbb{HP}^1$ is given by
\begin{equation}\label{eq:the twistor fibration}
	\pi([z_1:z_2:z_3:z_4]_{\C})=[z_1+z_3\mfk{j}:z_2+z_4\mfk{j}]_{\Hbb} \;,
\end{equation}
where the notation $[z_1:z_2:z_3:z_4]_{\C}$ and $[z_1+z_3\mfk{j}:z_2+z_4\mfk{j}]_{\Hbb}$ denote the homogeneous coordinates on $\Pbb^3$ and $\mbb{HP}^1$, respectively. The fibration \eqref{eq:the twistor fibration} defines the twistor fibration. 

\smallskip The fiber $[z_1:z_2:z_3:z_4]_{\C}$ over a generic point $[w_1+w_2\mfk{j}:1]_{\Hbb}$ can be easily determined
\begin{equation}\label{eq:fiber over a generic point of twistor fibration}
    \begin{pmatrix}
    z_1
    \\
    z_3
    \end{pmatrix}
    =
    \begin{pmatrix}
        w_1 & -\bmb{w}_2
        \\
        w_2 & \hphantom{-}\bmb{w}_1
    \end{pmatrix}
    \begin{pmatrix}
    z_2
    \\
    z_4
    \end{pmatrix} \;.
\end{equation}
The only parameters are $z_2$ and $z_4$ coordinates up to a rescaling by an element of $\ct$, and as such, the fiber over a generic point is a copy of $\Pbb^1$. 
Furthermore, the fiber over the point $[w_1+w_2\mfk{j}:0]_{\Hbb}\sim[1:0]_{\mbb{H}}$ is just $[1:0:0:0]_{\C}$. 

\smallskip There is an antiholomorphic involution $\wt{\sigma}$\footnote{It is an involution since $\wt{\sigma}^2=1$.} on $\Pbb^3$
\begin{equation}\label{eq:real structure on P3}
	\wt{\sigma}([z_1:z_2:z_3:z_4]_{\C})=[\bmb{z}_1:\bmb{z}_2:\bmb{z}_3:\bmb{z}_4]_{\C} \;.
\end{equation}
There is no fixed point for the action of $\wt{\sigma}$ but there are fixed lines which are called {\it real lines}. Therefore, $\wt{\sigma}$ defines a real structure on $\Pbb^3$.

\smallskip We can now consider the $S^1$-action on the twistor fibration. Define the circle action on $S^4\simeq\mbb{HP}^1$ by\footnote{It is more convenient to define the $S^1$ action from the left.}
\begin{equation}
	[h_1:h_2]_{\Hbb}\mapsto [h_1e^{+\mfk{i}\frac{\theta}{2}}:h_2e^{+\mfk{i}\frac{\theta}{2}}]_{\Hbb} \;, \qquad h_1,h_2\in\Hbb \;.
\end{equation}
Since $\R^4\subset S^4$ due to $h\mapsto[h:1]_{\Hbb}$ (recall that $\Hbb\simeq\C^2\simeq\R^4$), the $S^1$-action on $\R^4$ is
\begin{equation}
	[h:1]_{\Hbb}\mapsto [e^{-\mfk{i}\frac{\theta}{2}}h e^{+\mfk{i}\frac{\theta}{2}}:1]_{\Hbb} \;.
\end{equation}
If we take $h=z_1+z_3\mfk{j}$, then $z_1+z_3\mfk{j} \mapsto z_1+e^{-\mfk{i}\theta}z_3\mfk{j}$. The fixed loci of this action is a copy of $\R^2$ ($z_3=0$) or its conformal compactification $S^2$, as before, which is the same as $S^2_\infty$. On the other hand, if we identify $\R^4$ as $[1:h]$ with $h=z_2+z_4\mfk{j}$, then the fixed loci is given by the plane $z_4=0$. 

\smallskip The lift and complexification of this action to $\Pbb^3$ is a $\C^{\times}$-action given by
\begin{equation}\label{eq:the C* action on CP3}
    \lambda\cdot [z_1:z_2:z_3:z_4]_{\C}=[ z_1:  z_2:  \lambda z_3:\lambda z_4]_{\C} \;,
    \qquad \lambda\in\C^{\times} \;.
\end{equation}
This action has two fixed lines
\begin{equation}\label{eq:the fixed lines of the C* action on CP3}
	\Pbb^1_+:= \{z_1=z_2=0\} \;,
    \qquad \Pbb^1_-:= \{z_3=z_4=0\} \;.
\end{equation}
Therefore, the space on which $\C^\times$-action is non-trivial is $\Pbb^3-(\Pbb^1_+\cup\Pbb^1_-)$, and are clearly exchanged by \eqref{eq:real structure on P3}. Under the twistor map \eqref{eq:the twistor fibration}, these lines are sent to $S^2_\infty$ 
\begin{eqaligned}\label{eq:projection of fixed-lines of C*-action on CP3}
     \pi([z_1:z_2:0:0]_\C)&=[z_1:z_2]_\Hbb\in S^2_\infty \;,
     \\
      \pi([0:0:z_3:z_4]_\C)&=[z_3\mfk{j},z_4\mfk{j}]_\Hbb=[\bmb{z}_3:\bmb{z}_4]_{\Hbb}\in S^2_\infty \;,
\end{eqaligned}
where in the second line and the second equality, we multiplied by $\mfk{j}$ from the right. Therefore, $\Pbb^1_+\to S^2_\infty$ is orientation-preserving while $\Pbb^1_-\to S^2$ is orientation-reversing.  

\smallskip Finally, it follows from \eqref{eq:the C* action on CP3} that
\begin{eqaligned}\label{eq:small and large limits of lambda}
\lim_{\lambda\to 0}\lambda\cdot [z_1:z_2:z_3:z_4]_{\C}&=[z_1:z_2:0:0]_{\C}\in\Pbb^1_- \;, %
	\\
\lim_{\lambda\to \infty}\lambda\cdot[z_1:z_2:z_3:z_4]_{\C} &=[0:0:z_3:z_4]_{\C}\in\Pbb^1_+ \;. %
\end{eqaligned}
Hence, taking the quotient of $\C^{\times}$-action \eqref{eq:the C* action on CP3}, we find that
\begin{equation}\label{eq:minitwistor space of hyperbolic space}
	\frac{\Pbb^3-(\Pbb^1_+\cup\Pbb^1_-)}{\C^{\times}}\simeq \Pbb_+^1\times\Pbb^1_-  \;.
\end{equation} 
This is the minitwistor space of $\overline{\Hbb}^3$ which parameterizes geodesics on $\Hbb^3$ \cite{Atiyah1984}.

\section{Some Basic Facts about Complete Intersections}
\label{sec:basic facts about complete intersections}
In this appendix, we explain some basic facts about curves on projective space that are complete intersections. The motivation is that the curve of the spectral parameter of the gCPM, defined in \eqref{eq:curve of spectral parameter of gCPM}, is a special case of complete intersections. 

\paragraph{Smooth Complete Intersections.} Consider $n-1$ polynomials $\{F_1,\ldots,F_{n-1}\}$ of degree $\{d_1,\ldots,d_{n-1}\}$. Consider the zero-locus
\begin{equation}
    C:=\left\{[x_0:\ldots,x_n]\in\Pbb^n\,\big|\,F_1(x_0,\ldots,x_n)=\ldots=F_{n-1}(x_0,\ldots,x_n)=0\right\} \;,
\end{equation}
which is a curve in $\Pbb^n$. This curve is called a smooth complete intersection if the $(n-1)\times (n+1)$ matrix $[\partial F_i/\partial x_\alpha]$ has the maximum rank $n-1$ at every point of $C$.

\smallskip We would like to express $C$ as a branched cover of $\Pbb^1$. By the Riemann's Existence Theorem, 
every Riemann surface carries a nonconstant meromorphic function, and as a result can be realized as a branched cover of $\Pbb^1$. The basic idea is that a meromorphic function on $C$ can be written locally as $P(z)/Q(z)$ for some holomorphic functions $P(z)$ and $Q(z)$ and a local coordinate $z$ on $C$. Then, the branched cover map $\pi:C\to\Pbb^1$ is given by $\pi(z)=[P(z):Q(z)]$. 

\smallskip For a complete intersection $C$, consider the meromorphic function $x_n/x_0$, which defines the branched covering map $\pi([x_0:\ldots:x_n])=[x_0:x_n]$. Using this map, we can compute the genus of $C$. Recall the Riemann--Hurwitz formula \cite{Hurwitz189103} 
\begin{equation}\label{eq:Riemann-Hurwitz applied to complete intersections}
    2g_C-2=\deg\pi(2g_{\Pbb^1}-2)+\deg R_\pi \;,
\end{equation}
where $g_C$ is the genus of $C$, $\deg\pi$ is the degree of $\pi$, and $\deg R_\pi$ is the degree of the ramification divisor $R_\pi$ of $\pi$. Hence, we need to determine $\deg\pi$ and $\deg R_\pi$ first. Consider a generic point $q:=[x_0:x_n]=[1:\lambda]\in\Pbb^1$. A fiber over $q$ is given by points of the form $[1:x_1 :x_2:\ldots:\lambda]$ which satisfy $F_i(1,x_1,\ldots,x_{n-1},\lambda)=0,\,i=1,\ldots,n-1$. Using say $F_{n-1}$, we can express $x_{n-1}$ as a function of $\lambda,x_1,\ldots,x_{n-2}$ as $\prod_{i=1}^{d_{n-1}}(x_{n-1}-\zeta^{(n-1)}_i(\lambda,x_1,\ldots,x_{n-2}))$ for some functions $\zeta^{(n-1)}_i$. Plugging one of the roots into $F_{n-2}$, we can similarly express $x_{n-2}$ as a function of $\lambda,x_1,\ldots,x_{n-3}$ with $d_{n-2}$ roots $\{\zeta^{(n-2)}_1(\lambda,x_1\ldots,x_{n-3}),\ldots,\zeta^{(n-2)}_{d_{n-2}}(\lambda,x_1\ldots,x_{n-3})\}$. Continuing in this fashion, we end up with $x_1$ in terms of $\lambda$, with $d_1$ roots $\zeta^{(1)}_1(\lambda),\ldots\zeta^{(2)}_{d_1}(\lambda)$. Therefore, there are $d_1\ldots d_{n-1}$ points in the fiber over $q$, each of multiplicity $+1$. We thus conclude
\begin{equation}\label{eq:degree of branched cover map of complete intersections}
    \deg\pi=d_1\ldots d_{n-1} \;.
\end{equation}

Next, we determine $\deg R_\pi$. The map $\pi$ is ramified at a point $p\in C$ if and only if
\begin{equation}\label{eq:definition of J(p)}
    J(p):=\det\left(\frac{\partial (F_1,\ldots,F_{n-1})}{\partial (x_1,\ldots,x_{n-1})}\right)(p)=0 \;,
\end{equation}
where
\begin{equation}
    \frac{\partial (F_1,\ldots,F_{n-1})}{\partial (x_1,\ldots,x_{n-1})}:=
    \begin{pmatrix}
        \partial F_1/\partial x_1 & \ldots & \partial F_1/\partial x_{n-1}
        \\
        \vdots & \ddots & \vdots 
        \\
        \partial F_{n-1}/\partial x_1 & \ldots & \partial F_{n-1}/\partial x_{n-1}
    \end{pmatrix} \;.
\end{equation}
We can see this as follows. Let $J(p)\ne 0$ and define the map $F:\C^{n+1}\to\C^{n-1}$ given by $F(x_0,\ldots,x_n):=(F_1(x_0,\ldots,x_n),\ldots,F_{n-1}(x_0,\ldots,x_n))$. Then, by Implicit Function Theorem, there is a unique function $g: U\to \C^{n-1}$, for some open set $U\subset\C^2$ such $F=F(x_0,g_1(x_0,x_n),\ldots,g_{n-1}(x_0,x_n),x_n)$ with $g=(g_1,\ldots,g_{n-1})$. This would imply that, near $p$, $x_{n0}:=x_n/x_0$ is a local coordinate on the curve $C$, which can be locally written as $[1:g_1(x_{n0}):\ldots:g_{n-1}(x_{n0}):x_{n0}]$. Therefore, in terms of local coordinates, the map $\pi$ is given by $\pi(x_{n0})=x_{n0}$, which is clearly unramified. Hence $\pi$ is unramified at points $p\in C$ that $J(p)\ne 0$. Conversely, assume that $J(p)=0$. Then, $x_{n0}$ would not be a good local coordinate, and instead $x_{\alpha\beta}:=x_\alpha/x_\beta$ with $1\le\alpha\ne\beta\le n-1$ would serve as a local description of the curve. $C$ can thus be locally written as $[h_0(x_{\alpha\beta}):\ldots:h_n(x_{\alpha\beta})]$ with $h_\alpha(x_{\alpha\beta})=1$ and $h_\beta(x_{\alpha\beta})=x_{\alpha\beta}$. 
Therefore, the map $\pi$ can be described locally as $\pi(x_{\alpha\beta})=[h_0(x_{\alpha\beta}):h_n(x_{\alpha\beta})]$. To show that there is a ramification, we notice $\partial_{x_{\alpha\beta}}F_i=\partial_{x_{\alpha\beta}}h_\gamma\partial_{x_\gamma}F_i=0$, especially at $p$. This would imply
\begin{equation}
    \begin{pmatrix}
        \partial F_1/\partial x_0 & \ldots & \partial F_1/\partial x_n
        \\
        \vdots & \ddots & \vdots
        \\
        \partial F_{n-1}/\partial x_0 & \ldots & \partial F_{n-1}/\partial x_n
    \end{pmatrix}
    \begin{pmatrix}
        \partial_{x_{\alpha\beta}}h_0(x_{\alpha\beta})
        \\
        \vdots
        \\
        \partial_{x_{\alpha\beta}}h_n(x_{\alpha\beta})
    \end{pmatrix}
    =0 \;.
\end{equation}
As $[\partial F_i/\partial x_\alpha]$ has rank $n-1$, we conclude that $\partial_{x_{\alpha\beta}}h_\gamma(p)=0$ with $\gamma\ne \alpha,\beta$. Therefore, we get $\partial_{x_{\alpha\beta}}\pi(x_{\alpha\beta})=h_0(x_{\alpha\beta})^{-2}[\partial_{x_{\alpha\beta}}h_n(x_{\alpha\beta})h_0(x_{\alpha\beta})-h_n(x_{\alpha\beta})\partial_{x_{\alpha\beta}}h_0(x_{\alpha\beta})](p)=0$. Recall that the degree of ramification of $\pi$ at a point $p$ is given by $1+\text{ord}_p \pi'$, where $\text{ord}_p\pi'$ is the order of vanishing of $\pi'$ at $p$. As we saw $\partial_{x_{\alpha\beta}}\pi(x_{\alpha\beta})(p)=0$ from which we see that the order of ramification of $\pi$ at $p$ is bigger than one, and hence $\pi$ is ramified. Putting together, we conclude that $\pi$ is ramified at $p\in C$ if and only if $J(p)=0$, where $J(p)$ is defined in \eqref{eq:definition of J(p)}. 

\smallskip To determine $\deg R_\pi$, we use Bezout's Theorem according to which the degree of an intersection divisor, say $\text{div}\,G$ for a polynomial $G$, of a curve $C$ is given generically by $\deg G\times\deg C$. In the case of a complete intersection, the relevant intersection divisor is given by the polynomial $\det[\partial (F_1,\ldots,F_{n-1}/\partial (x_1,\ldots,x_{n-1})]$, and we have $R_\pi=\text{div}(\det[\partial F_i/\partial x_\alpha])$. The degree of this polynomial is $(d_1+\ldots+d_{n-1}-n+1)$.\footnote{Since $F_i$ is homogeneous of degree $d_i$, $\partial_{x_\alpha} F_i$ is also homogeneous of degree $d_i-1$.} On the other hand, recall that the degree of a curve $C$ is given by the degree of any hyperplane divisor on $C$. We consider the polynomial $G(x_0,\ldots,x_n)=x_n$, and assume $[0:\ldots:0:1:0:\ldots:0]$ with $1$ is put in the $i$\textsuperscript{th} slots with $1\le i\le n-1$. Hence, we can consider the function $x_n/x_0$ to compute the hyperplane divisor $\text{div}\,G$, which is the same as the divisors of zeroes of $x_n/x_0$. The latter is the same as the inverse image divisor $\pi^*(0)$, with $\pi$ defined above. We only need the degree of $\pi^*(0)$, which is the same as the degree of $\pi$\footnote{Recall that for a finite morphism $f: X\to Y$ between Riemann surfaces $X$ and $Y$ and a divisor $D$ on $Y$, we have $\deg(f^*(D))=\deg f\times\deg D$.} given in \eqref{eq:degree of branched cover map of complete intersections}. We thus conclude that 
\begin{equation}\label{eq:degree of the ramification divisor of the map pi of a complete intersection}
    \deg R_\pi=d_1\ldots d_{n-1}\left(d_1+\ldots+d_{n-1}-n+1\right) \;.
\end{equation}
Putting this and \eqref{eq:degree of branched cover map of complete intersections} into  \eqref{eq:Riemann-Hurwitz applied to complete intersections}, we arrive at the genus of $C$
\begin{equation}\label{eq:genus of a general complete intersection}
    g_C=1+\frac{1}{2}d_1\ldots d_{n-1}\left(d_1+\ldots+d_{n-1}-n-1\right) \;.
\end{equation}

\paragraph{Curve of the Spectral Parameter of gCPM as a Complete Intersection.} The curve $\cpc$ of the spectral parameter of gCPM is a special case of complete intersections. This can be seen by first noticing that for $[\partial F_i/\partial x_\alpha]$ has rank $2N$, which follows from \eqref{eq:curve of spectral parameter of gCPM}. We can thus obtain the information about the curve from what we have explained for a complete intersection by setting $n\to 2n-1$ and $d_1=\ldots=d_{2n-2}=N$ in \eqref{eq:degree of branched cover map of complete intersections}, \eqref{eq:degree of the ramification divisor of the map pi of a complete intersection}, and \eqref{eq:genus of a general complete intersection}
\begin{eqgathered}\label{eq:data of gCPM curve as a complete intersection}
    \deg{\wt\pi}_{N,n}=N^{2n-2} \;, \qquad \deg R_{{\wt\pi}_{N,n}}=2N^{2n-2}(N-1)(n-1) \;,
    \\
    g_{\cpc}=N^{2n-2}(N(n-1)-n)+1 \;,
\end{eqgathered}
where we have denoted the branched covering map as ${\wt\pi}_{N,n}:\cpc\to\Pbb^1$. The second line of \eqref{eq:data of gCPM curve as a complete intersection} is the genus of the curve of the spectral parameter of gCPM given in \eqref{eq:genus of the curve of spectral parameter of gCPM}.

\section{Some Basic Lie Algebra Facts and Manipulations}
\label{sec:some basic facts and Lie algebra manipulations}
In this appendix, we collect some basic facts about classical and exceptional Lie algebras that would be useful to construct spectral curves of hyperbolic monopoles with gauge groups different than $\SU{n}$. Based on the discussion in \S\ref{sec:correspondence for classical and exceptional groups}, we spell out the details of the relation between the fundamental weights of a classical or exceptional Lie algebra $\mfk{g}$  and $\mfk{su}(n)$, which leads to the relation between spectral curves and charges associated with the two Lie algebras. In the following, we use \cite[Appendix C]{Knapp1996}, \cite[\S 2.4.3]{GoodmanNolan2009}, and \cite[Chapter X, \S 3.3]{Helgason2001}, to which we refer for further details.

\smallskip We start with classical Lie algebras $A_n, B_n, C_n,$ and $D_n$, and then spell out some details of exceptional Lie algebras. The compact real forms of the simple Lie groups over $\C$ are described by Cartan \cite{Cartan191403,Cartan192703}. In particular, it is known that \cite{Cartan1894}
\begin{equation}\label{eq:the embedding of exceptional Lie algebras in classical Lie algebras}
	\begin{gathered}
		G_2\subset B_3=\mfk{so}(7) \;,
        \quad F_4\subset D_{13}=\mfk{so}(26) \;,
        \quad E_6\subset A_{26}=\mfk{sl}(27) \;,
		\\
		\quad E_7\subset C_{56}=\mfk{sp}({56}) \;,
        \quad E_8\subset D_{124}=\mfk{so}(248) \;.
	\end{gathered}
\end{equation}
In constructing the corresponding monopole, we embed these algebras in $\mfk{su}(n)$ for some $n$ in each case. 

\paragraph{Lie Algebra $A_{n-1}:=\mfk{sl}(n)$.} Let us start with (the compact real form) of $A_{n-1}=\mfk{sl}(n)$ ($\mfk{su}(n)$). The simple roots are given by
\begin{equation}
	\alpha^{A_n}_i=e_{i}-e_{i+1} \;, \qquad i=1,\ldots,n-1 \;,
\end{equation}
where $\{e_1,\cdots,e_n\}$ is an orthonormal basis ($\langle e_i,e_j\rangle=\delta_{ij}$) of $\mbb{R}^{n}$. Therefore, the corresponding fundamental weights are given by (see \eqref{eq:defining equation of fundamental weights})
\begin{equation}\label{eq:fundamental weights of su(n)}
	\omega_i^{A_n}=\sum_{j=1}^i e_j \;,
    \qquad i=1,\cdots,n-1 \;.
\end{equation}
The highest weight representations corresponding to fundamental weights are 
\begin{equation}
    \omega^{A_n}_i \qquad\Longleftrightarrow \qquad  \bigwedge\nolimits^{i}\mbb{C}^n \;,
    \qquad i=1,\ldots,n-1 \;,
\end{equation}
while $\bigwedge\nolimits^n\mbb{C}^n$ is one-dimensional and corresponds to the trivial representation of $\mfk{su}(n)$. 

\paragraph{Lie Algebra $B_n:=\mfk{so}(2n+1)$.} The algebra $B_{n}=\mfk{so}(2n+1)$ has rank $n$. A set of simple roots are
\begin{eqgathered}
    \alpha^{B_n}_i=e_i-e_{i+1} \;, \qquad i=1,\ldots,n-1 \;, \qquad 
    \\
    \alpha^{B_n}_{n}=e_n \;,
\end{eqgathered}
    and the corresponding fundamental weights are given by
\begin{eqgathered}\label{eq:fundamental weights of so(2n+1)}
    \omega^{B_n}_i=\sum_{j=1}^i e_j \;, \qquad i=1,\cdots,n-1 \;,
		\\
		\omega^{B_n}_n=\frac{1}{2}(e_1+\cdots+e_n) \;.
\end{eqgathered}
Hence, fundamental highest-weight representations are 
\begin{eqgathered}
    \omega^{B_n}_i\qquad\Longleftrightarrow\qquad \bigwedge\nolimits^i\mbb{C}^{2n+1} \;,
    \qquad i=1,\ldots,n \;.
\end{eqgathered}
$\mfk{so}(2n+1)$ can be embedded in $\mfk{su}(2n+1)$. Under the embedding $\mfk{so}(2n+1)\hookrightarrow\mfk{su}(2n+1)$ and by comparing \eqref{eq:fundamental weights of su(n)} and \eqref{eq:fundamental weights of so(2n+1)}, we see the relation between fundamental weights\begin{eqgathered}\label{eq:relation between fundamental weights of Bn and A2n}
    \omega_i^{B_n}=\omega^{A_{2n}}_i \;, \qquad i=1,\ldots,n-1 \;,
    \\
    \omega_n^{B_n}=\frac{1}{2}\omega^{A_{2n}}_n \;.
\end{eqgathered}
We can order the fundamental weights as
\begin{equation}\label{eq:fundamental weights of so(2n+1) in terms of su(n+1)}
    \left(\omega_1^{A_2n},\ldots,\omega^{A_{2n}}_{2n}\right)=\left(\omega^{B_n}_1,\ldots,\omega_{n-1}^{B_n},2\omega_n^{B_n},2\omega_n^{B_n},\omega^{B_n}_{n-1},\ldots,\omega^{B_n}_1\right) \;,
\end{equation}
hence the corresponding fundamental highest-weight representations of $\mfk{so}(2n+1)$ are identified with the following fundamental representations of $\mfk{su}(2n+1)$
\begin{equation}
    \left(\mbb{C}^{2n+1},\ldots,\bigwedge\nolimits^{n-1}\mbb{C}^{2n+1},\bigwedge\nolimits^{n}\mbb{C}^{2n+1},\bigwedge\nolimits^{n}\mbb{C}^{2n+1},\bigwedge\nolimits^{n-1}\mbb{C}^{2n+1},\ldots,\mbb{C}^{2n+1}\right) \;.
\end{equation}
Since spectral curves are labeled by fundamental weights, \eqref{eq:relation of spectral curves: so(2n+1) and su(2n+1)} is evident. 

\smallskip Furthermore, using \eqref{eq:definition of magnetic charges of a monopole} and \eqref{eq:fundamental weights of so(2n+1) in terms of su(n+1)}, we can find the relation between charges of an $\mfk{su}(2n+1)$ monopole and the embedded $\mfk{so}(2n+1)$ monopole as
\begin{equation}\label{eq:relation between Bn and A2n charges I}
    m_i^{B_n}=\omega^{B_n}_i(2\star_{S^2_\infty}F)=\omega^{A_{2n}}_i(2\star_{S^2_\infty}F)=m_i^{A_{2n}} \;,
    \qquad i=1,\ldots,n-1 \;,
\end{equation}
while
\begin{equation}\label{eq:relation between Bn and A2n charges II}
    m_n^{B_n}=\omega^{B_n}_n(2\star_{S^2_\infty}F)=\frac{1}{2}\omega^{A_{2n}}_n(2\star_{S^2_\infty}F)=\frac{1}{2}m_n^{A_{2n}} \;.
\end{equation}
As magnetic charges should be integers, this means that the $n$\textsuperscript{th} monopole charge of the corresponding $\mfk{su}(2n+1)$-monopole should be even. As we would like to get the spectral data of a gCPM for which all charges of monopole associated with the special unitary group should be the same, we set $m_i^{A_{2n}}=2N,\,i=1,\ldots,n$, which then gives \eqref{eq:charges of so(2n+1) monopoles associated with a gCPM}.

\paragraph{Lie Algebra $C_n:=\mfk{sp}(2n)$.} The rank-$n$ algebra $C_n=\mfk{sp}(2n)$ has a set of simple roots given by 
\begin{eqgathered}
    \alpha_1^{C_n}=e_i-e_{i+1} \;, \qquad i=1,\ldots,n-1 \;,
    \\
    \alpha_n^{C_n}=2e_n \;,
\end{eqgathered}
and fundamental weights are
\begin{equation}\label{eq:fundamental weights of sp(2n)}
	\begin{gathered}
		\omega_i^{C_n}=\sum_{j=1}^i e_j \;,
        \qquad i=1,\cdots,n \;.
	\end{gathered}
\end{equation}
The fundamental weights and corresponding highest-weights are\footnote{Note that these representations are constructed with respect to a nondegenerate antisymmetric form $\mbs{\Omega}$ on $\mbb{C}^{2n}$. The fundamental modules are in fact the space of $\mbs{\Omega}$-harmonic elements. Consider the space
	\begin{equation*}
		H(\bigwedge\mbb{C}^{2n},\mbs{\Omega})=\bigoplus_{p=0}^nH(\bigwedge\nolimits^p\mbb{C}^{2n},\mbs{\Omega}) \;,
	\end{equation*}
	where $H(\bigwedge\nolimits^p\mbb{C}^{2n},\mbs{\Omega})$ is the space of $\mbs{\Omega}$-harmonic $p$-vectors. Then, the fundamental representations are 
	\begin{equation*}
		\left(H(\mbb{C}^{2n},\mbs{\Omega},\cdots,H(\bigwedge\nolimits^n\mbb{C}^{2n},\mbs{\Omega}))\right) \;.
	\end{equation*}
	Also $H(\bigwedge\nolimits^p\mbb{C}^{2n},\mbs{\Omega})=0$ for $p>n$. In the above construction, we assume that this subtlety is understood. For more details see \cite[\S 5.5.2]{GoodmanNolan2009}.}
\begin{equation}
    \omega^{C_n}_i\qquad\Longleftrightarrow\qquad \bigwedge\nolimits^{i}\C^{2n} \;,
    \qquad i=1,\ldots,n \;.
\end{equation}
This Lie algebra can be embedded in $\mfk{su}(2n)$. Under the embedding $\mfk{sp}(2n)\hookrightarrow\mfk{su}(2n)$ and by comparing \eqref{eq:fundamental weights of su(n)} and \eqref{eq:fundamental weights of sp(2n)}, the relation between fundamental weights is
\begin{equation}\label{eq:relation between fundamental weights of Cn and A2n-1}
    \omega^{C_n}_i=\omega^{A_{2n-1}}_i \;, \qquad i=1,\ldots,n \;.
\end{equation}
The fundamental weights can be ordered as
\begin{equation}
    \left(\omega_1^{A_{2n-1}},\ldots,\omega^{A_{2n-1}}_{2n-1}\right)=\left(\omega^{C_n}_1,\ldots,\omega_{n-1}^{C_n},\omega_n^{C_n},\omega^{C_n}_{n-1},\ldots,\omega^{C_n}_1\right) \;,
\end{equation}
from which it follows that the relation between fundamental highest-weight representations is
\begin{equation}
    \left(\mbb{C}^{2n}, \bigwedge\mbb{C}^{2n},\ldots,\bigwedge\nolimits^{n-1}\mbb{C}^{2n},\bigwedge\nolimits^{n}\mbb{C}^{2n},\bigwedge\nolimits^{n-1}\mbb{C}^{2n},\ldots,\mbb{C}^{2n}\right) \;,
\end{equation}
from which \eqref{eq:relation between spectral curves of hyperbolic sp(n) and su(2n) monopoles} is deduced. 

\smallskip Furthermore, from \eqref{eq:definition of magnetic charges of a monopole} and \eqref{eq:relation between fundamental weights of Cn and A2n-1}, we get the $n$ charges of $\mfk{sp}(2n)$-monopoles in terms of charges of the corresponding $\mfk{su}(2n)$-monopoles
\begin{equation}\label{eq:relation between Cn and A2n-1 charges}
    m_i^{C_n}=\omega_i^{C_n}(2\star_{S^2_\infty}F)=\omega_i^{A_{2n-1}}(2\star_{S^2_\infty}F)=m_i^{A_{2n-1}} \;,
    \qquad i=1,\ldots,n \;.
\end{equation}
Setting $m_i^{A_{2n-1}}=N$, we get \eqref{eq:charges of Cn monopoles}. 

\paragraph{Lie Algebra $D_n:=\mfk{so}(2n)$.} This algebras has rank $n$. Simple roots are  
\begin{eqgathered}
    \alpha^{D_n}_i=e_i-e_{i+1} \;, \qquad i=1,\ldots,n-1 \;,
    \\
    \alpha^{D_n}_n=e_{n-1}+e_n \;,
\end{eqgathered}
and the corresponding fundamental weights are
\begin{eqgathered}\label{eq:fundamental weights of Dn}
    \omega^{D_n}_i=\sum_{j=1}^i e_j \;,\qquad i=1,\cdots,n-2 \;,
		\\
		\omega^{D_n}_{n-1}=\frac{1}{2}(e_1+\cdots+e_{n-1}-e_n) \;, \qquad \omega^{D_n}_{n}=\frac{1}{2}(e_1+\cdots+e_n) \;.
\end{eqgathered}
The highest-weight representations corresponding to these fundamental weights are given by
\begin{eqaligned}
    \omega^{D_n}_i\hphantom{\omega_i}&\qquad &\Longleftrightarrow\qquad &\bigwedge\nolimits^{i}\mbb{C}^{2n} \;,
    \\
    \omega^{D_n}_{n-1}+\omega^{D_n}_{n} &\qquad &\Longleftrightarrow\qquad &\bigwedge\nolimits^{n-1}\mbb{C}^{2n} \;,
\end{eqaligned}
and $\bigwedge\nolimits^{n}\mbb{C}^{2n}$ is the direct sum of two irreducible $\mfk{so}(2n)$-module with highest weights $2\omega^{D_n}_{n-1}$ and $2\omega^{D_n}_{n}$. 

\smallskip $D_n$ can be embedded in $\mfk{su}(2n)$. Under the embedding $\mfk{so}(2n)\hookrightarrow\mfk{su}(2n)$ and by comparing \eqref{eq:fundamental weights of su(n)} and \eqref{eq:fundamental weights of Dn}, the fundamental weights are related as
\begin{eqgathered}\label{eq:relation between fundamental weights of Dn and A2n-1}
    \omega^{D_n}_i=\omega^{A_{2n-1}}_i \;, \qquad i=1,\ldots,n-2 \;,
    \\
    \omega^{D_n}_{n-1}+\omega^{D_n}_n=\omega^{A_{2n-1}}_{n-1} \;,\qquad
    2\omega^{D_n}_{n}=\omega^{A_{2n-1}}_n \;.
\end{eqgathered}
If we order the fundamental weights as  
\begin{equation}
    \left(\omega^{A_{2n-1}}_1,\ldots,\omega_n^{A_{2n-1}},\ldots,\omega^{A_{2n-1}}_{2n-1}\right)=\left(\omega^{D_n}_1,\ldots,\omega_{n-2}^{D_n},\omega_{n-1}^{D_n}+\omega^{D_n}_{n},2\omega^{D_n}_n,\omega_{n-1}^{D_n}+\omega^{D_n}_{n}, \omega^{D_n}_{n-2},\ldots,\omega^{D_n}_1\right) \;,
\end{equation}
the corresponding highest-weight representations are
\begin{equation}
    \left(\mbb{C}^{2n}, \bigwedge\mbb{C}^{2n},\ldots,\bigwedge\nolimits^{n-1}\mbb{C}^{2n},\bigwedge\nolimits^{n}\mbb{C}^{2n},\bigwedge\nolimits^{n-1}\mbb{C}^{2n},\ldots,\mbb{C}^{2n}\right) \;.
\end{equation}

\smallskip Again, from \eqref{eq:definition of magnetic charges of a monopole} and \eqref{eq:relation between fundamental weights of Dn and A2n-1}, the relation of magnetic charges is
\begin{equation}\label{eq:relation between Dn and A2n-1 charges I}
    m^{D_n}_i=\omega_i^{D_n}(2\star_{S^2_\infty} F_\infty)=\omega_i^{A_{2n-1}}(2\star_{S^2_\infty} F_\infty)=m_i^{A_{2n-1}} \;,
    \quad  i=1,\cdots,n-2 \;,
\end{equation}
while
\begin{equation}\label{eq:relation between Dn and A2n-1 charges II}
    m^{D_n}_{-}+m^{D_n}_{+}=(\omega_{n-1}^{D_n}+\omega_n^{D_n})(2\star_{S^2_\infty} F_\infty)=\omega_{n-1}^{A_{2n-1}}(2\star_{S^2_\infty} F_\infty)=m_{n-1}^{A_{2n-1}} \;,
\end{equation}
and
\begin{equation}\label{eq:relation between Dn and A2n-1 charges III}
    m^{D_n}_{n}=\omega^{D_n}_n(2\star_{S^2_\infty} F_\infty)=\frac{1}{2}\omega^{A_{2n-1}}_n(2\star_{S^2_\infty} F_\infty)=\frac{1}{2}m^{A_{2n-1}}_n \;.
\end{equation}
We obtain \eqref{eq:charges of so(2n) monopoles} by setting $m^{A_{2n-1}}_i=2N,\,i=1,\ldots,n$, as all charges have to be integer to get a meaningful generalization of the CPM. 

\paragraph{Exceptional Lie Algebra $\mfk{g}_2$.} The Lie algebra $\mfk{g}_2$ has rank $2$, dimension $14$. There are two fundamental representations of dimensions $7$ and $14$. The dimension of adjoint representation is $14$. The simple roots are given by
\begin{equation}\label{eq:simple roots of g2}
    \{\alpha^{\mfk{g}_2}_1,\alpha^{\mfk{g}_2}_2\}=\left\{e_1-e_2,-2e_1+e_2+e_3\right\} \;,
\end{equation}
and the corresponding fundamental weights are
\begin{equation}\label{eq:fundamental weights of g2}
    \omega^{\mfk{g}_2}_1=-e_2+e_3 \;,
    \qquad \omega^{\mfk{g}_2}_2=-e_1-e_2+2e_3 \;.
\end{equation}
As we noticed in \eqref{eq:the embedding of exceptional Lie algebras in classical Lie algebras}, the Lie algebra $\mfk{g}_2$ can be embedded into $\mfk{so}(7)$. This can be seen by a technique called root folding (for more details, see \cite[Chapter 30]{Bump2013}). The Dynkin diagram of $\mfk{so}(7)$ can be folded to the Dynkin diagram of $\mfk{g}_2$. This has been illustrated in Fig. \ref{fig:the folding of so(7) Dynkin diagram to G2 Dynkin diagram}.
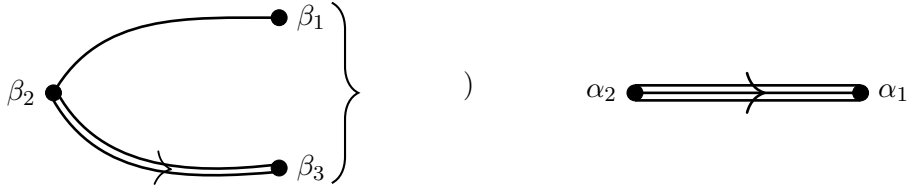
\begin{figure}[H]\centering 
	\begin{minipage}{.35\textwidth}\centering
		\begin{tikzpicture}
			\draw[fill] (0,0)node[left, xshift=-.1cm]{$\beta_2$} circle (3pt);
			\draw[fill] (3,1)node[right, xshift=.1cm]{$\beta_1$} circle (3pt);
			\draw[fill] (3,-1)node[right, xshift=.1cm]{$\beta_3$} circle (3pt);
			\draw [decorate,decoration={brace,amplitude=10pt},line width=.8pt]
			(3.7,1.2) -- (3.7,-1.2);
			Node at (3.7,0) [right,xshift=.4cm]{\text{sqeeze}};
			\draw[line width=1pt] (0,0) to [out=60, in=180] (3,1);
			\draw[line width=1pt] (0,-.1) to [out=-60, in=185] (3,-1.05);
			\draw[line width=1pt] (0,.1) to [out=-64, in=187] (3,-.95);
			\path[->-,line width=.7pt] (0,0) to [out=-65, in=185] (3,-1);
		\end{tikzpicture}
	\end{minipage}
	~
	\begin{minipage}{.1\textwidth}\centering 
		\begin{tikzpicture}
			\begin{scope}[] 
				Node at (2,0){};
				Node at (3.2,0){$\xRightarrow{\makebox[1cm]{}}$};
			\end{scope}
		\end{tikzpicture}
	\end{minipage}
	~
	\begin{minipage}{.35\textwidth}\centering 
		\begin{tikzpicture}
			\draw[fill] (0,0)node[left, xshift=-.1cm]{$\alpha_2$} circle (3pt);
			\draw[fill] (3,0)node[right, xshift=.1cm]{$\alpha_1$} circle (3pt);
			\draw[line width=1pt] (0,.1) -- (3,.1);
			\draw[line width=1pt,->-] (0,0) -- (3,0);
			\draw[line width=1pt] (0,-.1) -- (3,-.1);
		\end{tikzpicture}
	\end{minipage}
	\caption{The folding of $\mfk{so}(7)$ Dynkin diagram (left) onto $\mfk{g}_2$ Dynkin diagram (right).}\label{fig:the folding of so(7) Dynkin diagram to G2 Dynkin diagram}
\end{figure}
The fundamental weights of $\mfk{g}_2$ can be written in terms of fundamental weights of $\mfk{so}(7)$, which in turn can be written in terms of fundamental weights of $\mfk{su}(7)$. The overall result is 
\begin{eqaligned}\label{eq:fundamental weights of g2 in terms of those of so(7) and su(7)}
    \omega^{\mfk{g}_2}_1&=-\omega^{B_3}_1-\omega^{B_3}_2+\omega^{B_3}_3=-\omega^{A_{6}}_1-\omega^{A_6}_2+\frac{1}{2}\omega^{A_6}_3 \;,
    \\
    \omega^{\mfk{g}_2}_2&=-3\omega^{B_3}_2+\omega^{B_3}_3=-3\omega^{A_6}_2+\frac{1}{2}\omega^{A_6}_3 \;,
\end{eqaligned}
through which we can relate the magnetic charges of the corresponding monopoles analogously
\begin{eqaligned}\label{eq:magnetic charges of g2 monopoles in terms of those of so(7) and su(7) monopoles}
    m^{\mfk{g}_2}_1&=-m^{B_3}_1-m^{B_3}_2+m^{B_3}_3=-m^{A_{6}}_1-m^{A_6}_2+
    \frac{1}{2}m^{A_6}_3 \;,
    \\
    m^{\mfk{g}_2}_2&=-3m^{B_3}_2+m^{B_3}_3=-3m^{A_6}_2+\frac{1}{2}m^{A_6}_3 \;.
\end{eqaligned}
Recall that to connect to gCPM, all charges associated with $\mfk{su}(7)$ monopoles should be the same. However, it is easy to see that it is impossible to have a $\mfk{g}_2$-monopole whose corresponding $\mfk{su}(7)$-monopole has equal charges: we simply set $m^{\mfk{su}(7)}_i=2N,\,i=1,2,3$ for some $N\in\mbb{Z}_+$. Then,
\begin{equation}
    m^{\mfk{g}_2}_1=-3N \;,
    \qquad m^{\mfk{g}_2}_2=-5N \;,
\end{equation}
which are both negative.

\paragraph{Exceptional Lie Algebra $\mfk{f}_4$} The Lie algebra $\mfk{f}_4$ has rank $4$, dimension $52$. The dimension of fundamental representation is $26$ and the dimension of adjoint representation is $52$. The simple roots are given by
\begin{eqaligned}\label{eq:simple roots of f4}
    \alpha_1^{\mfk{f}_4}&=\frac{1}{2}(e_1-e_2-e_3-e_4) \;,
    &\qquad \alpha_2^{\mfk{f}_4}&=e_4 \;,
    \\
    \alpha_3^{\mfk{f}_4}&=e_3-e_4 \;, &\qquad 
    \alpha_4^{\mfk{f}_4}&=e_2-e_3 \;,
\end{eqaligned}
and the corresponding fundamental weights are
\begin{eqaligned}\label{eq:fundamental weights of f4}
    \omega_1^{\mfk{f}_4}&=e_1 \;,
    &\qquad \omega_2^{\mfk{f}_4}&=\frac{1}{2}(3e_1+e_2+e_3+e_4) \;,
    \\
    \omega_3^{\mfk{f}_4}&=2e_1+e_2+e_3 \;, &\qquad 
    \omega_4^{\mfk{f}_4}&=e_1+e_2 \;.
\end{eqaligned}
From \eqref{eq:fundamental weights of su(n)} and \eqref{eq:fundamental weights of Dn}, these can be written as fundamental weights of $\mfk{so}(26)$ and in turn those of $\mfk{su}(26)$ as follows
    
\begin{eqaligned}
    \omega_1^{\mfk{f}_4}&=\omega_1^{D_{13}}=\omega_1^{A_{25}} \;, &\qquad 
    \omega_2^{\mfk{f}_4}&=\omega_1^{D_{13}}+\frac{1}{2}\omega_4^{D_{13}}=\omega_1^{A_{25}}+\frac{1}{2}\omega_4^{A_{25}} \;,
    \\
    \omega_4^{\mfk{f}_4}&=\omega_2^{D_{13}}=\omega_2^{A_{25}} \;, &\qquad 
    \omega_3^{\mfk{f}_4}&=\omega_1^{D_{13}}+\frac{1}{2}\omega_3^{D_{13}}=\omega_1^{A_{25}}+\frac{1}{2}\omega_3^{A_{25}} \;.
\end{eqaligned}

Using this relation and \eqref{eq:definition of magnetic charges of a monopole}, we have
\begin{eqaligned}
    m_1^{\mfk{f}_4}&=m_1^{D_{13}}=m_1^{A_{25}} \;, &\qquad 
    m_2^{\mfk{f}_4}&=m_1^{D_{13}}+\frac{1}{2}m_4^{D_{13}}=m_1^{A_{25}}+\frac{1}{2}m_4^{D_{25}} \;,
    \\
    m_4^{\mfk{f}_4}&=m_2^{D_{13}}=m_2^{A_{25}} \;, &\qquad 
    m_3^{\mfk{f}_4}&=m_1^{D_{13}}+\frac{1}{2}m_3^{D_{13}}=m_1^{A_{25}}+\frac{1}{2}m_3^{D_{25}} \;.
\end{eqaligned}
Therefore, if we take $m_i^{A_{25}}=2N,\,i=1,\ldots 25$, we get the charges of an $\mfk{f}_4$-monopole associated with a gCPM
\begin{eqaligned}
    m_1^{\mfk{f}_4}&=2N, &\qquad m_2^{\mfk{f}_4}&=3N \;,
    \\
    m_4^{\mfk{f}_4}&=2N, &\qquad m_3^{\mfk{f}_4}&=3N \;.
\end{eqaligned}

\paragraph{Exceptional Lie Algebra $\mfk{e}_6, \mfk{e}_7,$ and $\mfk{e}_8$.}  The Lie algebra $\mfk{e}_6$ has rank $6$, dimension $78$. The dimension of fundamental representation is $27$ and the dimension of adjoint representation is $78$. The Lie algebra $\mfk{e}_7$ has rank $7$, dimension $133$. The dimension of fundamental representation is $56$ and the dimension of adjoint representation is $133$. The Lie algebra $\mfk{e}_8$ has rank $8$, dimension $248$. The dimension of the adjoint representation is $248$. The rest of the computations are similar to what has been explained previously and can be figured out using \cite[Appendix C]{Knapp1996}. We leave the details to the interested reader.

\bibliography{References}
\bibliographystyle{JHEP}

\end{document}